\documentclass[aps,prx,reprint,superscriptaddress,nofootinbib]{revtex4-2}
\usepackage{blindtext}
\usepackage{lipsum}
\usepackage{graphics}
\usepackage{amsmath}
\usepackage{graphicx}
\usepackage{graphics}
\usepackage{amssymb}
\usepackage{amsthm}
\usepackage{verbatim}
\usepackage{physics}
\usepackage{float}
\usepackage{mathtools}
\usepackage{dsfont}
\usepackage{ulem}
\usepackage[colorlinks=true,citecolor=navyblue,linkcolor=navyblue,urlcolor=navyblue,pdftitle={Coherent-moving}]{hyperref}

\usepackage[dvipsnames]{xcolor}
\usepackage[export]{adjustbox}
\usepackage{framed}
\definecolor{shadecolor}{RGB}{224,224,224}

\def \q{{\mathbf q}}
\def \nn{\nonumber \\}

\definecolor{orange(ryb)}{HTML}{FFA500}
\definecolor{lightorange(ryb)}{HTML}{FFB300}
\definecolor{dodgerblue}{HTML}{1E90FF}
\definecolor{navyblue}{HTML}{0066CC}
\definecolor{royalblue}{HTML}{4169E1}
\definecolor{lightdodgerblue}{HTML}{17b3ff}
\definecolor{crimson}{HTML}{FF4C4C}%
\definecolor{pinkerton}{HTML}{EC368D}
\definecolor{forest}{HTML}{6DD189}
\definecolor{opblue}{RGB}{0,122,255}
\definecolor{latgray}{RGB}{156,156,156}
\definecolor{edgered}{RGB}{250,106,100}
\definecolor{plaqgray}{RGB}{235,235,235}
\definecolor{ribgreen}{RGB}{52,168,83}
\definecolor{endorange}{RGB}{245,155,35}
\definecolor{fbpurple}{RGB}{112,48,160}

\usepackage[compat=0.6]{yquant}

\usepackage{tikz}
\usepackage{tikz-cd}
\usetikzlibrary{arrows}
\usetikzlibrary{arrows.meta}
\usetikzlibrary{snakes}
\usetikzlibrary{intersections}
\usetikzlibrary{shapes.geometric}
\usetikzlibrary{decorations.pathmorphing, patterns,shapes}
\usetikzlibrary{decorations.markings}
\usetikzlibrary{calc}

\tikzset{
	partial ellipse/.style args={#1:#2:#3}{
		insert path={+ (#1:#3) arc (#1:#2:#3)}
	}
}

\tikzset{
	mid arrow/.style={postaction={decorate,decoration={
				markings,
				mark=at position .575 with {\arrow[#1]{stealth}}
	}}},
	near arrow/.style={postaction={decorate,decoration={
				markings,
				mark=at position .275 with {\arrow[#1]{stealth}}
	}}},
	far arrow/.style={postaction={decorate,decoration={
				markings,
				mark=at position .800 with {\arrow[#1]{stealth}}
	}}},
}

\pgfdeclarepatternformonly{south west lines}{\pgfqpoint{-0pt}{-0pt}}{\pgfqpoint{3pt}{3pt}}{\pgfqpoint{3pt}{3pt}}{
	\pgfsetlinewidth{0.4pt}
	\pgfpathmoveto{\pgfqpoint{0pt}{0pt}}
	\pgfpathlineto{\pgfqpoint{3pt}{3pt}}
	\pgfpathmoveto{\pgfqpoint{2.8pt}{-.2pt}}
	\pgfpathlineto{\pgfqpoint{3.2pt}{.2pt}}
	\pgfpathmoveto{\pgfqpoint{-.2pt}{2.8pt}}
	\pgfpathlineto{\pgfqpoint{.2pt}{3.2pt}}
	\pgfusepath{stroke}}

\tikzset{
  op/.style={text=opblue, fill=white, inner sep=1.5pt, font=\small},
  lat/.style={latgray, line width=1.1pt},
  skel/.style={latgray, line width=0.9pt},
  rsk/.style={edgered, line width=0.9pt},
  xl/.style={opblue, font=\small, fill=white, inner sep=1.2pt},
  pn/.style={black, font=\small},
  pq/.style={latgray, font=\small, inner sep=1pt},
  ctrl/.style={opblue, line width=0.9pt}}
\newcommand{\vtxLR}{%
\vcenter{\hbox{\begin{tikzpicture}[line width=0.9pt,
    ar/.style={postaction={decorate, decoration={markings,
      mark=at position ##1 with {\arrow{Stealth[length=1.6mm,width=1.3mm]}}}}}]
  \draw[latgray, ar=0.12, ar=0.9] (-1.0,0) -- (1.0,0);
  \draw[latgray, ar=0.12, ar=0.9] (0,-1.0) -- (0,1.0);
  \node[op] at (0,0.45) {$L^{g}$};
  \node[op] at (-0.45,0) {$R^{g}$};
  \node[op] at (0,-0.45) {$R^{g}$};
  \node[op] at (0.45,0) {$L^{g}$};
  \node[pq, above right=0pt] at (0,0) {$v$};
\end{tikzpicture}}}}
\newcommand{\plaqClosed}{%
\vcenter{\hbox{\begin{tikzpicture}[line cap=round, line join=round,
    skel/.style={latgray, line width=0.9pt},
    xl/.style={opblue, fill=white, inner sep=1pt}]
  \draw[skel] (-0.7,-0.7) rectangle (0.7,0.7);
  \node[xl] at (0,0.7)     {$\bar{\mathcal{Z}}^{\Gamma}_{kl}$};
  \node[xl] at (-0.7,0)    {$\bar{\mathcal{Z}}^{\Gamma}_{li}$};
  \node[xl] at (0.7,0)     {$\mathcal{Z}^{\Gamma}_{jk}$};
  \node[xl] at (0,-0.7)    {$\mathcal{Z}^{\Gamma}_{ij}$};
  \node[latgray] at (0,0) {$p$};
\end{tikzpicture}}}}
\newcommand{\plaqOpen}{%
\vcenter{\hbox{\begin{tikzpicture}[line cap=round, line join=round,
    skel/.style={latgray, line width=0.9pt},
    xl/.style={opblue, fill=white, inner sep=1pt}]
  \draw[skel] (-0.7,-0.7) rectangle (0.7,0.7);
  \node[xl] at (0,0.7)     {$\bar{\mathcal{Z}}^{\Gamma}_{kl}$};
  \node[xl] at (-0.7,0)    {$\bar{\mathcal{Z}}^{\Gamma}_{lm}$};
  \node[xl] at (0.7,0)     {$\mathcal{Z}^{\Gamma}_{jk}$};
  \node[xl] at (0,-0.7)    {$\mathcal{Z}^{\Gamma}_{ij}$};
  \node[latgray] at (0,0) {$p$};
  \fill[latgray] (-0.7,-0.7) circle (1.6pt);
  \node[latgray, anchor=south west, inner sep=1pt] at (-0.65,-0.67) {$v$};
\end{tikzpicture}}}}
\newcommand{\tvlat}{%
  \draw[latgray] (-1.0,0) -- (2.5,0);
  \draw[latgray] (0,-0.9) -- (0,0.9);
  \draw[latgray] (1.5,-0.9) -- (1.5,0.9);}

\tikzset{
  il/.style={black, font=\footnotesize, fill=white, inner sep=1pt},
  eq/.style={black, font=\small},
  arr/.style={-{Stealth[length=2mm,width=1.6mm]}, black, line width=0.7pt},
  leg/.style={latgray, line width=0.9pt, densely dotted}}
\newcommand{\dmlat}{%
  \draw[skel] (-1.5,-0.75) rectangle (1.5,0.75);
  \draw[rsk] (0,-0.75)--(0,0.75);}
\newcommand{\dmlab}{%
  \node[latgray, font=\small] at (-0.75,0) {$p_{1}$};
  \node[latgray, font=\small] at (0.75,0) {$p_{2}$};}

\newcommand{\tcdom}[1][2.5]{%
  \draw[skel] (-1.5,-0.75) rectangle (1.5,0.75);
  \foreach \xa in {-1.5,0,1.5}{\draw[skel] (\xa,0.75)--(\xa,1.75); \draw[skel] (\xa,-1.75)--(\xa,-0.75);}
  \draw[skel] (-#1,0.75)--(-1.5,0.75); \draw[skel] (1.5,0.75)--(#1,0.75);
  \draw[skel] (-#1,-0.75)--(-1.5,-0.75); \draw[skel] (1.5,-0.75)--(#1,-0.75);}
\newcommand{\tclat}{\tcdom[2.2] \draw[skel] (0,-0.75)--(0,0.75);}
\newcommand{\tcrew}{\tcdom[2.2] \draw[rsk] (0,-0.75) .. controls (0.225,-0.15) and (0.9,0.45) .. (1.5,0.75);}
\newcommand{\zg}[1]{$\mathcal{Z}^{\Gamma}_{#1}$}
\newcommand{\zgb}[1]{$\bar{\mathcal{Z}}^{\Gamma}_{#1}$}
\newcommand{\zgp}[1]{$\mathcal{Z}^{\Gamma'}_{#1}$}
\newcommand{\zgpb}[1]{$\bar{\mathcal{Z}}^{\Gamma'}_{#1}$}
\newcommand{\opLg}{$L^{g}$}
\newcommand{\opRg}{$R^{g}$}

\newtheorem{theoreminner}{Theorem}
\colorlet{thmshade}{gray!12}
\newenvironment{theorem}[1][]{%
  \colorlet{shadecolor}{thmshade}\begin{shaded}%
  \if\relax\detokenize{#1}\relax\begin{theoreminner}\else\begin{theoreminner}[#1]\fi}%
  {\end{theoreminner}\end{shaded}}
\newtheorem*{lemmainner}{Lemma}
\newenvironment{lemma}[1][]{%
  \colorlet{shadecolor}{thmshade}\begin{shaded}%
  \if\relax\detokenize{#1}\relax\begin{lemmainner}\else\begin{lemmainner}[#1]\fi}%
  {\end{lemmainner}\end{shaded}}

\begin{document}
\title{Rewire and reset are all you need: Minimal resources for manipulating non-Abelions}

\author{Chiu Fan Bowen Lo}
\affiliation{Department of Physics, Harvard University, Cambridge, MA 02138, USA}

\author{Rohith Sajith}
\affiliation{Department of Physics, Harvard University, Cambridge, MA 02138, USA}

\author{Anasuya Lyons}
\affiliation{Department of Physics, Harvard University, Cambridge, MA 02138, USA}

\author{Ashvin Vishwanath}
\affiliation{Department of Physics, Harvard University, Cambridge, MA 02138, USA}

\author{Ruben Verresen}
\affiliation{Pritzker School of Molecular Engineering, University of Chicago, Chicago, IL 60637, USA}

\begin{abstract}
Braiding and fusing anyons are fundamental to topological quantum computation, yet existing circuit protocols often employ ancillas and measurements. We determine the minimal circuit resources for performing such anyon operations in quantum double models of generic finite groups. Our approach braids anyons through a sequential unitary circuit without ancillas. Intuitively, this is obtained by deforming the underlying graph on which the anyons live. More generally, we introduce \textit{graph rewiring}, a diagrammatic calculus that assigns a geometric interpretation to individual two-qudit gates and tracks their action on stabilizers. For fusion, we prove a general no-go theorem excluding purely unitary protocols for fusing unknown anyon types without introducing ancillas. Remarkably, we show such ``input-agnostic'' fusion is enabled by simply using a qudit reset. Hence, contrary to folklore, ancilla resources are not needed to braid or fuse non-Abelian anyons (non-Abelions). This work has implications in experimental efforts in error correction: our fusion protocol highlights reset as an underutilized primitive for error correction without the spacetime-overhead of ancillas and measurements.

\end{abstract}

\maketitle
\addtocontents{toc}{\string\let\string\l@subsubsection\string\@gobbletwo}%
\tableofcontents

\section{Introduction}
\label{sec:introduction}

Non-abelian topological order \cite{kitaevFaulttolerantQuantumComputation2003, nayak_non-abelian_2008} is emerging as a promising avenue for quantum computation. 
This mode of computation harnesses the striking properties of the quasiparticles characteristic of non-Abelian topological order (TO), known as non-Abelian anyons (non-Abelions)\cite{Goldin85}, to implement logical operations \cite{kitaevFaulttolerantQuantumComputation2003,freedman_modular_2002}. 
Information can be stored in either the internal Hilbert space of many non-Abelions (a ``fusion space'' encoding) or in the degenerate ground state subspace. 
In both cases, non-Abelions are braided and fused to perform logical manipulations on this encoded information. 
Braiding is a process whereby non-Abelions are moved around each other in space, such that their trajectories form a braid in spacetime; fusion is the action of moving multiple non-Abelions to the same location, after which their fusion outcome can be read out using measurement.
Non-Abelian TO may thereby serve as a standalone architecture for universal quantum computation \cite{SU24,mochonAnyonsNonsolvableFinite2003,mochonAnyonComputersSmaller2004,kitaev_finite_group_2007,cui_universal_2015, Cui2015_metaplectic, lo2025universalquantumcomputations3, chen_universal_2025,manjunath2026universalquantumcomputationgroup}, or be used to power up stabilizer-based computation with non-Clifford operations intrinsic to the non-Abelian TO \cite{davydova2026, huang2025generatinglogicalmagicstates,sajith_non-clifford_2026, huang2026hybridlatticesurgerynonclifford, kobayashi_clifford_2026,hsin_automorphism_2026,kobayashi_EM_nonClifford_2026}.

The promise of non-Abelian TO for computation stems from the fact that braiding and fusion implement logical operations that are inherently non-Clifford, and hence do not have to be supplied by magic state distillation.
For some anyon theories, most famously the Fibonacci theory, braiding alone is already universal \cite{freedman_modular_2002, freedman2001topological, bonesteel2005braid}, and braiding may in turn be traded for topological charge measurements in measurement-only schemes \cite{bonderson2008measurementonly}.
For the quantum doubles $D(G)$ of finite group $G$, universality can be attained for many groups by augmenting braiding with fusion, including the smallest non-Abelian group $S_3$ \cite{mochonAnyonsNonsolvableFinite2003, mochonAnyonComputersSmaller2004, cui_universal_2015, lo2025universalquantumcomputations3, chen_universal_2025}. Furthermore, while Fibonacci anyons (which are not expressible as a finite-group quantum double model but represent instead the simplest string-net model \cite{levin2004stringnet}) may be universal through braiding alone \cite{freedman_modular_2002}, it requires complex braiding protocols to approximate simple gates \cite{cui_search_2019,burke_topological_2024}. In contrast, the universal gate set for finite-group quantum doubles give relatively simple compilation of conventional Clifford and non-Clifford logical gates \cite{mochonAnyonComputersSmaller2004, lo2025universalquantumcomputations3}. Therefore, in this work we focus on quantum doubles $D(G)$ of finite groups. 

Non-Abelian TO is equally promising for error correction.
As in Abelian topological codes \cite{kitaevFaulttolerantQuantumComputation2003, dennis_topological_2002}, logical information is encoded nonlocally and is thereby protected from local noise; crucially, decoders for non-Abelian anyon models achieve finite error thresholds despite the probabilistic fusion of anyonic syndromes \cite{brell2013thermalization, wootton_error_2014, wootton2015active, burton2015classical, dauphinais2016faulttolerant, schotte2020quantum, schotte2022faulttolerant,jing_intrinsic_2025,fuente_high-threshold_2026,jing_integer_2026,davydova2026,chirame_stabilizing_2025}, so that a universal gate set and error correction can be hosted by one and the same code \cite{lyons2026quantumcomputinganyonsfault}. In certain cases, the non-Abelian fusion rules can even intrinsically herald errors and improve the code thresholds in non-Abelian decoders \cite{jing_intrinsic_2025,fuente_high-threshold_2026,jing_integer_2026}. 

Conversely, non-Abelions and defects can be introduced into stabilizer codes to unlock logical gates that are otherwise inaccessible: twist defects endow the surface code with Ising-like anyons and hence Clifford gates \cite{barkeshli10,bombin2010topological, brown2017poking,barkeshli19}, while interfacing stabilizer codes with genuinely non-Abelian TO yields non-Clifford logical gates \cite{warman2026, huang2025generatinglogicalmagicstates, huang2026hybridlatticesurgerynonclifford, davydova2026, sajith_non-clifford_2026, manjunath2026universalquantumcomputationgroup}.
These prospects have become concrete with recent experimental realizations of non-Abelian defects \cite{xu_digital_2023,andersen2023nonabelian,iqbal_qutrit_2025} and non-Abelian TO and its anyons \cite{iqbalCreationNonAbelianTopological2023,xu2024nonabelian, minev2025stringnet,lo2026universal} on programmable quantum processors. This progress has been enabled by new insights into unitary circuits for preparing such states \cite{liuMethodsSimulatingStringnet2022,lensky2023graph}, in certain cases admitting a constant-depth protocol using measurement-and-feedback \cite{tantivasadakarn2021longrange, verresen2021efficiently, tantivasadakarn2022shortest, bravyiAdaptiveConstantdepthCircuits2022, ren_efficient_2025, lyons_protocols_2025}, the latter most recently culminating in the experimental demonstration of a universal gate set in $D(S_3)$ \cite{lo2026universal}.

\begin{figure}[ht]
\centering
\adjustbox{max width=0.95\columnwidth}{%
\begin{tikzpicture}[line cap=round, line join=round,
    lat/.style={latgray, line width=1.0pt},
    red edge/.style={edgered, line width=1.6pt},
    any/.style={circle, inner sep=0pt, minimum size=7pt},
    vtx/.style={circle, fill=latgray, inner sep=0pt, minimum size=2.6pt},
    qd/.style={circle, fill=latgray!70, inner sep=0pt, minimum size=3.2pt},
    arr/.style={-{Stealth[length=2mm,width=1.6mm]}, black!70, line width=0.8pt},
    cls/.style={rounded corners=22pt, line width=1.0pt},
    clab/.style={font=\large\bfseries, anchor=north west, inner sep=6pt, xshift=8pt, yshift=-2pt},
    tlab/.style={font=\normalsize, anchor=north, inner sep=2pt}]
  \newcommand{\patch}[2]{%
    \begin{scope}[shift={(#1,#2)}]
      \foreach \y in {-0.5,0,0.5}{\draw[lat] (-1.0,\y)--(1.0,\y);}
      \foreach \x in {-0.75,-0.25,0.25,0.75}{\draw[lat] (\x,-0.75)--(\x,0.75);}
      \foreach \x in {-0.75,-0.25,0.25,0.75}{\foreach \y in {-0.5,0,0.5}{\node[vtx] at (\x,\y) {};}}
    \end{scope}}
  \newcommand{\patchred}[2]{\patch{#1}{#2}\draw[red edge] (#1-0.25,#2)--(#1+0.25,#2);}
  \newcommand{\anyon}[5]{\node[any, fill=#3, draw=black!45, line width=0.5pt] at (#1,#2) {}; \node[font=\small, anchor=#5] at (#1,#2) {#4};}
  \newcommand{\ghost}[2]{\node[any, draw=latgray, dashed, line width=0.5pt] at (#1,#2) {};}
  \newcommand{\iconpair}[3]{%
    \patch{#1-1.35}{#2}
    \anyon{#1-2.1}{#2}{opblue!70}{$a$}{south east}
    \anyon{#1-0.6}{#2}{ribgreen!80}{$b$}{south west}
    \draw[arr, opblue!70!black] (#1-2.0,#2+0.2) to[bend left=35] (#1-0.7,#2+0.2);
    \draw[arr] (#1-0.15,#2)--(#1+0.15,#2);
    \patch{#1+1.35}{#2}
    #3}
  \draw[cls, draw=ribgreen!80!black, fill=ribgreen!8]  (-4.3,-7.2) rectangle (4.3,4.5);
  \draw[cls, draw=edgered!80!black, fill=edgered!10]   (-3.7,-3.55) rectangle (3.7,3.75);
  \draw[cls, draw=opblue!80!black,  fill=opblue!10]    (-3.1,-0.2)  rectangle (3.1,3.05);
  \node[clab, opblue!80!black]  at (-3.1,3.05)  {sequential circuit};
  \node[clab, edgered!80!black] at (-3.7,3.75) {$+$ reset};
  \node[clab, ribgreen!80!black] at (-4.3,4.5) {$+$ measurement/ancilla};
  \patch{-1.35}{1.35}
  \anyon{-2.1}{1.35}{opblue!70}{$a$}{south east}
  \anyon{-0.6}{1.35}{ribgreen!80}{$b$}{south west}
  \draw[arr, opblue!70!black]  (-2.0,1.55) to[bend left=35] (-0.7,1.55);
  \draw[arr, ribgreen!70!black] (-0.7,1.15) to[bend left=35] (-2.0,1.15);
  \draw[arr] (-0.15,1.35)--(0.15,1.35);
  \patch{1.35}{1.35}
  \anyon{0.6}{1.35}{ribgreen!80}{$b$}{south east}
  \anyon{2.1}{1.35}{opblue!70}{$a$}{south east}
  \node[tlab] at (0,0.35) {exchange};
  \iconpair{0}{-2.0}{%
    \ghost{0.6}{-2.0}
    \node[font=\small, anchor=south east] at (0.6,-2.0) {$1$};
    \anyon{2.1}{-2.0}{orange!40}{$c$}{south west}}
  \node[tlab] at (0,-3.0) {fusion};
  \patch{-1.35}{-5.35}
  \node[any, fill=white, draw=black!45, line width=0.5pt] at (-1.1,-5.35) {};
  \node[font=\small, anchor=south] at (-1.3,-5.37) {$?$};
  \draw[arr] (-0.15,-5.35)--(0.15,-5.35);
  \patch{1.35}{-5.35}
  \anyon{1.6}{-5.35}{violet!70}{$c$}{south east}
  \node[draw=ribgreen!80!black, fill=ribgreen!22, rounded corners=1.5pt, line width=0.7pt, minimum width=0.6cm, minimum height=0.45cm, inner sep=0pt] (syn) at (2.75,-4.55) {};
  \draw[black!70, line width=0.6pt] (syn.center)++(-0.18,-0.13) arc (160:20:0.19);
  \draw[arr, black!70, line width=0.6pt] (syn.center)++(-0.03,-0.13)--++(0.17,0.32);
  \draw[black!70, line width=0.6pt] (1.6,-5.35) -- (2.45,-4.55);
  \draw[black!70, line width=0.6pt] ([yshift=1.2pt, xshift=3.5pt]syn.east) -- ++(0.3,0);
  \draw[black!70, line width=0.6pt] ([yshift=-1.2pt, xshift=3.5pt]syn.east) -- ++(0.3,0);
  \node[font=\normalsize, ribgreen!60!black, anchor=west] at (3.5,-4.55) {$c$};
  \node[tlab] at (0,-6.5) {syndrome extraction};
\end{tikzpicture}}
\caption{Circuit resource classes and the tasks they enable.
A sequential circuit (blue) suffices to braid anyons $a$ and $b$ at different sites.
Adding a reset primitive (red) allows input-agnostic fusion of $a$ and $b$ into $c$, leaving no anyon behind at the original site of $a$. This cannot be done with only unitary circuit without assuming knowledge on the input state.
Adding measurement or an ancilla (green) allows syndrome extraction: the anyon type $c$ at a site is measured and read out as a classical outcome.}
\label{fig:resource-classes}
\end{figure}

Given this promise, it is well-motivated to consider the minimal set of resources necessary to carry out operations in a non-Abelian quantum double. 
We will demonstrate that the necessary set of primitives are smaller than previously believed.

\begin{table*}[t]
  \centering
  \renewcommand{\arraystretch}{1.3}
  \setlength{\tabcolsep}{10pt}
  \begin{tabular}{l|l|ccccc}
  \hline\hline
  Operation & Section & rewiring & local pair creation & reset & measurement & ancilla \\
  \hline
  Create anyon pair   & Sec.~\ref{subsec:ribbon-rewiring}     & \checkmark & \checkmark &            &            &            \\
  Braiding    & Sec.~\ref{sec:braiding}     & \checkmark &            &            &            &            \\
  Fusion     & Sec.~\ref{sec:fusion}              & \checkmark &            & \checkmark &            &            \\
  (Fusion read-out) & Sec.~\ref{sec:fusion} & \checkmark &            &            & \checkmark & \checkmark \\
Augment graph      & Sec.~\ref{subsec:elementary-move}      & \checkmark &            &            &            & \checkmark \\
  \hline\hline
  \end{tabular}
  \caption{Resources required by each operation on anyons: unitary graph rewiring, local creation of an anyon pair, a non-unitary reset, projective measurement, and an ancilla; we highlight the section that discusses each. Graph rewiring is implemented using sequential unitary circuits. The ability to create anyon pairs with sequential circuits (without ancillas) has been discussed in Ref.~\onlinecite{lyons_protocols_2025}; the key novelty of the present work lies in the minimal resources for anyon braiding and fusion. As a useful sub-protocol, we also introduce how to minimally augment graphs with additional sites, which is similar in spirit to entanglement renormalization procedures \cite{aguado_vidal_2008,aguado_entanglement_2011}.}
  \label{tab:resources}
  \end{table*}

\subsection*{Key results and outline}

Our overarching thesis is that unitary gates and qubit reset are sufficient to implement braiding and fusion of non-Abelions. Notably, we do not require the use of adaptive circuits or ancilla, which have been used in existing schemes\footnote{The only apparent exceptions we are aware of concern anyon \textit{creation} (see e.g. Refs.~\onlinecite{iqbalCreationNonAbelianTopological2023} and \onlinecite{lyons_protocols_2025}) rather than \textit{movement}, the latter being considerably more challenging due to transporting an unknown internal state.}
to manipulate non-Abelions thus far in the literature \cite{aguadoCreationManipulationDetection2008,brennenSimulationsQuantumDouble2009,luo_simulation_2011,liuMethodsSimulatingStringnet2022,bravyiAdaptiveConstantdepthCircuits2022,jovanovicProposalDemonstrateNonabelian2024,chen_universal_2025,lo2026universal}. 
Measurement or ancilla is only required for the read-out of the fusion information, but not in the implementation of the act of fusion itself. Therefore, braiding, fusion, and syndrome measurement separate into distinct circuit resource complexity classes, as summarized in Fig.~\ref{fig:resource-classes} for an encapsulation of these resource requirements. Table~\ref{tab:resources} summarizes which resources more detailed operation requires.

We first introduce \textit{graph rewiring} in Sec.~\ref{sec:graph-rewiring}, a diagrammatic calculus that succinctly represent stabilizers of quantum double model of any finite group $G$. We illustrate this with the simple example of toric code in Sec.~\ref{subsec:tc-warmup}, pointing out the key observation that a single $CX$ gate in toric code has a well-defined geometric operation on the lattice. Following a quick review of the general construction of quantum double model on graphs in Sec.~\ref{subsec:qd-basics}, we introduce the graph rewiring in generality in Sec.~\ref{subsec:elementary-move}. The graph-rewiring picture extends remarkably simply from the toric code to quantum doubles of general groups: one replaces $CX$ with controlled group multiplication, multiplying the target edge's group label by the control's, with left or right multiplication and inverses determined by the edge orientations. The geometric action is unchanged and corresponds again to an elementary graph rewiring move. As an immediate application, we present in Sec.~\ref{subsec:ribbon-rewiring} how the ribbon operator used to create non-Abelions can be decomposed in terms of just two primitive: local pair creation and graph rewiring.

The idea of deforming the lattice to manipulate states has its conceptual roots in entanglement renormalization, where given a quantum double ground state, the lattice edges are disentangled from the topological bulk \cite{aguado_vidal_2008,aguado_entanglement_2011}. Entanglement renormalization gives rise to schemes that modify the underlying code \cite{zhu2020instantaneous,vasmer2022morphing} and an even more closely related scheme that modifies the edges to re-triangulate the underlying manifold. Such a method can be used to move Fibonacci anyons \cite{zhu2020universal}. However, these works treat the addition and removal of edges as the elementary operations, which necessarily involves ancilla. In contrast, our graph rewiring procedure does not involve an ancilla in general and is a more primitive move from which these operations can be built.  In another stream, there are works moving twist defects or Majoranas in the surface code: twist defects are manipulated by graph moves \cite{lensky2023graph,lensky_practical_2026}. This Majorana description provides a complementary viewpoint on the simple toric code example we explain in Sec.~\ref{subsec:tc-warmup}; however, the Majorana description is limited to twist defects in $\mathbb{Z}_2$ toric code or surface code, whereas our graph rewiring is directly applicable to quantum double of any group $G$. Finally, there are influences from the ZX calculus \cite{coecke_interacting_2008,coecke_interacting_2011}, which presents a convenient language to represent individual operators and circuit operation, in the Hopf algebra context especially \cite{collins2020hopffrobenius,majid2022zx,collins2024thesis}. The graph rewiring approach uses similar diagrammatic calculus but to keep track of the entire set of operators stabilizing states in quantum double. We provide a detailed commentary on the relations of graph rewiring with these works and the broader literature in Sec.~\ref{subsec:lit-review-comparison}. 

Then, in Sec.~\ref{sec:braiding}, to further demonstrate the power of graph rewiring, we construct a sequential circuits that implement exchange of charges or fluxes in general quantum double. The previously held intuition in the literature was that, when moving non-Abelions, an ancilla is necessary to keep track of the internal degree of freedom they possess. Contrary to this folklore, our braiding circuit of non-Abelions does not require the use of any ancilla. 

At this stage, one might wonder what is the extent of the power of unitary circuits (as represented by graph rewiring) to manipulate anyons. We show in Sec.~\ref{sec:no-go} that there is a sharp limit: no unitary circuit, even one assisted by a ``catalytic'' ancilla, can fuse anyons of unknown type while guaranteeing no remnant anyon is left behind. The obstruction follows from a simple counting argument: due to the coalescing nature of fusion, microscopic distinct input states have to map to the same output, which is incompatible with the orthogonality-preserving nature of a unitary (or even isometric) operator.
We note that if one allows for the ancilla to remain entangled at the end, then one can unitarily fuse, but this is not scalable, since each fusion event requires additonal ancilla, leads to a growing spatial overhead as fusion continues to happen in a computation.

Given the no-go theorem for unitary implementation of
fusion, one might conclude that measurement is necessary. However, we show that the dissipative \textit{reset} of a single edge, i.e. reinitializing it in a fixed state,  combined with graph rewiring is sufficient to fuse anyons without prior knowledge of their state. We specify protocols for fusing both charges and fluxes. This has interesting ramifications for practical error correction schemes, even for the surface code on superconducting qubits, because reset is a relatively cheap operation compared to measurement; we discuss this further in Sec.~\ref{sec:discussion}.

\section{Graph rewiring: a new diagrammatic calculus}
\label{sec:graph-rewiring}

Before the discussion of the concrete protocols for manipulating non-Abelions, we introduce \textit{graph rewiring}, a geometrically intuitive diagrammatic tool to keep track of the set of stabilizers of the state is modified by unitary circuits. Later, we use graph rewiring as an elementary primitive to develop circuit protocols manipulating anyons.

\subsection{Warm-up: rewiring the toric code}
\label{subsec:tc-warmup}

We first consider the simple example of toric code (or surface code) on the square lattice, understanding the action of a single $CX$ gate as a well-defined graph rewiring step that map toric code defined on one lattice to that defined on another.

The toric code stabilizers are given by:

\begin{equation}\label{eq:tc-stabilizers}
A_{v}=
\begin{tikzpicture}[line width=0.9pt, baseline={(0pt,-\the\fontdimen22\textfont2)}]
  \draw[latgray] (-0.9,0) -- (0.9,0);
  \draw[latgray] (0,-0.9) -- (0,0.9);
  \node[op] at (0,0.45) {$X$};
  \node[op] at (-0.45,0) {$X$};
  \node[op] at (0,-0.45) {$X$};
  \node[op] at (0.45,0) {$X$};
  \node[pq, above right=0pt] at (0,0) {$v$};
\end{tikzpicture}
,\qquad
B_{p}=
\begin{tikzpicture}[scale=0.8, line cap=round, line join=round, baseline={(0pt,-\the\fontdimen22\textfont2)}]
  \draw[skel] (-0.75,-0.75) rectangle (0.75,0.75);
  \node[latgray] at (0,0) {$p$};
  \node[xl] at (0,-0.75) {$Z$};
  \node[xl] at (0.75,0) {$Z$};
  \node[xl] at (0,0.75) {$Z$};
  \node[xl] at (-0.75,0) {$Z$};
\end{tikzpicture}
,
\end{equation}

for every vertices and plaquettes of the square lattice. Next, we apply the $CX$ gate on two consecutive edges as shown in Fig.~\ref{fig:tc-rewire}(a), and track how the stabilizers are modified.

\begin{figure}[ht]
\centering
\adjustbox{max width=\columnwidth}{%
\begin{tikzpicture}[x=0.095\columnwidth, y=0.095\columnwidth, line cap=round, line join=round,
    every node/.style={font=\small},
    ce/.style ={black, line width=1.2pt},
    te/.style ={edgered, line width=1.5pt},
    cr/.style ={opblue, line width=1.1pt},
    tg/.style ={circle, draw=opblue, line width=1.1pt, fill=white, inner sep=0pt, minimum size=7pt},
    pl/.style ={anchor=north west, inner sep=1pt},
  ]
  \newcommand{\tcLabels}[1]{%
    \node[pq, above left=0pt] at (0,0.75) {$v$}; \node[pq, above right=0pt] at (1.5,0.75) {$u$}; \node[pq, below left=0pt] at (0,-0.75) {$w$};
    \node[pq] at (-0.75,0) {$p_{1}$}; \node[pq] at (#1) {$p_{2}$};}
  \begin{scope}
    \tcdom
    \draw[ce] (0,0.75)--(1.5,0.75);
    \draw[te] (0,-0.75)--(0,0.75);
    \draw[cr] (0.75,0.75)--(0,0);
    \fill[opblue] (0.75,0.75) circle (2pt);
    \node[tg] (tgt) at (0,0) {};
    \draw[cr] (tgt.north)--(tgt.south) (tgt.west)--(tgt.east);
    \tcLabels{0.75,0}
    \node[pl] at (-2.4,1.7) {(a)};
  \end{scope}
  \node[font=\Large] at (2.85,0) {$\Rightarrow$};
  \begin{scope}[shift={(5.7,0)}]
    \tcdom
    \draw[ce] (0,0.75)--(1.5,0.75);
    \draw[te] (0,-0.75) .. controls (0.225,-0.15) and (0.9,0.45) .. (1.5,0.75);
    \tcLabels{0.9,-0.3}
    \node[pl] at (-2.4,1.7) {(b)};
  \end{scope}
\end{tikzpicture}}
\caption{The elementary rewiring move in the toric code.
(a) A $CX$ gate (drawn in blue) between two consecutive edges of the square lattice: the control (denoted by a solid dot) acts on the control edge (black) joining vertices $v$ and $u$, and the target (denoted by $\oplus$) acts on the target edge (red) joining vertices $w$ and $v$.
(b) The rewired graph: the end of the target edge at $v$ is detached and rewired to $u$.
The plaquettes $p_{1}$ and $p_{2}$ become a pentagon and a triangle respectively, and the vertices $v$ and $u$ become trivalent and pentavalent respectively, matching the modified stabilizers of Eq.~\eqref{eq:tc-modified} and Eq.~\eqref{eq:tc-modified-rewired}.}
\label{fig:tc-rewire}
\end{figure}

Using commutation relations between $CX$ and the $X,Z$ Pauli operators,
\begin{equation}\label{eq:ZX-commutation}
\vcenter{\hbox{
\begin{tikzpicture}
\begin{yquant}[horizontal]

[violet] qubit {} a[1];
[violet] qubit {} a[+1];

box {$X$} a[0];
cnot a[1] | a[0];

\end{yquant}
\end{tikzpicture}
}}
=
\vcenter{\hbox{
\begin{tikzpicture}
\begin{yquant}[horizontal]

[violet] qubit {} a[1];
[violet] qubit {} a[+1];

cnot a[1] | a[0];
box {$X$} a[0];
box {$X$} a[1];

\end{yquant}
\end{tikzpicture}
}}, \quad \quad
\vcenter{\hbox{
\begin{tikzpicture}
\begin{yquant}[horizontal]

[violet] qubit {} a[1];
[violet] qubit {} a[+1];

box {$Z$} a[1];
cnot a[1] | a[0];

\end{yquant}
\end{tikzpicture}
}}
=
\vcenter{\hbox{
\begin{tikzpicture}
\begin{yquant}[horizontal]

[violet] qubit {} a[1];
[violet] qubit {} a[+1];

cnot a[1] | a[0];
box {$Z$} a[0];
box {$Z$} a[1];

\end{yquant}
\end{tikzpicture}
}},
\end{equation}
we see that only 4 stabilizers do not commute with the $CX$ gate: two plaquette stabilizers that overlap with the target of the $CX$ gate, and two vertex stabilizers that overlap with the control of the $CX$ gate.
Given an original stabilizer $\mathcal{O}$, we denote the modified stabilizer by $\mathcal{O}'\equiv (CX)\,\mathcal{O}\,(CX)$.
They are modified as follows:

\begin{equation}\label{eq:tc-modified}
\begin{aligned}
A'_{v}&=
\begin{tikzpicture}[scale=0.65, line cap=round, line join=round, baseline={(0pt,-\the\fontdimen22\textfont2)}]
  \tclat \path (0,-2.6);
  \node[xl] at (-0.75,0.75) {$X$};
  \node[xl] at (0,1.55)     {$X$};
  \node[xl] at (0.75,0.75)  {$X$};
  \node[pq, above left=0pt] at (0,0.75) {$v$};
\end{tikzpicture}
&
A'_{u}&=
\begin{tikzpicture}[scale=0.65, line cap=round, line join=round, baseline={(0pt,-\the\fontdimen22\textfont2)}]
  \tclat \path (0,-2.6);
  \node[xl] at (0.75,0.75)  {$X$};
  \node[xl] at (0,0)        {$X$};
  \node[xl] at (1.5,0)      {$X$};
  \node[xl] at (1.5,1.55)   {$X$};
  \node[xl] at (2.1,0.75)   {$X$};
  \node[pq, above right=0pt] at (1.5,0.75) {$u$};
\end{tikzpicture}
\\
B'_{p_{1}}&=
\begin{tikzpicture}[scale=0.65, line cap=round, line join=round, baseline={(0pt,-\the\fontdimen22\textfont2)}]
  \tclat
  \node[pq] at (-0.75,0) {$p_{1}$};
  \node[xl] at (0,0)         {$Z$};
  \node[xl] at (0.75,0.75)   {$Z$};
  \node[xl] at (-0.75,0.75)  {$Z$};
  \node[xl] at (-1.5,0)      {$Z$};
  \node[xl] at (-0.75,-0.75) {$Z$};
\end{tikzpicture}
&
B'_{p_{2}}&=
\begin{tikzpicture}[scale=0.65, line cap=round, line join=round, baseline={(0pt,-\the\fontdimen22\textfont2)}]
  \tclat
  \node[pq] at (0.75,0) {$p_{2}$};
  \node[xl] at (0,0)         {$Z$};
  \node[xl] at (1.5,0)       {$Z$};
  \node[xl] at (0.75,-0.75)  {$Z$};
\end{tikzpicture}
\end{aligned}
\end{equation}
where the labels of the vertices and plaquettes follow Fig.~\ref{fig:tc-rewire}.

Here is the key observation: the modified stabilizers can be understood as those defined on a different lattice (which we refer to as the rewired lattice), as shown in Fig.~\ref{fig:tc-rewire}(b). The red edge, instead of connecting vertices $v$ and $w$ originally, is rewired to connect between vertices $u$ and $w$. 
On the rewired lattice, the four modified stabilizers of Eq.~\eqref{eq:tc-modified} are simply the vertex and plaquette stabilizers of the new vertices and plaquettes,
\begin{equation}\label{eq:tc-modified-rewired}
\begin{aligned}
A'_{v}&=
\begin{tikzpicture}[scale=0.65, line cap=round, line join=round, baseline={(0pt,-\the\fontdimen22\textfont2)}]
  \tcrew \path (0,-2.6);
  \node[xl] at (-0.75,0.75) {$X$};
  \node[xl] at (0,1.55)     {$X$};
  \node[xl] at (0.75,0.75)  {$X$};
  \node[pq, above left=0pt] at (0,0.75) {$v$};
\end{tikzpicture}
&
A'_{u}&=
\begin{tikzpicture}[scale=0.65, line cap=round, line join=round, baseline={(0pt,-\the\fontdimen22\textfont2)}]
  \tcrew \path (0,-2.6);
  \node[xl] at (0.75,0.75)  {$X$};
  \node[xl] at (0.6,0.11)   {$X$};
  \node[xl] at (1.5,0)      {$X$};
  \node[xl] at (1.5,1.55)   {$X$};
  \node[xl] at (2.1,0.75)   {$X$};
  \node[pq, above right=0pt] at (1.5,0.75) {$u$};
\end{tikzpicture}
\\
B'_{p_{1}}&=
\begin{tikzpicture}[scale=0.65, line cap=round, line join=round, baseline={(0pt,-\the\fontdimen22\textfont2)}]
  \tcrew
  \node[pq] at (-0.75,0) {$p_{1}$};
  \node[xl] at (0.6,0.11)    {$Z$};
  \node[xl] at (0.75,0.75)   {$Z$};
  \node[xl] at (-0.75,0.75)  {$Z$};
  \node[xl] at (-1.5,0)      {$Z$};
  \node[xl] at (-0.75,-0.75) {$Z$};
\end{tikzpicture}
&
B'_{p_{2}}&=
\begin{tikzpicture}[scale=0.65, line cap=round, line join=round, baseline={(0pt,-\the\fontdimen22\textfont2)}]
  \tcrew
  \node[pq] at (1.02,-0.3) {$p_{2}$};
  \node[xl] at (0.6,0.11)    {$Z$};
  \node[xl] at (1.5,0)       {$Z$};
  \node[xl] at (0.75,-0.75)  {$Z$};
\end{tikzpicture}
\end{aligned}
\end{equation}

Then, the form of the modified stabilizers given in Eq.~\eqref{eq:tc-modified} become transparent geometrically: for example, the 5-body plaquette stabilizer corresponds to the pentagon plaquette $p_1$, the 3-body vertex stabilizer corresponds to the trivalent vertex $v$. 
Given the rewired lattice, the corresponding stabilizers on the rewired lattice exactly coincide with those given in Eq.~\eqref{eq:tc-modified-rewired}.
Therefore, we refer to the action of the $CX$ gate as a \textit{rewiring} of the underlying lattice. 

We can infer a heuristic rule of the rewiring step: initially, the target edge (red) is connected to one end of the control edge (black); after the $CX$ gate action, the target edge is rewired to connect to the other end of the control edge. In the next few subsections, we show that above rewiring rule generalizes; in fact, it does not hinge on the Pauli commutation relations and hence is not restricted to only toric code or even stabilizer codes. Rather, it is applicable to the general setting of quantum double of any finite group $G$ on any directed graph. 

\subsection{Quantum double as graphs}
\label{subsec:qd-basics}

We start by defining the quantum double Hamiltonian of the finite group $G$ on a planar directed graph\footnote{In general, the planar condition can be relaxed: the quantum double can be defined on any 2-complex, which can be understood as a directed graph (such that vertex operators are well-defined) with a set of closed walks each attached with a plaquette (such that plaquette operator are defined).} \cite{kitaevFaulttolerantQuantumComputation2003}. We denote group elements in $G$ by $g$; in particular, the identity group element is denoted by $e$. The inverse of a group element $g$ is denoted as $\bar{g}$.

\subsubsection{Group qudit and generalized Pauli}

The Hilbert space is defined as follows: we put a qudit on each edge of the directed graph, such that the local Hilbert space on the edge is spanned by $\ket{g}$ for all $g\in G$; i.e. the dimension of the qudit is the size of the group $G$, with basis state labeled by each element of $G$ (we refer to this as the group element basis).

The Pauli $X$ from the $\mathbb{Z}_2$ case is generalized to group multiplications $L^g$ and $R^g$. Since $G$ is in general non-Abelian, we need to distinguish between the action of left and right multiplications as follows:
\begin{equation}
    L^g\ket{h} = \ket{gh}, \quad \quad \quad R^g\ket{h}= \ket{h \bar{g}},
    \label{eq:LR-def-main}
\end{equation}

The generalization of Pauli $Z$ from the $\mathbb{Z}_2$ case is the matrix product operator (MPO) $\mathcal{Z}_{ij}^\Gamma$, which acts on the group basis as follows: 
\begin{equation}
    \mathcal{Z}^\Gamma_{ij} \ket{g}= [\Gamma(g)]_{ij} \ket{g},
\end{equation}
where $[\Gamma(g)]_{ij}$ is the $(i,j)$ entry of the representation matrix $\Gamma(g)$ given the irrep $\Gamma$.

\subsubsection{Hamiltonian and ground states}

For each vertex $v$ of the directed graph, we associate vertex operators $A_v^g$ for every $g \in G$, such that its action on arbitrary vertex, e.g. the degree-5 vertex in panel (a) with both incoming and outgoing edges, is given given in panel (b) 
\begin{equation}
    A^g_v=\prod_{l:\;\mathrm{outgoing}} L^g_l \prod_{l:\;\mathrm{incoming}} R^g_l ,
    \label{eq:Av-convention}
\end{equation}

Next, for each plaquette $p$ of the directed graph, we associate a plaquette operator
\begin{equation}
    \begin{aligned}
B^e_p&=\delta_{\partial p,e}\\
&= \frac{1}{|G|}\sum_{\Gamma} d_{\Gamma}\sum_{\{i\}} \textrm{Tr} \left( \prod_{l: \text{ aligned}} \mathcal{Z}^\Gamma_{i_1i_2} \prod_{l: \text{ anti-aligned}} \bar{\mathcal{Z}}^\Gamma_{i_3i_4}\right),
\end{aligned}
\label{eq:Bp-convention}
\end{equation}

which ensures that the ordered\footnote{We choose the ordering convention to be counterclockwise around the plaquette.} product of group elements around the boundary of plaquette $p$ is the identity group element $e$; the second equality express $B^e_p$ in its MPO form, where $\textrm{Tr} (\cdot)$ means the ordered contraction of the internal indices $i$ mirroring that of the ordered product of group elements around the boundary of $p$.
See Eq.~\eqref{eq:Be-MPO} in Appendix~\ref{app:flux-projectors} for how the MPO form of $B^e_p$ is derived.

As a concrete example, we can consider the quantum double on the square lattice. We fix the edge orientation convention that horizontal and vertical edges point to the right and up respectively. Then, using Eq.~\eqref{eq:tc-stabilizers}, the vertex and plaquette operators take the form
\begin{equation}\label{eq:gs-stabilizers}
\begin{aligned}
A^{g}_{v}&=
\vtxLR
,\\[4pt]
B^{e}_{p}&=\frac{1}{|G|}\sum_{\Gamma} d_{\Gamma}\sum_{ijkl}\; \plaqClosed
,
\end{aligned}
\end{equation}
where the outgoing edges at $v$ carry $L^{g}$ and the incoming edges carry $R^{g}$, and $B^{e}_{p}=\delta_{\partial p,e}$ is written as a matrix product operator of the diagonal gates $\mathcal{Z}^{\Gamma}_{ij}$ ($\bar{\mathcal{Z}}^{\Gamma}_{ij}$ on edges traversed against their orientation) contracted counterclockwise around $p$; the gates are defined in Appendix~\ref{app:diagonal-gates}, Eq.~\eqref{eq:Z-def}.

When the quantum double is viewed as a non-Abelian lattice gauge theory, the Hamiltonian is
\begin{equation}\label{eq:quantum-double-H}
H=-\sum_{v}\sum_{g\in G}A^{g}_{v}-\sum_{p}B^{e}_{p},
\end{equation}
such that $A^g$ enacts a gauge transformation. Therefore, the condition
\begin{equation}\label{eq:charge-neutrality}
    A^g_v=+1, \quad \forall g \in G
\end{equation}
is the charge neutrality condition at vertex $v$, as the trivial charge is by definition gauge invariant. As an operator, $A_v=\sum_{g\in G} A^g_v$ is the vertex projector that projects to the trivial charge sector at vertex $v$. Similarly, the condition
\begin{equation}\label{eq:flux-neutrality}
    B^e_p = +1
\end{equation}
is the flux neutrality condition at plaquette $p$, which asserts the flatness of the gauge field within the ground state subspace. 
Neutrality condition of a contractible region (which is used later in Sec.\ref{sec:no-go}) is defined by the condition that Eq.~\eqref{eq:charge-neutrality} or Eq.~\eqref{eq:flux-neutrality} are satisfied in every site in the interior of the contractible region. 

We will sometimes refer to $A_v$ and $B^e_p$ projectors as stabilizers of the quantum double ground states, since they stabilize, i.e. act trivially on, the quantum double ground states. But note that this does not confine the discussion to the family of error correction code known as `stabilizer codes', which concerns stabilizer expressed in terms of Pauli operators. We will also be interested in the excited states of the quantum double model; these states will be stabilized by different projectors, which we introduce in the next section. 

\subsubsection{Anyon type and excited states}
\label{subsubsec:anyon-type}

Excited states (states containing anyons) of the quantum double model correspond to states that violate the charge and flux neutrality conditions in Eq.~\eqref{eq:charge-neutrality} and \eqref{eq:flux-neutrality} respectively. In a contractible region, pairs of anyons in the vacuum fusion channel can be created by applying ribbon operators, which are the non-Abelian generalization of string operators in the toric code. 

Mathematically, distinct anyon types of the quantum double of $G$ correspond to irreducible representations of the Drinfel'd double of $G$, denoted $D(G)$~\cite{dijkgraaf_quasi_1991,bais_quantum_1992,dewildpropitius_discrete_1995,kitaevFaulttolerantQuantumComputation2003}. We will not derive it here, but the irreps of $D(G)$ can be decomposed in terms of group data. In particular, each anyon type is labeled by a tuple $(C,\Gamma[Z(C)])$: the first data $C$ denotes a conjugacy class of $G$, and gives information about the flux type; for the second data, $Z(C)$ is the centralizer of some representative element $r \in C$ in $G$ (different choices of $r$ lead to isomorphic centralizers) and $\Gamma[Z(C)]$ is a irreducible representation (irrep) of $Z(C)$. This irrep information tells us the charge of a given anyon. 

Anyons with a nontrivial conjugacy class label $C$, but the trivial irrep label $[+]$, are called pure fluxes, while anyons with trivial conjugacy class $C = \{e\}$ but nontrivial irrep $\Gamma$ are called pure charges. 
We use $\Gamma$ as shorthand for the pure charge $(\{e\},\Gamma)$ and $C$ to label a magnetic conjugacy class.
The corresponding flux-class and vertex-representation projectors are
\begin{align}
    B^{C}_{p} &= \frac{|C|}{|G|}\sum_{\Gamma}\bar{\chi}^{\Gamma}(C)\sum_{ijkl}\; \plaqClosed
    , \label{eq:flux-projector}\\[4pt]
    P^{\Gamma}_{v} &= \frac{d_{\Gamma}}{|G|} \sum_{g\in G} \chi^{\Gamma}(\bar{g})\; \vtxLR
    , \label{eq:charge-projector}
\end{align}
where $\bar{\chi}^{\Gamma}(C)$ is the complex conjugate of the character of $\Gamma$ on the class $C$; the MPO form of $B^{C}_{p}$ is derived in Appendix~\ref{app:flux-projectors}, Eq.~\eqref{eq:BC-MPO}.
On locally flux-free states, $P^\Gamma_v$ projects onto the pure charge $\Gamma$.
For nontrivial flux, $B^C_p$ fixes only the conjugacy class; specifying a pure flux also requires trivial centralizer charge, as in Eq.~\eqref{eq:ff-site-projector}.

\subsubsection{Representing unentangled ancilla edges}

We can add edges or vertices to a given graph by including ancilla. Based on the state of the unentangled ancilla, the graph representation is different.
An unentangled ancilla qudit in the group element basis can be represented as a loop attached to any vertex of the lattice (Fig.~\ref{fig:ancilla-edges}(a)): it adds an additional plaquette (as a single loop) to the graph. If the ancilla is in the state $\ket{e}$, the loop encloses no flux, whereas in a state $\ket{g\neq e}$ there is a pair of equal and opposite fluxes inside and outside the loop.
An unentangled qudit in the representation basis can be visualized as an edge with one end attached to any vertex of the lattice (Fig.~\ref{fig:ancilla-edges}(b)): it adds an additional vertex to the graph.
In the state $\ket{+} = \sum_g \ket{g}$ there is no charge at the two ends of the edge, whereas in a state $\ket{\Gamma;i,j}$ there is a charge pair at its two ends.

\begin{figure}[H]
\centering
\adjustbox{max width=\columnwidth}{%
\begin{tikzpicture}[x=0.095\columnwidth, y=0.095\columnwidth, line cap=round, line join=round,
    every node/.style={font=\small},
    oriented/.style={postaction={decorate, decoration={markings,
        mark=at position #1 with {\arrow{Stealth[length=1.9mm,width=1.5mm]}}}}},
    ae/.style ={latgray, line width=1.2pt},
    vx/.style ={circle, fill=black, inner sep=0pt, minimum size=3.6pt},
    sl/.style ={text=black, fill=white, inner sep=1pt},
    tr/.style ={edgered, font=\large, inner sep=1pt},
    pl/.style ={anchor=north west, inner sep=1pt},
  ]
  \begin{scope}
    \draw[ae, oriented=0.25] (0,1.2) .. controls (1.05,2.35) and (2.35,2.05) .. (2.35,1.2)
                      .. controls (2.35,0.35) and (1.05,0.05) .. node[sl, pos=0.5] {$\ket{e}$} (0,1.2);
    \node[tr] at (1.55,1.2) {$1$};
    \node[vx] at (0,1.2) {};
    \node[pl] at (-0.6,2.6) {(a)};
  \end{scope}
  \begin{scope}[shift={(4.4,0)}]
    \draw[ae, oriented=0.8] (0,1.2)--(3.0,1.2) node[sl, pos=0.5] {$\ket{+}$};
    \node[vx] at (0,1.2) {};
    \node[tr, anchor=west] at (3.05,1.2) {$1$};
    \node[pl] at (-0.6,2.6) {(b)};
  \end{scope}
\end{tikzpicture}}
\caption{Unentangled ancilla edges.
(a) An ancilla qudit in the group element basis is drawn as a loop (grey) based at a vertex of the lattice (black dot); in the state $\ket{e}$ the loop encloses the trivial flux $1$ (red).
(b) An ancilla qudit in the representation basis is drawn as an edge (grey) with one end at a vertex of the lattice (black dot) and the other end free; in the state $\ket{+}$ the free end carries the trivial charge $1$ (red).}
\label{fig:ancilla-edges}
\end{figure}

\subsection{A single entangling gate is an elementary move}
\label{subsec:elementary-move}

Now we consider the general case of the action of controlled group multiplication gates (which are generalizations of the $CX$ gate from the $\mathbb{Z}_2$ case) on quantum double defined on graphs. The action of controlled group multiplication gates on group element basis are defined as follows:
\begin{equation}
    CL\ket{h_1}\ket{h_2} = \ket{h_1}\ket{h_1 h_2}, \; 
    CR\ket{h_1}\ket{h_2} = \ket{h_1}\ket{h_2 \bar{h}_1}.
    \label{eq:CLCR-def-main}
\end{equation}
Note that unlike the $\mathbb{Z}_2$ case, $CL$ and $CR$ are not equal to their inverses; hence, there are four distinct controlled group multiplication gates in total.

We show that controlled group multiplication has a well-defined action on general directed graphs as long as the control and the target edges share a vertex: the endpoint of the target edge is shifted from one end of the control edge to the other, as illustrated in Fig.~\ref{fig:gr-move}.

\begin{figure}[h]
\centering
\adjustbox{max width=\columnwidth}{%
\begin{tikzpicture}[x=0.095\columnwidth, y=0.095\columnwidth, line cap=round, line join=round,
    every node/.style={font=\small},
    ge/.style ={latgray, line width=1.0pt},
    ce/.style ={black, line width=1.2pt, -{Stealth[length=2.2mm,width=1.8mm]}},
    te/.style ={edgered, line width=1.5pt, -{Stealth[length=2.2mm,width=1.8mm]}},
    vx/.style ={circle, fill=black, inner sep=0pt, minimum size=3.6pt},
    cr/.style ={opblue, line width=1.1pt},
    rl/.style ={opblue, fill=white, inner sep=1pt},
    vl/.style ={inner sep=1pt},
    pl/.style ={anchor=north west, inner sep=1pt},
  ]
  \newcommand{\grPts}{%
    \coordinate (v) at (0,0);
    \coordinate (u) at (1.35,-0.75);
    \coordinate (w) at (-0.95,-1.15);
    \coordinate (b) at (0.4,-2.0);
    \coordinate (q) at (-1.3,0.55);}
  \newcommand{\grSpect}{%
    \draw[ge] (w)--(b); \draw[ge] (b)--(u); \draw[ge] (b)--(0.85,-2.75);
    \draw[ge] (v)--(q); \draw[ge] (q)--(w); \draw[ge] (q)--(-2.0,0.8);
    \draw[ge] (v)--(0.75,0.95);
    \draw[ge] (w)--(-1.95,-1.5);
    \draw[ge] (u)--(2.2,-0.15); \draw[ge] (u)--(1.75,-1.7);}
  \newcommand{\grDots}{\foreach \p in {v,u,w,b,q}{\node[vx] at (\p) {};}
    \node[vl, above=1pt] at (v) {$v$}; \node[vl, right=2pt] at (u) {$u$}; \node[vl, below left=0pt] at (w) {$w$};}
  \begin{scope}
    \grPts \grSpect
    \draw[ce] (u)--(v);
    \draw[te] (w)--(v);
    \draw[cr] ($(u)!0.5!(v)$)--($(w)!0.5!(v)$);
    \fill[opblue] ($(u)!0.5!(v)$) circle (2pt);
    \node[rl] at ($(w)!0.5!(v)$) {$R$};
    \grDots
    \node[pl] at (-2.3,1.4) {(a)};
  \end{scope}
  \node[font=\Large] at (2.85,-0.9) {$\Rightarrow$};
  \begin{scope}[shift={(5.7,0)}]
    \grPts \grSpect
    \draw[ce] (u)--(v);
    \draw[te] (w)--(u);
    \grDots
    \node[pl] at (-2.3,1.4) {(b)};
  \end{scope}
\end{tikzpicture}}
\caption{The elementary rewiring move in general quantum double of $G$.
(a) A controlled right multiplication $CR$ (blue) between two consecutive edges pointing into the vertex $v$: the (black) control edge [u,v] and the (red) target edge [w,v]; the dot marks the control and $R$ acts on the target.
(b) The rewired graph: the head of the target edge is detached from vertex $v$ and rewired to vertex $u$. 
All other edges and vertices are unchanged.}
\label{fig:gr-move}
\end{figure}

We denote the edges by their end points: initially the control edge is $[u,v]$, and the target edge is $[w,v]$; after rewiring, the rewired target edge is $[w,u]$.
In terms of state, let $g_{w,v}$ and $g_{u,v}$ be the group elements on $[w,v]$ and $[u,v]$ respectively. After the rewiring, the group element on $[w,u]$ becomes the product $g_{w,v} \bar{g}_{u,v}$, since $[w,u]$ spans the two-edge path $w\to v \to u$.

The two independent orientations of the control and target edge with respect to their connecting vertex determine which controlled group multiplication gate to act; the four orientation cases are tabulated in Fig.~\ref{fig:gr-four-cases}; in every case, the rewired edge state becomes the product of the two-edge path through $v$ that it replaces. 

We observe that the form of the ground state stabilizers (Eq.~\eqref{eq:gs-stabilizers}) and excited state stabilizers (Eq.~\eqref{eq:flux-projector} and~\eqref{eq:charge-projector}) all have the same generalized Pauli action on the lattice besides the different prefactor weights; therefore, to keep track of the action of controlled group multiplication gates, we just need to keep track of the action on the generalized Paulis. Hence, graph rewiring applies to both ground states and excited states. In Appendix~\ref{app:gr-move} we show that the gate maps every vertex and plaquette stabilizer (for either ground states or excited states) of the old graph to the corresponding stabilizer of the new graph. We note that in the Abelian $\mathbb{Z}_2$ case, all the controlled group multiplication gates reduce to the $CX$ gate, and we recover the case discussed in Sec.~\ref{subsec:tc-warmup}. 

\begin{figure}[ht]
  \centering
  \adjustbox{max width=\columnwidth}{%
  \begin{tikzpicture}[
      line cap=round, line join=round,
      cem/.style={black,   line width=1.2pt,
                  -{Stealth[length=2.4mm,width=1.9mm]}},
      tem/.style={edgered, line width=1.5pt,
                  -{Stealth[length=2.4mm,width=1.9mm]}},
      vx/.style ={circle, fill=black, inner sep=0pt, minimum size=4.5pt},
      gw/.style ={opblue, line width=1.1pt},
      gl/.style ={opblue, font=\normalsize, fill=white, inner sep=1.2pt},%
      vl/.style ={font=\small},
      rule/.style={black, line width=0.9pt},
      hd/.style ={font=\normalsize},
      lab/.style={font=\normalsize},
      imp/.style={font=\Large},
      el/.style ={font=\scriptsize, inner sep=1pt},
    ]
    \def\rowpitch{1.8}
    \def\latA{2.7}
    \def\latI{4.0}
    \def\latB{5.4}
    \def\xleft{-2.0}  \def\xmid{1.2}  \def\xright{6.55}
    \def\ytop{0.5}    \def\ybot{-6.7}
    \node[hd] at ({(\xleft+\xmid)/2}, 0.95)  {Gate};
    \node[hd] at ({(\xmid+\xright)/2}, 0.95) {Rewiring move};
    \draw[rule, line width=1.3pt] (\xleft,\ytop) -- (\xright,\ytop);
    \draw[rule, line width=1.3pt] (\xmid,1.3) -- (\xmid,\ybot);
    \foreach \r in {1,2,3}{\draw[rule] (\xleft,{-\r*\rowpitch+\ytop}) -- (\xright,{-\r*\rowpitch+\ytop});}
    \newcommand{\lmPts}{%
      \coordinate (v) at (0,0);
      \coordinate (u) at (0.8,-0.7);
      \coordinate (w) at (-1.0,-0.85);}
    \newcommand{\lmLabels}{%
      \node[vx] at (v) {};
      \node[vl, above=1pt] at (v) {$v$};
      \node[vl, below=0pt] at (u) {$u$};
      \node[vl, below=0pt] at (w) {$w$};}
    \newcommand{\lmBefore}[2]{%
      \lmPts
      \ifnum#1=1 \draw[cem] (u)--(v);\else\draw[cem] (v)--(u);\fi
      \ifnum#2=1 \draw[tem] (w)--(v);\else\draw[tem] (v)--(w);\fi}
    \newcommand{\lmAfter}[2]{%
      \lmPts
      \ifnum#1=1 \draw[cem] (u)--(v);\else\draw[cem] (v)--(u);\fi
      \ifnum#2=1 \draw[tem] (w)--(u);\else\draw[tem] (u)--(w);\fi}
    \newcommand{\lmCase}[6]{%
      \begin{scope}[yshift={-#1*\rowpitch cm}]
        \node[lab] at (-1.6,-0.3) {(#2)};
        \begin{scope}
          \lmBefore{#3}{#4}
          \coordinate (mc) at ($(u)!0.5!(v)$);
          \coordinate (mt) at ($(w)!0.5!(v)$);
          \draw[gw] (mc)--(mt);
          \fill[opblue] (mc) circle (2pt);
          \node[gl] at (mt) {#5};
          \lmLabels
        \end{scope}
        \begin{scope}[xshift=\latA cm]
          \lmBefore{#3}{#4}
          \node[el, above right=-1pt] at ($(u)!0.5!(v)$) {$h_c$};
          \node[el, above left=-1pt]  at ($(w)!0.5!(v)$) {$h_t$};
          \lmLabels
        \end{scope}
        \node[imp] at (\latI,-0.35) {$\Rightarrow$};
        \begin{scope}[xshift=\latB cm]
          \lmAfter{#3}{#4}
          \node[el, below=1pt] at ($(w)!0.5!(u)$) {#6};
          \lmLabels
        \end{scope}
      \end{scope}}
    \lmCase{0}{a}{1}{0}{$L$}{$h_c h_t$}
    \lmCase{1}{b}{0}{1}{$R^\dagger$}{$h_t h_c$}
    \lmCase{2}{c}{1}{1}{$R$}{$h_t \bar h_c$}
    \lmCase{3}{d}{0}{0}{$L^\dagger$}{$\bar h_c h_t$}
  \end{tikzpicture}}
  \caption{The four orientation cases of the elementary move.
  In the Gate column the control edge $c$ (black) joins $u$ and $v$, the target edge $t$ (red) joins $w$ and $v$, and the controlled multiplication (blue; dot on the control, label on the target) acts on the two edges.
  In the Rewiring move column the original graph is mapped to the final one, in which $t$ is detached from $v$ and reattached to $u$ with its orientation relative to $w$ preserved; edges are labelled by their group elements, and the reattached edge carries the holonomy of the two-edge path it replaces.
  (a) Aligned, $c$ into $v$ and $t$ out of $v$: $CL$.
  (b) Aligned, $c$ out of $v$ and $t$ into $v$: $CR^{\dagger}$.
  (c) Antialigned, both into $v$: $CR$.
  (d) Antialigned, both out of $v$: $CL^{\dagger}$.}
  \label{fig:gr-four-cases}
\end{figure}

Graph rewiring also allows for graphs to be augmented by incorporating ancilla. Unentangled ancilla edges of Fig.~\ref{fig:ancilla-edges} can be entangled with the existing graph and modify the graph: elementary rewiring moves can split a vertex using an ancilla in $\ket{+}$ (Fig.~\ref{fig:ancilla-expand} (a)), and partition a plaquette with an ancilla in $\ket{e}$ (Fig.~\ref{fig:ancilla-expand} (b)). This can be interpreted as running an entanglement renormalization scheme backward, or in terms of an error correction code, introducing more redundancy in the encoding.

\begin{figure}[h]
\centering
\adjustbox{max width=\columnwidth}{%
\begin{tikzpicture}[x=0.095\columnwidth, y=0.095\columnwidth, line cap=round, line join=round,
    every node/.style={font=\small},
    ge/.style ={latgray, line width=1.0pt},
    oriented/.style={postaction={decorate, decoration={markings,
        mark=at position #1 with {\arrow{Stealth[length=1.9mm,width=1.5mm]}}}}},
    ae/.style ={edgered, line width=1.5pt},
    vx/.style ={circle, fill=black, inner sep=0pt, minimum size=3.6pt},
    sl/.style ={text=black, fill=white, inner sep=1pt},
    tr/.style ={edgered, font=\large, inner sep=1pt},
    vl/.style ={inner sep=1pt},
    pl/.style ={anchor=north west, inner sep=1pt},
  ]
  \newcommand{\exArmsL}{\draw[ge] (0,0)--(-1.7,1.1); \draw[ge] (0,0)--(-1.7,-1.1);}
  \begin{scope}
    \exArmsL
    \draw[ge] (0,0)--(1.7,1.1); \draw[ge] (0,0)--(1.7,-1.1);
    \draw[ae, oriented=0.3] (0,0)--(1.35,0) node[sl, pos=0.64] {$\ket{+}$};
    \node[vx] at (0,0) {}; \node[vl, above=2pt] at (0,0) {$v$};
    \node[tr, anchor=west] at (1.4,0) {$1$};
    \node[pl] at (-2.2,1.5) {(a)};
  \end{scope}
  \node[font=\Large] at (2.7,0) {$\Rightarrow$};
  \begin{scope}[shift={(5.4,0)}]
    \exArmsL
    \draw[ge] (1.35,0)--(2.05,1.1); \draw[ge] (1.35,0)--(2.05,-1.1);
    \draw[ae, oriented=0.6] (0,0)--(1.35,0);
    \node[vx] at (0,0) {}; \node[vl, above=2pt] at (0,0) {$v$};
    \node[vx] at (1.35,0) {}; \node[vl, above=2pt] at (1.35,0) {$u$};
    \node[tr, below=2pt] at (1.35,0) {$1$};
  \end{scope}
  \newcommand{\exPent}{%
    \coordinate (A) at (234:1.45); \coordinate (B) at (306:1.45); \coordinate (C) at (18:1.45); \coordinate (D) at (90:1.45); \coordinate (E) at (162:1.45);
    \draw[ge] (A)--(B)--(C)--(D)--(E)--cycle;
    \foreach \p in {A,B,C,D,E}{\draw[ge] (\p)--($(0,0)!1.31!(\p)$);}
    \foreach \p in {A,B,C,D,E}{\node[vx] at (\p) {};}}
  \begin{scope}[shift={(0,-4.0)}]
    \exPent
    \node[pq] at (0.55,0.1) {$p$};
    \begin{scope}[shift={(A)}, rotate=43]
      \draw[ae, oriented=0.25] (0,0) .. controls (0.6,0.5) and (1.4,0.45) .. (1.4,0)
                      .. controls (1.4,-0.45) and (0.6,-0.5) .. node[sl, pos=0.2] {$\ket{e}$} (0,0);
      \node[tr] at (0.9,0) {$1$};
    \end{scope}
    \node[vx] at (A) {};
    \node[pl] at (-2.2,2.0) {(b)};
  \end{scope}
  \node[font=\Large] at (2.7,-3.9) {$\Rightarrow$};
  \begin{scope}[shift={(5.4,-4.0)}]
    \exPent
    \draw[ae, oriented=0.6] (A)--(C);
    \node[vx] at (A) {}; \node[vx] at (C) {};
    \node[tr] at (0.46,-0.6) {$1$};
    \node[pq] at (-0.3,0.35) {$p$};
  \end{scope}
\end{tikzpicture}}
\caption{Expanding the graph with unentangled ancilla edges.
(a) An ancilla edge (red) in the state $\ket{+}$ is attached to a vertex $v$ of degree four (left).
Two elementary moves, controlled by the ancilla edge, reattach two of the edges at $v$ to its free end $u$ (right), so that $v$ is split into two vertices of degree three joined by the ancilla edge; the new vertex $u$ carries the trivial charge $1$.
(b) An ancilla edge in the state $\ket{e}$ is attached as a loop to a vertex of the pentagonal plaquette $p$ (left).
Two elementary moves, controlled by two consecutive edges of $p$, carry the free end of the loop to a next-nearest vertex (right), so that $p$ is partitioned by the diagonal into a triangle, which carries the trivial flux $1$, and a trapezoid.}
\label{fig:ancilla-expand}
\end{figure}

\subsection{Anyon creation: ribbon operator from graph rewiring}
\label{subsec:ribbon-rewiring}

Ribbon operators are used to create vacuum pair of non-Abelions on top of any input states. They are notoriously complicated operators \cite{kitaevFaulttolerantQuantumComputation2003,bombin_family_2008,yan_ribbon_2022}. Here we show that graph rewiring illuminate the structure of ribbon operator and provide an elementary derivation for unitary ribbon operators.

Recently, it was realized that unitary ribbon operators can be decomposed into a three-layer circuit \cite{lyons_protocols_2025}: 1) an ungauging layer to a simpler TO, 2) a finite-depth unitary layer, which creates anyon or defect pair in the simpler TO; and the regauging layer back to the original TO. 

Here, we reinterpret and rederive this result form a different perspective: where the system stays at the original quantum double, only defined on a different graph because of graph rewiring, so the whole circuit can be read off from the sequence of graph rewiring. We know that a single $L^g$ or $R^g$ operator can create a vacuum flux pair and a single $\mathcal{Z}^\Gamma_{ij}$ operator can create a vacuum charge pair. Therefore, the strategy is to rewire the graph to bring two far away sites together, then we can apply the single qudit operators to create the vacuum pair locally, then rewire the graph back. The rewiring turns out to coincide with the ungauging and regauging layers consist of elementary moves of Sec.~\ref{subsec:elementary-move}, thus providing an elementary derivation for the form of the unitary ribbon operator.

\paragraph{Charge ribbon.}
Suppose we want to create a $\Gamma, \bar{\Gamma}$ charge pair separated by four consecutive (purple) direct edges $a_{1},\dots,a_{4}$, as shown in Fig.~\ref{fig:gr-charge-ribbon}. The overall strategy is to create a geometrically non-local edge that connect between the vertices we want to create a vacuum charge pair. The charge ribbon circuit starts with sequential $CL$ gates, controlled by $a_{i}$ on $a_{i+1}$ for $i=1,2,3$, to create more and more geometrically non-local edges until a single non-local edge connect between the two endpoint of interest. Then, $\mathcal{Z}^{\Gamma}_{ij}$ acting on the longest non-local edge creates the charge $\bar{\Gamma}$ at its tail and $\Gamma$ at its head (by Eqs.~\eqref{eq:charge-projector}), so the pair is created directly at the two ends of the string, and the subsequent $CL^{\dagger}$ layer rewires to restore the square lattice. The combined unitary circuit in Fig.~\ref{fig:gr-charge-ribbon} coincides with the form of unitary charge ribbon operator given in Ref.~\onlinecite{lyons_protocols_2025}. 

\begin{figure}[ht]
\centering
\adjustbox{max width=\columnwidth, max totalheight=0.55\textheight}{%
\begin{tikzpicture}[
      line cap=round, line join=round,
      oriented/.style 2 args={postaction={decorate, decoration={markings,
          mark=at position #1 with {\arrow[#2]{Stealth[length=1.9mm,width=1.5mm]}}}}},
      lat/.style   ={latgray, line width=1.1pt},
      dir/.style   ={violet,  line width=1.1pt, oriented={\rcArrPos}{violet}},
      dirarc/.style={violet,  line width=1.1pt, oriented={0.6}{violet}},
      rib/.style   ={opblue,  line width=1.7pt},
      rule/.style  ={black, line width=0.9pt},
      hd/.style    ={font=\LARGE},
      sn/.style    ={font=\Large},
      an/.style    ={edgered, font=\huge},
    ]
    \def\rowpitch{2.5}
    \def\stepX{-5.0}
    \def\latX{1.0}
    \def\xleft{-6.9}  \def\xmid{0.0}  \def\xright{5.9}
    \def\ytop{1.25}
    \def\rcArrPos{0.25}
    \newcommand{\rcVerts}{\foreach \i in {0,...,4}{\draw[lat] (\i,-0.92)--(\i,0.92);}}
    \newcommand{\rcStubs}{\draw[lat] (-0.6,0)--(0,0); \draw[lat] (4,0)--(4.6,0);}
    \newcommand{\rcStr}[1]{\draw[dir] (0,0)--(1,0);
      \ifnum#1<3 \pgfmathtruncatemacro{\rcFirst}{#1+1}\foreach \i in {\rcFirst,...,3}{\pgfmathtruncatemacro{\rcNext}{\i+1}\draw[dir] (\i,0)--(\rcNext,0);}\fi
      \ifnum#1>0 \foreach \i in {1,...,#1}{\pgfmathsetmacro{\rcEX}{\i+1}\pgfmathsetmacro{\rcHH}{0.45+0.16*\i}%
        \draw[dirarc] (0,0) .. controls (0.22,\rcHH) and ({\rcEX-0.85},\rcHH) .. (\rcEX,0);}\fi}
    \newcommand{\rcBase}{\rcVerts \rcStubs \rcStr{0}}
    \newcommand{\rcGate}[2]{\pgfmathsetmacro{\rcDX}{#1+0.5}%
      \draw[rib] (\rcDX,0) .. controls ({\rcDX+0.3},0.34) and ({\rcDX+0.7},0.34) .. ({\rcDX+1},0);
      \fill[opblue] (\rcDX,0) circle (2.6pt);
      \node[opblue, font=\LARGE, fill=white, inner sep=1pt] at ({\rcDX+1},0) {#2};}
    \newcommand{\rcAnyon}[2]{\fill[white] (#1,0) circle (0.38); \node[an] at (#1,0) {#2};}
    \newcommand{\rcAnyons}{\rcAnyon{0}{$\bar{\Gamma}$} \rcAnyon{4}{$\Gamma$}}
    \node[hd] at ({(\xleft+\xmid)/2}, {\ytop+0.5}) {Gate};
    \node[hd] at ({(\xmid+\xright)/2}, {\ytop+0.5}) {Graph};
    \draw[rule, line width=1.3pt] (\xleft,\ytop) -- (\xright,\ytop);
    \draw[rule, line width=1.3pt] (\xmid,{\ytop+0.95}) -- (\xmid,{-7*\rowpitch-\ytop});
    \foreach \r in {1,...,7}{\draw[rule] (\xleft,{-\r*\rowpitch+\ytop}) -- (\xright,{-\r*\rowpitch+\ytop});}
    \begin{scope}[shift={(\latX,0)}] \def\rcArrPos{0.6} \rcBase
      \foreach \i in {1,...,4}{\node[violet, font=\Large, inner sep=1pt] at ({\i-0.5},-0.45) {$a_{\i}$};} \end{scope}
    \foreach \r/\off in {1/0, 2/1, 3/2}{
      \node[sn] at (-6.3,{-\r*\rowpitch}) {\r.};
      \begin{scope}[shift={(\stepX,{-\r*\rowpitch})}] \rcBase \rcGate{\off}{$L$} \end{scope}
      \begin{scope}[shift={(\latX,{-\r*\rowpitch})}] \def\rcArrPos{0.6} \rcVerts \rcStubs \rcStr{\r} \end{scope}}
    \node[sn] at (-6.3,{-4*\rowpitch}) {4.};
    \begin{scope}[shift={(\stepX,{-4*\rowpitch})}] \rcBase
      \node[opblue, font=\Large, fill=white, inner sep=1.5pt] at (3.5,0) {$\mathcal{Z}^{\Gamma}_{ij}$}; \end{scope}
    \begin{scope}[shift={(\latX,{-4*\rowpitch})}] \def\rcArrPos{0.6} \rcVerts \rcStubs \rcStr{3} \rcAnyons \end{scope}
    \foreach \r/\off/\n in {5/2/2, 6/1/1, 7/0/0}{
      \node[sn] at (-6.3,{-\r*\rowpitch}) {\r.};
      \begin{scope}[shift={(\stepX,{-\r*\rowpitch})}] \rcBase \rcGate{\off}{$L^{\dagger}$} \end{scope}
      \begin{scope}[shift={(\latX,{-\r*\rowpitch})}] \def\rcArrPos{0.6} \rcVerts \rcStubs \rcStr{\n} \rcAnyons \end{scope}}
  \end{tikzpicture}}
\caption{Non-Abelian charge ribbon as graph rewiring, on four consecutive direct edges $a_{1},\dots,a_{4}$.
Each row shows the applied elementary move (Gate), drawn on the physical edges of the original lattice, and the rewired graph (Graph); arrowheads mark the edge orientation, chosen along the string (violet).
Steps 1--3 re-attach the tail of each successive edge to the starting vertex, so that after step 3 the edge $a_{4}$ links the two ends of the string and carries its holonomy $h_{a_{1}}h_{a_{2}}h_{a_{3}}h_{a_{4}}$.
Step 4 applies $\mathcal{Z}^{\Gamma}_{ij}$ to this edge, which creates a charge $\bar{\Gamma}$ at its tail and a charge $\Gamma$ at its head (red).
Steps 5--7 restore the original lattice.}
\label{fig:gr-charge-ribbon}
\end{figure}

\paragraph{Flux ribbon.}
Suppose we want to create a pair of fluxes labeled by $g,\bar{g}$ separated by three dual edges $\alpha_{1},\alpha_{2},\alpha_{3}$, as shown in Fig.~\ref{fig:gr-flux-ribbon}. Since these are non-Abelian fluxes, the flux label depends on the position of the base point, so we need to collect the dual edges to the same base point in order to apply $L^g$ on all the dual edges to create a local flux pair. Therefore, the strategy is to rewire the direct edges to create non-local edges as in the charge ribbon case, then the dual edges can be rewired to to the same vertex. The flux ribbon circuit starts with sequential $CL$ gates, controlled by $a_{i}$ and target on $\alpha_{i}$ and on $a_{i+1}$. After all the dual edges are connected to the same vertex at the end of step 3, then applying the single-edge operators $L^{g}$ on every dual edge create flux label $g, \bar{g}$ based at the same vertex. Finally, we rewire with $CL^\dagger$ to restore the original square lattice. The combined unitary circuit in Fig.~\ref{fig:gr-flux-ribbon} coincides with the form of unitary flux ribbon operator given in Ref.~\onlinecite{lyons_protocols_2025}.

\begin{figure}[ht]
\centering
\adjustbox{max width=\columnwidth, max totalheight=0.55\textheight}{%
\begin{tikzpicture}[
      line cap=round, line join=round,
      oriented/.style 2 args={postaction={decorate, decoration={markings,
          mark=at position #1 with {\arrow[#2]{Stealth[length=1.9mm,width=1.5mm]}}}}},
      lat/.style    ={latgray, line width=1.1pt},
      dir/.style    ={violet,  line width=1.1pt, oriented={\rfArrPos}{violet}},
      dirarc/.style ={violet,  line width=1.1pt, oriented={0.6}{violet}},
      dual/.style   ={teal,    line width=1.1pt, oriented={\rfArrPos}{teal}},
      dualarc/.style={teal,    line width=1.1pt, oriented={0.6}{teal}},
      rib/.style    ={opblue,  line width=1.7pt},
      gl/.style     ={opblue,  font=\large},
      rule/.style   ={black, line width=0.9pt},
      hd/.style     ={font=\LARGE},
      sn/.style     ={font=\Large},
      an/.style     ={edgered, font=\LARGE},
    ]
    \def\rowpitch{2.7}
    \def\stepX{-5.0}
    \def\latX{1.0}
    \def\xleft{-6.9}  \def\xmid{0.0}  \def\xright{5.9}
    \def\ytop{1.35}
    \def\rfArrPos{0.25}
    \newcommand{\rfSpect}{\foreach \i in {0,...,4}{\draw[lat] (\i,-0.5)--(\i,0); \draw[lat] (\i,1)--(\i,1.5);}
      \draw[lat] (0,0)--(0,1); \draw[lat] (4,0)--(4,1);
      \draw[lat] (-0.45,1)--(4.45,1);
      \draw[lat] (-0.45,0)--(0,0); \draw[lat] (3,0)--(4.45,0);}
    \newcommand{\rfUps}[1]{\ifnum#1<3 \pgfmathtruncatemacro{\rfFirst}{#1+1}\foreach \i in {\rfFirst,...,3}{\draw[dual] (\i,0)--(\i,1);}\fi}
    \newcommand{\rfFan}[1]{\ifnum#1>0 \foreach \i in {1,...,#1}{\pgfmathsetmacro{\rfS}{1.4-0.2*\i}%
      \ifnum\i=1 \draw[dualarc] (0,0) .. controls ({0.5*\rfS},{0.6*\rfS}) and (0.7,0.827) .. (1,1); \fi
      \ifnum\i=2 \draw[dualarc] (0,0) .. controls ({0.5*\rfS},{0.6*\rfS}) and (1.4,0.86)  .. (2,1); \fi
      \ifnum\i=3 \draw[dualarc] (0,0) .. controls ({0.5*\rfS},{0.6*\rfS}) and (2.7,1.0)   .. (3,1); \fi}\fi}
    \newcommand{\rfStr}[1]{\draw[dir] (0,0)--(1,0);
      \ifnum#1<2 \pgfmathtruncatemacro{\rfFirst}{#1+1}\foreach \i in {\rfFirst,...,2}{\pgfmathtruncatemacro{\rfNext}{\i+1}\draw[dir] (\i,0)--(\rfNext,0);}\fi
      \ifnum#1>0 \foreach \i in {1,...,#1}{\pgfmathsetmacro{\rfEX}{\i+1}\pgfmathsetmacro{\rfS}{0.4+0.15*\i}%
        \draw[dirarc] (0,0) .. controls ({0.5*\rfS},{0.6*\rfS}) and ({\rfEX-0.85},{0.6*\rfS}) .. (\rfEX,0);}\fi}
    \newcommand{\rfBase}{\rfSpect \rfUps{0} \rfStr{0}}
    \newcommand{\rfStep}[3]{%
      \fill[opblue] ({#1-0.5},0) circle (2.6pt);
      \draw[rib] ({#1-0.5},0) .. controls ({#1-0.45},0.3) and ({#1-0.25},0.5) .. (#1,0.5);
      \node[gl, fill=white, inner sep=1pt] at (#1,0.5) {#2};
      \ifnum#3=1
        \draw[rib] ({#1-0.5},0) .. controls ({#1-0.2},0.34) and ({#1+0.2},0.34) .. ({#1+0.5},0);
        \node[gl, fill=white, inner sep=1pt] at ({#1+0.5},0) {#2};
      \fi}
    \newcommand{\rfLg}{\foreach \i in {1,...,3}{\node[gl, fill=white, inner sep=1pt] at (\i,0.5) {$L^{g}$};}}
    \newcommand{\rfAnyons}[2]{\node[an] at (#1,#2) {$g$}; \node[an] at (3.5,0.5) {$\bar{g}$};}
    \node[hd] at ({(\xleft+\xmid)/2}, {\ytop+0.5}) {Gate};
    \node[hd] at ({(\xmid+\xright)/2}, {\ytop+0.5}) {Graph};
    \draw[rule, line width=1.3pt] (\xleft,\ytop) -- (\xright,\ytop);
    \draw[rule, line width=1.3pt] (\xmid,{\ytop+0.95}) -- (\xmid,{-7*\rowpitch-\ytop});
    \foreach \r in {1,...,7}{\draw[rule] (\xleft,{-\r*\rowpitch+\ytop}) -- (\xright,{-\r*\rowpitch+\ytop});}
    \begin{scope}[shift={(\latX,-0.5)}] \def\rfArrPos{0.6} \rfBase
      \foreach \i in {1,...,3}{\node[violet, font=\Large, inner sep=1pt] at ({\i-0.5},-0.3) {$a_{\i}$};
        \node[teal, font=\Large, inner sep=1pt] at ({\i+0.3},0.5) {$\alpha_{\i}$};} \end{scope}
    \foreach \r/\nxt in {1/1, 2/1, 3/0}{
      \node[sn] at (-6.3,{-\r*\rowpitch}) {\r.};
      \begin{scope}[shift={(\stepX,{-\r*\rowpitch-0.5})}] \rfBase \rfStep{\r}{$L$}{\nxt} \end{scope}
      \pgfmathtruncatemacro{\rfN}{min(\r,2)}
      \begin{scope}[shift={(\latX,{-\r*\rowpitch-0.5})}] \def\rfArrPos{0.6} \rfSpect \rfUps{\r} \rfFan{\r} \rfStr{\rfN} \end{scope}}
    \node[sn] at (-6.3,{-4*\rowpitch}) {4.};
    \begin{scope}[shift={(\stepX,{-4*\rowpitch-0.5})}] \rfBase \rfLg \end{scope}
    \begin{scope}[shift={(\latX,{-4*\rowpitch-0.5})}] \def\rfArrPos{0.6} \rfSpect \rfUps{3} \rfFan{3} \rfStr{2} \rfAnyons{0.2}{0.62} \end{scope}
    \foreach \r/\i/\nxt/\k/\mx/\my in {5/3/0/2/0.2/0.62, 6/2/1/1/0.2/0.62, 7/1/1/0/0.5/0.5}{
      \node[sn] at (-6.3,{-\r*\rowpitch}) {\r.};
      \begin{scope}[shift={(\stepX,{-\r*\rowpitch-0.5})}] \rfBase \rfStep{\i}{$L^{\dagger}$}{\nxt} \end{scope}
      \pgfmathtruncatemacro{\rfN}{min(\k,2)}
      \begin{scope}[shift={(\latX,{-\r*\rowpitch-0.5})}] \def\rfArrPos{0.6} \rfSpect \rfUps{\k} \rfFan{\k} \rfStr{\rfN} \rfAnyons{\mx}{\my} \end{scope}}
  \end{tikzpicture}}
\caption{Non-Abelian flux ribbon as graph rewiring, on the three dual edges $\alpha_{1},\alpha_{2},\alpha_{3}$ (teal) attached to the direct string $a_{1},a_{2},a_{3}$ (violet).
Each row shows the applied elementary move (Gate), drawn on the physical edges of the original lattice, and the rewired graph (Graph); arrowheads mark the edge orientations, with the direct edges oriented along the string and the dual edges pointing away from it, so that every gate acts on aligned edges.
After step 3 all three dual edges emanate from the starting vertex and carry the holonomies $h_{a_{1}}\cdots h_{a_{i}}h_{\alpha_{i}}$.
Step 4 applies $L^{g}$ to every dual edge: the holonomy of each intermediate face is conjugated by $g$ and stays flat, while the two end plaquettes acquire the fluxes $g$ and $\bar{g}$ (red), both based at the starting vertex.
Steps 5--7 restore the original lattice.
For $G=\mathbb{Z}_{3}$ and $g=1$ the gates reduce to $C\mathcal{X}$ and the Pauli $\mathcal{X}$, and the circuit is the $\mathcal{X}$ string operator of the toric code.}
\label{fig:gr-flux-ribbon}
\end{figure}

\subsection{Comparison with prior work}\label{subsec:lit-review-comparison}

\subsubsection{Connection with entanglement renormalization}

Given a ground state of a quantum double, charge neutrality at a vertex (Eq.~\eqref{eq:charge-neutrality}) and flux neutrality on a plaquette (Eq.~\eqref{eq:flux-neutrality}) can be used to disentangle an edge from the ground state, as used in entanglement renormalization of quantum double states \cite{aguado_vidal_2008}.
The disentangling circuit consists of three controlled group multiplication gates, known as the P-moves and V-moves of Ref.~\cite{aguado_entanglement_2011}. Ref.~\cite{dennis_topological_2002} and \cite{aguado_entanglement_2011} regards these moves as the elementary operations. We show, however, that the elementary move of graph rewiring is \textit{individual} controlled group multiplication gate: each has a well-defined rewiring move on the graph that defines a qunatum double of $G$.

The idea of using Pachner moves (which re-triangulate a manifold~\cite{pachner1991pl}, serving as a primitive for entanglement renormalization) to re-triangulate a manifold to move anyons has precedent: 
it has been implemented as local unitaries on string-net and Turaev-Viro codes to move anyons and apply logical gates~\cite{koenig2010turaevviro,zhu2020universal,zhu2020instantaneous}. Perhaps due to the generality of string-net models, the local unitaries these prior works used is not a single elementary entangling gate. Nonetheless, due to the similar spirit of changing the underlying lattice or graph, it would be interesting for future work to see how graph rewiring can be suitably generalized to the string-net context. We also want to point out that the elementary move in graph rewiring is more elementary and hence subsumes the Pachner moves, as any triangulation of a manifold is itself a directed graph. In general, since graph is not restricted to triangulation of manifolds, the graph rewiring procedure readily apply to a general 2-dimensional CW complex (see footnote at the start of Sec.~\ref{subsec:qd-basics}).

\subsubsection{Connection with ZX calculus}

We now comment on the correspondence between a graph and the set of operators associated with the quantum double defined on the graph. 
This correspondence rests on the Hopf algebra structure of the group algebra $\mathbb{C}[G]$ carried by each edge and of its dual $\mathbb{C}^{G}$: the vertex operator $A^{g}_{v}$ multiplies a copy of $g$ into every incident edge, from the left or the right according to the orientation, and the plaquette operator $B^{e}_{p}$ evaluates the cointegral $\delta_{e}$ on the ordered product of the boundary edges.
Graph rewiring, as a diagrammatic calculus, is therefore closely related to the ZX calculus for Hopf algebras~\cite{collins2020hopffrobenius,majid2022zx,collins2024thesis}, in the form recently used for Hopf-algebra-valued spin chains in Ref.~\cite{lu2026hopfising}: $A^{g}_{v}$ and $B^{e}_{p}$ are the $X$-type and $Z$-type Hopf Pauli strings of $\mathbb{C}[G]$, the controlled group multiplications are the Hopf generalizations of $CX$, and the sliding, splitting and absorption moves of Appendices~\ref{app:mult-control-comm} and \ref{app:ctrl-diag} are its bialgebra and antipode rewrite rules.

The directed graph is the skeleton of the ZX diagram of the ground state, with a copy spider at every vertex and a multiplication spider on every edge; this is the tensor network representation of quantum double states~\cite{buerschaper_hierarchy_2013}. For $G=\mathbb{Z}_{2}$ it is the correspondence between CSS codes and phase-free ZX diagrams~\cite{kissinger_phasefree_2022} and, more recently, between general stabilizer codes and graphs~\cite{khesin2024universal}.
Our contribution is the converse use of this correspondence: rather than rewriting diagrams to prove operator identities, we use the known link between diagrams and operators and treat the graph itself as shorthand for the stabilizer group.

A single controlled group multiplication between two edges sharing a vertex maps the stabilizer group of one directed graph onto that of another, in which one endpoint of the target edge has slid along the control edge (Sec.~\ref{subsec:elementary-move}).
Because the move rests on the Hopf algebra structure of $\mathbb{C}[G]$ rather than on the Pauli group, it holds for any finite group $G$ and any planar directed graph, with the left or right action and the orientation of every edge tracked by the drawing.
The identification of a stabilizer state with its graph, previously available only for stabilizer codes~\cite{kissinger_phasefree_2022,khesin2024universal}, thereby extends to the non-Abelian codes $D(G)$, whose lattice moves had until now been composite gadgets that add or remove edges~\cite{aguado_entanglement_2011,dennis_topological_2002}.

\subsubsection{Connection with graph operations}

Deforming a stabilizer code with entangling gates has likewise been studied only in the Pauli setting~\cite{vasmer2022morphing,vuillot2019code,kubica2018ungauging,bombin2024unifying}, where every controlled group multiplication reduces to a $CX$ gate.
Closest to our elementary move are the Pauli-code graph deformations of Refs.~\cite{lensky2023graph,andersen2023nonabelian}, where a single two-qubit gate moves one edge and thereby moves a projective Ising anyon (equivalently, a Majorana twist defect), and the composite controlled-multiplication moves that add or remove an edge of a quantum double lattice~\cite{dennis_topological_2002,aguado_vidal_2008}. Retriangulation~\cite{koenig2010turaevviro,zhu2020universal,zhu2020instantaneous,minev2025stringnet} and measurement-based code deformation~\cite{brown2017poking} also keep every step a valid code, but do so either through multi-qubit moves or by removing qubits.
None of these assigns a lattice meaning to a single controlled multiplication between two existing edges of a non-Abelian $D(G)$ lattice, which is the elementary move we identified in Sec.~\ref{subsec:elementary-move}.

Since every graph rewiring move is unitary and preserves the numbers of edges, vertices and plaquettes, the tensor product Hilbert space and the stabilizer set stay well defined after each gate, so charges and fluxes can be located on the graph at every intermediate step.
This allows us to use graph rewiring as a design tool: the braiding and fusion circuits of Secs.~\ref{sec:braiding} and \ref{sec:fusion} are obtained by choosing a sequence of graphs, and the gates are read off from the moves.

\section{Braiding as graph rewiring}\label{sec:braiding}

In a topological quantum computation scheme, logical information is stored in the nonlocal internal Hilbert space of collection of non-Abelions \cite{kitaevFaulttolerantQuantumComputation2003,nayak_non-abelian_2008}; braiding non-Abelian anyons acts on the nonlocal internal Hilbert space, hence serves an important primitive to enable logical gates. To that end, we leverage graph rewiring to develop circuits that braid non-Abelions without the use of ancilla. Here, we provide concrete protocols for exchanging two charges (Sec.~\ref{subsec:charge-exchange}) and exchanging two fluxes (Sec.~\ref{subsec:flux-exchange}) in nearest neighbor sites. These circuits can be composed to exchange two dyons. To obtain a full braid, one just need to apply the exchange circuit twice. For exchanges involving anyons separated by longer distance, we just need to compose the exchange circuit to form a sequential circuit. No ancilla is used in the exchange circuits.

\subsection{Reduction to nearest neighbor exchange}

We first show that one can exchange and braid anyons arbitrarily far apart, without any knowledge, by using nearest neighbor exchanges as primitives. Hence, it suffices for us to construct concrete protocols for nearest neighbor exchanges in Sec.~\ref{subsec:charge-exchange} and Sec.~\ref{subsec:flux-exchange}.

\begin{figure}[ht]
\centering
\adjustbox{max width=0.7\columnwidth}{%
\begin{tikzpicture}[x=1cm, y=1cm, line cap=round, line join=round,
    every node/.style={font=\small},
    any/.style={circle, inner sep=0pt, minimum size=7pt, draw=black!45, line width=0.5pt},
    mid/.style={any, fill=latgray!60},
    hopA/.style={-{Stealth[length=1.8mm,width=1.4mm]}, opblue!70!black, line width=0.8pt},
    hopB/.style={-{Stealth[length=1.8mm,width=1.4mm]}, ribgreen!70!black, line width=0.8pt},
    hopM/.style={-{Stealth[length=1.8mm,width=1.4mm]}, latgray, line width=0.8pt}]
  \newcommand{\anyA}[3]{\node[any, fill=opblue!70]  at (#1,#2) {}; \node[anchor=#3, inner sep=4pt] at (#1,#2) {$a$};}
  \newcommand{\anyB}[3]{\node[any, fill=ribgreen!80] at (#1,#2) {}; \node[anchor=#3, inner sep=4pt] at (#1,#2) {$b$};}
  \newcommand{\anyM}[2]{\node[mid] at (#1,#2) {};}
  \newcommand{\cw}[4]{%
    \draw[#3] ($(#2+0.5,#1)+(160:0.5)$) arc[start angle=160, end angle=20,   radius=0.5];
    \draw[#4] ($(#2+0.5,#1)+(-20:0.5)$) arc[start angle=-20, end angle=-160, radius=0.5];}
  \newcommand{\ccw}[4]{%
    \draw[#3] ($(#2+0.5,#1)+(20:0.5)$)   arc[start angle=20,   end angle=160, radius=0.5];
    \draw[#4] ($(#2+0.5,#1)+(-160:0.5)$) arc[start angle=-160, end angle=-20, radius=0.5];}
  \node[anchor=east] at (-0.6,0) {1.};
  \anyA{0}{0}{east} \foreach \i in {1,...,4}{\anyM{\i}{0}} \anyB{5}{0}{west}
  \foreach \i in {0,...,3}{\cw{0}{\i}{hopA}{hopM}}
  \cw{0}{4}{hopA}{hopB}
  \node[anchor=east] at (-0.6,-2.0) {2.};
  \foreach \i in {0,...,3}{\anyM{\i}{-2.0}} \anyB{4}{-2.0}{south west} \anyA{5}{-2.0}{west}
  \foreach \i in {0,...,3}{\ccw{-2.0}{\i}{hopB}{hopM}}
  \node[anchor=east] at (-0.6,-4.0) {3.};
  \anyB{0}{-4.0}{east} \foreach \i in {1,...,4}{\anyM{\i}{-4.0}} \anyA{5}{-4.0}{west}
\end{tikzpicture}
}
\caption{Exchange of two far-apart anyons $a$ (blue) and $b$ (green) on a line through nearest neighbor exchanges: $a$ is carried to the right end by clockwise exchanges (step 1), then $b$ is carried to the left end by counterclockwise exchanges (step 2), leaving the anyons in the middle (grey) in place (step 3).}
\label{fig:far-exchange}
\end{figure}

When exchanging between far away anyons (say they are on a line), we always make sure that the left and right one use different rotational direction exchange, such that applying twice doesn't result in braiding around the anyons in the middle (Fig.~\ref{fig:far-exchange}). In Fig.~\ref{fig:far-exchange}, both $a$ and $b$ pass on top of the intermediate anyons, which ensures that applying the exchange twice does not braid around them. We emphasize that we make no assumption about the intermediate space being free of anyons.

\subsection{Charge exchange circuit}
\label{subsec:charge-exchange}

\begin{figure}[ht]
\centering
\adjustbox{max width=\columnwidth, max totalheight=0.8\textheight}{%
\begin{tikzpicture}[
    line cap=round, line join=round,
    lat/.style ={latgray, line width=1.1pt},
    ga/.style ={postaction={decorate, decoration={markings,
                 mark=at position #1 with {\arrow{Stealth[length=2.2mm,width=1.9mm]}}}}},
    ga/.default=0.52,
    gar/.style={postaction={decorate, decoration={markings,
                 mark=at position #1 with {\arrowreversed{Stealth[length=2.2mm,width=1.9mm]}}}}},
    gar/.default=0.48,
    red edge/.style={edgered, line width=1.5pt},
    rule/.style={black, line width=0.9pt},
    mv/.style  ={opblue, line width=1.2pt},
    ml/.style  ={opblue, font=\LARGE, fill=white, inner sep=1.2pt},
    hd/.style  ={black, font=\LARGE},
    sn/.style  ={black, font=\huge},
  ]

  \def\rowpitch{4.0}
  \def\gcol{6.4}
  \def\tleft{-4.3}
  \def\tright{9.5}
  \def\tmid{3.2}
  \def\cshift{-1.95}
  \def\ceArrTip{1.7}

  \node[hd] at (0, 2.85) {Gate};
  \node[hd] at (\gcol, 2.85) {Graph};
  \draw[rule, line width=1.3pt] (\tleft,2.0) -- (\tright,2.0);
  \draw[rule, line width=1.3pt] (\tmid,3.7) -- (\tmid,{-4*\rowpitch-2.0});
  \foreach \r in {1,...,4}{\draw[rule] (\tleft,{-\r*\rowpitch+2.0}) -- (\tright,{-\r*\rowpitch+2.0});}

  \newcommand{\ceArrR}[2]{\draw[lat,-{Stealth[length=2.2mm,width=1.9mm]}] (#1-0.15,#2)--(#1+0.02,#2);}
  \newcommand{\ceArrU}[2]{\draw[lat,-{Stealth[length=2.2mm,width=1.9mm]}] (#1,#2-0.15)--(#1,#2+0.02);}
  \newcommand{\ceCtrl}{\fill[opblue] (1.5,0) circle (2.4pt);}
  \newcommand{\ceTopBot}{%
    \draw[lat] (-0.35, 1)--(3.35, 1);
    \draw[lat] (-0.35,-1)--(3.35,-1);
    \draw[lat] (0,-1)--(0,1);
    \draw[lat] (3,-1)--(3,1);
    \foreach \xc in {0,1,2,3}{\draw[lat] (\xc,1)--(\xc,1.35); \draw[lat] (\xc,-1)--(\xc,-1.35);}
    \foreach \xa in {0.5,1.5,2.5}{\ceArrR{\xa}{1} \ceArrR{\xa}{-1}}
    \foreach \xb in {0,3}{\ceArrU{\xb}{0.5} \ceArrU{\xb}{-0.5}}}
  \newcommand{\ceRedEdge}{%
    \draw[red edge] (1,0)--(2,0);
    \draw[red edge,-{Stealth[length=2.7mm,width=2.2mm]}] (1,0)--(\ceArrTip,0);}
  \newcommand{\ceDot}[1]{\fill[gray!95!black] (#1,0) circle (2.4pt);}
  \newcommand{\ceDotO}[1]{\draw[gray!95!black, line width=1pt, fill=white] (#1,0) circle (2.2pt);}
  \newcommand{\ceUpTwoToL}{\draw[lat,gar] (2,1) .. controls (1.8,0.56) and (1.3,0.26) .. (1,0);}
  \newcommand{\ceRightTwoToL}{\draw[lat,ga] (1,0) .. controls (1.3,0.26) and (2.4,0.5) .. (3,0);}
  \newcommand{\ceDownTwoToL}{\draw[lat,gar] (1,0) .. controls (1.3,-0.26) and (1.8,-0.56) .. (2,-1);}
  \newcommand{\ceUpOneToR}{\draw[lat,gar] (1,1) .. controls (1.2,0.56) and (1.7,0.26) .. (2,0);}
  \newcommand{\ceDownOneToR}{\draw[lat,ga] (1,-1) .. controls (1.2,-0.56) and (1.7,-0.26) .. (2,0);}
  \newcommand{\ceLeftOneToR}{\draw[lat,ga] (0,0) .. controls (0.5,-0.5) and (1.5,-0.45) .. (2,0);}

  \expandafter\newcommand\csname ceC1\endcsname{%
    \ceTopBot
    \draw[lat] (-0.35,0)--(3.35,0);
    \draw[lat] (1,-1)--(1,1);
    \draw[lat] (2,-1)--(2,1);
    \ceArrR{0.5}{0} \ceArrR{2.5}{0}
    \ceArrU{1}{0.5} \ceArrU{1}{-0.5} \ceArrU{2}{0.5} \ceArrU{2}{-0.5}
    \ceRedEdge
    \ceDot{2} \ceDotO{1}}
  \expandafter\newcommand\csname ceC2\endcsname{%
    \ceTopBot
    \draw[lat] (-0.35,0)--(1,0);
    \draw[lat] (1,-1)--(1,1);
    \draw[lat] (2,-1)--(2,0);
    \ceUpTwoToL
    \ceRightTwoToL
    \draw[lat] (3,0)--(3.35,0);
    \ceArrR{0.5}{0} \ceArrU{1}{0.5} \ceArrU{1}{-0.5} \ceArrU{2}{-0.5}
    \ceRedEdge
    \ceDot{2} \ceDotO{1}}
  \expandafter\newcommand\csname ceC3\endcsname{%
    \ceTopBot
    \draw[lat] (-0.35,0)--(0,0);
    \ceLeftOneToR
    \draw[lat] (1,0)--(1,1);
    \draw[lat] (2,-1)--(2,0);
    \ceUpTwoToL
    \ceRightTwoToL
    \ceDownOneToR
    \draw[lat] (3,0)--(3.35,0);
    \ceArrU{1}{0.5} \ceArrU{2}{-0.5}
    \ceRedEdge
    \ceDot{2} \ceDotO{1}}
  \expandafter\newcommand\csname ceC4\endcsname{%
    \ceTopBot
    \draw[lat] (-0.35,0)--(1,0);
    \draw[lat] (1,-1)--(1,0);
    \draw[lat] (2,0)--(3.35,0);
    \draw[lat] (2,0)--(2,1);
    \ceUpOneToR
    \ceDownTwoToL
    \ceArrR{0.5}{0} \ceArrR{2.5}{0} \ceArrU{1}{-0.5} \ceArrU{2}{0.5}
    \ceRedEdge
    \ceDot{1} \ceDotO{2}}
  \expandafter\newcommand\csname ceC5\endcsname{%
    \ceTopBot
    \draw[lat] (-0.35,0)--(3.35,0);
    \draw[lat] (1,-1)--(1,1);
    \draw[lat] (2,-1)--(2,1);
    \ceArrR{0.5}{0} \ceArrR{2.5}{0}
    \ceArrU{1}{0.5} \ceArrU{1}{-0.5} \ceArrU{2}{0.5} \ceArrU{2}{-0.5}
    \ceRedEdge
    \ceDot{1} \ceDotO{2}}

  \expandafter\newcommand\csname ceG1\endcsname{%
    \ceCtrl
    \draw[mv] (1.5,0) .. controls (1.65,0.3) and (1.82,0.46) .. (1.97,0.5);
    \node[ml] at (2,0.5) {$L$};
    \draw[mv] (1.5,0) .. controls (1.75,-0.45) and (2.25,-0.38) .. (2.47,-0.07);
    \node[ml] at (2.5,0) {$L$};}
  \expandafter\newcommand\csname ceG2\endcsname{%
    \ceCtrl
    \draw[mv] (1.5,0) .. controls (1.38,-0.3) and (1.18,-0.5) .. (1.03,-0.5);
    \node[ml] at (1,-0.5) {$R^{\dagger}$};
    \draw[mv] (0.53,0.05) .. controls (0.9,0.4) and (1.25,0.3) .. (1.5,0.02);
    \node[ml] at (0.5,0) {$R^{\dagger}$};}
  \expandafter\newcommand\csname ceG3\endcsname{%
    \node[ml] at (1.4,0) {$\mathcal{S}$};}
  \expandafter\newcommand\csname ceG4\endcsname{%
    \ceCtrl
    \draw[mv] (1.5,0) .. controls (1.3,0.12) and (1.2,0.4) .. (1.24,0.6);
    \node[ml] at (1.25,0.64) {$L$};
    \draw[mv] (1.5,0) .. controls (1.7,-0.12) and (1.8,-0.4) .. (1.76,-0.6);
    \node[ml] at (1.75,-0.64) {$R^{\dagger}$};}

  \foreach \r in {1,...,5}{
    \begin{scope}[shift={({\gcol+\cshift},{-(\r-1)*\rowpitch})}, scale=1.3]
      \csname ceC\r\endcsname
    \end{scope}}
  \begin{scope}[shift={({\gcol+\cshift},0)}, scale=1.3]
    \node[hd, anchor=south east, inner sep=1pt] at (0.94,0.08) {$v_{1}$};
    \node[hd, anchor=south west, inner sep=1pt] at (2.06,0.08) {$v_{2}$};
  \end{scope}
  \foreach \r in {2,...,5}{
    \pgfmathtruncatemacro{\p}{\r-1}
    \node[sn] at (-3.55,{-(\r-1)*\rowpitch+1.3}) {\p.};
    \def\ceArrTip{1.85}
    \begin{scope}[shift={(\cshift,{-(\r-1)*\rowpitch})}, scale=1.3]
      \csname ceC\p\endcsname
      \csname ceG\p\endcsname
    \end{scope}}

\end{tikzpicture}}
\caption{Charge exchange circuit as graph rewiring.
The circuit exchanges charges at two vertices connected by a direct edge (red) for any quantum double of group $G$.
Each row shows the elementary move (Gate), drawn as the gates (blue) acting on the graph of the previous row, and the resulting rewired graph (Graph); the grey dots track the two vertices to be exchanged.
Every controlled multiplication gate is controlled by the marked edge (dot) and targets the labeled edge.
At step 3 the inverse gate $\mathcal{S}$ reverses the orientation of the marked edge; at the same time we redraw the graph with the two vertices exchanged, such that the marked edge again points to the right.
After the protocol, the original lattice edges are restored with the same orientation, leaving only the two vertices swapped; hence, the two vertex stabilizers are swapped while all other stabilizers remain unchanged.}
\label{fig:charge_exchange_circuit}
\end{figure}

We construct a unitary circuit $U_\text{ex}$ that exchanges the charges at two vertices $v_{1}$ and $v_{2}$ connected by a direct edge $\ell$, for the quantum double of any finite group $G$.

\paragraph{The circuit action.}
The circuit and the accompanying rewiring of the graph are shown in Fig.~\ref{fig:charge_exchange_circuit}; the stabilizers are tracked step by step in Appendix~\ref{app:ce}. Applying the elementary moves of Sec.~\ref{subsec:elementary-move} (and carefully following the orientation rules given in Fig.~\ref{fig:gr-four-cases}),
the circuit rewires edges around the direct edge $\ell$ such that the overall action is swapping vertices $v_1$ and $v_2$. In additional to controlled group multiplication gate, a single-qudit inverse gate $\mathcal{S}_{\ell}$ of Eq.~\eqref{eq:inverse-gate-def} is needed to revert the direct edge orientation back to the original. In particular, from the step 2 to step 3 graphs in Fig.~\ref{fig:charge_exchange_circuit}, in addition to the action of $\mathcal{S}_{\ell}$ gate, the graph is continuously deformed such that the direct edge $l$ rotate clockwise by $180^\circ$ until $v_1$ and $v_2$ have swapped. This deformation is valid since it does not change the connectivity of the graph, and thereby the underlying stabilizers are invariant under this deformation. 

At the end of the circuit, the lattice is restored with the two endpoints exchanged, so the overall action on the set of stabilizers is just swapping the two vertex stabilizers $A^{g}_{v_{1}}$ and $A^{g}_{v_{2}}$,
\begin{equation}\label{eq:ce-swap}
A^{g}_{v_{1}}=
\begin{tikzpicture}[line width=0.9pt, baseline={(0pt,-\the\fontdimen22\textfont2)}]
  \draw[latgray] (-0.9,0) -- (1.8,0);
  \draw[latgray] (0,-0.9) -- (0,0.9);
  \draw[latgray] (0.9,-0.9) -- (0.9,0.9);
  \node[op] at (0,0.45) {$L^{g}$};
  \node[op] at (-0.45,0) {$R^{g}$};
  \node[op] at (0,-0.45) {$R^{g}$};
  \node[op] at (0.45,0) {$L^{g}$};
\end{tikzpicture}
\;\overset{\textstyle U_{\mathrm{ex}}}{\longleftrightarrow}\;
A^{g}_{v_{2}}=
\begin{tikzpicture}[line width=0.9pt, baseline={(0pt,-\the\fontdimen22\textfont2)}]
  \draw[latgray] (-0.9,0) -- (1.8,0);
  \draw[latgray] (0,-0.9) -- (0,0.9);
  \draw[latgray] (0.9,-0.9) -- (0.9,0.9);
  \node[op] at (0.45,0) {$R^{g}$};
  \node[op] at (0.9,0.45) {$L^{g}$};
  \node[op] at (1.35,0) {$L^{g}$};
  \node[op] at (0.9,-0.45) {$R^{g}$};
\end{tikzpicture}
\end{equation}
for every $g\in G$, and with them the charge types and internal states of Eqs.~\eqref{eq:P-charge-type} and \eqref{eq:P-internal} at the two vertices.
Every stabilizer other than the two vertex stabilizers is unchanged (Appendix~\ref{app:ce}).

\paragraph{Commuting gates and trivial mutual braiding.}
The six controlled multiplication gate commute with each other, since they are all controlled by the same edge $\ell$.
The only gate that does not commute with them is $\mathcal{S}_{\ell}$, and moving a controlled multiplication across it inverts the gate, $\mathcal{S}_{\ell}\,C_{\ell}L_{t}\,\mathcal{S}_{\ell}=C_{\ell}L^{\dagger}_{t}$ by Eq.~\eqref{eq:controlled-inverse}.
Fig.~\ref{fig:charge_exchange_circuit} therefore stands for a whole family of equivalent charge exchange circuits, where the equivalence is based on gate commutation relations. For example, the moves within steps 1--2 and within step 4 can be applied in any order, and any controlled multiplication gate can be shifted across $\mathcal{S}_{\ell}$ by inverting it.
Each ordering rewires the graph through a different sequence of intermediate graphs, some of which have a dangling edge.
In particular, one ordering first ungauges $v_{2}$ onto the direct edge and then regauges it at position $1$, so that the exchange acquires the interpretation of gauging as a sequential circuit~\cite{chen_sequential_2024} (Appendix~\ref{app:ce-ungauging}).
This freedom reflects that pure charges of $D(G)$ braid trivially with one another, so that the exchange need not distinguish a clockwise from a counterclockwise exchange\footnote{For quantum doubles of Hopf algebras with non-cocommutative vertex operators, the order of the controlled multiplications matters.}.

\begin{figure*}[t]
  \centering
  \adjustbox{max width=\textwidth}{%
  \begin{tikzpicture}[
      line cap=round, line join=round,
      lat/.style ={latgray, line width=1.1pt},
      gr/.style ={postaction={decorate, decoration={markings,
                   mark=at position #1 with {\arrow{Stealth[length=2.2mm,width=1.9mm]}}}}},
      gr/.default=0.52,
      red edge/.style={edgered, line width=1.5pt},
      rule/.style={black, line width=0.9pt},
      mv/.style  ={opblue, line width=1.2pt},
      ml/.style  ={opblue, font=\LARGE, inner sep=1.2pt},
      hd/.style  ={black, font=\LARGE},
      sn/.style  ={black, font=\huge},
    ]
  
    \def\colpitch{3.2}
    \def\hcol{-3.1}
    \def\tleft{-4.6}
    \def\tright{24.0}
    \def\ttop{1.7}
    \def\tmid{-1.35}
    \def\tbot{-3.65}
    \def\srow{-0.2}
    \def\grow{-2.5}
    \def\fbArrPos{0.52}
    \def\fbArrPosS{0.7}
  
    \node[hd] at (\hcol, \srow) {Gate};
    \node[hd] at (\hcol, \grow) {Graph};
    \draw[rule, line width=1.3pt] ({-\colpitch/2},\ttop) -- ({-\colpitch/2},\tbot);
    \draw[rule, line width=1.3pt] (\tleft,\tmid) -- (\tright,\tmid);
    \foreach \c in {1,...,7}{\draw[rule] ({(\c-0.5)*\colpitch},\ttop) -- ({(\c-0.5)*\colpitch},\tbot);}
  
    \newcommand{\fbArrR}[2]{\draw[lat,-{Stealth[length=1.9mm,width=1.7mm]}] (#1-0.15,#2)--(#1+0.02,#2);}
    \newcommand{\fbArrU}[2]{\draw[lat,-{Stealth[length=1.9mm,width=1.7mm]}] (#1,#2-0.15)--(#1,#2+0.02);}
    \newcommand{\fbCtrl}[2]{\fill[opblue] (#1,#2) circle (2.2pt);}
    \newcommand{\fbFrame}{%
      \draw[lat] (-1, 0.5)--(1, 0.5);
      \draw[lat] (-1,-0.5)--(1,-0.5);
      \draw[lat] (-1,-0.5)--(-1,0.5);
      \draw[lat] (1,-0.5)--(1,0.5);
      \foreach \xa in {-1,0,1}{\draw[lat] (\xa,0.5)--(\xa,0.85); \draw[lat] (\xa,-0.85)--(\xa,-0.5);}
      \draw[lat] (-1.35, 0.5)--(-1, 0.5); \draw[lat] (1, 0.5)--(1.35, 0.5);
      \draw[lat] (-1.35,-0.5)--(-1,-0.5); \draw[lat] (1,-0.5)--(1.35,-0.5);
      \foreach \xa in {-0.5,0.5}{\fbArrR{\xa}{0.5} \fbArrR{\xa}{-0.5}}
      \fbArrU{-1}{0} \fbArrU{1}{0}}
  
    \expandafter\newcommand\csname fbS1\endcsname{\fill[plaqgray] (0,-0.5) rectangle (1,0.5);}
    \expandafter\newcommand\csname fbS2\endcsname{\fill[plaqgray] (0,-0.5) .. controls (0.15,-0.1) and (0.6,0.3) .. (1,0.5) -- (1,-0.5) -- cycle;}
    \expandafter\newcommand\csname fbS3\endcsname{\fill[plaqgray] (-1,-0.5) .. controls (-0.35,0) and (0.35,0) .. (1,0.5) -- (1,-0.5) -- cycle;}
    \expandafter\newcommand\csname fbS4\endcsname{\fill[plaqgray] (-1,-0.5) .. controls (-0.6,0.2) and (0.6,0.2) .. (1,-0.5) -- cycle;}
    \expandafter\newcommand\csname fbS5\endcsname{\fill[plaqgray] (-1,0.5) .. controls (-0.35,0) and (0.35,0) .. (1,-0.5) -- (-1,-0.5) -- cycle;}
    \expandafter\newcommand\csname fbS6\endcsname{\fill[plaqgray] (-1,0.5) .. controls (-0.6,0.3) and (-0.15,-0.1) .. (0,-0.5) -- (-1,-0.5) -- cycle;}
    \expandafter\newcommand\csname fbS7\endcsname{\fill[plaqgray] (-1,-0.5) rectangle (0,0.5);}
    \expandafter\newcommand\csname fbS8\endcsname{\fill[plaqgray] (-1,-0.5) rectangle (0,0.5);}
  
    \expandafter\newcommand\csname fbF1\endcsname{%
      \csname fbS1\endcsname
      \fbFrame
      \draw[red edge,gr=\fbArrPosS] (0,-0.5)--(0,0.5);}
    \expandafter\newcommand\csname fbF2\endcsname{%
      \csname fbS2\endcsname
      \fbFrame
      \draw[red edge,gr=\fbArrPos] (0,-0.5) .. controls (0.15,-0.1) and (0.6,0.3) .. (1,0.5);}
    \expandafter\newcommand\csname fbF3\endcsname{%
      \csname fbS3\endcsname
      \fbFrame
      \draw[red edge,gr=\fbArrPos] (-1,-0.5) .. controls (-0.35,0) and (0.35,0) .. (1,0.5);}
    \expandafter\newcommand\csname fbF4\endcsname{%
      \csname fbS4\endcsname
      \fbFrame
      \draw[red edge,gr=\fbArrPos] (-1,-0.5) .. controls (-0.6,0.2) and (0.6,0.2) .. (1,-0.5);}
    \expandafter\newcommand\csname fbF5\endcsname{%
      \csname fbS5\endcsname
      \fbFrame
      \draw[red edge,gr=\fbArrPos] (-1,0.5) .. controls (-0.35,0) and (0.35,0) .. (1,-0.5);}
    \expandafter\newcommand\csname fbF6\endcsname{%
      \csname fbS6\endcsname
      \fbFrame
      \draw[red edge,gr=\fbArrPos] (-1,0.5) .. controls (-0.6,0.3) and (-0.15,-0.1) .. (0,-0.5);}
    \expandafter\newcommand\csname fbF7\endcsname{%
      \csname fbS7\endcsname
      \fbFrame
      \draw[red edge,gr=\fbArrPosS] (0,0.5)--(0,-0.5);}
    \expandafter\newcommand\csname fbF8\endcsname{%
      \csname fbS8\endcsname
      \fbFrame
      \draw[red edge,gr=\fbArrPosS] (0,-0.5)--(0,0.5);}
  
    \newcommand{\fbLabel}[3]{%
      \node[ml, text opacity=0] (fblbl) at (#2) {#3};
      \begin{scope}
        \clip (fblbl.south west) rectangle (fblbl.north east);
        \fill[white] (fblbl.south west) rectangle (fblbl.north east);
        \csname fbS#1\endcsname
      \end{scope}
      \node[ml] at (#2) {#3};}
  
    \expandafter\newcommand\csname fbH1\endcsname{%
      \draw[mv] (0,-0.15)--(0.5,0.5);
      \fbCtrl{0.5}{0.5}
      \fbLabel{1}{0,-0.15}{$R^{\dagger}$};}
    \expandafter\newcommand\csname fbH2\endcsname{%
      \draw[mv] (0.30,-0.03)--(-0.5,-0.5);
      \fbCtrl{-0.5}{-0.5}
      \fbLabel{2}{0.30,-0.03}{$L$};}
    \expandafter\newcommand\csname fbH3\endcsname{%
      \draw[mv] (0,0)--(1,0);
      \fbCtrl{1}{0}
      \fbLabel{3}{0,0}{$R$};}
    \expandafter\newcommand\csname fbH4\endcsname{%
      \draw[mv] (-1,0) .. controls (-0.75,0.05) and (-0.62,0.12) .. (-0.2,0.12);
      \fbCtrl{-1}{0}
      \fbLabel{4}{0,0.12}{$L^{\dagger}$};}
    \expandafter\newcommand\csname fbH5\endcsname{%
      \draw[mv] (0,0)--(0.5,-0.5);
      \fbCtrl{0.5}{-0.5}
      \fbLabel{5}{0,0}{$R$};}
    \expandafter\newcommand\csname fbH6\endcsname{%
      \draw[mv] (-0.30,-0.03)--(-0.5,0.5);
      \fbCtrl{-0.5}{0.5}
      \fbLabel{6}{-0.30,-0.03}{$L^{\dagger}$};}
    \expandafter\newcommand\csname fbH7\endcsname{%
      \fbLabel{7}{0,0.12}{$\mathcal{S}$};}
  
    \foreach \c in {1,...,8}{
      \begin{scope}[shift={({(\c-1)*\colpitch},\grow)}]
        \csname fbF\c\endcsname
      \end{scope}}
    \begin{scope}[shift={(0,\grow)}]
      \node[black, font=\LARGE] at (-0.5,0) {$p_{1}$};
      \node[black, font=\LARGE] at (0.5,0) {$p_{2}$};
    \end{scope}
    \foreach \c in {2,...,8}{
      \pgfmathtruncatemacro{\p}{\c-1}
      \node[sn, anchor=north west] at ({(\c-1.5)*\colpitch+0.2},{\ttop-0.15}) {\p.};
      \def\fbArrPos{0.85} \def\fbArrPosS{0.88}
      \ifnum\p=6 \def\fbArrPos{0.2}\fi
      \begin{scope}[shift={({(\c-1)*\colpitch},\srow)}]
        \csname fbF\p\endcsname
        \csname fbH\p\endcsname
      \end{scope}}
  
  \end{tikzpicture}}
  \caption{Flux exchange circuit as graph rewiring.
  The circuit clockwise exchanges the fluxes at two plaquettes separated by a dual edge (red, with the marked orientation) for any quantum double of group $G$.
  Each column shows the elementary move (Gate), drawn as the gate (blue) acting on the graph of the previous column, and the resulting rewired graph (Graph); the light grey shading tracks the deformed plaquette.
  Every gate targets the dual edge (label) and is controlled by one of its neighboring edges (dot).
  After step 6 the lattice returns to itself with the orientation of the dual edge reversed, which the inverse gate $\mathcal{S}$ at step 7 restores. After the protocol, the two plaquette stabilizers are swapped (as indicated by the swapped plaquettes) while all other stabilizers remain unchanged.}
  \label{fig:flux_exchange_circuit}
  \end{figure*}

\subsection{Flux exchange circuit}
\label{subsec:flux-exchange}

We now construct a unitary circuit $V_\mathrm{ex}$ that clockwise exchanges the fluxes at two plaquettes $p_{1}$ and $p_{2}$ sharing an edge $d$ (we refer to it as the dual edge), for the quantum double of any finite group $G$. To obtain the circuit for counterclockwise exchange, and the counterclockwise exchange, one just need to apply the inverse circuit.

\paragraph{The circuit action.}
The circuit and the accompanying rewiring of the graph are shown in Fig.~\ref{fig:flux_exchange_circuit}; the stabilizers are tracked step by step in Appendix~\ref{app:fb}.
Applying the elementary moves of Sec.~\ref{subsec:elementary-move} with the dual edge as the target (and following the orientation rules of Fig.~\ref{fig:gr-four-cases}), a $CR$ or $CR^\dagger$ slides the head of dual edge $d$ along the control edge, while a $CL$ or $CL^\dagger$ slides its tail. Combined, steps 1--6 rotate the dual edge clockwise by $180^\circ$ within the two plaquettes, carrying each plaquette into the other; the single-qudit inverse gate $\mathcal{S}_{d}$ of Eq.~\eqref{eq:inverse-gate-def} at step 7 then restores the orientation of $d$.
The lattice is thereby restored with the two plaquettes exchanged.
The corresponding action of the set of stabilizers is that the class projectors of Eq.~\eqref{eq:BC-MPO} at the two plaquettes are exchanged: in the MPO notation of Eq.~\eqref{eq:BC-MPO}, with the prefactor and the sums suppressed,
\begin{equation}\label{eq:fb-swap}
B^{C}_{p_{1}}=
\begin{tikzpicture}[scale=0.8, line cap=round, line join=round, baseline={(0pt,-\the\fontdimen22\textfont2)}, xl/.style={opblue, fill=white, inner sep=1pt}]
  \dmlat
  \node[latgray] at (-0.75,0) {$p_{1}$};
  \node[latgray] at (0.75,0) {$p_{2}$};
  \node[xl] at (-0.75,-0.75) {\zg{ij}};
  \node[xl] at (0,0)         {\zg{jk}};
  \node[xl] at (-0.75,0.75)  {\zgb{kl}};
  \node[xl] at (-1.5,0)      {\zgb{li}};
\end{tikzpicture}
\;\overset{\mbox{\large $V_{\mathrm{ex}}$}}{\longleftrightarrow}\;
B^{C}_{p_{2}}=
\begin{tikzpicture}[scale=0.8, line cap=round, line join=round, baseline={(0pt,-\the\fontdimen22\textfont2)}, xl/.style={opblue, fill=white, inner sep=1pt}]
  \dmlat
  \node[latgray] at (-0.75,0) {$p_{1}$};
  \node[latgray] at (0.75,0) {$p_{2}$};
  \node[xl] at (0.75,-0.75)  {\zg{ij}};
  \node[xl] at (1.5,0)       {\zg{jk}};
  \node[xl] at (0.75,0.75)   {\zgb{kl}};
  \node[xl] at (0,0)         {\zgb{li}};
\end{tikzpicture}
\end{equation}
for every conjugacy class $C$. All other stabilizers are unchanged (Appendix~\ref{app:fb}).

\paragraph{Commuting gates and nontrivial mutual braiding.}
The three sliding steps of the head of the dual edge (implemented with $CR$ or $CR^\dagger$) commute with the three sliding steps of the dual edge (implemented with $CL$ or $CL^\dagger$), so the two endpoints of the dual edge move independently of each other.
Fig.~\ref{fig:flux_exchange_circuit} therefore stands for a family of equivalent flux exchange circuits: the head slides and the tail slides can be interleaved in any way.
In particular, the ordering that first slides the head all the way around $p_{2}$ can be interpreted in terms of gauging (Appendix~\ref{app:fb-ungauging}).
On the other hand, the fixed ordering within the controlled right multiplications and within the controlled left multiplications reflects that fluxes of $D(G)$ braid nontrivially with one another, so that the circuit distinguishes a clockwise from a counterclockwise exchange, in contrast to the charge exchange circuit of Fig.~\ref{fig:charge_exchange_circuit}.

\paragraph{Action on the flux labels.}
We further clarify the action of the clockwise exchange at the level of the group element basis states, which allows us to track how the flux labels at both plaquettes based at vertex $u$ (Fig.~\ref{fig:fb-labels}), are modified by the circuit $V_\mathrm{ex}$.

Initially, the flux labels of $p_{1}$ and $p_{2}$ based at $u$, which we label as $w_1$ and $w_2$ respectively, are the counterclockwise holonomies,
\begin{equation}\label{eq:fb-initial-labels}
w_{1}=g_{d}\,\bar{g}_{2}\bar{g}_{3}g_{1},\qquad
w_{2}=g_{4}g_{5}\bar{g}_{6}\,\bar{g}_{d}.
\end{equation}

After evaluating the action of the circuit on the group basis state, the resulting flux labels are
\begin{align}
w_{1}'&=g_{d}'\,\bar{g}_{2}\bar{g}_{3}g_{1}=w_{2},\nonumber\\
w_{2}'&=g_{4}g_{5}\bar{g}_{6}\,\bar{g}_{d}'=w_{2}w_{1}\bar{w}_{2}.\label{eq:fb-labels}
\end{align}

Note that the total $w_{2}'w_{1}'=w_{2}w_{1}$ is invariant after the clockwise exchange, as expected. To summarize, the action of the exchange circuit $V_\mathrm{ex}$ on the flux labels is:
\begin{equation}\label{eq:exchange-flux-label}
\ket{w_{1},w_{2}}\mapsto\ket{w_{2},\,w_{2}w_{1}\bar{w}_{2}}.
\end{equation}

Applying the clockwise exchange circuit twice gives the full clockwise braiding
\begin{align}
\ket{w_{1},w_{2}}\mapsto\ket{w_{2},\,w_{2}w_{1}\bar{w}_{2}}&\mapsto\ket{t\,w_{1}\bar{t},\;t\,w_{2}\bar{t}},\label{eq:fb-monodromy}
\end{align}
which conjugates both labels by the total flux $t=w_{2}w_{1}$ of the pair based at $u$. This gives the R-symbol for fluxes in the quantum double.

\begin{figure}[ht]
\centering
\begin{tikzpicture}[x=0.12\columnwidth, y=0.12\columnwidth, line cap=round, line join=round,
    every node/.style={font=\small},
    dom/.style={latgray, line width=1.0pt},
    dual/.style={edgered, line width=1.6pt},
    flx/.style={opblue, line width=0.9pt,
                postaction={decorate, decoration={markings, mark=at position 0.5 with {\arrow{Stealth[length=2.2mm,width=1.8mm]}}}}},
    lab/.style={opblue, fill=white, inner sep=1pt},
    stp/.style={black, font=\footnotesize, align=center},
    pl/.style={anchor=south west, inner sep=1pt},
    el/.style={black, font=\footnotesize, inner sep=1pt}]
  \newcommand{\fbPerim}[1]{%
    \draw[dom] (#1-1.5,-0.75) rectangle (#1+1.5,0.75);
    \fill[black] (#1,-0.75) circle (1.5pt);
    \node[below=2pt] at (#1,-0.75) {$u$};}
  \newcommand{\fbDual}[1]{\draw[dual] (#1,-0.75)--(#1,0.75);}
  \newcommand{\fbLoops}[2]{%
    \draw[flx] plot[smooth, tension=0.7] coordinates {(0,-0.75) (-0.22,0.3) (-0.75,0.55) (-1.25,0.3) (-1.3,-0.2) (-0.7,-0.55) (0,-0.75)};
    \draw[flx] plot[smooth, tension=0.7] coordinates {(0,-0.75) (0.7,-0.55) (1.3,-0.2) (1.25,0.3) (0.75,0.55) (0.22,0.3) (0,-0.75)};
    \node[lab] at (-0.78,-0.1) {#1};
    \node[lab] at (0.78,-0.1) {#2};}
  \node[pl] at (-1.9,0.82) {(a)};
  \fbPerim{0}
  \fbDual{0}
  \fbLoops{$w_{1}$}{$w_{2}$}
  \node[latgray, font=\footnotesize] at (-0.78,0.2) {$p_{1}$};
  \node[latgray, font=\footnotesize] at (0.78,0.2) {$p_{2}$};
  \node[el, anchor=north] at (-0.75,-0.78) {$g_{1}$};
  \node[el, anchor=south] at (-0.75,0.78) {$g_{2}$};
  \node[el, anchor=east] at (-1.53,0) {$g_{3}$};
  \node[el, anchor=north] at (0.75,-0.78) {$g_{4}$};
  \node[el, anchor=west] at (1.53,0) {$g_{5}$};
  \node[el, anchor=south] at (0.75,0.78) {$g_{6}$};
  \node[el, anchor=west, text=edgered] at (0.05,0.6) {$g_{d}$};
  \draw[arr] (2.15,0) -- (3.05,0);
  \node[stp, above=1pt] at (2.6,0) {$V_{\mathrm{ex}}$};
  \begin{scope}[shift={(4.8,0)}]
    \node[pl] at (-1.5,0.82) {(b)};
    \fbPerim{0}
    \fbDual{0}
    \fbLoops{$w_{2}$}{$w_{2}w_{1}\bar{w}_{2}$}
  \end{scope}
\end{tikzpicture}
\caption{Flux labels through the flux exchange circuit.
(a) The fluxes $w_{1}$ and $w_{2}$ of $p_{1}$ and $p_{2}$ are the holonomies of the counterclockwise loops (blue) based at $u$, given in Eq.~\eqref{eq:fb-initial-labels}.
(b) After the flux clockwise exchange circuit $V_{\mathrm{ex}}$, $p_{1}$ carries $w_{2}$ and $p_{2}$ carries $w_{2}w_{1}\bar{w}_{2}$, as shown in Eq.~\eqref{eq:fb-labels}.}
\label{fig:fb-labels}
\end{figure}

\section{No-go for unitary fusion}
\label{sec:no-go}

The exchange circuits of Sec.~\ref{sec:braiding} are purely unitary and need no ancilla, but they only ever exchange the contents of two sites: exchanging an anyon with the vacuum moves it precisely because the receiver is known to be empty. In this section we ask how far unitary circuits can be pushed. Can a unitary bring two anyons of unknown type onto a single site, and thereby fuse them? We show that it is impossible in general to unitarily fuse unknown Abelian or non-Abelian anyons, and identify the resource needed to evade this obstruction.

First, we clarify the conceptual difference between ribbon operators of Sec.~\ref{subsec:ribbon-rewiring}, and the operators required to move anyons \textit{coherently}. 
A string operator moves an anyon only with some knowledge about anyons at its endpoints. 
For example, in the toric code, a Pauli $Z$ string moves an $e$ anyon from site $A$ to an empty site $B$, but the same string can (i) create a pair of $e$ anyons from the vacuum, or (ii) move an $e$ anyon from $B$ to $A$ when only $B$ is occupied.
Thus a fixed string or ribbon can create, move, or fuse excitations depending on its input, but it does not by itself guarantee that the source is emptied for every unknown input; a different protocol is needed.

In Sec.~\ref{subsec:no-go}, we define what it means for a circuit to move and thereby fuse anyons without leaving remnant anyon.
The fusion protocol must work without knowing which input is present: it must always leave no anyon left at site $A$ and transfer the joint anyon content to $B$. We then show that no unitary circuit can move or fuse anyons without knowledge of the input, even when assisted by a ``catalytic'' ancilla which returns to its initial state. We then explain in Sec.~\ref{subsec:non-abelian-obstruction} why knowing the source anyon content does not generally remove the obstruction for non-Abelions. Finally, we devote Sec.~\ref{subsec:no-go-evading} to various ways around the no-go theorem, which underlie both the braiding circuits of Sec.~\ref{sec:braiding} and the fusion protocols of Sec.~\ref{sec:fusion}. 

\subsection{No input-agnostic fusion via unitary operators}\label{subsec:no-go}

Consider a quantum double model where we focus on two sites $A$ and $B$, decomposing the Hilbert space of this subregion as
\begin{equation}
\mathcal{H}=\bigoplus_{a,b}\mathcal{H}_{a,b},
\label{eq:sectors}
\end{equation}
where $\mathcal{H}_{a,b}$ is the space of internal states of an anyon of type $a$ at $A$ and of type $b$ at $B$, all other sites of the region being neutral (Sec.~\ref{subsec:qd-basics}); in $D(G)$ an anyon $(C,\Gamma)$ carries a local internal space of dimension $|C|\,d_{\Gamma}$, equal to its quantum dimension. Suppose we wish to fuse these two anyons, for instance onto site $B$. We require the fusion protocol to send every state in $\mathcal{H}_{a,b}$ to some state in $\mathcal{H}_{1,c}$, or more precisely into a direct sum over the fusion products $c$ appearing in the fusion $a \otimes b$. Crucially, we require that \textit{no anyons remain at $A$}, hence the trivial `1' label in $\mathcal{H}_{1,c}$. This is demonstrated in Fig.~\ref{fig:no-go}(a).

We call this general task---where we have not placed any restrictions on the initial $a$ and $b$ anyons appearing in Eq.~\eqref{eq:sectors}---input-agnostic fusion. This has a unitarity obstruction:

\begin{theorem}[No unitary fusion without input knowledge]\label{thm:no-go}
In a quantum double model $D(G)$ with non-trivial anyons, no unitary operator supported in a region containing the two sites can implement input-agnostic fusion.\\

This holds even when allowing for a \textit{catalytic} ancilla, i.e., an ancilla that starts in $\ket{0}$ and returns to $\ket{0}$ for every input such that it retains no record at the end. 
\end{theorem}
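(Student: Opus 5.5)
The plan is a dimension-counting (pigeonhole) argument. A unitary, and more generally any isometry, preserves inner products and hence dimensions of subspaces. Input-agnostic fusion, however, must compress several mutually orthogonal input sectors into a space that is too small to hold them.

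\textbf{Step 1: fix a minimal counterexample.} Since $D(G)$ has a nontrivial anyon $a$, consider only inputs with total content trivial or $a$. The sectors I would use are
\[
\mathcal{H}_{1,1},\qquad \mathcal{H}_{a,1},\qquad \mathcal{H}_{1,a}.
\]
These lie in the direct sum of Eq.~\eqref{eq:sectors}, so they are mutually orthogonal subspaces of the region's Hilbert space; they are distinguished by the eigenvalues of the local projectors $P^\Gamma_v$ and $B^C_p$ (or the site projectors) at $A$.

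\textbf{Step 2: locate the outputs.} By the definition of input-agnostic fusion:
\begin{itemize}
\item $\mathcal{H}_{1,1}$ must map into $\mathcal{H}_{1,1}$;
\item $\mathcal{H}_{a,1}$ and $\mathcal{H}_{1,a}$ must both map into $\mathcal{H}_{1,a}$, since $a\otimes 1=1\otimes a=a$.
\end{itemize}
The key point is that the unitary $U$ restricted to the input subspace $\mathcal{H}_{a,1}\oplus\mathcal{H}_{1,a}$ is injective. Its image therefore has dimension $2\dim\mathcal{H}_{1,a}$, counting with the factor $|C|d_\Gamma$ for each internal space and the same multiplicity of neutral-region states. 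Yet this image must sit inside $\mathcal{H}_{1,a}$, which has dimension $\dim\mathcal{H}_{1,a}$.

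This requires $\dim\mathcal{H}_{a,1}=\dim\mathcal{H}_{1,a}>0$. I expect both to hold, because exchanging the two sites, for instance with the exchange circuits of Sec.~\ref{sec:braiding}, gives a unitary bijection between them. Concretely, one can take orthogonal states $\ket{\psi}\in\mathcal{H}_{a,1}$ and $U_{\rm ex}\ket{\psi}\in\mathcal{H}_{1,a}$. The dimension count is then the contradiction $2d\le d$ with $d\geq 1$.

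\textbf{Step 3: handle the catalytic ancilla.} Suppose the ancilla starts in $\ket{0}$ and returns to $\ket{0}$ for every input. Then
\[
U(\ket{\psi}\otimes\ket{0}) = (V\ket{\psi})\otimes\ket{0}
\]
for all inputs $\ket{\psi}$, which defines a linear map $V$. The map $V$ is an isometry on the input space, because
\[
\langle V\psi|V\phi\rangle=\langle\psi\otimes 0|U^\dagger U|\phi\otimes 0\rangle=\langle\psi|\phi\rangle.
\]
The same orthogonality argument then applies verbatim, and the catalytic ancilla buys nothing.

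\textbf{Main obstacle.} The step needing most care is making the sector decomposition precise. "All other sites of the region neutral" and "no anyon at $A$" have to be defined by commuting local projectors, so that $\mathcal{H}_{a,1}$ and $\mathcal{H}_{1,a}$ are genuinely orthogonal subspaces of comparable dimension inside the support of $U$.

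If the region is not simply connected, or if global constraints tie the sectors together, I would instead compare the two orthogonal inputs $\ket{\psi}$ and $U_{\rm ex}\ket{\psi}$ directly. Requiring both outputs to lie in the span of states with $A$ empty and content $a$ at $B$ is not by itself contradictory, so the counting must be done on full sectors, with equal multiplicities. For this I would invoke the exchange circuit, which maps $\mathcal{H}_{a,1}$ isomorphically onto $\mathcal{H}_{1,a}$.
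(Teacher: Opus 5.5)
Your proposal is correct and is essentially the paper's argument. A unitary, or the isometry $V$ that a catalytic ancilla induces, cannot inject the orthogonal sectors $\mathcal{H}_{a,1}\oplus\mathcal{H}_{1,a}$ into $\mathcal{H}_{1,a}$; this is exactly the case the paper uses to motivate its proof. The paper's formal version instead sums over all sectors using $|\mathcal{H}_{a,b}|=\sum_c N^c_{a,b}d_c$, and your two-sector version already suffices without that fusion-space decomposition or the multiplicity bookkeeping across different $c$.

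You can also drop the appeal to the exchange circuits and the concern about equal dimensions. Injectivity into the finite-dimensional $\mathcal{H}_{1,a}$ gives $\dim\mathcal{H}_{1,a}+\dim\mathcal{H}_{a,1}\le\dim\mathcal{H}_{1,a}$, so $\mathcal{H}_{a,1}\neq 0$ alone yields the contradiction.
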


This follows from a simple counting argument. For instance, consider the two possible initial states in Fig.~\ref{fig:no-go}(b), i.e., $a=c$ and $b=1$ versus $a=1$ and $b=c$. By the above requirement, both would have to be mapped into the same Hilbert space. There are simply not enough linearly-independent states to accommodate this, which is in tension with the orthogonality of states if we use a unitary (or more generally isometric) map.

We can make this argument precise as follows. First observe that a unitary map from $\oplus_{a,b} \mathcal{H}_{a,b} \to \oplus_c \mathcal{H}_{1,c}$ would imply
\begin{equation}
\sum_{a,b} |\mathcal{H}_{a,b}| \leq \sum_c |\mathcal{H}_{1,c}| .\label{eq:inequality}
\end{equation}
In fact, the same inequality would hold if we use catalytic ancillas: in this case we have an isometry on our system qudits, which preserves the inner product, and in particular, linear independence. The rest of the lattice couples to the pair only through its total anyon type $c$, measured by a closed ribbon around the two sites \cite{kitaevFaulttolerantQuantumComputation2003}; a protocol supported near $A$ and $B$ conserves $c$ and leaves the rest untouched, so for each $c$ the rest contributes the same multiplicity to both sides of Eq.~\eqref{eq:inequality} and only internal states need to be counted. In $D(G)$ the internal space of the pair decomposes under its total type as $\mathcal{H}_{a,b}=\bigoplus_c V^{c}_{a,b}\otimes\mathcal{H}_{1,c}$, with $V^{c}_{a,b}$ the $N^{c}_{a,b}$-dimensional fusion space, so $|\mathcal{H}_{a,b}| = d_a d_b = \sum_c N^c_{a,b} d_c$ and $|\mathcal{H}_{1,c}| = d_c$. Hence, Eq.~\eqref{eq:inequality} implies
\begin{equation}
\sum_c \bigg( \sum_{a,b} N^c_{a,b} \bigg) d_c \leq \sum_c d_c.
\end{equation}
This is in direct contradiction with how for any non-trivial anyon theory, one has\footnote{For instance, if $c \neq 1$, we observe the sum is lower-bounded by $N_{1,c}^c + N_{c,1}^c = 2$, whereas if $c=1$ let $a$ be a non-trivial anyon such that we can bound the sum by $N^1_{1,1} + N^1_{a,\bar a} = 2$.} $\sum_{a,b} N^c_{a,b} > 1$ for every $c$.

\begin{figure}[h]
\centering
\adjustbox{max width=\columnwidth}{%
\begin{tikzpicture}[line cap=round, line join=round,
    every node/.style={font=\small},
    lat/.style={latgray, line width=1.0pt},
    vtx/.style={circle, fill=latgray, inner sep=0pt, minimum size=2.6pt},
    any/.style={circle, inner sep=0pt, minimum size=8pt, draw=black!45, line width=0.5pt},
    ghost/.style={circle, inner sep=0pt, minimum size=8pt, draw=latgray, dash pattern=on 1.4pt off 1.1pt, fill=white, line width=0.5pt},
    site/.style={text=black!60, font=\footnotesize, inner sep=1pt},
    arr/.style={-{Stealth[length=2.4mm,width=2.0mm]}, black!70, line width=1.0pt},
    set/.style={rounded corners=9pt, line width=0.9pt},
    pl/.style={anchor=north west, inner sep=1pt, font=\small}]
  \newcommand{\ngPatch}[2]{%
    \begin{scope}[shift={(#1,#2)}]
      \foreach \y in {-0.5,0,0.5}{\draw[lat] (-1.0,\y)--(1.0,\y);}
      \foreach \x in {-0.75,-0.25,0.25,0.75}{\draw[lat] (\x,-0.75)--(\x,0.75);}
      \foreach \x in {-0.75,-0.25,0.25,0.75}{\foreach \y in {-0.5,0,0.5}{\node[vtx] at (\x,\y) {};}}
      \node[site, anchor=north] at (-0.75,-0.78) {$A$};
      \node[site, anchor=north] at (0.75,-0.78) {$B$};
    \end{scope}}
  \newcommand{\ngAnyon}[5]{\node[any, fill=#3, label={[inner sep=0.5pt, label distance=1.5pt]#5:#4}] at (#1,#2) {};}
  \newcommand{\ngVac}[3]{\node[ghost, label={[inner sep=0.5pt, label distance=1.5pt]#3:$1$}] at (#1,#2) {};}
  \node[pl] at (-4.3,1.5) {(a)};
  \ngPatch{-2.0}{0}
  \ngAnyon{-2.75}{0}{opblue!70}{$a$}{above left}
  \ngAnyon{-1.25}{0}{ribgreen!80}{$b$}{above right}
  \draw[arr] (-0.7,0)--(0.7,0);
  \node[anchor=south] at (0,0.08) {fusion};
  \ngPatch{2.0}{0}
  \ngVac{1.25}{0}{above left}
  \node[any, fill=orange!40, label={[inner sep=1pt, fill=white, label distance=1.5pt]above:$a\otimes b$}] at (2.75,0) {};
  \node[anchor=north, inner sep=2pt] at (-2.0,-1.05) {$\mathcal{H}_{a,b}$};
  \node[anchor=north, inner sep=2pt] at (2.0,-1.05) {$\mathcal{H}_{1,a\otimes b}$};
  \begin{scope}[shift={(0,-5.2)}]
  \node[pl] at (-4.4,3.05) {(b)};
  \draw[set, draw=opblue!80!black, fill=opblue!8] (-4.1,0.15) rectangle (-0.7,2.45);
  \ngPatch{-2.4}{1.4}
  \ngAnyon{-3.15}{1.4}{violet!70}{$c$}{above left}
  \ngVac{-1.65}{1.4}{above right}
  \node[anchor=south, inner sep=2pt] at (-2.4,2.45) {$\mathcal{H}_{c,1}$};
  \draw[set, draw=opblue!80!black, fill=opblue!8] (-4.1,-2.45) rectangle (-0.7,-0.15);
  \ngPatch{-2.4}{-1.2}
  \ngVac{-3.15}{-1.2}{above left}
  \ngAnyon{-1.65}{-1.2}{violet!70}{$c$}{above right}
  \node[anchor=north, inner sep=2pt] at (-2.4,-2.45) {$\mathcal{H}_{1,c}$};
  \draw[set, draw=edgered!80!black, fill=edgered!10] (1.1,-1.05) rectangle (4.5,1.25);
  \ngPatch{2.8}{0.2}
  \ngVac{2.05}{0.2}{above left}
  \ngAnyon{3.55}{0.2}{violet!70}{$c$}{above right}
  \node[anchor=south, inner sep=2pt] at (2.8,1.25) {$\mathcal{H}_{1,c}$};
  \draw[arr] (-0.6,1.2) to[out=-25, in=155] (1.0,0.45);
  \draw[arr] (-0.6,-1.0) to[out=25, in=-155] (1.0,-0.25);
  \node[fill=white, inner sep=1.5pt] at (0.2,0.1) {fusion};
  \end{scope}
\end{tikzpicture}}
\caption{Idea of the no-go theorem.
(a) Fusion empties site $A$ and leaves the fusion product at site $B$.
(b) Both states with anyon $c$ at $A$ and trivial anyon at $B$ and states with anyon $c$ at $B$ and trivial anyon at $A$ are orthogonal inputs that fusion sends to the same set of states.
The two input sets together are twice as large as the output set, so no unitary can implement the map.}
\label{fig:no-go}
\end{figure}

\subsection{Knowing the source charge: Abelian vs non-Abelian}\label{subsec:non-abelian-obstruction}

Theorem~\ref{thm:no-go} applies to both Abelian and non-Abelian anyons. 
The distinction appears when the ``source'' site $A$ is known to contain an anyon $c$, while the state of the ``receiver'' site $B$ is unknown.

For Abelian $c$, an appropriately oriented unitary string deposits $\bar c$ at $A$ and $c$ at $B$, emptying the known source and producing the unique receiver type $c\otimes d$ for any input type $d$ at $B$. We can also directly relate this to the proof method of Theorem~\ref{thm:no-go}: if we know $a$, then repeating the proof leads to the restricted sum $\sum_b N^c_{a,b}$ which indeed equals unity in an Abelian theory---in this case we thus avoid a contradiction with the counting argument.

In contrast, in the non-Abelian case, this still leads to an obstruction. Indeed, even if we know the non-Abelian source anyon $a$, we can have $\sum_b N^c_{a,b} > 1$. This follows from the fact that $a \otimes \bar a$ contains some $c \neq 1$ and Frobenius reciprocity gives $N_{ac}^c = N^c_{a \bar a}$. Hence we cannot find a purely unitary protocol for fusion. In fact, the obstruction persists even if we know both $a$ and $b$ in a theory with non-trivial fusion multiplicity.

\subsection{Resources for fusion}
\label{subsec:no-go-evading}

To evade the assumptions of the input-agnostic no-go theorem, one can let an ancilla keep a record such that the non-injective mapping can be extended to a unitary in an extended Hilbert space, or restrict the allowed input states such that the mapping is no longer input-agnostic, or implement the non-injective mapping via reset. We discuss each of the three ways below.

\paragraph{Let an ancilla retain a record.}
For two specified input states, the map
\begin{align}
\ket{\psi_{c,1}}\otimes\ket{0}&\mapsto\ket{\psi_{1,c}}\otimes\ket{1},\nonumber\\
\ket{\psi_{1,c}}\otimes\ket{0}&\mapsto\ket{\psi_{1,c}}\otimes\ket{0}
\label{eq:which-path}
\end{align}
preserves orthogonality and can be extended to a unitary.
The ancilla distinguishes arrival from prior occupation, so it is not catalytic.
Discarding it dephases a superposition of these inputs; reading it produces a measurement record.
Previous approaches use ancillas, measurement, postselection, or feedforward in different combinations~\cite{aguadoCreationManipulationDetection2008,brennenSimulationsQuantumDouble2009,luo_simulation_2011,liuMethodsSimulatingStringnet2022,bravyiAdaptiveConstantdepthCircuits2022,jovanovicProposalDemonstrateNonabelian2024,chen_universal_2025,lo2026universal}. In some restricted cases where one has more knowledge about the input state, fusion can be implemented with ancilla; we provide the graph rewiring perspective on an example protocol in Appendix~\ref{app:ancilla-move}. In general, these incurs additional resources costs in terms of ancilla or measurement which, as we will show in Sec.~\ref{sec:fusion}, is not necessary.

\paragraph{Use knowledge of the receiver (unitary exchange).}
If the anyon content at receiver $B$ is known to be trivial, a unitary that exchanges sites $A$ and $B$ accomplishes the goal of moving an anyon at the source to the receiver site. In general, we can use such a unitary protocol to braid anyons non-trivially when we do not need to fuse them (move them to the same site). The charge and flux exchange circuits using graph rewiring (Sec.~\ref{sec:braiding}) are exactly of this kind: they exchange the contents of two sites without an ancilla, and move an anyon only into a site known to be empty.

\paragraph{Erase the record (reset).}
Alternatively, the lattice can undergo a non-unitary operation that discards information.
In Sec.~\ref{sec:fusion}, unitary gates bring the information to one edge, which is reset to a fixed pure state before the gates are reversed.
The reset is a non-unitary channel (it may map all states to $\ket{0}$ for example) so the lattice evolution need not preserve inner products.
For both charge and flux fusion, on the input sectors specified in Sec.~\ref{sec:fusion}, one edge reset suffices to empty the source and transfer the fusion-channel probabilities to the receiver, without input knowledge, an additional circuit ancilla, or feedforward beyond the reset.
The physical reset still exchanges information with an environment.
Theorem~\ref{thm:no-go} excludes a purely unitary implementation of this task with a catalytic ancilla; the protocols supply the required non-unitary step.

\section{Input-agnostic fusion with reset}
\label{sec:fusion}

We now implement the input-agnostic fusion: with no knowledge of the input, bringing two neighboring anyons of unknown type together into a single anyon, such that there is no remnant anyon left behind at one site.
By Theorem~\ref{thm:no-go}, this cannot be done unitarily, and the non-unitary resource we use is a single edge reset.
Each protocol below has the same three-step structure: 1) graph rewiring ungauges the source anyon onto the lattice edge joining the two sites, 2) that edge is reset, and 3) the graph is rewired back to its original form.
The source site ends in the vacuum, so by neutrality the total anyon type of the pair ends at the receiving site, with the probability of each fusion channel equal to that of the input.
We present the charge protocol in Sec.~\ref{subsec:charge-fusion} and the flux protocol in Sec.~\ref{subsec:flux-fusion}, and in each case track how the stabilizers of the input are mapped to those of the output. The reset does not in general preserve coherence between fusion channels (Sec.~\ref{subsec:cc-example}), but it preserves their probabilities exactly. The lost coherence is between anyon sectors of two adjacent sites, a local observable that is susceptible to noise in any case; information stored nonlocally in the fusion space of well-separated anyons is unaffected. Therefore, our fusion protocols suffices for error correction purposes where anyon fusion is used to return to the ground state.

We stress the conceptual distinction between the fusion protocols presented below and read-out of the fusion outcome: the fusion protocols do not leave a classical record selecting a fusion channel.
Reading out the fusion outcome is a distinct task that requires a projective measurement of the anyon type at the receiver, i.e.\ of the charge projectors $P^{\Gamma}_{v}$ of Eq.~\eqref{eq:charge-projector} or the flux projectors $B^{C}_{p}$ of Eq.~\eqref{eq:flux-projector}. The positive operator-valued measures (POVM) formed by either the charge or flux projectors can be implemented as projective measurement of an entangled ancilla, as routinely performed for syndrome extraction in error correction schemes.

\subsection{Charge fusion protocol}
\label{subsec:charge-fusion}

We first fuse charge anyons at two nearest-neighbour vertices $v_{1}$ and $v_{2}$ connected by a direct edge $\ell$, as sketched in Fig.~\ref{fig:cf-protocol}(a).
The protocol rewires the graph to expose $v_{1}$ at the dangling end of the direct edge $\ell$, resets $\ell$, then rewires back, so that $v_{1}$ is left in the vacuum and the two charges fuse at $v_{2}$ by charge neutrality.
We track how the vertex, closed loop ribbon and plaquette stabilizers are mapped in Sec.~\ref{subsubsec:charge-op-mapping}.

\begin{figure}[ht]
\centering
\adjustbox{max width=\columnwidth}{%
\begin{tikzpicture}[x=0.097\columnwidth, y=0.097\columnwidth, line cap=round, line join=round,
    every node/.style={font=\small},
    plat/.style={latgray, line width=1.0pt},
    red edge/.style={edgered, line width=1.75pt},
    any/.style={circle, inner sep=1pt, minimum size=16pt},
    vtx/.style={circle, fill=latgray, inner sep=0pt, minimum size=3.1pt},
    arr/.style={-{Stealth[length=3.4mm,width=3.0mm]}, black!70, line width=1.6pt},
    hop/.style={-{Stealth[length=1.9mm,width=1.6mm]}, opblue!70!black, line width=0.8pt},
    vl/.style={fill=white, inner sep=0.8pt},
    pl/.style={anchor=north west, inner sep=1pt},
    slat/.style={latgray, line width=1.0pt},
    sred/.style={edgered, line width=1.75pt},
    opb/.style={text=opblue, fill=white, inner sep=1.2pt},
    ctrlb/.style={opblue, line width=0.95pt},
    tag/.style={anchor=south, inner sep=1pt},
    wl/.style={anchor=base east, yshift=-2.25pt, inner sep=1pt, text=opblue},
    sarr/.style={-{Stealth[length=2.2mm,width=1.9mm]}, black!70, line width=1.0pt}]
  \newcommand{\cfPatch}[2]{%
    \begin{scope}[shift={(#1,#2)}]
      \foreach \y in {-1,0,1}{\draw[plat] (-2,\y)--(2,\y);}
      \foreach \x in {-1.5,-0.5,0.5,1.5}{\draw[plat] (\x,-1.5)--(\x,1.5);}
      \foreach \x in {-1.5,-0.5,0.5,1.5}{\foreach \y in {-1,0,1}{\node[vtx] at (\x,\y) {};}}
      \draw[red edge] (-0.5,0)--(0.5,0);
    \end{scope}}
  \newcommand{\cfAnyon}[5]{\node[any, fill=#3, draw=black!45, line width=0.6pt] (#5) at (#1,#2) {#4};}
  \newcommand{\cfGhost}[3]{\node[any, draw=latgray, dashed, fill=white, line width=0.6pt] at (#1,#2) {#3};}
  \newcommand{\cfFixed}{\draw[slat] (2,0)--(3,0); \draw[slat] (2,-1)--(2,1); \draw[sred] (1,0)--(2,0);}
  \newcommand{\cfLegUp}[1]{\ifnum#1=0 \draw[slat] (1,0)--(1,1); \else \draw[slat] (1,1) .. controls (1.0,0.55) and (1.5,0.5) .. (2,0); \fi}
  \newcommand{\cfLegLeft}[1]{\ifnum#1=0 \draw[slat] (0,0)--(1,0); \else \draw[slat] (0,0) .. controls (0.5,0.35) and (1.5,0.35) .. (2,0); \fi}
  \newcommand{\cfLegDown}[1]{\ifnum#1=0 \draw[slat] (1,0)--(1,-1); \else \draw[slat] (1,-1) .. controls (1.0,-0.55) and (1.5,-0.5) .. (2,0); \fi}
  \newcommand{\cfLat}[3]{\cfFixed \cfLegUp{#1} \cfLegLeft{#2} \cfLegDown{#3}}
  \newcommand{\cfVlab}{\node[anchor=north west, inner sep=1pt] at (1.05,-0.05) {$v_{1}$}; \node[anchor=north west, inner sep=1pt] at (2.05,-0.05) {$v_{2}$};}
  \newcommand{\cfDot}{\node[circle, fill=opblue, inner sep=0pt, minimum size=5.2pt] at (1.5,0) {};}
  \newcommand{\cfGateUp}[1]{\draw[ctrlb] (1.5,0) .. controls (1.38,0.3) and (1.2,0.46) .. (1.03,0.5); \cfDot \node[opb] at (1,0.5) {$#1$};}
  \newcommand{\cfGateLeft}[1]{\draw[ctrlb] (1.5,0) .. controls (1.3,0.28) and (0.85,0.28) .. (0.58,0.03); \cfDot \node[opb] at (0.5,0) {$#1$};}
  \newcommand{\cfGateDown}[1]{\draw[ctrlb] (1.5,0) .. controls (1.52,-0.4) and (1.35,-0.56) .. (1.03,-0.5); \cfDot \node[opb] at (1,-0.5) {$#1$};}
  \newcommand{\cfGateUpM}[1]{\draw[ctrlb] (1.5,0)--(1.31,0.52); \cfDot \node[opb] at (1.31,0.52) {$#1$};}
  \newcommand{\cfGateLeftM}[1]{\draw[ctrlb] (1.5,0)--(1.0,0.26); \cfDot \node[opb] at (1.0,0.26) {$#1$};}
  \newcommand{\cfGateDownM}[1]{\draw[ctrlb] (1.5,0)--(1.31,-0.52); \cfDot \node[opb] at (1.31,-0.52) {$#1$};}
  \newcommand{\cfPanel}[3]{\begin{scope}[shift={(#1,#2)}, scale=0.85] #3 \end{scope}}
  \node[pl] at (-5.25,1.85) {(a)};
  \cfPatch{-2.9}{0}
  \cfAnyon{-3.4}{0}{opblue!35}{$\Gamma_{1}$}{cfGa}
  \cfAnyon{-2.4}{0}{ribgreen!45}{$\Gamma_{2}$}{cfGb}
  \draw[hop] (cfGa.north) to[bend left=50] (cfGb.north);
  \node[vl, anchor=north east] at (-3.68,-0.3) {$v_{1}$};
  \node[vl, anchor=north west] at (-2.12,-0.3) {$v_{2}$};
  \node[vl, text=edgered, anchor=north] at (-2.9,-0.1) {$\ell$};
  \draw[arr] (-0.5,0)--(0.5,0);
  \node[anchor=south] at (0,0.16) {$\mathcal{E}$};
  \cfPatch{2.9}{0}
  \cfGhost{2.4}{0}{$1$}
  \cfAnyon{3.4}{0}{orange!40}{$\Lambda$}{cfLam}
  \node[pl] at (-5.25,-2.05) {(b)};
  \node[tag] at (0.375,-2.5) {Input};
  \cfPanel{-0.9}{-3.5}{\cfLat{0}{0}{0} \cfVlab}
  \node[tag] at (0.375,-4.9) {1.\ Rewire};
  \node[wl] at (-4.15,-5.9) {$W^{\dagger}=$};
  \cfPanel{-4.0}{-5.9}{\cfLat{0}{0}{0} \cfGateUp{L^{\dagger}}}
  \draw[sarr] (-1.45,-5.9)--(-1.05,-5.9);
  \cfPanel{-0.9}{-5.9}{\cfLat{1}{0}{0} \cfGateLeft{R^{\dagger}}}
  \draw[sarr] (1.65,-5.9)--(2.05,-5.9);
  \cfPanel{2.2}{-5.9}{\cfLat{1}{1}{0} \cfGateDown{R^{\dagger}}}
  \node[tag] at (0.375,-7.3) {2.\ reset};
  \cfPanel{-0.9}{-8.3}{\cfLat{1}{1}{1}}
  \node[draw=opblue, fill=white, rounded corners=1.7pt, line width=0.8pt, inner sep=2.2pt] (cfReset) at (-0.6,-8.94) {$\ket{+}$};
  \draw[opblue, line width=0.8pt, {Bar[width=4pt]}-{Stealth[length=2mm,width=1.6mm]}, shorten <=1pt] (cfReset.north east) -- (0.16,-8.33);
  \node[tag] at (0.375,-9.7) {3.\ Rewire back};
  \node[wl] at (-4.15,-10.7) {$W=$};
  \cfPanel{-4.0}{-10.7}{\cfLat{1}{1}{1} \cfGateDownM{R}}
  \draw[sarr] (-1.45,-10.7)--(-1.05,-10.7);
  \cfPanel{-0.9}{-10.7}{\cfLat{1}{1}{0} \cfGateLeftM{R}}
  \draw[sarr] (1.65,-10.7)--(2.05,-10.7);
  \cfPanel{2.2}{-10.7}{\cfLat{1}{0}{0} \cfGateUpM{L}}
  \node[tag] at (0.375,-12.1) {Output};
  \cfPanel{-0.9}{-13.1}{\cfLat{0}{0}{0}}
  \draw[arr] (-1.05,-3.5) to[out=180, in=90, looseness=1.0] (-2.725,-4.9);
  \draw[arr] (3.475,-6.9) to[out=-90, in=0, looseness=1.0] (1.8,-8.3);
  \draw[arr] (-1.05,-8.3) to[out=180, in=90, looseness=1.0] (-2.725,-9.7);
  \draw[arr] (3.475,-11.7) to[out=-90, in=0, looseness=1.0] (1.8,-13.1);
\end{tikzpicture}}
\caption{Charge fusion protocol.
(a) Two charges $\Gamma_{1}$ and $\Gamma_{2}$ at the vertices $v_{1}$ and $v_{2}$ connected by the direct edge $\ell$ (red) are fused by the channel $\mathcal{E}$ of Eq.~\eqref{eq:cf-channel}, which moves the charge of $v_{1}$ across the direct edge and leaves the vacuum (dashed) at $v_{1}$ and the fusion product $\Lambda\in\Gamma_{1}\otimes\Gamma_{2}$ at $v_{2}$.
(b) The protocol as a flowchart from the input lattice of (a) to the output lattice; each panel shows a gate or reset (blue) applied to the current graph. In step 1, each gate rewires its target edge from $v_{1}$ to $v_{2}$, so that after the third gate $\ell$ dangles from $v_{2}$ with $v_{1}$ at its free end. In step 2, the direct edge is reset to $\ket{+}$. In step 3, graph rewiring restores the graph.}
\label{fig:cf-protocol}
\end{figure}

\subsubsection{The protocol}

Let $v_{1}$ and $v_{2}$ carry irreps $\Gamma_{1},\Gamma_{2}$ of $G$. The protocol consists of the three steps shown in Fig.~\ref{fig:cf-protocol}(b):

\begin{enumerate}
    \item rewire the graph until $v_{1}$ dangles from the direct edge
    \item reset the ``dangling'' direct edge to the $\ket{+}$ state
    \item rewire the graph back to the original
\end{enumerate}

We call $W^{\dagger}$ the combined circuit of the three gates of step 1, controlled multiplications of Eq.~\eqref{eq:CLCR-def} controlled by the direct edge $\ell$:
\begin{equation}\label{eq:cf-W-dagger}
W^{\dagger}=\sum_{g\in G}\;
\vcenter{\hbox{%
\begin{tikzpicture}[line width=0.9pt,baseline]
  \tvlat
  \node[op] at (0,0.45) {$L^{\bar{g}}$};
  \node[op] at (-0.5,0) {$R^{\bar{g}}$};
  \node[op] at (0,-0.45) {$R^{\bar{g}}$};
  \node[op] at (0.75,0) {$\ketbra{g}{g}$};
\end{tikzpicture}
}}
\end{equation}

Its action on the vertex stabilizer $A^g_{v_1}$ is given by:
\begin{equation}\label{eq:cf-W-defining}
W^\dagger\, A^{g}_{v_{1}} \,W=L^{g}_{\ell},
\end{equation} 
such that the vertex stabilizer of $v_{1}$ is supported only on the ``dangling'' direct edge.
We note that $W$ can be viewed a short sequential circuit~\cite{chen_sequential_2024} gauging the $G$ symmetry at $v_{1}$: $W^{\dagger}$ ungauges the Gauss law $A^{g}_{v_{1}}$ into the on-site $G$ action $L^{g}_{\ell}$ on the single edge $\ell$, and $W$ regauges it (Appendix~\ref{app:ce-ungauging}).

Then, in step 2, we reset the state on the direct edge $\ell$ to 
\begin{equation}
    \ket{+}=\frac{1}{\sqrt{|G|}}\sum_{g\in G}\ket{g},
\end{equation}
such that 
\begin{equation}\label{eq:Lp=1}
    L^{g}_{\ell}=+1\quad\forall g\in G
\end{equation}
is satisfied on the direct edge. 

Experimentally, a reset does not require a measurement: it can be implemented by dissipatively pumping the qudit into $\ket{+}$, or by swapping the direct edge with a fresh ancilla prepared in $\ket{+}$.
Nonetheless, for ease of deriving the stabilizer mapping in Sec.~\ref{subsubsec:charge-op-mapping}, we represent the reset as measuring the direct edge in the group element basis with outcome $\ket{x}$, applying $R^{x}$ (or $L^{\bar{x}}$) to bring the edge to the identity state $\ket{e}$ by Eq.~\eqref{eq:LR-def}, and then applying any unitary $U_{+}$ that satisfies $U_{+}\ket{e}=\ket{+}$ (for example, the inverse group Fourier transform $F^{\dagger}$ of Eq.~\eqref{eq:inv-gp-FT}, which maps the trivial-irrep basis state to $\ket{+}$, once $\ket{e}$ is identified with $\ket{\mathbf{1};1,1}$ in the computational encoding), which gives the Kraus operator
\begin{equation}\label{eq:cf-reset-kraus}
K^{(2)}_{x}=U_{+}R^{x}_{\ell}\ketbra{x}{x}_{\ell}=\ketbra{+}{x}_{\ell}.
\end{equation}
The outcome $x$ is used only for the flip inside the reset and is then discarded: the reset realizes the replacement channel $\rho\mapsto\sum_{x}\ketbra{+}{x}\rho\ketbra{x}{+}=\ketbra{+}{+}_{\ell}\otimes\Tr_{\ell}\rho$, independent of $x$.
We stress that the measure-and-flip decomposition is only a convenient Kraus representation of this unconditional reset: no outcome is read out or fed forward, the channel is the same for any physical realization of the reset, and the protocol therefore does not depend on measurement.

Finally, the inverse circuit $W$ of step 3 transforms the condition in Eq.~\eqref{eq:Lp=1} to
\begin{equation}
    A^{g}_{v_{1}}=+1\quad\forall g\in G
\end{equation}
ensuring that there is trivial charge at vertex $v_1$. 

The whole protocol is therefore the channel $\mathcal{E}=W\circ\bigl(\rho\mapsto\ketbra{+}{+}_{\ell}\otimes\Tr_{\ell}\rho\bigr)\circ W^{\dagger}$, with Kraus operators labelled by the discarded outcome $x$,
\begin{equation}\label{eq:cf-channel}
K_{x}=W\,\ketbra{+}{x}_{\ell}\,W^{\dagger},\qquad
\mathcal{E}(\rho)=\sum_{x\in G}K_{x}\rho K_{x}^{\dagger}.
\end{equation}

\subsubsection{Operator mappings}\label{subsubsec:charge-op-mapping}

We summarize how the stabilizers of the input state are carried through the protocol; the derivations are given in Appendix~\ref{app:cf}.
Here ``mapped'' means that a stabilizer (eigenvalue-$1$ projector) of the input state is carried to a stabilizer of the output state, i.e.\ the support of the state is propagated; the exact Heisenberg-picture identity is Eq.~\eqref{eq:cf-heisenberg} below.

\paragraph{Closed loop ribbon to single vertex projector.}
The closed loop ribbon operator projecting the pair $v_{1},v_{2}$ onto total charge $\Gamma$,
\begin{equation}\label{eq:cf-closed-ribbon}
P^{\Gamma}_{v_{1}v_{2}} = \frac{d_{\Gamma}}{|G|}\sum_{g,x\in G}\chi^{\Gamma}(\bar{g})\;
\vcenter{\hbox{%
\begin{tikzpicture}[line width=0.9pt,baseline]
  \tvlat
  \node[op] at (0,0.45) {$L^{g}$};
  \node[op] at (-0.5,0) {$R^{g}$};
  \node[op] at (0,-0.45) {$R^{g}$};
  \node[op] at (0.75,0) {$T^{x}$};
  \node[op] at (1.5,0.45) {$L^{\bar{x}gx}$};
  \node[op] at (2.0,0) {$L^{\bar{x}gx}$};
  \node[op] at (1.5,-0.45) {$R^{\bar{x}gx}$};
\end{tikzpicture}
}}
\end{equation}
is mapped to the single vertex charge projector of Eq.~\eqref{eq:charge-projector} at $v_{2}$,
\begin{equation}\label{eq:cf-single-vertex-v2}
P^{\Gamma}_{v_{2}} = \frac{d_{\Gamma}}{|G|}\sum_{g\in G}\chi^{\Gamma}(\bar{g})\;
\vcenter{\hbox{%
\begin{tikzpicture}[line width=0.9pt,baseline]
  \tvlat
  \node[op] at (0.75,0) {$R^{g}$};
  \node[op] at (1.5,0.45) {$L^{g}$};
  \node[op] at (2.0,0) {$L^{g}$};
  \node[op] at (1.5,-0.45) {$R^{g}$};
\end{tikzpicture}
}}
\end{equation}
together with the vacuum projector $P_{v_{1}}=\frac{1}{|G|}\sum_{g}A^{g}_{v_{1}}$ of Eq.~\eqref{eq:P-trivial} produced at $v_{1}$ (Appendix~\ref{app:cf-ribbon}).
For a single total charge $\Gamma$ the map is exact in the Heisenberg picture (Appendix~\ref{app:cf-ribbon}):
\begin{align}
\mathcal{E}^{\dagger}(P^{\Gamma}_{v_{2}})&=P^{\Gamma}_{v_{1}v_{2}},\label{eq:cf-heisenberg}\\
\Tr\bigl[P^{\Gamma}_{v_{2}}\,\mathcal{E}(\rho)\bigr]&=\Tr\bigl[P^{\Gamma}_{v_{1}v_{2}}\,\rho\bigr]\nonumber
\end{align}
for every input state $\rho$: the probability of each total charge $\Gamma$ of the pair is transferred exactly to $v_{2}$.

\paragraph{Single vertex projectors to tensor product ribbon.}
Conversely, consider charges $\Gamma_{1}$ at $v_{1}$ and $\Gamma_{2}$ at $v_{2}$, stabilized by the two single vertex charge projectors $P^{\Gamma_{1}}_{v_{1}}$ and $P^{\Gamma_{2}}_{v_{2}}$ of Eq.~\eqref{eq:charge-projector}.
The protocol maps the pair to the closed loop ribbon operator projected onto the tensor product representation $\Gamma_{1}\otimes\Gamma_{2}$, i.e.\ onto its fusion channels $\Lambda$ with $N^{\Lambda}_{\Gamma_{1}\Gamma_{2}}\neq0$,
\begin{equation}\label{eq:cf-tensor-ribbon}
P^{\Gamma_{1}\otimes\Gamma_{2}}_{v_{1}v_{2}}
\equiv\sum_{\Lambda\,:\,N^{\Lambda}_{\Gamma_{1}\Gamma_{2}}\neq0}P^{\Lambda}_{v_{1}v_{2}},
\end{equation}
which stabilizes the post-protocol state (Appendix~\ref{app:cf-vertex}).
Together with the vacuum projector produced at $v_{1}$, the lemma of Appendix~\ref{app:cf-lemma} reduces it to the single vertex support projector $\Pi_{v_{2}}=\sum_{\Lambda:N^{\Lambda}_{\Gamma_{1}\Gamma_{2}}\neq0}P^{\Lambda}_{v_{2}}$ at $v_{2}$.
$\Pi_{v_{2}}$ is a guaranteed support constraint on the output, namely that the charge at $v_{2}$ lies in the fusion channels of $\Gamma_{1}\otimes\Gamma_{2}$; it does not by itself fix the weights of the channels, which are given by Eq.~\eqref{eq:cf-heisenberg}.

\paragraph{Plaquette operators.}
The rewiring of step 1 removes the support of the two plaquettes flanking $\ell$ on the direct edge and only transiently extends the two plaquettes to the left of $v_{1}$ onto it, so the reset leaves all four unchanged and the rewiring back restores them.
All plaquette stabilizers are thus preserved by the protocol (Appendix~\ref{app:cf-plaquette}).

\subsection{Flux fusion protocol}
\label{subsec:flux-fusion}

We now fuse flux anyons at two nearest-neighbour plaquettes $p_{1}$ and $p_{2}$ sharing the dual edge $d$, as sketched in Fig.~\ref{fig:ff-protocol}(a).
The protocol rewires the graph until the dual edge $d$ is a loop carrying the flux of $p_{1}$, resets that edge, then rewires back, so that $p_{1}$ is left with trivial flux and the two fluxes fuse at $p_{2}$ by flux conservation.
We track how the plaquette, closed charge ribbon and closed dyonic ribbon stabilizers are mapped in Sec.~\ref{subsubsec:flux-op-mapping}.

\begin{figure}[ht]
\centering
\adjustbox{max width=\columnwidth}{%
\begin{tikzpicture}[x=0.097\columnwidth, y=0.097\columnwidth, line cap=round, line join=round,
    every node/.style={font=\small},
    plat/.style={latgray, line width=1.0pt},
    red edge/.style={edgered, line width=1.75pt},
    any/.style={circle, inner sep=1pt, minimum size=16pt},
    vtx/.style={circle, fill=latgray, inner sep=0pt, minimum size=3.1pt},
    arr/.style={-{Stealth[length=3.4mm,width=3.0mm]}, black!70, line width=1.6pt},
    hop/.style={-{Stealth[length=1.9mm,width=1.6mm]}, opblue!70!black, line width=0.8pt},
    vl/.style={fill=white, inner sep=0.8pt},
    pl/.style={anchor=north west, inner sep=1pt},
    slat/.style={latgray, line width=1.0pt},
    sred/.style={edgered, line width=1.75pt},
    opb/.style={text=opblue, fill=white, inner sep=1.2pt},
    ctrlb/.style={opblue, line width=0.95pt},
    tag/.style={anchor=south, inner sep=1pt},
    wl/.style={anchor=base east, yshift=-2.25pt, inner sep=1pt, text=opblue},
    elab/.style={latgray, font=\footnotesize, inner sep=0.5pt},
    sarr/.style={-{Stealth[length=2.2mm,width=1.9mm]}, black!70, line width=1.0pt}]
  \newcommand{\ffPatch}[2]{%
    \begin{scope}[shift={(#1,#2)}]
      \foreach \y in {-0.5,0.5}{\draw[plat] (-1.5,\y)--(1.5,\y);}
      \foreach \x in {-1,0,1}{\draw[plat] (\x,-1)--(\x,1);}
      \foreach \x in {-1,0,1}{\foreach \y in {-0.5,0.5}{\node[vtx] at (\x,\y) {};}}
      \draw[red edge] (0,-0.5)--(0,0.5);
    \end{scope}}
  \newcommand{\ffAnyon}[5]{\node[any, fill=#3, draw=black!45, line width=0.6pt] (#5) at (#1,#2) {#4};}
  \newcommand{\ffGhost}[3]{\node[any, draw=latgray, dashed, fill=white, line width=0.6pt] at (#1,#2) {#3};}
  \newcommand{\ffStepDom}{\draw[slat] (0,-0.5) rectangle (2,0.5);}
  \newcommand{\ffDualV}{\draw[sred] (1,-0.5)--(1,0.5);}
  \newcommand{\ffDualD}{\draw[sred] (1,-0.5)--(0,0.5);}
  \newcommand{\ffDualB}{\draw[sred] (1,-0.5) .. controls (0.7,0.1) and (0.3,0.1) .. (0,-0.5);}
  \newcommand{\ffDualL}{\begin{scope}[shift={(1,-0.5)}, rotate=135]
    \draw[sred] (0,0) .. controls (0.4,0.36) and (0.95,0.32) .. (0.95,0) .. controls (0.95,-0.32) and (0.4,-0.36) .. (0,0);\end{scope}}
  \newcommand{\ffPlab}{%
    \node[latgray, inner sep=1pt] at (0.5,0) {$p_{1}$};
    \node[latgray, inner sep=1pt] at (1.5,0) {$p_{2}$};}
  \newcommand{\ffDot}[2]{\node[circle, fill=opblue, inner sep=0pt, minimum size=5.2pt] at (#1,#2) {};}
  \newcommand{\ffGateB}[3]{\draw[ctrlb] (0.5,-0.5)--(#2,#3); \ffDot{0.5}{-0.5} \node[opb] at (#2,#3) {$#1$};}
  \newcommand{\ffGateL}[3]{\draw[ctrlb] (0,0)--(#2,#3);      \ffDot{0}{0}      \node[opb] at (#2,#3) {$#1$};}
  \newcommand{\ffGateT}[3]{\draw[ctrlb] (0.5,0.5)--(#2,#3);  \ffDot{0.5}{0.5}  \node[opb] at (#2,#3) {$#1$};}
  \newcommand{\ffPanel}[3]{%
    \begin{scope}[shift={(#1,#2)}, scale=1.05] \ffStepDom #3 \end{scope}}
  \node[pl] at (-5.25,1.3) {(a)};
  \ffPatch{-2.9}{0}
  \ffAnyon{-3.4}{0}{opblue!70}{$C_{1}$}{ffCa}
  \ffAnyon{-2.4}{0}{ribgreen!80}{$C_{2}$}{ffCb}
  \draw[hop] (ffCa.north) to[bend left=50] (ffCb.north);
  \node[anchor=north west, inner sep=1pt] at (-2.84,-0.56) {$u$};
  \node[text=edgered, anchor=south west] at (-2.9,0.58) {$d$};
  \draw[arr] (-0.5,0)--(0.5,0);
  \node[anchor=south] at (0,0.16) {$\Phi$};
  \ffPatch{2.9}{0}
  \ffGhost{2.4}{0}{$e$}
  \ffAnyon{3.4}{0}{orange!40}{$C$}{ffC}
  \node[pl] at (-5.25,-2.05) {(b)};
  \node[tag] at (0.2,-2.8) {Input};
  \begin{scope}[shift={(-0.85,-3.5)}, scale=1.05] \ffStepDom \ffDualV \ffPlab \end{scope}
  \node[tag] at (0.2,-4.7) {1.\ Rewire};
  \node[wl] at (-4.05,-5.4) {$M^{\dagger}=$};
  \ffPanel{-3.9}{-5.4}{\ffDualV \ffGateT{R}{1}{0}}
  \draw[sarr] (-1.65,-5.4)--(-1.05,-5.4);
  \ffPanel{-0.85}{-5.4}{\ffDualD \ffGateL{R}{0.5}{0}}
  \draw[sarr] (1.4,-5.4)--(2.0,-5.4);
  \ffPanel{2.2}{-5.4}{\ffDualB \ffGateB{R^{\dagger}}{0.5}{-0.05}}
  \node[tag] at (0.2,-6.6) {2.\ reset};
  \begin{scope}[shift={(-0.85,-7.3)}, scale=1.05] \ffStepDom \ffDualL \end{scope}
  \node[draw=opblue, fill=white, rounded corners=1.7pt, line width=0.8pt, inner sep=2.2pt] (ffReset) at (0.78,-7.2) {$\ket{e}$};
  \draw[opblue, line width=0.8pt, {Bar[width=4pt]}-{Stealth[length=2mm,width=1.6mm]}, shorten <=1pt] (ffReset.west) -- (-0.02,-7.18);
  \node[tag] at (0.2,-8.5) {3.\ Rewire back};
  \node[wl] at (-4.05,-9.2) {$M=$};
  \ffPanel{-3.9}{-9.2}{\ffDualL \ffGateB{R}{0.74}{0.12}}
  \draw[sarr] (-1.65,-9.2)--(-1.05,-9.2);
  \ffPanel{-0.85}{-9.2}{\ffDualB \ffGateL{R^{\dagger}}{0.5}{-0.05}}
  \draw[sarr] (1.4,-9.2)--(2.0,-9.2);
  \ffPanel{2.2}{-9.2}{\ffDualD \ffGateT{R^{\dagger}}{0.5}{0}}
  \node[tag] at (0.2,-10.4) {Output};
  \begin{scope}[shift={(-0.85,-11.1)}, scale=1.05] \ffStepDom \ffDualV \end{scope}
  \draw[arr] (-1.0,-3.5) to[out=180, in=90, looseness=1.0] (-2.85,-4.725);
  \draw[arr] (3.25,-6.075) to[out=-90, in=0, looseness=1.0] (1.4,-7.3);
  \draw[arr] (-1.0,-7.3) to[out=180, in=90, looseness=1.0] (-2.85,-8.525);
  \draw[arr] (3.25,-9.875) to[out=-90, in=0, looseness=1.0] (1.4,-11.1);
\end{tikzpicture}}
\caption{Flux fusion protocol.
(a) Two fluxes of classes $C_{1}$ and $C_{2}$ at the plaquettes $p_{1}$ and $p_{2}$ sharing the dual edge $d$ (red) are fused by the channel $\Phi$ of Eq.~\eqref{eq:flux-fusion-channel}, which moves the flux of $p_{1}$ across the dual edge and leaves trivial flux (dashed) at $p_{1}$ and the fusion product $C$, with $a^{C}_{C_{1}C_{2}}\neq0$, at $p_{2}$.
(b) The protocol as a flowchart from the input lattice of (a) to the output lattice; each panel shows a gate or reset (blue) applied to the current graph. In step 1, each gate carries the upper end of $d$ one step around $p_{1}$, so that after the third gate $d$ is a loop based at $u$ carrying the flux of $p_{1}$. In step 2, the loop edge is reset to $\ket{e}$. In step 3, graph rewiring restores the dual edge.}
\label{fig:ff-protocol}
\end{figure}

\subsubsection{The protocol}

Let plaquettes $p_{1}$ and $p_{2}$ carry fluxes in the conjugacy classes $C_{1}$ and $C_{2}$ of $G$, with all flux labels below based at the bottom vertex $u$ of the dual edge.
The protocol consists of the three steps shown in Fig.~\ref{fig:ff-protocol}(b):

\begin{enumerate}
    \item rewire the graph until the dual edge $d$ is a loop based at $u$ enclosing the flux of $p_{1}$
    \item reset the loop edge $d$ to the identity state $\ket{e}$
    \item rewire the graph back to the original
\end{enumerate}

We call $M^{\dagger}$ the combined circuit of the three gates of step 1, controlled right multiplications of Eq.~\eqref{eq:CLCR-def} targeting the dual edge $d$, which compose into a single controlled right multiplication, with $T^{x}=\ketbra{x}{x}$ the single-edge projector:
\begin{equation}\label{eq:ff-M-dagger}
M^{\dagger}=\sum_{a,b,c\in G}\;
\vcenter{\hbox{%
\begin{tikzpicture}[scale=1.0, line cap=round, line join=round, baseline]
  \dmlat
  \node[latgray, font=\small] at (-0.75,0.25) {$p_{1}$};
  \node[latgray, font=\small] at (0.75,0) {$p_{2}$};
  \fill[latgray] (0,-0.75) circle (1.6pt);
  \node[latgray, font=\small, anchor=north west, inner sep=1pt] at (0.02,-0.78) {$u$};
  \node[op] at (-0.75,-0.75) {$T^{a}$};
  \node[op] at (-1.5,0) {$T^{b}$};
  \node[op] at (-0.75,0.75) {$T^{c}$};
  \node[op] at (0,0) {$R^{\bar{a}bc}$};
\end{tikzpicture}
}}
\end{equation}
By Eq.~\eqref{eq:LR-def} it acts on the group basis as $M^{\dagger}\ket{g_{d}}=\ket{g_{d}\,\bar{c}\,\bar{b}\,a}=\ket{h_{p_{1},u}}$, where $h_{p_{1},u}=g_{d}\bar{c}\bar{b}a$ is the flux of $p_{1}$ based at $u$, read counterclockwise from $u$ along $d$, the top, the left and the bottom edge.
The order of the gates matters for non-Abelian $G$, since the three gates do not commute.

Its action on the labelled plaquette projector $B^{x}_{p_{1},u}=\delta_{h_{p_{1},u},x}$ of Eq.~\eqref{eq:Bg-MPO} and on the class projector $B^{C}_{p_{1}}$ of Eq.~\eqref{eq:BC-MPO} is
\begin{equation}\label{eq:ff-M-defining}
M^{\dagger}\,B^{x}_{p_{1},u}\,M=T^{x}_{d},\qquad
M^{\dagger}B^{C}_{p_{1}}M=\sum_{x\in C}T^{x}_{d},
\end{equation}
such that the flux of $p_{1}$ is carried by the loop edge $d$ alone (Appendix~\ref{app:M-gauging}).
Dually to $W$, $M$ is a short sequential circuit gauging the $\mathrm{Rep}(G)$ symmetry at $p_{1}$: $M^{\dagger}$ ungauges the plaquette projector into the single-edge projector $T^{x}_{d}$, and $M$ regauges it (Appendix~\ref{app:fb-ungauging}).

Then, in step 2, we reset the loop edge to $\ket{e}$, such that
\begin{equation}\label{eq:Te=1}
    T^{e}_{d}=+1
\end{equation}
is satisfied on the dual edge.
As in Sec.~\ref{subsec:charge-fusion}, the reset is an unconditional operation that requires no measurement; for bookkeeping we represent it as measuring $d$ in the group element basis with outcome $\ket{x}$ and flipping the edge to $\ket{e}$ with $L^{\bar{x}}_{d}$ of Eq.~\eqref{eq:LR-def}, which realizes the replacement channel $\rho\mapsto T^{e}_{d}\otimes\Tr_{d}\rho$ independently of the discarded outcome $x$.

Finally, the inverse circuit $M$ of step 3 transforms the condition in Eq.~\eqref{eq:Te=1} to
\begin{equation}
    B^{e}_{p_{1}}=+1,
\end{equation}
ensuring that there is trivial flux at plaquette $p_{1}$.

The whole protocol is therefore the channel
\begin{align}
\Phi(\rho)&=M\big[T^{e}_{d}\otimes\Tr_{d}(M^{\dagger}\rho M)\big]M^{\dagger}=\sum_{x\in G}K_{x}\rho K_{x}^{\dagger},\label{eq:flux-fusion-channel}\\
K_{x}&=M\ketbra{e}{x}_{d}M^{\dagger}=L^{\bar{x}}_{d}\,B^{x}_{p_{1},u},\nonumber
\end{align}
with Kraus operators labelled by the discarded outcome $x$ (Appendix~\ref{app:M-gauging}).
Each Kraus operator projects the flux of $p_{1}$ based at $u$ onto the value $x$ and then left-multiplies the dual edge by $\bar{x}$, which sets $h_{p_{1},u}$ to $e$ and, since the perimeter holonomy of $p_{1}\cup p_{2}$ does not involve $d$, multiplies the flux of $p_{2}$ by $x$.

\subsubsection{Operator mappings}\label{subsubsec:flux-op-mapping}

We summarize how the stabilizers of the input state are carried through the protocol; the derivations are given in Appendix~\ref{app:ff}.
As in Sec.~\ref{subsec:charge-fusion}, ``mapped'' means that a stabilizer of the input state is carried to a stabilizer of the output state; the exact Heisenberg-picture identities are Eq.~\eqref{eq:ff-heisenberg-loop} below.

\paragraph{Closed loop charge ribbon to single plaquette projector.}
The projector onto total flux class $C$ of the two plaquettes is the closed charge ribbon along its perimeter, the class projector of Eq.~\eqref{eq:BC-MPO} for the six-edge loop, with the tensors contracted cyclically counterclockwise starting from the bottom edge of $p_{1}$; the shared dual edge carries nothing:
\begin{equation}\label{eq:ff-closed-loop}
B^{C}_{p_{1}p_{2}}=\frac{|C|}{|G|}\sum_{\Gamma}\bar{\chi}^{\Gamma}(C)\sum_{ijklmn}\;
\vcenter{\hbox{%
\begin{tikzpicture}[scale=0.8, line cap=round, line join=round, baseline, xl/.style={opblue, fill=white, inner sep=1pt}]
  \dmlat
  \node[latgray] at (-0.75,0) {$p_{1}$};
  \node[latgray] at (0.75,0) {$p_{2}$};
  \node[xl] at (-0.75,-0.75) {\zg{ij}};
  \node[xl] at (0.75,-0.75)  {\zg{jk}};
  \node[xl] at (1.5,0)       {\zg{kl}};
  \node[xl] at (0.75,0.75)   {\zgb{lm}};
  \node[xl] at (-0.75,0.75)  {\zgb{mn}};
  \node[xl] at (-1.5,0)      {\zgb{ni}};
\end{tikzpicture}
}}
\end{equation}
It is mapped to the class projector of Eq.~\eqref{eq:BC-MPO} at the single plaquette $p_{2}$,
\begin{equation}\label{eq:ff-single-plaquette-p2}
B^{C}_{p_{2}}=\frac{|C|}{|G|}\sum_{\Gamma}\bar{\chi}^{\Gamma}(C)\sum_{jklm}\;
\vcenter{\hbox{%
\begin{tikzpicture}[scale=0.8, line cap=round, line join=round, baseline, xl/.style={opblue, fill=white, inner sep=1pt}]
  \dmlat
  \node[latgray] at (-0.75,0) {$p_{1}$};
  \node[latgray] at (0.75,0) {$p_{2}$};
  \node[xl] at (0.75,-0.75)  {\zg{jk}};
  \node[xl] at (1.5,0)       {\zg{kl}};
  \node[xl] at (0.75,0.75)   {\zgb{lm}};
  \node[xl] at (0,0)         {\zgb{mj}};
\end{tikzpicture}
}}
\end{equation}
together with the trivial flux projector $B^{e}_{p_{1}}$ of Eq.~\eqref{eq:Be-MPO} produced at $p_{1}$ (Appendix~\ref{app:ff-ribbon}).
The map is exact in the Heisenberg picture: for every input,
\begin{align}
\Phi^{*}(B^{C}_{p_{2}})&=B^{C}_{p_{1}p_{2}},\qquad
\Phi^{*}(B^{C}_{p_{1}p_{2}})=B^{C}_{p_{1}p_{2}},\label{eq:ff-heisenberg-loop}\\
\Phi^{*}(B^{e}_{p_{1}})&=\mathds{1},\nonumber
\end{align}
hence if the input is stabilized by $B^{C}_{p_{1}p_{2}}$, the output is stabilized by $B^{C}_{p_{2}}$ and by $B^{e}_{p_{1}}$, and in general $\Tr[B^{C}_{p_{2}}\Phi(\rho)]=\Tr[B^{C}_{p_{1}p_{2}}\rho]$: the probability of each total flux class $C$ of the pair is transferred exactly to $p_{2}$.

\paragraph{Single plaquette projectors to closed loop.}
Conversely, consider two fluxes of classes $C_{1}$ at $p_{1}$ and $C_{2}$ at $p_{2}$, stabilized by the two single plaquette class projectors $B^{C_{1}}_{p_{1}}$ and $B^{C_{2}}_{p_{2}}$ of Eq.~\eqref{eq:BC-MPO}.
Let $a^{C}_{C_{1}C_{2}}=\#\{(c_{1},c_{2})\in C_{1}\times C_{2}:c_{1}c_{2}=c\}$ for fixed $c\in C$ be the class-algebra constants, $\sigma_{C_{1}}\sigma_{C_{2}}=\sum_{C}a^{C}_{C_{1}C_{2}}\sigma_{C}$ with $\sigma_{C}=\sum_{g\in C}g$.
The protocol maps the pair to the closed charge ribbon projected onto the fusion channels of the two classes, i.e.\ onto the classes $C$ with $a^{C}_{C_{1}C_{2}}\neq0$,
\begin{equation}\label{eq:ff-Pi-p1p2}
\Pi_{p_{1}p_{2}}\equiv\sum_{C:\,a^{C}_{C_{1}C_{2}}\neq0}B^{C}_{p_{1}p_{2}},
\end{equation}
which stabilizes the post-protocol state (Appendix~\ref{app:ff-plaquette}).
Together with the trivial flux projector produced at $p_{1}$, the lemma of Appendix~\ref{app:ff-lemma} reduces it to the single plaquette support projector $\Pi_{p_{2}}=\sum_{C:\,a^{C}_{C_{1}C_{2}}\neq0}B^{C}_{p_{2}}$ at $p_{2}$.
$\Pi_{p_{2}}$ is a guaranteed support constraint on the output, namely that the flux at $p_{2}$ lies in the fusion channels of $C_{1}$ and $C_{2}$; it does not by itself fix the weights of the classes, which are given by Eq.~\eqref{eq:ff-heisenberg-loop}.
In labelled form, $\Phi^{*}(B^{x}_{p_{2},u})=B^{x}_{p_{1}p_{2},u}$, so fluxes with labels $x_{1}$ at $p_{1}$ and $x_{2}$ at $p_{2}$, both based at $u$, fuse deterministically to the label $x_{2}x_{1}$ at $p_{2}$, since the labels based at $u$ multiply, $h_{p_{1}p_{2},u}=h_{p_{2},u}h_{p_{1},u}$ (Eq.~\eqref{eq:ff-label-composition}); the order matters for non-Abelian $G$.
The constants $a^{C}_{C_{1}C_{2}}$ label magnetic sectors only; the centralizer charge is tracked by the dyonic ribbon.

\ifdefined\fdDyonDiagramMacrosLoaded\else
\def\fdDyonDiagramMacrosLoaded{1}
\makeatletter
\@ifundefined{color@opblue}{\definecolor{opblue}{RGB}{0,122,255}}{}
\@ifundefined{color@latgray}{\definecolor{latgray}{RGB}{156,156,156}}{}
\@ifundefined{color@edgered}{\definecolor{edgered}{RGB}{250,106,100}}{}
\@ifundefined{color@ribgreen}{\definecolor{ribgreen}{RGB}{52,168,83}}{}
\makeatother
\tikzset{
 fd lattice/.style={draw=latgray,line width=.75pt},
 fd direct/.style={draw=edgered,line width=1.1pt},
 fd operator/.style={text=opblue,fill=white,inner sep=1.15pt,font=\small},
 fd vertex/.style={text=latgray!80!black,font=\scriptsize,inner sep=1pt},
 fd flux/.style={draw=ribgreen,line width=.8pt,rounded corners=3pt,densely dashed},
 fd flux label/.style={text=ribgreen!65!black,fill=white,inner sep=1.3pt,font=\small},
 fd identity/.style={text=edgered!80!black,fill=white,inner sep=1pt,font=\small}
}
\newcommand{\fdDyonSkeleton}{%
 \draw[fd lattice] (-1.55,-.85) rectangle (1.55,.85);
 \draw[fd direct] (0,-.85)--(0,.85);
 \foreach \xx in {-1.55,0,1.55}{
   \draw[fd lattice] (\xx,-.85)--(\xx,-1.85);
   \draw[fd lattice] (\xx,.85)--(\xx,1.85);}
 \foreach \yy in {-.85,.85}{
   \draw[fd lattice] (-2.6,\yy)--(-1.55,\yy);
   \draw[fd lattice] (1.55,\yy)--(2.6,\yy);}
 \foreach \xx in {-1.55,0,1.55}{\foreach \yy in {-.85,.85}{
   \fill[latgray] (\xx,\yy) circle(1.25pt);}}
 \node[fd vertex,anchor=south east] at (-.02,-.84) {$u$};
 \node[fd vertex,anchor=north east] at (-.02,.84) {$w$};
 \node[fd vertex,anchor=south west] at (-1.51,-.81) {$v_1$};
 \node[fd vertex,anchor=south east] at (1.51,-.81) {$v_3$};
 \node[fd vertex,anchor=north east] at (1.51,.81) {$v_4$};
 \node[fd vertex,anchor=north west] at (-1.51,.81) {$v_6$};
 \node[fd vertex] at (-.77,-.23) {$p_1$};
 \node[fd vertex] at (.77,-.23) {$p_2$};
}
\newcommand{\fdDyonOuter}{%
 \node[fd operator] at (-2.11,-.85) {$R^{z_1}$};
 \node[fd operator] at (-1.55,-1.43) {$R^{z_1}$};
 \node[fd operator] at (0,-1.43) {$R^{z}$};
 \node[fd operator] at (1.55,-1.43) {$R^{z_3}$};
 \node[fd operator] at (2.11,-.85) {$L^{z_3}$};
 \node[fd operator] at (2.11,.85) {$L^{z_4}$};
 \node[fd operator] at (1.55,1.43) {$L^{z_4}$};
 \node[fd operator] at (0,1.43) {$L^{z_w}$};
 \node[fd operator] at (-1.55,1.43) {$L^{z_6}$};
 \node[fd operator] at (-2.11,.85) {$R^{z_6}$};
}
\newcommand{\fdDyonPerimeterMPO}{%
 \node[fd operator] at (-.77,-.85) {$\mathcal Z^\Gamma_{ij}$};
 \node[fd operator] at (.77,-.85) {$\mathcal Z^\Gamma_{j'k}$};
 \node[fd operator] at (1.55,0) {$\mathcal Z^\Gamma_{kl}$};
 \node[fd operator] at (.77,.85) {$\bar{\mathcal Z}^\Gamma_{lm}$};
 \node[fd operator] at (-.77,.85) {$\bar{\mathcal Z}^\Gamma_{mn}$};
 \node[fd operator] at (-1.55,0) {$\bar{\mathcal Z}^\Gamma_{ni}$};
}
\newcommand{\fdDyonBoundary}[2][1]{%
 \begin{tikzpicture}[x=#1cm,y=#1cm,line cap=round,line join=round,
    baseline=(current bounding box.center)]
  \fdDyonSkeleton
  \fdDyonOuter
  \ifnum#2=0\relax
   \draw[fd flux] (-1.24,-.53) rectangle (1.24,.53);
   \node[fd flux label] at (0,.43) {$B^h_{p_1p_2,u}$};
   \node[fd identity] at (0,0) {$I_d$};
  \else\ifnum#2>4\relax
   \draw[fd flux] (-1.24,-.53) rectangle (1.24,.53);
   \node[fd flux label] at (0,.43) {$B^h_{p_1p_2,u}$};
   \ifnum#2=5\relax \node[fd identity] at (0,0) {$I_d$};
   \else \node[fd operator] at (0,0) {$T_d^e$};\fi
  \else
   \ifnum#2=4\relax
    \node[fd operator] at (-.77,-.85)
      {\shortstack{$\mathcal Z^\Gamma_{ij}$\\[-2pt]$\mathcal Z^{\Gamma'}_{\delta\alpha}$}};
    \node[fd operator] at (.77,-.85) {$\mathcal Z^\Gamma_{j'k}$};
    \node[fd operator] at (1.55,0) {$\mathcal Z^\Gamma_{kl}$};
    \node[fd operator] at (.77,.85) {$\bar{\mathcal Z}^\Gamma_{lm}$};
    \node[fd operator] at (-.77,.85)
      {\shortstack{$\bar{\mathcal Z}^\Gamma_{mn}$\\[-2pt]$\bar{\mathcal Z}^{\Gamma'}_{\beta\gamma}$}};
    \node[fd operator] at (-1.55,0)
      {\shortstack{$\bar{\mathcal Z}^\Gamma_{ni}$\\[-2pt]$\bar{\mathcal Z}^{\Gamma'}_{\gamma\delta}$}};
    \node[fd operator] at (0,0) {$\mathcal Z^{\Gamma'}_{\alpha\beta}$};
   \else
    \fdDyonPerimeterMPO
    \ifnum#2=3\relax
     \node[fd operator] at (0,0) {$\mathcal Z^{\Gamma'}_{\alpha\alpha}$};
    \else \node[fd identity] at (0,0) {$I_d$};\fi
   \fi
  \fi\fi
 \end{tikzpicture}%
}
\newcommand{\fdDyonLocal}[2][1]{%
 \begin{tikzpicture}[x=#1cm,y=#1cm,line cap=round,line join=round,
    baseline=(current bounding box.center)]
  \fdDyonSkeleton
  \node[fd operator] at (-.77,-.85) {$R^z$};
  \node[fd operator] at (0,-1.43) {$R^z$};
  \ifnum#2=0\relax
   \draw[fd flux] (.25,-.53) rectangle (1.24,.53);
   \node[fd flux label] at (.75,.28) {$B^h_{p_2,u}$};
   \node[fd operator] at (0,0) {$L^z$};
   \node[fd operator] at (.77,-.85) {$L^z$};
  \else
   \node[fd operator] at (.77,-.85) {$\mathcal Z^\Gamma_{j'k}L^z$};
   \node[fd operator] at (1.55,0) {$\mathcal Z^\Gamma_{kl}$};
   \node[fd operator] at (.77,.85) {$\bar{\mathcal Z}^\Gamma_{lm}$};
   \node[fd operator] at (0,0) {$\bar{\mathcal Z}^\Gamma_{mj}L^z$};
  \fi
 \end{tikzpicture}%
}
\newcommand{\fdDyonVertex}[2][1]{%
 \begin{tikzpicture}[x=#1cm,y=#1cm,line cap=round,line join=round,
    baseline=(current bounding box.center)]
  \draw[fd lattice] (-1.45,-.55) rectangle (1.45,.55);
  \draw[fd direct] (0,-.55)--(0,.55);
  \draw[fd lattice] (0,-.55)--(0,-1.25);
  \draw[fd lattice] (0,.55)--(0,1.25);
  \node[fd vertex] at (-.72,0) {$p_1$};
  \node[fd vertex] at (.72,0) {$p_2$};
  \node[fd vertex,anchor=south east] at (0,-.55) {$u$};
  \node[fd vertex,anchor=north east] at (0,.55) {$w$};
  \ifnum#2<2\relax
   \node[fd operator] at (-.72,-.55) {$R^g$};
   \node[fd operator] at (.72,-.55) {$L^g$};
   \node[fd operator] at (0,-.98) {$R^g$};
   \ifnum#2=0\relax \node[fd operator] at (0,.1) {$L^g$};
   \else \node[fd operator] at (0,.1) {$C^g$};\fi
  \else
   \node[fd operator] at (-.72,.55) {$R^g$};
   \node[fd operator] at (.72,.55) {$L^g$};
   \node[fd operator] at (0,.98) {$L^g$};
   \ifnum#2=2\relax \node[fd operator] at (0,-.1) {$R^g$};
   \else \node[fd identity] at (0,-.1) {$I_d$};\fi
  \fi
 \end{tikzpicture}%
}
\fi

\paragraph{Closed dyonic ribbon to local site projector.}
Two fluxes can fuse to a dyon, whose type $(C,\pi)$ carries both a magnetic flux class $C$ and an irrep $\pi$ of the centralizer $Z(h)$ of $h\in C$.
Let $K^{(C,\pi)}_\sigma$ project onto the total anyon type enclosed by the closed ribbon $\sigma$ around $p_1\cup p_2$, based at $u$, and assume that the five perimeter vertices other than $u$ are charge free, a constraint we denote by the projector $\Pi_{\mathrm{in}}$.
In lattice-operator form,
\begin{equation}
 \begin{gathered}
 K^{(C,\pi)}_\sigma
 =\sum_{h\in C}\frac{d_\pi}{|Z(h)|}
       \sum_{z\in Z(h)}\bar\chi_\pi^{(h)}(z)\;\times\\[-2pt]
 \vcenter{\hbox{\fdDyonBoundary{0}}},
 \end{gathered}
 \label{eq:dg-main-input}
\end{equation}
where the green contour denotes the labelled magnetic projector $B^{h}_{p_{1}p_{2},u}$ and the blue edge labels the electric action, the centralizer element $z$ transported along the perimeter to the outer legs, which does not touch the dual edge $d$ (red); the transport and character conventions are given in Appendix~\ref{app:ff-dyon}.
An input stabilized by $K^{(C,\pi)}_\sigma$ is mapped to an output stabilized by $B^e_{p_1}$ and by the local site projector
\begin{equation}
 \begin{gathered}
 D^{(C,\pi)}_{(p_2,u)}
 =\sum_{h\in C}\frac{d_\pi}{|Z(h)|}
       \sum_{z\in Z(h)}\bar\chi_\pi^{(h)}(z)\;\times\\[-2pt]
 \vcenter{\hbox{\fdDyonLocal{0}}},
 \end{gathered}
 \label{eq:dg-main-output}
\end{equation}
whose diagram is $B^h_{p_2,u}A_u^z$, while the five neutral-vertex constraints are preserved.
The map is again exact in the Heisenberg picture on the input subspace,
\begin{equation}
 \Phi^*\!\left(D^{(C,\pi)}_{(p_2,u)}\right)\Pi_{\mathrm{in}}
   =K^{(C,\pi)}_\sigma\Pi_{\mathrm{in}},
 \label{eq:ff-dyon-claim}
\end{equation}
so the entire label $(C,\pi)$ becomes the anyon type at the site $(p_2,u)$: for a genuine dyon, $C\neq\{e\}$ and $\pi$ nontrivial, $p_2$ retains flux and $u$ carries a vertex violation, while $p_1$ is flux free.

\subsection{Concrete example of decoherence}
\label{subsec:cc-example}

Here, we consider a concrete example of fusing non-Abelian $\mathbf{2}$ charges in $D(S_3)$ to show how reset can remove the coherence across fusion channels while preserving their probabilities.

\paragraph{Four-anyon setup.}
Consider four pure electric anyons of type $\mathbf{2}$ in $D(S_3)$, arranged on a $2\times2$ block as in Fig.~\ref{fig:cc-setup}.
The upper row, $v_1,v_2$, holds the pair to be fused; the lower row, $v_3,v_4$, is a spectator pair whose charge labels and pair-fusion channel are preserved by the protocol.
All other vertices and all plaquettes are in their vacuum sectors.
We compare two states with these same local anyon types, but different ways of coupling them.
In $\ket{\Psi_{\mathbf{1}}}$, each pair fuses to vacuum.
In $\ket{\Psi_{\mathbf{1}'}}$, each pair instead fuses to the sign charge $\mathbf{1}'$.
The latter is allowed because $\mathbf{2}\otimes\mathbf{2}=\mathbf{1}\oplus\mathbf{1}'\oplus\mathbf{2}$.
Since $\mathbf{1}'\otimes\mathbf{1}'=\mathbf{1}$, the spectator pair compensates the first pair's charge: both four-anyon states have total vacuum charge.

\begin{figure}[ht]
\centering
\begin{tikzpicture}[x=0.112\columnwidth,y=0.112\columnwidth,
    line cap=round,line join=round,
    cc edge/.style={latgray,line width=0.8pt},
    cc pair/.style={latgray,densely dashed,rounded corners=5pt,line width=0.65pt},
    cc charge/.style={circle,draw=opblue!75!black,fill=opblue!12,
      line width=0.65pt,inner sep=0pt,minimum size=12pt,font=\small},
    cc vacuum/.style={circle,draw=latgray,fill=white,densely dashed,
      line width=0.65pt,inner sep=0pt,minimum size=12pt,font=\small},
    cc label/.style={font=\small,inner sep=1pt},
    cc panel/.style={font=\small,inner sep=0pt}]
  \foreach \dx in {0,4.8}{
    \begin{scope}[shift={(\dx,0)}]
      \foreach \yy in {0,-1.6}{
        \draw[cc edge] (-0.42,\yy)--(2.02,\yy);}
      \foreach \xx in {0,1.6}{
        \draw[cc edge] (\xx,0.40)--(\xx,-2.00);}
      \foreach \xx/\yy/\vv in {0/0/1,1.6/0/2,0/-1.6/3,1.6/-1.6/4}{
        \node[cc label,anchor=north,fill=white]
          at (\xx,\yy-0.39) {$v_{\vv}$};}
      \draw[cc pair] (-0.35,-1.88) rectangle (1.95,-1.26);
      \node[cc label,fill=white] at (0.8,-1.05) {$\lambda$};
      \node[cc label] at (0.8,-2.55) {total $\mathbf{1}$};
    \end{scope}
  }
  \node[cc panel] at (0.8,1.02) {(a) Input};
  \draw[cc pair] (-0.35,-0.28) rectangle (1.95,0.34);
  \node[cc label] at (0.8,0.56) {$\lambda$};
  \draw[edgered,line width=1.65pt] (0,0)--(1.6,0);
  \node[cc label,text=edgered,fill=white] at (0.8,-0.18) {$\ell$};
  \foreach \xx/\yy in {0/0,1.6/0,0/-1.6,1.6/-1.6}{
    \node[cc charge] at (\xx,\yy) {$\mathbf{2}$};}
  \draw[-{Stealth[length=1.7mm,width=1.4mm]},line width=0.8pt]
    (2.45,-0.80)--(3.95,-0.80);
  \node[cc label] at (3.2,-0.49) {$\mathcal E$};
  \node[cc panel] at (5.6,1.02) {(b) After fusion};
  \node[cc vacuum] at (4.8,0) {$\mathbf{1}$};
  \node[cc charge] at (6.4,0) {$\lambda$};
  \foreach \xx in {4.8,6.4}{
    \node[cc charge] at (\xx,-1.6) {$\mathbf{2}$};}
\end{tikzpicture}
\caption{Input (left) and output (right) for one component $\lambda=\mathbf{1}$ or $\mathbf{1}'$.
The upper pair is fused across $\ell$; the lower pair is a spectator.
Dashed contours label pair charges.
The output has vacuum at $v_1$ and charge $\lambda$ at $v_2$, with total charge $\mathbf{1}$ in both panels.
A site labelled $\mathbf{1}$ is unexcited; the pairs need not be far apart.}
\label{fig:cc-setup}
\end{figure}

The local internal states are fixed by the explicit preparation in Appendix~\ref{app:cc-details}, which also verifies the four charge labels and the pair-fusion channels.
In particular, the two states are normalized and orthogonal.
We use their coherent superposition
\begin{equation}
 \ket{\Psi}=\alpha\ket{\Psi_{\mathbf{1}}}
             +\beta\ket{\Psi_{\mathbf{1}'}},\qquad
 |\alpha|^2+|\beta|^2=1.
 \label{eq:cc-superposition}
\end{equation}
This tests coherence between different fusion outcomes without changing either the incoming anyon types or the total charge.
The two branches belong to the same global vacuum sector, so their relative phase is a coherence of the joint four-anyon state allowed by charge superselection.

\paragraph{Rewiring and reset.}
The first rewiring step moves the outer edges of $v_1$ to $v_2$, leaving the shared edge $\ell$ as a dangling qudit.
The pair's total charge is now carried by the merged vertex.
For the chosen preparation, however, the dangling qudit also retains a record of the fusion component: its two internal states are orthogonal even though they have the same charge label.
Resetting this qudit to $\ket{+}$ discards that record.
Reversing the rewiring then restores the original graph, leaving $v_1$ in vacuum and the fused charge at $v_2$.
The rewiring $W^\dagger$ correlates the two fusion components with orthogonal states of the shared edge, so their relative phase is encoded in correlations between that edge and the remaining lattice.
Resetting the edge traces out this record: the orthogonality of the discarded states annihilates the off-diagonal term between fusion channels, while their unit norms leave the diagonal probabilities unchanged.

Let $\ket{\Omega_\lambda}$ denote the normalized output for the component $\ket{\Psi_\lambda}$.
It has charge $\lambda$ at $v_2$, with both spectator charges still of type $\mathbf{2}$ and no magnetic flux.
The complete fusion channel gives
\begin{align}
 \mathcal E\bigl(\ketbra{\Psi}{\Psi}\bigr)
  &=|\alpha|^2\ketbra{\Omega_{\mathbf{1}}}{\Omega_{\mathbf{1}}}
    \notag\\[-1pt]
  &\quad+|\beta|^2\ketbra{\Omega_{\mathbf{1}'}}{\Omega_{\mathbf{1}'}}.
 \label{eq:cc-output-main}
\end{align}
Thus the fusion probabilities remain $(p_{\mathbf{1}},p_{\mathbf{1}'},p_{\mathbf{2}})=(|\alpha|^2,|\beta|^2,0)$, but the relative phase is lost.
Physically, the first rewiring steps transfers the which-component record to degrees of freedom that is then reset and hence discarded.

While in this setup the relative phase between fusion channels was completely lost, it is possible to construct initial states where the relative phase partially survives under the protocol, as we detail in Appendix~\ref{app:cc-details}.

\section{Discussion}
\label{sec:discussion}

We have revealed the minimal set of operations necessary for manipulating non-Abelions: we first introduced the framework of graph rewiring to understand and derive unitary protocols for coherent moving and braiding of anyons. We then demonstrated that some amount of non-unitarity is needed to perform anyon fusion without knowledge of the input state. Reset was identified as a key non-unitary primitive, which, when combined with graph rewiring, enabled anyon fusion---in contrast to a purely unitary protocol prohibited by the no-go theorem. If the resulting fusion channel needs to be determined, then projective measurement for reading-out completes the toolkit. In contrast to the conventional belief, no adaptive measurements or ancillas are strictly required to braid non-Abelions or to fuse them: measurement enters only when the fusion outcome is to be read out (although adaptive operations may be useful for reducing circuit depth in special cases).

The rewiring and reset formalism is a conceptual framework that simplifies and unifies much of the prior literature on non-Abelian TO and non-Abelions. In addition to its conceptual utility, it makes the design of explicit circuits for use in experiment more straightforward and less costly than previous methods. Our results thus advance both the theoretical and experimental efforts in non-Abelian topological quantum computation.

A natural direction for future work is to explore how graph rewiring can provide a geometric perspective on sequential circuits and gauging on a lattice. The gauging map to derive a generic form for anyon creation operators \cite{lyons_protocols_2025} has been shown here as particular cases of graph rewiring. It would be interesting to clarify whether gauging on the lattice may in general be represented by graph rewiring operation. Furthermore, the gauging map has been understood elsewhere as a particular example of a sequential circuit \cite{chen_sequential_2024}, which are circuits capable of mapping between distinct phases of matter. Graph rewiring may also provide another viewpoint on such sequential circuits. More broadly, entangling circuits have been used to establish equivalences between codes: local Clifford circuits reduce every two-dimensional translation-invariant Pauli stabilizer code to copies of the toric code~\cite{bombin2012universal,haah2021classification}, with the unfolding of the color code as a prominent example~\cite{kubica2015unfolding}; finite-depth circuits map Pauli stabilizer models of twisted quantum doubles onto string-net ground states~\cite{ellison2022pauli}; and morphing relates codes by disentangling subsets of qubits~\cite{vasmer2022morphing}. Since graph rewiring assigns a transparent lattice picture to each such gate, it is a promising visual tool for elucidating other equivalences of this kind.

\subsection*{Implications for practical error correction}

Replacing measurement and feedback with reset can reduce control overhead: fast, high-fidelity unconditional reset is available in superconducting qubits \cite{magnard_fast_2018,kim_fast_2025,chen_fast_2026}, while neutral-atom platforms support optical pumping \cite{bluvstein_fault-tolerant_2026} and continuous loading of initialized atoms \cite{chiu_continuous_2025-1}. Local dissipation has long been proposed for preparing toric-code ground states \cite{verstraete_quantum_2009,weimer_rydberg_2010}, with proof-of-principle digital implementations using resettable ancillas \cite{barreiro_open-system_2011,muller_simulating_2011}.
Measurement-free error correction likewise uses ancillas for coherent syndrome processing \cite{fujii_measurement-free_2014,crow_improved_2016,heusen_measurement-free_2024,butt_demonstration_2026,park_enhancing_2025}.
Our protocols in Sec.~\ref{sec:fusion} implement local charge and flux fusion using a constant-size circuit and one reset of a code qudit, without separate ancillas or measurement records, such that
each application deterministically removes the input anyon and leaves behind a vacuum site.
These primitives apply to quantum doubles of arbitrary finite groups and may reduce the hardware overhead of dissipative preparation; whether suitable update schedules also shorten relaxation times remains open.

Fault-tolerant storage requires more than preparation: eliminating excitations can still enact a logical error (even for local rules if given sufficiently long time).
A nonzero memory threshold requires the recoverable lifetime to diverge with system size at fixed noise strength below threshold.
Recent progress includes a fault-tolerant local automaton for the toric code \cite{balasubramanian_local_2026}, local message-passing decoders \cite{lake_fast_2025,lake_local_2025}, and autonomous dissipative quantum cellular automata based on concatenated codes \cite{dunnweber_quantum_2026}.
Our fusion primitives may simplify local correction circuits even for the toric code.
More broadly, the same unitary and reset operations implement fusion in non-Abelian quantum doubles, suggesting a route toward extending local decoding to these models.
Ref.~\cite{chirame_stabilizing_2025} already stabilize non-Abelian topological order through local Lindbladian dynamics, but require perfectly heralded errors.
Combining our input-agnostic fusion channels with local decoding rules that tolerate unheralded noise is an exciting direction for future research.

\section{Acknowledgments}

C.F.B.L. acknowledge fruitful discussions with Nik Gjonbalaj, Dominik Kufel, Dachuan Lu, Rahul Sahay, Jeet Shah, and Yarden Sheffer. C.F.B.L., R.S., and A.L. acknowledge support from the National Science Foundation Graduate Research Fellowship Program (grant no. DGE 2140743). C.F.B.L., A.L., and A.V. are partly supported by the Simons Collaboration on Ultra Quantum Matter, which is a grant from the Simons Foundation (618615). This material is based on the work supported by the Defense Advanced Research Projects Agency (DARPA) MeasQuIT program under agreement no. HR00112490361 (R.V. and A.V.). This research was supported in part by grant NSF PHY-2309135 to the Kavli Institute for Theoretical Physics (KITP). C.F.B.L. and R.V. thank KITP for hospitality during the program "Engineering Fractionalization in Quantum Many-Body Systems", during which part of the work was completed. 

\bibliography{references}

\newpage
\onecolumngrid
\appendix
\addtocontents{toc}{\string\let\string\l@subsection\string\@gobbletwo}%

\section{Basis and gate definitions and conventions}

We collect the definitions and conventions that are scattered across the main text for ease of reference.

\subsection{Group element and representation basis}

The local Hilbert space on each qudit edge is $|G|$ dimensional. There are two convenient basis for the local Hilbert space: the group element basis and the representation basis, which are generalizations of the $Z$ basis and $X$ basis in toric code (i.e. when $G=\mathbb{Z}_2$).

The group basis is given by the regular representation of $G$, spanned by $\ket{g}$ for all $g \in G$. For example, when $G=S_3 = \langle s,r| s^2=r^3=e, srs=r^2\rangle$, which has 6 distinct group elements generated by a 120-degree rotation $r$ and a reflection $s$, there are 6 group element basis states:
\begin{equation}
    \ket{e}, \quad \ket{r}, \quad \ket{r^2}, \quad \ket{s}, \quad \ket{rs}, \quad \ket{r^2s}.
\end{equation}

The representation basis state is given by $\ket{\Gamma; i,j}$, where $\Gamma$ is an irreducible representation (irrep) of $G$, and $i,j$ are indices between $1$ and the dimension of the irrep $d_\Gamma$. For example, for $G=S_3$, there are three irreps: $\mathbf{1},\mathbf{1'},\mathbf{2}$, where $\mathbf{2}$ is a two-dimensional irrep; this results in $6$ representation basis states:
\begin{equation}
    \ket{\mathbf{1}; 1,1}, \quad \ket{\mathbf{1'}; 1,1}, \quad \ket{\mathbf{2}; 1,1}, \quad \ket{\mathbf{2}; 1,2}, \quad \ket{\mathbf{2}; 2,1}, \quad \ket{\mathbf{2}; 2,2}.
\end{equation}

In $\mathbb{Z}_2$ case, the Hadamard gate relate between the $Z$ and $X$ basis. 
Similarly, for the general $G$ case, there is a group Fourier transform (and its inverse) that relate between the group element basis and the representation basis. 
The group Fourier transform $F$ that maps group element basis to representation basis is given by \cite{bastidas_unification_2025}:
\begin{equation}\label{eq:gp-FT}
    F=\sum_{\Gamma} \sum_{i j} \sum_{g \in G} \sqrt{\frac{d_{\Gamma}}{|G|}} \Gamma_{ij}(g)|\Gamma, i j\rangle\langle g|
\end{equation}
where $\Gamma_{ij}(g)$ is the $(i,j)$ entry of the representation matrix $\Gamma(g)$ in the $\Gamma$ irrep.

And the inverse group Fourier transform $F^\dagger$ is given by
\begin{equation}\label{eq:inv-gp-FT}
    F^\dagger = \sum_{\Gamma} \sum_{i j} \sum_{g \in G} \sqrt{\frac{d_{\Gamma}}{|G|}} \Gamma^*_{ij}(g)\ket{g} \bra{\Gamma, i j} 
\end{equation}
where $\Gamma^*_{ij}(g)$ is the complex conjugation of $\Gamma_{ij}(g)$.

\subsection{Group multiplication gates and the inverse gate}
\label{app:mult-gates}

The action of left and right group multiplication gates are:
\begin{equation}
    L^g\ket{h} = \ket{gh}, \quad \quad \quad R^g\ket{h}= \ket{h \bar{g}}
    \label{eq:LR-def}
\end{equation}

with their respective inverses:
\begin{equation}
    (L^g)^\dagger\ket{h} = L^{\bar{g}}\ket{h} = \ket{\bar{g}h}, \quad \quad \quad (R^g)^\dagger \ket{h} = R^{\bar{g}}\ket{h}= \ket{h g}
    \label{eq:LR-inverse}
\end{equation}

The action of controlled group multiplication gates are as follows:
\begin{equation}
    CL\ket{h_1}\ket{h_2} = \ket{h_1}\ket{h_1 h_2}, \quad \quad \quad CR\ket{h_1}\ket{h_2} = \ket{h_1}\ket{h_2 \bar{h}_1}
    \label{eq:CLCR-def}
\end{equation}

where the first qudit is the control, and the second is the target, such that $CL$($CR$) implements left(right) group multiplication on the target qudit. 

The action of conjugation by group element $g$ are
\begin{equation}
    C^g\ket{h} = L^gR^g\ket{h} = \ket{gh\bar{g}}
    \label{eq:conjugation-def}
\end{equation}

The inverse gate (also known as the antipode; the name and notation is inherited from Hopf algebra)
\begin{equation}
    \mathcal{S}\ket{h} = \ket{\bar{h}}
    \label{eq:inverse-gate-def}
\end{equation}

The inverse gate provides another way to express the inverse of controlled group multiplication gates:
\begin{equation}
    C_cL_t^\dagger = S_c (C_cL_t) S_c^\dagger,  \quad \quad \quad 
    C_cR_t^\dagger = S_c (C_cR_t) S_c^\dagger,  
    \label{eq:controlled-inverse}
\end{equation}

\begin{equation}
\vcenter{\hbox{
\begin{tikzpicture}
\begin{yquant}[horizontal]

[violet] qubit {} a[1];
[violet] qubit {} a[+1];

box {$L^\dagger$} a[1] | a[0];

\end{yquant}
\end{tikzpicture}
}}
=
\vcenter{\hbox{
\begin{tikzpicture}
\begin{yquant}[horizontal]

[violet] qubit {} a[1];
[violet] qubit {} a[+1];

box {$S$} a[0];
box {$L$} a[1] | a[0];
box {$S$} a[0];

\end{yquant}
\end{tikzpicture}
}}, \quad \quad \quad 
\vcenter{\hbox{
\begin{tikzpicture}
\begin{yquant}[horizontal]

[violet] qubit {} a[1];
[violet] qubit {} a[+1];

box {$R^\dagger$} a[1] | a[0];

\end{yquant}
\end{tikzpicture}
}}
=
\vcenter{\hbox{
\begin{tikzpicture}
\begin{yquant}[horizontal]

[violet] qubit {} a[1];
[violet] qubit {} a[+1];

box {$S$} a[0];
box {$R$} a[1] | a[0];
box {$S$} a[0];

\end{yquant}
\end{tikzpicture}
}}
\end{equation}

Conjugating the target instead of the control exchanges left and right multiplication, since $\mathcal{S}R^{g}\mathcal{S}^{\dagger}=L^{g}$ by Eqs.~\eqref{eq:LR-def} and \eqref{eq:inverse-gate-def}:
\begin{equation}
    C_cL_t = S_t (C_cR_t) S_t^\dagger,  \quad \quad \quad
    C_cL_t^\dagger = S_t (C_cR_t^\dagger) S_t^\dagger,
    \label{eq:controlled-target-inverse}
\end{equation}
where $S_t$ is the inverse gate on the target qudit.

\subsection{Diagonal gates}
\label{app:diagonal-gates}

The group-valued Z operator is defined as
\begin{equation}
    \mathcal{Z}^\Gamma_{ij} = \sum_{g\in G} [\Gamma(g)]_{ij} \ketbra{g},
    \label{eq:Z-def}
\end{equation}

which is an matrix product operator (MPO) that requires specification of the indices $i,j$. Its conjugate is given by
\begin{equation}
    \bar{\mathcal{Z}}^{\Gamma}_{ik}= \sum_g [\Gamma(\bar g)]_{ik}\,|g\rangle\langle g| = S \mathcal{Z}^{\Gamma}_{ik} S^{\dagger}
    \label{eq:Zbar-def}
\end{equation}

such that the two are inverse to each other upon contracting the bond index:
\begin{align}
    \sum_{j}\mathcal{Z}^{\Gamma}_{ij} \bar{\mathcal{Z}}^{\Gamma}_{jk}
    &= \sum_{h,g} \sum_{j}[\Gamma(h)]_{ij} [\Gamma(\bar{g})]_{jk} \ketbra{h}\ketbra{g}
    = \sum_{h} [\Gamma(h\bar{h})]_{ik} \ketbra{h}
    = \delta_{ik},
    \label{eq:Z-orthogonality}
\end{align}
and similarly,
\begin{equation}
    \sum_j\bar{\mathcal{Z}}^\Gamma_{ij}\mathcal{Z}^\Gamma_{jk} = \delta_{ik}
    \label{eq:Zbar-orthogonality}
\end{equation}

Throughout, $\Gamma$ runs over a complete set of inequivalent unitary irreducible representations of the finite group $G$, $d_{\Gamma}=\dim\Gamma$, and $\chi^{\Gamma}(C)$ denotes the character evaluated at any representative of the conjugacy class $C$.
The bar on a tensor means evaluation on the inverse group element, which for a unitary $\Gamma$ conjugates and transposes the matrix entries, whereas the bar on a character value denotes plain complex conjugation.

\subsection{Quantum double projectors}

\subsubsection{Charge projectors at vertices}

\paragraph{Projector to trivial charge.}

The projector to the trivial charge sector at a vertex $v$ is given by
\begin{equation}
P_{v} =\frac{1}{|G|}\sum_{g}A_v^{g}.
\label{eq:P-trivial}
\end{equation}

\paragraph{Projector to a charge type.}

The projector for each anyon type is given by
\begin{equation}
P^{\Gamma}_{v} =\frac{d^{\Gamma}}{|G|}\sum_{g}\overline{\chi_{\Gamma}}(g)A_v^{g}= \frac{d^{\Gamma}}{|G|} \sum_{g \in G} \chi_{\Gamma}(\bar{g}) A_{v}^{g}.
\label{eq:P-charge-type}
\end{equation}

\paragraph{Projector to a charge internal label.}

We can resolve it into each internal state of the anyon, and there are $d^{\Gamma}$ for each anyon type
\begin{equation}
P^{\Gamma}_{i} = \frac{d^{\Gamma}}{|G|} \sum_{g \in G} \Gamma_{ii}(\bar{g}) A_{v}^{g}
\label{eq:P-internal}
\end{equation}

Above projectors can be shown to be orthogonal to each other, thus they form a POVM for the internal state space of non-Abelian charges. 

\textit{Proof.}
$$
P^{\Gamma}_{i}P^{\Gamma}_{j} = \left( \frac{d^{\Gamma}}{|G|} \right)^{2} \sum_{g,h} \Gamma_{ii} (g^{-1})\Gamma_{jj}(h^{-1}) A^{gh}_{v} 
$$
we first reindex $h\to g^{-1}h$,
$$
P^{\Gamma}_{i}P^{\Gamma}_{j} = \left( \frac{d^{\Gamma}}{|G|} \right)^{2} \sum_{g,h} \Gamma_{ii} (g^{-1})\Gamma_{jj}(h^{-1}g) A^{h}_{v} 
$$
Then we split $\Gamma_{jj}(h^{-1}g) =\sum_{k} \Gamma_{jk}(h^{-1})\Gamma_{kj}(g)$ as a homomorphism, and we massage the first term $\Gamma_{ii}(g^{-1}) = \Gamma ^{\dagger}_{ii}(g)$.
Now we can use the Great Orthogonality Theorem, which states that for unitary irrep,
$$
\sum_{g} \Gamma^{\dagger}_{ij}(g) \Gamma_{lm}(g) = \frac{|G|}{d_{\Gamma}}\delta_{il} \delta_{jm}
$$
Substituting back, we get
$$
P^{\Gamma}_{i} P^{\Gamma}_{j} = \frac{d^{\Gamma}}{|G|} \sum_{h} \sum_{k} \delta_{ik} \delta_{ij} \Gamma_{jk}(h^{-1}) A^{h}_{v} = \delta_{ij} \frac{d^{\Gamma}}{|G|} \sum_{h}  \Gamma_{ji}(h^{-1}) A^{h}_{v} = \delta_{ij} P^{\Gamma}_{j}
$$
So indeed they are orthogonal.

\subsubsection{Flux projectors at plaquettes}
\label{app:flux-projectors}

Horizontal edges are oriented to the right and vertical edges upward, and the flux through a plaquette $p$ is the holonomy $h_{p}=h_{\mathrm{bottom}}\,h_{\mathrm{right}}\,\bar{h}_{\mathrm{top}}\,\bar{h}_{\mathrm{left}}$, where edges traversed against their orientation enter through their inverse.

\paragraph{Projector to trivial flux.}
The plaquette operator $B^{e}_{p}$ projecting to the trivial flux is defined as
\begin{equation}
    B^{e}_{p}=\delta_{h_{p},e}.
\end{equation}
$B^{e}_{p}$ can be expressed in a matrix product operator (MPO) form,
\begin{align}
    B^{e}_{p}
    &= \delta_{h_{p},e}
    = \frac{1}{|G|}\sum_{\Gamma} d_{\Gamma}\,\chi^{\Gamma}(h_{p})\nonumber\\
    &= \frac{1}{|G|}\sum_{\Gamma} d_{\Gamma}\sum_{ijkl}\; \plaqClosed
    \label{eq:Be-MPO}
\end{align}
where for the second equality we use the character orthogonality relations for any finite group $G$~\cite{serre_linear_1977}; for the third equality we decompose the character function into the MPOs $\mathcal{Z}^{\Gamma}_{ij}$ ($\bar{\mathcal{Z}}^{\Gamma}_{ij}$ on edges traversed against their orientation) with support on each edge~\cite{brell_generalized_2015,lyons_protocols_2025}.
The MPOs $\mathcal{Z}^{\Gamma}_{ij}$ (or $\bar{\mathcal{Z}}^{\Gamma}_{ij}$ for anti-aligned edges) are contracted cyclically counterclockwise around the plaquette, starting from the bottom edge.

\paragraph{Projector to an arbitrary flux label.}
The plaquette operator $B^{g}_{p,v}$ projecting to a flux $g$ can be derived by modifying $B^{e}_{p}$.
Instead of evaluating the trace on the flux label $h_{p,v}$ of the plaquette (based at a vertex $v$, here chosen to be the bottom-left corner), we evaluate it on $\bar{g}h_{p,v}$; this is done by opening the trace at the base point $v$ with an insertion of $\Gamma(\bar{g})$, which modifies the delta function to $h_{p,v}=g$:
\begin{align}
    B^{g}_{p,v}
    &= \delta_{h_{p,v},g}
    = \frac{1}{|G|}\sum_{\Gamma} d_{\Gamma}\,\chi^{\Gamma}(\bar{g}h_{p,v}) \nonumber\\
    &= \frac{1}{|G|}\sum_{\Gamma} d_{\Gamma}\sum_{ijklm}[\Gamma(\bar{g})]_{mi}\; \plaqOpen
    \label{eq:Bg-MPO}
\end{align}
For $g=e$ the insertion is the identity matrix, the trace closes, and one recovers $B^{e}_{p}$ above.

\paragraph{Projector to a flux type.}
The projector $B^{C}_{p}$ onto fluxes in a conjugacy class $C$ is the sum of $B^{c}_{p,v}$ over $c\in C$.
Since we sum over all elements of the conjugacy class, the dependence on the base point drops out; the base point is only needed when keeping track of a flux label, which is not gauge invariant.
Summing the insertion $\Gamma(\bar{c})$ over the class gives a multiple of the identity by Schur's lemma~\cite{serre_linear_1977}, $\sum_{c\in C}\Gamma(\bar{c})=\frac{|C|\,\bar{\chi}^{\Gamma}(C)}{d_{\Gamma}}\,\mathds{1}$, where $\bar{\chi}^{\Gamma}(C)$ denotes the complex conjugate of the character $\chi^{\Gamma}(C)$, so the trace closes again and the class projector is the closed plaquette weighted by the complex-conjugated character:
\begin{align}
    B^{C}_{p}
    &= \sum_{c\in C}B^{c}_{p,v}
    = \frac{|C|}{|G|}\sum_{\Gamma}\bar{\chi}^{\Gamma}(C)\,\chi^{\Gamma}(h_{p})\nonumber\\
    &= \frac{|C|}{|G|}\sum_{\Gamma}\bar{\chi}^{\Gamma}(C)\sum_{ijkl}\; \plaqClosed
    \label{eq:BC-MPO}
\end{align}
Only the prefactor distinguishes $B^{C}_{p}$ from $B^{e}_{p}$, which is the special case $C=\{e\}$ with $\chi^{\Gamma}(e)=d_{\Gamma}$; the edge content is the same fixed-irrep contraction $W_{\Gamma}(p)=\chi^{\Gamma}(h_{p})$ for every class.
Therefore any transformation that is established for each fixed-irrep contraction $W_{\Gamma}(p)$ separately, as all the tensor manipulations in this appendix are, carries over from $B^{e}_{p}$ to every $B^{C}_{p}$ by reweighting the sum over irreps.
Knowing only the image of the single combination $B^{e}_{p}=\frac{1}{|G|}\sum_{\Gamma}d_{\Gamma}W_{\Gamma}(p)$ would not suffice: for $G=\mathbb{Z}_{3}$ the unitary that inverts every edge fixes $B^{e}_{p}$ but exchanges $B^{\omega}_{p}$ and $B^{\bar{\omega}}_{p}$.

\section{Gate commutation relations}

\subsection{Group multiplication with control gates}
\label{app:mult-control-comm}

When $L^g$ acts on the control qudit:

\begin{equation}
\label{eq:comm-L-control}
\vcenter{\hbox{
\begin{tikzpicture}
\begin{yquant}[horizontal]

[violet] qubit {$\ket{h_1}$} a[1];
[violet] qubit {$\ket{h_2}$} a[+1];

box {$L^g$} a[0];
box {$L$} a[1] | a[0];

[violet] output {$\ket{gh_1}$} a[0];
[violet] output {$\ket{gh_1h_2}$} a[1];

\end{yquant}
\end{tikzpicture}
}}
=
\vcenter{\hbox{
\begin{tikzpicture}
\begin{yquant}[horizontal]

[violet] qubit {} a[1];
[violet] qubit {} a[+1];

box {$L$} a[1] | a[0];
box {$L^g$} a[0];
box {$L^g$} a[1];

\end{yquant}
\end{tikzpicture}
}}, \quad \quad \quad 
\vcenter{\hbox{
\begin{tikzpicture}
\begin{yquant}[horizontal]

[violet] qubit {$\ket{h_1}$} a[1];
[violet] qubit {$\ket{h_2}$} a[+1];

box {$L^g$} a[0];
box {$R$} a[1] | a[0];

[violet] output {$\ket{gh_1}$} a[0];
[violet] output {$\ket{h_2 \bar{h}_1\bar{g}}$} a[1];

\end{yquant}
\end{tikzpicture}
}}
=
\vcenter{\hbox{
\begin{tikzpicture}
\begin{yquant}[horizontal]

[violet] qubit {} a[1];
[violet] qubit {} a[+1];

box {$R$} a[1] | a[0];
box {$L^g$} a[0];
box {$R^g$} a[1];

\end{yquant}
\end{tikzpicture}
}}
\end{equation}

When $L^g$ acts on the target qudit (note that the right panel shows that $L^g$ commute with right multiplication as expected):

\begin{equation}
\label{eq:comm-L-target}
\vcenter{\hbox{
\begin{tikzpicture}
\begin{yquant}[horizontal]

[violet] qubit {$\ket{h_1}$} a[1];
[violet] qubit {$\ket{h_2}$} a[+1];

box {$L^g$} a[1];
box {$L$} a[1] | a[0];

[violet] output {$\ket{h_1}$} a[0];
[violet] output {$\ket{h_1 gh_2}$} a[1];

\end{yquant}
\end{tikzpicture}
}}
=\sum_{h_1}
\vcenter{\hbox{
\begin{tikzpicture}
\begin{yquant}[horizontal]

[violet] qubit {} a[1];
[violet] qubit {} a[+1];

box {$L$} a[1] | a[0];
box {$T^{h_1}$} a[0];
box {$L^{h_1g \bar{h}_1}$} a[1];

\end{yquant}
\end{tikzpicture}
}}, \quad \quad \quad 
\vcenter{\hbox{
\begin{tikzpicture}
\begin{yquant}[horizontal]

[violet] qubit {$\ket{h_1}$} a[1];
[violet] qubit {$\ket{h_2}$} a[+1];

box {$L^g$} a[1];
box {$R$} a[1] | a[0];

[violet] output {$\ket{h_1}$} a[0];
[violet] output {$\ket{gh_2 \bar{h}_1}$} a[1];

\end{yquant}
\end{tikzpicture}
}}
=
\vcenter{\hbox{
\begin{tikzpicture}
\begin{yquant}[horizontal]

[violet] qubit {} a[1];
[violet] qubit {} a[+1];

box {$R$} a[1] | a[0];
box {$L^g$} a[1];

\end{yquant}
\end{tikzpicture}
}}
\end{equation}

The notation for the equation on the left is that for an operator identity. To understand it purely in terms of circuit element,  the $\sum_{h_1}$ should be grouped together with the action of $T^{h_1}$ on control qudit and $L^{h_1g \bar{h}_1}$on target qudit to form a controlled unitary gate.

When $R^g$ acts on the control qudit:

\begin{equation}
\label{eq:comm-R-control}
\vcenter{\hbox{
\begin{tikzpicture}
\begin{yquant}[horizontal]

[violet] qubit {$\ket{h_1}$} a[1];
[violet] qubit {$\ket{h_2}$} a[+1];

box {$R^g$} a[0];
box {$L$} a[1] | a[0];

[violet] output {$\ket{h_1 \bar{g}}$} a[0];
[violet] output {$\ket{h_1\bar{g}h_2}$} a[1];

\end{yquant}
\end{tikzpicture}
}}
=\sum_{h_1}
\vcenter{\hbox{
\begin{tikzpicture}
\begin{yquant}[horizontal]

[violet] qubit {} a[1];
[violet] qubit {} a[+1];

box {$L$} a[1] | a[0];
box {$T^{h_1}$} a[0];
box {$R^g$} a[0];
box {$L^{h_1 \bar{g}\bar{h}_1}$} a[1];

\end{yquant}
\end{tikzpicture}
}}, \quad \quad \quad 
\vcenter{\hbox{
\begin{tikzpicture}
\begin{yquant}[horizontal]

[violet] qubit {$\ket{h_1}$} a[1];
[violet] qubit {$\ket{h_2}$} a[+1];

box {$R^g$} a[0];
box {$R$} a[1] | a[0];

[violet] output {$\ket{h_1\bar{g}}$} a[0];
[violet] output {$\ket{h_2 g \bar{h}_1}$} a[1];

\end{yquant}
\end{tikzpicture}
}}
=\sum_{h_1}
\vcenter{\hbox{
\begin{tikzpicture}
\begin{yquant}[horizontal]

[violet] qubit {} a[1];
[violet] qubit {} a[+1];

box {$R$} a[1] | a[0];
box {$T^{h_1}$} a[0];
box {$R^g$} a[0];
box {$R^{h_1\bar{g}\bar{h}_1}$} a[1];

\end{yquant}
\end{tikzpicture}
}}
\end{equation}

When $R^g$ acts on the target qudit (note that the left panel shows that $R^g$ commute with left multiplication as expected):

\begin{equation}
\label{eq:comm-R-target}
\vcenter{\hbox{
\begin{tikzpicture}
\begin{yquant}[horizontal]

[violet] qubit {$\ket{h_1}$} a[1];
[violet] qubit {$\ket{h_2}$} a[+1];

box {$R^g$} a[1];
box {$L$} a[1] | a[0];

[violet] output {$\ket{h_1}$} a[0];
[violet] output {$\ket{h_1h_2 \bar{g}}$} a[1];

\end{yquant}
\end{tikzpicture}
}}
=
\vcenter{\hbox{
\begin{tikzpicture}
\begin{yquant}[horizontal]

[violet] qubit {} a[1];
[violet] qubit {} a[+1];

box {$L$} a[1] | a[0];
box {$R^{g}$} a[1];

\end{yquant}
\end{tikzpicture}
}}, \quad \quad \quad 
\vcenter{\hbox{
\begin{tikzpicture}
\begin{yquant}[horizontal]

[violet] qubit {$\ket{h_1}$} a[1];
[violet] qubit {$\ket{h_2}$} a[+1];

box {$R^g$} a[1];
box {$R$} a[1] | a[0];

[violet] output {$\ket{h_1}$} a[0];
[violet] output {$\ket{h_2 \bar{g} \bar{h}_1}$} a[1];

\end{yquant}
\end{tikzpicture}
}}
=\sum_{h_1}
\vcenter{\hbox{
\begin{tikzpicture}
\begin{yquant}[horizontal]

[violet] qubit {} a[1];
[violet] qubit {} a[+1];

box {$R$} a[1] | a[0];
box {$T^{h_1}$} a[0];
box {$R^{h_1 g \bar{h}_1}$} a[1];

\end{yquant}
\end{tikzpicture}
}}
\end{equation}

\paragraph{Sliding rules for vertex operators.}
\label{app:slide-rules}
For a controlled multiplication with control $c$ and target $t$, rearranging Eqs.~\eqref{eq:comm-L-control}--\eqref{eq:comm-R-target}, and conjugating with $\mathcal{S}_{c}$, which inverts the gate by Eq.~\eqref{eq:controlled-inverse} and exchanges $L^{g}_{c}$ and $R^{g}_{c}$ by Eq.~\eqref{eq:inverse-gate-def}, gives
\begin{align}
CL\,(R^{g}_{c}L^{g}_{t})\,CL^{\dagger}&=R^{g}_{c}, & CL\,L^{g}_{c}\,CL^{\dagger}&=L^{g}_{c}L^{g}_{t},\label{eq:ce-slide-CL}\\
CR\,(R^{g}_{c}R^{g}_{t})\,CR^{\dagger}&=R^{g}_{c}, & CR\,L^{g}_{c}\,CR^{\dagger}&=L^{g}_{c}R^{g}_{t},\label{eq:ce-slide-CR}\\
CL^{\dagger}(L^{g}_{c}L^{g}_{t})\,CL&=L^{g}_{c}, & CL^{\dagger}\,R^{g}_{c}\,CL&=R^{g}_{c}L^{g}_{t},\label{eq:ce-slide-CLd}\\
CR^{\dagger}(L^{g}_{c}R^{g}_{t})\,CR&=L^{g}_{c}, & CR^{\dagger}\,R^{g}_{c}\,CR&=R^{g}_{c}R^{g}_{t}.\label{eq:ce-slide-CRd}
\end{align}
Their interpretation as the elementary move of Sec.~\ref{subsec:elementary-move} is given in Appendix~\ref{app:gr-move}.

\subsection{Group multiplication with diagonal gates}

Since $L^g$ and $R^g$ permute the group basis, conjugating a diagonal gate by them reshuffles the weights $[\Gamma(h)]_{ij}$. Because $\Gamma$ is a homomorphism, the result is again a $\mathcal{Z}^\Gamma$ tensor, with the group multiplication converted into a $c$-number rotation of one bond index.

\paragraph{Left multiplication.}
\begin{equation}
    L^{g}\,\mathcal{Z}^\Gamma_{ij}\,(L^{g})^\dagger = \sum_k [\Gamma(\bar{g})]_{ik}\,\mathcal{Z}^\Gamma_{kj}
    \label{eq:L-Z}
\end{equation}
\begin{proof}
\begin{align}
L^{g}\,\mathcal{Z}^\Gamma_{ij}\,(L^{g})^\dagger
&= \sum_h [\Gamma(h)]_{ij}\,\ketbra{gh} 
= \sum_h [\Gamma(\bar{g} h)]_{ij}\,\ketbra{h} \nn
&= \sum_h \sum_k [\Gamma(\bar{g})]_{ik}[\Gamma(h)]_{kj}\,\ketbra{h}
= \sum_k [\Gamma(\bar{g})]_{ik}\,\mathcal{Z}^\Gamma_{kj},
\end{align}
where in the second equality we reindex $h \to \bar{g} h$.
\end{proof}

\paragraph{Right multiplication.}
\begin{equation}
    R^{g}\,\mathcal{Z}^\Gamma_{ij}\,(R^{g})^\dagger = \sum_k \mathcal{Z}^\Gamma_{ik}\,[\Gamma(g)]_{kj}
    \label{eq:R-Z}
\end{equation}
\begin{proof}
\begin{align}
R^{g}\,\mathcal{Z}^\Gamma_{ij}\,(R^{g})^\dagger
&= \sum_h [\Gamma(h)]_{ij}\,\ketbra{h\bar{g}}
= \sum_h [\Gamma(hg)]_{ij}\,\ketbra{h} \nn
&= \sum_h \sum_k [\Gamma(h)]_{ik}[\Gamma(g)]_{kj}\,\ketbra{h}
= \sum_k \mathcal{Z}^\Gamma_{ik}\,[\Gamma(g)]_{kj},
\end{align}
where in the second equality we reindex $h \to hg$.
\end{proof}

In matrix notation on the bond indices, Eqs.~\eqref{eq:L-Z} and \eqref{eq:R-Z} read $L^g \mathcal{Z}^\Gamma (L^g)^\dagger = \Gamma(\bar{g})\cdot\mathcal{Z}^\Gamma$ and $R^g \mathcal{Z}^\Gamma (R^g)^\dagger = \mathcal{Z}^\Gamma\cdot\Gamma(g)$: left (right) multiplication on the physical leg rotates the left (right) virtual leg of the MPO tensor.

\subsection{Controlled group multiplication with diagonal gates}
\label{app:ctrl-diag}

Throughout, $c$ denotes the control register and $t$ the target register, and an operator with a register subscript acts only on that register.
The controlled multiplication gates $CL$ and $CR$ are defined in Appendix~\ref{app:mult-gates}, and the diagonal gates $\mathcal{Z}^\Gamma$ and $\bar{\mathcal{Z}}^\Gamma$ in Appendix~\ref{app:diagonal-gates}.
Conjugating a diagonal gate on the target by $CL$ or $CR$ applies Eq.~\eqref{eq:L-Z} or \eqref{eq:R-Z} with $g$ replaced by the control's group element. The $c$-number rotation $\Gamma(g)$ then becomes a diagonal operator on the control, so the single tensor on the target \textit{splits} into a contracted pair straddling both registers: controlled multiplication elongates a charge string by one site.

\paragraph{$\mathcal{Z}^\Gamma$ on target, conjugated by $CL$.}
\begin{equation}
    CL\;\mathcal{Z}^\Gamma_{ij,t}\;CL^\dagger = \sum_k \bar{\mathcal{Z}}^{\Gamma}_{ik,c}\;\mathcal{Z}^\Gamma_{kj,t}
    \label{eq:CL-Z}
\end{equation}
\begin{proof}
\begin{align}
CL\;\mathcal{Z}^\Gamma_{ij,t}\;CL^\dagger
&= \sum_{g,h} [\Gamma(h)]_{ij}\, \ketbra{g}_c \otimes \ketbra{gh}_t 
= \sum_{g,h}[\Gamma(\bar{g} h)]_{ij}\,\ketbra{g}_c \otimes \ketbra{h}_t \nn
&= \sum_{k} \sum_g [\Gamma(\bar{g})]_{ik}\,\ketbra{g}_c\;\sum_h [\Gamma(h)]_{kj}\,\ketbra{h}_t
= \sum_k \bar{\mathcal{Z}}^{\Gamma}_{ik,c}\;\mathcal{Z}^\Gamma_{kj,t},
\end{align}
where in the second equality we reindex $h \to \bar{g} h$ on the target.
\end{proof}

\paragraph{$\mathcal{Z}^\Gamma$ on target, conjugated by $CR$.}
\begin{equation}
    CR\;\mathcal{Z}^\Gamma_{ij,t}\;CR^\dagger = \sum_k \mathcal{Z}^\Gamma_{ik,t}\;\mathcal{Z}^\Gamma_{kj,c}
    \label{eq:CR-Z}
\end{equation}
\begin{proof}
\begin{align}
CR\;\mathcal{Z}^\Gamma_{ij,t}\;CR^\dagger
&= \sum_{g,h} [\Gamma(h)]_{ij}\, \ketbra{g}_c \otimes \ketbra{h\bar{g}}_t
= \sum_{g,h}[\Gamma(hg)]_{ij}\,\ketbra{g}_c \otimes \ketbra{h}_t \nn
&= \sum_k \sum_h [\Gamma(h)]_{ik}\,\ketbra{h}_t\;\sum_g [\Gamma(g)]_{kj}\,\ketbra{g}_c
= \sum_k \mathcal{Z}^\Gamma_{ik,t}\;\mathcal{Z}^\Gamma_{kj,c},
\end{align}
where in the second equality we reindex $h \to hg$ on the target.
\end{proof}

\paragraph{$\bar{\mathcal{Z}}^\Gamma$ on target, conjugated by $CL$.}
\begin{equation}
    CL\;\bar{\mathcal{Z}}^\Gamma_{ij,t}\;CL^\dagger = \sum_k \bar{\mathcal{Z}}^{\Gamma}_{ik,t}\;\mathcal{Z}^\Gamma_{kj,c}
    \label{eq:CL-Zbar}
\end{equation}
Relative to Eq.~\eqref{eq:CL-Z}, the tensor emitted onto the control is unbarred and carries the \textit{right} bond index: a $\bar{\mathcal{Z}}^\Gamma$ string is oriented oppositely, so it elongates on the other end.
\begin{proof}
\begin{align}
CL\;\bar{\mathcal{Z}}^\Gamma_{ij,t}\;CL^\dagger
&= \sum_{g,h} [\Gamma(\bar{h})]_{ij}\, \ketbra{g}_c \otimes \ketbra{gh}_t
= \sum_{g,h}[\Gamma(\bar{h} g)]_{ij}\,\ketbra{g}_c \otimes \ketbra{h}_t \nn
&= \sum_{k} \sum_h [\Gamma(\bar{h})]_{ik}\,\ketbra{h}_t\;\sum_g [\Gamma(g)]_{kj}\,\ketbra{g}_c
= \sum_k \bar{\mathcal{Z}}^{\Gamma}_{ik,t}\;\mathcal{Z}^\Gamma_{kj,c},
\end{align}
where in the second equality we reindex $h \to \bar{g} h$ and use $\overline{\bar{g} h} = \bar{h} g$.
\end{proof}

\paragraph{$\bar{\mathcal{Z}}^\Gamma$ on target, conjugated by $CR$.}
\begin{equation}
    CR\;\bar{\mathcal{Z}}^\Gamma_{ij,t}\;CR^\dagger = \sum_k \bar{\mathcal{Z}}^{\Gamma}_{ik,c}\;\bar{\mathcal{Z}}^\Gamma_{kj,t}
    \label{eq:CR-Zbar}
\end{equation}
\begin{proof}
\begin{align}
CR\;\bar{\mathcal{Z}}^\Gamma_{ij,t}\;CR^\dagger
&= \sum_{g,h} [\Gamma(\bar{h})]_{ij}\, \ketbra{g}_c \otimes \ketbra{h\bar{g}}_t
= \sum_{g,h}[\Gamma(\bar{g}\bar{h})]_{ij}\,\ketbra{g}_c \otimes \ketbra{h}_t \nn
&= \sum_{k} \sum_g [\Gamma(\bar{g})]_{ik}\,\ketbra{g}_c\;\sum_h [\Gamma(\bar{h})]_{kj}\,\ketbra{h}_t
= \sum_k \bar{\mathcal{Z}}^{\Gamma}_{ik,c}\;\bar{\mathcal{Z}}^\Gamma_{kj,t},
\end{align}
where in the second equality we reindex $h \to hg$ and use $\overline{hg} = \bar{g}\bar{h}$.
\end{proof}

\paragraph{$\mathcal{Z}^\Gamma$ on control, conjugated by $CL$ or $CR$.}
\label{app:Z-on-control}
A diagonal gate sitting on the control is left invariant:
\begin{equation}
    CL\;\mathcal{Z}^\Gamma_{ij,c}\;CL^\dagger = \mathcal{Z}^\Gamma_{ij,c}
\end{equation}
\begin{proof}
$\mathcal{Z}^\Gamma_{ij}$ is diagonal in the group basis, so it commutes with the control projector $\ketbra{g}_c$; the target factor $L^g_t$ acts on a different register. Hence $CL$ commutes with $\mathcal{Z}^\Gamma_{ij,c}$ and conjugation is trivial.
\end{proof}
The same argument applies verbatim to $\bar{\mathcal{Z}}^\Gamma_{ij,c}$ and to conjugation by $CR$: any diagonal operator on the control is invariant under all controlled multiplications. Operationally this fixes the directionality of the splitting moves above: controlled multiplication transports charge tensors from the target onto the control, never the other way.

\paragraph{$\sum_j \mathcal{Z}^\Gamma_{ij,c}\mathcal{Z}^\Gamma_{jk,t}$, conjugated by $CL$.}
The remaining four identities are absorption moves: running the splitting moves backwards, a contracted pair of tensors straddling both registers retracts onto a single tensor on the target, provided the control tensor is the inverse (in the sense of Eq.~\eqref{eq:Z-orthogonality}) of the one the corresponding splitting move would emit.
\begin{equation}
    CL\;\sum_{j}\mathcal{Z}^\Gamma_{ij,c}\, \mathcal{Z}^\Gamma_{jk,t}\;CL^\dagger = \mathcal{Z}^\Gamma_{ik,t}
    \label{eq:CL-absorb-Z}
\end{equation}
\begin{proof}
\begin{align}
CL\;\sum_{j}\mathcal{Z}^\Gamma_{ij,c}\, \mathcal{Z}^\Gamma_{jk,t}\;CL^\dagger
&= \sum_{g,h}\sum_{j} [\Gamma(g)]_{ij} [\Gamma(h)]_{jk}\, \ketbra{g}_c \otimes \ketbra{gh}_t
= \sum_{g,h}\sum_{j}[\Gamma(g)]_{ij}[\Gamma(\bar{g} h)]_{jk}\,\ketbra{g}_c \otimes \ketbra{h}_t \nn
&= \sum_{g,h} [\Gamma(h)]_{ik}\,\ketbra{g}_c \otimes \ketbra{h}_t
= \mathcal{Z}^\Gamma_{ik,t},
\end{align}
where in the second equality we reindex $h \to \bar{g} h$ on the target, and the sum over $g$ on the control gives the identity.
\end{proof}
Consistency check: this also follows from the splitting rules. The control factor is invariant, the target factor splits as in Eq.~\eqref{eq:CL-Z}, $\mathcal{Z}^\Gamma_{jk,t} \to \sum_m \bar{\mathcal{Z}}^\Gamma_{jm,c}\,\mathcal{Z}^\Gamma_{mk,t}$, and the orthogonality property \eqref{eq:Z-orthogonality} on the control contracts $\sum_j \mathcal{Z}^\Gamma_{ij,c}\bar{\mathcal{Z}}^\Gamma_{jm,c} = \delta_{im}$.

\paragraph{$\sum_j \bar{\mathcal{Z}}^\Gamma_{ij,t}\bar{\mathcal{Z}}^\Gamma_{jk,c}$, conjugated by $CL$.}
\begin{equation}
    CL\;\sum_{j}\bar{\mathcal{Z}}^\Gamma_{ij,t}\, \bar{\mathcal{Z}}^\Gamma_{jk,c}\;CL^\dagger = \bar{\mathcal{Z}}^\Gamma_{ik,t}
    \label{eq:CL-absorb-Zbar}
\end{equation}
Note the mirrored register assignment relative to Eq.~\eqref{eq:CL-absorb-Z}: the left index $i$ now lives on the target and the right index $k$ on the control, consistent with the opposite orientation of the $\bar{\mathcal{Z}}^\Gamma$ string.
\begin{proof}
\begin{align}
CL\;\sum_{j}\bar{\mathcal{Z}}^\Gamma_{ij,t}\, \bar{\mathcal{Z}}^\Gamma_{jk,c}\;CL^\dagger
&= \sum_{g,h}\sum_{j} [\Gamma(\bar{h})]_{ij} [\Gamma(\bar{g})]_{jk}\, \ketbra{g}_c \otimes \ketbra{gh}_t
= \sum_{g,h}\sum_{j}[\Gamma(\bar{h} g)]_{ij}[\Gamma(\bar{g})]_{jk}\,\ketbra{g}_c \otimes \ketbra{h}_t \nn
&= \sum_{g,h} [\Gamma(\bar{h} g \bar{g})]_{ik}\,\ketbra{g}_c \otimes \ketbra{h}_t
= \bar{\mathcal{Z}}^\Gamma_{ik,t},
\end{align}
where in the second equality we reindex $h \to \bar{g} h$ on the target and use $\overline{\bar{g} h} = \bar{h} g$.
\end{proof}

\paragraph{$\sum_j \mathcal{Z}^\Gamma_{ij,t}\bar{\mathcal{Z}}^\Gamma_{jk,c}$, conjugated by $CR$.}
\begin{equation}
    CR\;\sum_{j}\mathcal{Z}^\Gamma_{ij,t}\, \bar{\mathcal{Z}}^\Gamma_{jk,c}\;CR^\dagger = \mathcal{Z}^\Gamma_{ik,t}
    \label{eq:CR-absorb-Z}
\end{equation}
Unlike the $CL$ case, the control tensor here is barred while the result is unbarred: since $CR$ emits an unbarred $\mathcal{Z}^\Gamma_{mj,c}$ when splitting (Eq.~\eqref{eq:CR-Z}), the absorbing tensor on the control must be $\bar{\mathcal{Z}}^\Gamma$ for the orthogonality contraction to fire.
\begin{proof}
\begin{align}
CR\;\sum_{j}\mathcal{Z}^\Gamma_{ij,t}\, \bar{\mathcal{Z}}^\Gamma_{jk,c}\;CR^\dagger
&= \sum_{g,h}\sum_{j} [\Gamma(h)]_{ij} [\Gamma(\bar{g})]_{jk}\, \ketbra{g}_c \otimes \ketbra{h\bar{g}}_t
= \sum_{g,h}\sum_{j}[\Gamma(hg)]_{ij}[\Gamma(\bar{g})]_{jk}\,\ketbra{g}_c \otimes \ketbra{h}_t \nn
&= \sum_{g,h} [\Gamma(hg\bar{g})]_{ik}\,\ketbra{g}_c \otimes \ketbra{h}_t
= \mathcal{Z}^\Gamma_{ik,t},
\end{align}
where in the second equality we reindex $h \to hg$ on the target.
\end{proof}

\paragraph{$\sum_j \mathcal{Z}^\Gamma_{ij,c}\bar{\mathcal{Z}}^\Gamma_{jk,t}$, conjugated by $CR$.}
\begin{equation}
    CR\;\sum_{j}\mathcal{Z}^\Gamma_{ij,c}\, \bar{\mathcal{Z}}^\Gamma_{jk,t}\;CR^\dagger = \bar{\mathcal{Z}}^\Gamma_{ik,t}
    \label{eq:CR-absorb-Zbar}
\end{equation}
This is the barred-string absorption for $CR$: the same mixed $\mathcal{Z}^\Gamma\bar{\mathcal{Z}}^\Gamma$ combination as in Eq.~\eqref{eq:CR-absorb-Z}, but with the registers swapped, retracts a $\bar{\mathcal{Z}}^\Gamma$ string instead.
\begin{proof}
\begin{align}
CR\;\sum_{j}\mathcal{Z}^\Gamma_{ij,c}\, \bar{\mathcal{Z}}^\Gamma_{jk,t}\;CR^\dagger
&= \sum_{g,h}\sum_{j} [\Gamma(g)]_{ij} [\Gamma(\bar{h})]_{jk}\, \ketbra{g}_c \otimes \ketbra{h\bar{g}}_t
= \sum_{g,h}\sum_{j}[\Gamma(g)]_{ij}[\Gamma(\bar{g} \bar{h})]_{jk}\,\ketbra{g}_c \otimes \ketbra{h}_t \nn
&= \sum_{g,h} [\Gamma(g\bar{g}\bar{h})]_{ik}\,\ketbra{g}_c \otimes \ketbra{h}_t
= \bar{\mathcal{Z}}^\Gamma_{ik,t},
\end{align}
where in the second equality we reindex $h \to hg$ on the target and use $\overline{hg} = \bar{g}\bar{h}$.
\end{proof}

\paragraph{Summary of splitting and absorption moves.}

\begin{table}[H]
\centering
\renewcommand{\arraystretch}{1.6}
\begin{tabular}{c|c|c}
\hline\hline
Conjugation & Input & Output \\
\hline
$CL$ & $\mathcal{Z}^\Gamma_{ij,t}$ & $\sum_k \bar{\mathcal{Z}}^\Gamma_{ik,c}\,\mathcal{Z}^\Gamma_{kj,t}$ \\
$CR$ & $\mathcal{Z}^\Gamma_{ij,t}$ & $\sum_k \mathcal{Z}^\Gamma_{ik,t}\,\mathcal{Z}^\Gamma_{kj,c}$ \\
$CL$ & $\bar{\mathcal{Z}}^\Gamma_{ij,t}$ & $\sum_k \bar{\mathcal{Z}}^\Gamma_{ik,t}\,\mathcal{Z}^\Gamma_{kj,c}$ \\
$CR$ & $\bar{\mathcal{Z}}^\Gamma_{ij,t}$ & $\sum_k \bar{\mathcal{Z}}^\Gamma_{ik,c}\,\bar{\mathcal{Z}}^\Gamma_{kj,t}$ \\
\hline\hline
\end{tabular}
\caption{Splitting moves: a single tensor on the target elongates across both registers.}
\label{tab:splitting}
\end{table}

\begin{table}[H]
\centering
\renewcommand{\arraystretch}{1.6}
\begin{tabular}{c|c|c}
\hline\hline
Conjugation & Input & Output \\
\hline
$CL$ & $\sum_j \mathcal{Z}^\Gamma_{ij,c}\,\mathcal{Z}^\Gamma_{jk,t}$ & $\mathcal{Z}^\Gamma_{ik,t}$ \\
$CL$ & $\sum_j \bar{\mathcal{Z}}^\Gamma_{ij,t}\,\bar{\mathcal{Z}}^\Gamma_{jk,c}$ & $\bar{\mathcal{Z}}^\Gamma_{ik,t}$ \\
$CR$ & $\sum_j \mathcal{Z}^\Gamma_{ij,t}\,\bar{\mathcal{Z}}^\Gamma_{jk,c}$ & $\mathcal{Z}^\Gamma_{ik,t}$ \\
$CR$ & $\sum_j \mathcal{Z}^\Gamma_{ij,c}\,\bar{\mathcal{Z}}^\Gamma_{jk,t}$ & $\bar{\mathcal{Z}}^\Gamma_{ik,t}$ \\
\hline\hline
\end{tabular}
\caption{Absorption moves: a contracted pair across both registers retracts onto the target.}
\label{tab:absorption}
\end{table}

Reading rule: the control tensor in each row of Table~\ref{tab:absorption} is exactly the inverse, under the orthogonality property \eqref{eq:Z-orthogonality}, of the tensor that the corresponding row of Table~\ref{tab:splitting} emits onto the control. Each absorption is the splitting run backwards, so splitting followed by absorption with the matching controlled multiplication is the identity move on the string. The combinations are not symmetric under freely swapping registers: for instance $\sum_j \bar{\mathcal{Z}}^\Gamma_{ij,c}\,\mathcal{Z}^\Gamma_{jk,t}$ does not collapse under $CL$, because the emitted control tensor $\bar{\mathcal{Z}}^\Gamma$ then contracts on the wrong side.

\paragraph{Inverted gates.}
\label{app:ctrl-inverse-Z}
Conjugating Eqs.~\eqref{eq:CL-Z}--\eqref{eq:CR-Zbar} with $\mathcal{S}_{c}$, which inverts the gate by Eq.~\eqref{eq:controlled-inverse} and exchanges $\mathcal{Z}^{\Gamma}_{c}$ and $\bar{\mathcal{Z}}^{\Gamma}_{c}$ by Eq.~\eqref{eq:Zbar-def}, gives the splitting moves of the inverted gates,
\begin{align}
CL^{\dagger}\,\mathcal{Z}^{\Gamma}_{ij,t}\,CL&=\sum_{k}\mathcal{Z}^{\Gamma}_{ik,c}\,\mathcal{Z}^{\Gamma}_{kj,t}, &
CL^{\dagger}\,\bar{\mathcal{Z}}^{\Gamma}_{ij,t}\,CL&=\sum_{k}\bar{\mathcal{Z}}^{\Gamma}_{ik,t}\,\bar{\mathcal{Z}}^{\Gamma}_{kj,c},\nonumber\\
CR^{\dagger}\,\mathcal{Z}^{\Gamma}_{ij,t}\,CR&=\sum_{k}\mathcal{Z}^{\Gamma}_{ik,t}\,\bar{\mathcal{Z}}^{\Gamma}_{kj,c}, &
CR^{\dagger}\,\bar{\mathcal{Z}}^{\Gamma}_{ij,t}\,CR&=\sum_{k}\mathcal{Z}^{\Gamma}_{ik,c}\,\bar{\mathcal{Z}}^{\Gamma}_{kj,t}.\label{eq:ctrl-inverse-Z}
\end{align}
That is, $CL^{\dagger}$ and $CR^{\dagger}$ act as $CL$ and $CR$ in Table~\ref{tab:splitting} with the bar on the tensor emitted onto the control toggled, and the same holds for the absorption moves of Table~\ref{tab:absorption}.

\section{Graph rewiring stabilizer derivation}

\subsection{The elementary move}
\label{app:gr-move}

We show that the controlled multiplication $U$ assigned by Fig.~\ref{fig:gr-four-cases} maps every vertex and plaquette stabilizer of the old graph to the corresponding stabilizer of the new graph.
In the notation of Sec.~\ref{subsec:elementary-move}, the control edge $c$ joins $u$ and $v$, the target edge $t$ joins $w$ and $v$, and the move reattaches $t$ to $u$ as the edge $t'$.

\paragraph{Aligned case.}
Figure~\ref{fig:gr-sliding} shows the action of the move on the stabilizers in case (a) of Fig.~\ref{fig:gr-four-cases}, $c$ incoming and $t$ outgoing at $v$, where the gate is $CL$.
The vertex operators transform by the sliding rule of Eq.~\eqref{eq:ce-slide-CL}: $v$ loses the target edge and $u$ gains it, while $w$, which acts on $t$ by $R^{g}_{t}$, is unchanged by Eq.~\eqref{eq:comm-R-target}, and every other vertex operator has no support on $c$ or $t$.
The plaquette operators transform by the splitting moves of Table~\ref{tab:splitting} and, in the plaquette containing both $c$ and $t$, the absorption moves of Table~\ref{tab:absorption}; since diagonal operators on the control are invariant (Appendix~\ref{app:Z-on-control}), only the two plaquettes containing $t$ change.
The result is the set of vertex and plaquette operators of the rewired graph.
Since the charge and flux projectors of Eqs.~\eqref{eq:charge-projector} and \eqref{eq:flux-projector} differ from $A^{g}_{v}$ and $B^{e}_{p}$ only by $c$-number weights, $U$ also maps every state of definite anyon content on the old graph to that on the new one.

\paragraph{The other three cases from the inverse gate.}
The inverse gate $\mathcal{S}_{e}$ reverses the orientation of edge $e$ in every stabilizer, exchanging $L^{g}_{e}\leftrightarrow R^{g}_{e}$ and $\mathcal{Z}^{\Gamma}_{e}\leftrightarrow\bar{\mathcal{Z}}^{\Gamma}_{e}$ [Eqs.~\eqref{eq:LR-def}, \eqref{eq:inverse-gate-def} and \eqref{eq:Zbar-def}].
Conjugating Fig.~\ref{fig:gr-sliding} with $\mathcal{S}_{c}$, $\mathcal{S}_{t}$ or both therefore gives the remaining cases of Fig.~\ref{fig:gr-four-cases}, with $CL\to\mathcal{S}_{c}CL\mathcal{S}_{c}=CL^{\dagger}$ [case (d), Eq.~\eqref{eq:controlled-inverse}], $CL\to\mathcal{S}_{t}CL\mathcal{S}_{t}=CR$ [case (c)] and $CL\to CR^{\dagger}$ [case (b)]; the same conjugations turn Eq.~\eqref{eq:ce-slide-CL} into Eqs.~\eqref{eq:ce-slide-CR}--\eqref{eq:ce-slide-CRd} and toggle the bar on the inserted plaquette tensor, Eq.~\eqref{eq:ctrl-inverse-Z}.
With the opposite convention to Eq.~\eqref{eq:Av-convention}, $CL\leftrightarrow CR^{\dagger}$ and $CR\leftrightarrow CL^{\dagger}$ are exchanged.

\begin{figure}[ht]
\centering
\adjustbox{max width=\columnwidth}{%
\begin{tikzpicture}[
    line cap=round, line join=round,
    lat/.style ={latgray, line width=1.1pt},
    cem/.style ={black, line width=1.3pt},
    tem/.style ={edgered, line width=1.5pt},
    ga/.style  ={postaction={decorate, decoration={markings,
                 mark=at position #1 with {\arrow{Stealth[length=2.2mm,width=1.9mm]}}}}},
    ga/.default=0.55,
    vx/.style  ={circle, fill=gray!95!black, inner sep=0pt, minimum size=4.5pt},
    vl/.style  ={black, font=\normalsize, inner sep=1pt},
    mv/.style  ={opblue, line width=1.2pt},
    ml/.style  ={opblue, font=\normalsize, fill=white, inner sep=1.2pt},
    xl/.style  ={opblue, font=\normalsize, fill=white, inner sep=1.2pt},
    hd/.style  ={black, font=\normalsize},
    imp/.style ={black, font=\LARGE},
    rule/.style={black, line width=0.9pt},
  ]
  \def\colpitch{9.6}
  \def\rowpitch{3.4}
  \def\xB{0.7}
  \def\xI{4.15}
  \def\xA{5.5}
  \newcommand{\slPatch}[1]{%
    \draw[lat,#1] (-0.6,0)--(0,0);
    \draw[lat,#1] (0,0)--(0.3,1);
    \draw[lat,#1] (0.3,1)--(0,1.5);
    \draw[lat,#1] (0.3,1)--(1,1);
    \draw[lat,#1] (1,1)--(2.5,0.5);
    \draw[lat,#1] (1.4,0)--(2.5,0.5);
    \draw[lat,#1] (2.5,0.5)--(2.9,0.5);
    \draw[lat,#1] (1.4,-0.6)--(1.4,0);}
  \newcommand{\slDots}{\node[vx] at (0,0) {}; \node[vx] at (1.4,0) {}; \node[vx] at (1,1) {};
    \node[vx] at (0.3,1) {}; \node[vx] at (2.5,0.5) {};}
  \newcommand{\slVerts}{%
    \slDots
    \node[vl, below left=1.5pt]  at (0,0) {$u$};
    \node[vl, below right=1.5pt] at (1.4,0) {$v$};
    \node[vl, above=2pt] at (1,1) {$w$};}
  \newcommand{\slC}[1]{\draw[cem,#1] (0,0)--(1.4,0);}
  \newcommand{\slT}[1]{\draw[tem,#1] (1.4,0)--(1,1);}
  \newcommand{\slTp}[1]{\draw[tem,#1] (0,0) .. controls (0.45,0.15) and (0.85,0.55) .. (1,1);}
  \newcommand{\slBefore}{\slPatch{} \slC{} \slT{} \slDots}
  \newcommand{\slAfter}{\slPatch{} \slC{} \slTp{} \slDots}
  \newcommand{\slShadeP}{\fill[plaqgray] (0,0)--(1.4,0)--(1,1)--(0.3,1)--cycle;}
  \newcommand{\slShadePA}{\fill[plaqgray] (0,0) .. controls (0.45,0.15) and (0.85,0.55) .. (1,1) -- (0.3,1) -- cycle;}
  \newcommand{\slShadeQ}{\fill[plaqgray] (1.4,0)--(2.5,0.5)--(1,1)--cycle;}
  \newcommand{\slShadeQA}{\fill[plaqgray] (1.4,0)--(2.5,0.5)--(1,1) .. controls (0.85,0.55) and (0.45,0.15) .. (0,0) -- cycle;}
  \newcommand{\slCell}[3]{%
    \node[hd, anchor=west] at (-0.3,1.95) {#1};
    \begin{scope}[shift={(\xB,0)}] #2 \end{scope}
    \node[imp] at (\xI,0.45) {$\Rightarrow$};
    \begin{scope}[shift={(\xA,0)}] #3 \end{scope}}

  \node[hd, anchor=west] at (-0.3,1.95) {Gate};
  \begin{scope}[shift={(\xB,0)}]
    \slPatch{ga} \slC{ga=0.8} \slT{ga=0.8}
    \draw[mv] (0.7,0)--(1.2,0.5);
    \fill[opblue] (0.7,0) circle (2.4pt);
    \node[ml] at (1.2,0.5) {$L$};
    \slVerts
  \end{scope}
  \begin{scope}[shift={(\colpitch,0)}]
    \slCell{Graph}
      {\slPatch{ga} \slC{ga} \slT{ga} \slVerts
       \node[vl, above=1pt] at (0.7,0) {$c$}; \node[vl, right=2pt] at (1.2,0.5) {$t$};}
      {\slPatch{ga} \slC{ga} \slTp{ga} \slVerts
       \node[vl, above=1pt] at (0.7,0) {$c$}; \node[vl, above left=-1pt] at (0.76,0.56) {$t'$};}
  \end{scope}

  \begin{scope}[shift={(0,-\rowpitch)}]
    \slCell{$A^{g}_{v}$}
      {\slBefore
       \node[xl] at (0.7,0) {$R^{g}$}; \node[xl] at (1.2,0.5) {$L^{g}$};
       \node[xl] at (1.95,0.25) {$L^{g}$}; \node[xl] at (1.4,-0.4) {$R^{g}$};}
      {\slAfter
       \node[xl] at (0.7,0) {$R^{g}$};
       \node[xl] at (1.95,0.25) {$L^{g}$}; \node[xl] at (1.4,-0.4) {$R^{g}$};}
  \end{scope}
  \begin{scope}[shift={(\colpitch,-\rowpitch)}]
    \slCell{$A^{g}_{u}$}
      {\slBefore
       \node[xl] at (0.7,0) {$L^{g}$}; \node[xl] at (0.15,0.5) {$L^{g}$};
       \node[xl] at (-0.4,0) {$R^{g}$};}
      {\slAfter
       \node[xl] at (0.7,0) {$L^{g}$}; \node[xl] at (0.15,0.5) {$L^{g}$};
       \node[xl] at (-0.4,0) {$R^{g}$};
       \node[xl] at (0.76,0.56) {$L^{g}$};}
  \end{scope}

  \begin{scope}[shift={(0,-2*\rowpitch)}]
    \slCell{$B_{p}$}
      {\slShadeP \slBefore
       \node[xl] at (0.7,0) {\zg{ij}}; \node[xl] at (1.2,0.5) {\zg{jk}};
       \node[xl] at (0.65,1) {\zgb{kl}}; \node[xl] at (0.15,0.5) {\zgb{li}};}
      {\slShadePA \slAfter
       \node[xl] at (0.76,0.56) {\zg{ik}};
       \node[xl] at (0.65,1) {\zgb{kl}}; \node[xl] at (0.15,0.5) {\zgb{li}};}
  \end{scope}
  \begin{scope}[shift={(\colpitch,-2*\rowpitch)}]
    \slCell{$B_{p'}$}
      {\slShadeQ \slBefore
       \node[xl] at (1.95,0.25) {\zg{ij}}; \node[xl] at (1.75,0.75) {\zgb{jk}};
       \node[xl] at (1.2,0.5) {\zgb{ki}};}
      {\slShadeQA \slAfter
       \node[xl] at (1.95,0.25) {\zg{ij}}; \node[xl] at (1.75,0.75) {\zgb{jk}};
       \node[xl] at (0.76,0.56) {\zgb{km}}; \node[xl] at (0.7,0) {\zg{mi}};}
  \end{scope}

  \foreach \r in {1,2}{\draw[rule] (-0.5,{-\r*\rowpitch+2.3}) -- ({\colpitch+8.9},{-\r*\rowpitch+2.3});}
\end{tikzpicture}}
\caption{Sliding rules for one elementary move on an arbitrary planar graph, in the aligned case (a) of Fig.~\ref{fig:gr-four-cases}: the control $c$ (black) is incoming and the target $t$ (red) outgoing at $v$, and the gate is $CL$ (Gate), which detaches $t$ from $v$ and reattaches it to $u$ as $t'$ (Graph).
$A^{g}_{v}$, $A^{g}_{u}$: the vertex at which $c$ and $t$ meet loses the target edge and the vertex at the other end of $c$ gains it, Eq.~\eqref{eq:ce-slide-CL}; every other vertex operator is unchanged.
$B_{p}$, $B_{p'}$: the plaquette operators, drawn as the fixed-irrep contractions of Eq.~\eqref{eq:Be-MPO} with the prefactor suppressed and contracted counterclockwise, change only in the two plaquettes containing $t$.
The gate inserts a tensor of $c$ next to the tensor of $t$ (Table~\ref{tab:splitting}); in $p$, which contains both edges, the inserted tensor contracts with the tensor already on $c$ to the identity (Table~\ref{tab:absorption}), so $p$ loses its support on $c$ and one side (here a trapezoid becoming a triangle), while $p'$ keeps the inserted tensor and gains $c$ as a side (a triangle becoming a quadrilateral).
These are the faces of the rewired graph.}
\label{fig:gr-sliding}
\end{figure}

\subsection{Charge exchange circuit}
\label{app:ce}

We track the stabilizers through the charge exchange circuit of Sec.~\ref{subsec:charge-exchange} in Fig.~\ref{fig:ce-stabilizers}.
The direct edge $\ell$ points from $v_{1}$ to $v_{2}$, the other three edges at each endpoint are labelled by their direction on the square lattice, and the six plaquettes are numbered as in the first row of the figure, with $2$ and $5$ flanking $\ell$.
The first row lists the initial vertex and plaquette operators of Eqs.~\eqref{eq:Av-convention} and \eqref{eq:Be-MPO}; in particular $v_{1}$ acts on $\ell$ by $L^{g}_{\ell}$ and $v_{2}$ by $R^{g}_{\ell}$.

\subsubsection{Stabilizer tracking}
\label{app:ce-tracking}

Figure~\ref{fig:ce-stabilizers} applies the sliding rules of Appendix~\ref{app:gr-move} to the seven elementary moves of Fig.~\ref{fig:charge_exchange_circuit}, with the direct edge $\ell$ as the control throughout.
The only step that is not a slide is the inverse gate $\mathcal{S}_{\ell}$ at move 5, which reverses the orientation of $\ell$ in every stabilizer by Eqs.~\eqref{eq:inverse-gate-def} and \eqref{eq:Zbar-def}; since $v_{1}$ then holds exactly the three edges above $\ell$ and $v_{2}$ the three below, the graph can be redrawn with the two endpoints exchanged without any edge crossing another, which changes no stabilizer.
The last row is Eq.~\eqref{eq:ce-swap}: the two vertex operators are exchanged and the six plaquette operators are unchanged.

\begin{figure}[t]
	\centering
	\adjustbox{max width=\columnwidth, max totalheight=0.92\textheight}{%
    \begin{tikzpicture}[
        line cap=round, line join=round,
        lat/.style ={latgray, line width=1.1pt},
        ga/.style ={postaction={decorate, decoration={markings,
                     mark=at position 0.52 with {\arrow{Stealth[length=2.2mm,width=1.9mm]}}}}},
        gar/.style={postaction={decorate, decoration={markings,
                     mark=at position 0.48 with {\arrowreversed{Stealth[length=2.2mm,width=1.9mm]}}}}},
        red edge/.style={edgered, line width=1.5pt},
        skel/.style={latgray, line width=0.9pt},
        rsk/.style ={edgered, line width=0.9pt},
        rule/.style={black, line width=0.9pt},
        ar/.style  ={black, line width=1.1pt,
                     -{Stealth[length=3.2mm,width=2.5mm]}},
        mv/.style  ={opblue, line width=1.2pt},
        xl/.style  ={opblue, font=\LARGE, fill=white, inner sep=1.4pt},
        ml/.style  ={opblue, font=\LARGE, fill=white, inner sep=1pt},
        pn/.style  ={black, font=\Large},
      ]

      \def\rowpitch{8.4}
      \def\colshift{32.6}
      \def\latX{18.725}
      \def\stabX{-0.6}
      \def\pA{1.15}
      \def\vH{3.7}
      \def\pB{7.05}
      \def\vF{9.05}
      \def\pC{13.1}

      \newcommand{\ecTopBot}{%
        \draw[lat] (-0.35, 1)--(3.35, 1);
        \draw[lat] (-0.35,-1)--(3.35,-1);
        \draw[lat] (0,-1)--(0,1);
        \draw[lat] (3,-1)--(3,1);
        \foreach \xc in {0,1,2,3}{\draw[lat] (\xc,1)--(\xc,1.35); \draw[lat] (\xc,-1)--(\xc,-1.35);}
        \foreach \xa in {0.5,1.5,2.5}{\ecArrR{\xa}{1} \ecArrR{\xa}{-1}}
        \foreach \xb in {0,3}{\ecArrU{\xb}{0.5} \ecArrU{\xb}{-0.5}}}
      \newcommand{\ecArrR}[2]{\draw[lat,-{Stealth[length=2.2mm,width=1.9mm]}] (#1-0.15,#2)--(#1+0.02,#2);}
      \newcommand{\ecArrU}[2]{\draw[lat,-{Stealth[length=2.2mm,width=1.9mm]}] (#1,#2-0.15)--(#1,#2+0.02);}
      \newcommand{\ecRedEdge}{%
        \draw[red edge] (1,0)--(2,0);
        \draw[red edge,-{Stealth[length=2.7mm,width=2.2mm]}] (1,0)--(1.7,0);}
      \newcommand{\ecDot}[1]{\fill[gray!95!black] (#1,0) circle (2.4pt);}
      \newcommand{\ecDotO}[1]{\draw[gray!95!black, line width=1pt, fill=white] (#1,0) circle (2.2pt);}
      \newcommand{\ecUpTwoToL}{\draw[lat,gar] (2,1) .. controls (1.8,0.56) and (1.3,0.26) .. (1,0);}
      \newcommand{\ecRightTwoToL}{\draw[lat,ga] (1,0) .. controls (1.3,0.26) and (2.4,0.5) .. (3,0);}
      \newcommand{\ecDownTwoToL}{\draw[lat,gar] (1,0) .. controls (1.3,-0.26) and (1.8,-0.56) .. (2,-1);}
      \newcommand{\ecUpOneToR}{\draw[lat,gar] (1,1) .. controls (1.2,0.56) and (1.7,0.26) .. (2,0);}
      \newcommand{\ecDownOneToR}{\draw[lat,ga] (1,-1) .. controls (1.2,-0.56) and (1.7,-0.26) .. (2,0);}
      \newcommand{\ecLeftOneToR}{\draw[lat,ga] (0,0) .. controls (0.5,-0.5) and (1.5,-0.45) .. (2,0);}
      \newcommand{\ecPlaqLabels}[3]{%
        \node[xl] at (#1,#2+0.85)   {\zgb{kl}};
        \node[xl] at (#1-0.85,#2)   {\zgb{li}};
        \node[xl] at (#1+0.85,#2)   {\zg{jk}};
        \node[xl] at (#1,#2-0.85)   {\zg{ij}};
        \node[black, font=\Large] at (#1,#2) {#3};}
      \newcommand{\ecPlaq}[3]{%
        \draw[skel] (#1-0.85,#2-0.85) rectangle (#1+0.85,#2+0.85);
        \ecPlaqLabels{#1}{#2}{#3}}
      \newcommand{\ecPlaqRB}[3]{%
        \draw[skel] (#1-0.85,#2-0.85) rectangle (#1+0.85,#2+0.85);
        \draw[rsk] (#1-0.85,#2-0.85)--(#1+0.85,#2-0.85);
        \ecPlaqLabels{#1}{#2}{#3}}
      \newcommand{\ecPlaqRT}[3]{%
        \draw[skel] (#1-0.85,#2-0.85) rectangle (#1+0.85,#2+0.85);
        \draw[rsk] (#1-0.85,#2+0.85)--(#1+0.85,#2+0.85);
        \ecPlaqLabels{#1}{#2}{#3}}
      \newcommand{\ecPlaqNB}[3]{%
        \draw[skel] (#1-0.85,#2-0.85) rectangle (#1+0.85,#2+0.85);
        \node[xl] at (#1,#2+0.85)   {\zgb{kl}};
        \node[xl] at (#1-0.85,#2)   {\zgb{li}};
        \node[xl] at (#1+0.85,#2)   {\zg{ik}};
        \node[black, font=\Large] at (#1,#2) {#3};}
      \newcommand{\ecPlaqNT}[3]{%
        \draw[skel] (#1-0.85,#2-0.85) rectangle (#1+0.85,#2+0.85);
        \node[xl] at (#1-0.85,#2)   {\zgb{li}};
        \node[xl] at (#1+0.85,#2)   {\zg{jl}};
        \node[xl] at (#1,#2-0.85)   {\zg{ij}};
        \node[black, font=\Large] at (#1,#2) {#3};}
      \newcommand{\ecPlaqFour}[7]{%
        \draw[skel] (#1-0.85,#2-0.85) rectangle (#1+0.85,#2+0.85);
        \node[xl] at (#1,#2+0.85)   {#4};
        \node[xl] at (#1-0.85,#2)   {#5};
        \node[xl] at (#1+0.85,#2)   {#6};
        \node[xl] at (#1,#2-0.85)   {#7};
        \node[black, font=\Large] at (#1,#2) {#3};}
      \newcommand{\ecPlaqExt}[5]{%
        \draw[rsk] (#1+#3*0.85,#2+#4*0.85) -- (#1+#3*2.15,#2+#4*0.85);
        \node[xl] at (#1+#3*1.5,#2+#4*0.85) {#5};}
      \newcommand{\ecLab}[3]{\if\relax\detokenize{#3}\relax\else\node[xl] at (#1,#2) {#3};\fi}
      \newcommand{\ecVop}[8]{%
        \draw[skel] (#1-1.3,0)--(#1+2.7,0);
        \draw[skel] (#1,-1.15)--(#1,1.15);
        \draw[skel] (#1+1.4,-1.15)--(#1+1.4,1.15);
        \draw[rsk] (#1,0)--(#1+1.4,0);
        \ecLab{#1-0.65}{0}{#2} \ecLab{#1}{0.6}{#3} \ecLab{#1}{-0.6}{#4}
        \ecLab{#1+0.7}{0}{#5}
        \ecLab{#1+1.4}{0.6}{#6} \ecLab{#1+1.4}{-0.6}{#7} \ecLab{#1+2.05}{0}{#8}}

      \foreach \cs in {0,\colshift}{
        \node[font=\Huge] at ({-3.675+\cs}, 4.9) {Gate};
        \node[font=\Huge] at ({8.3+\cs}, 4.9)  {Stabilizers};
        \node[font=\Huge] at ({20.675+\cs}, 4.9) {Graph};
        \draw[rule, line width=1.3pt] ({-6.45+\cs},4.2) -- ({23.85+\cs},4.2);
        \draw[rule, line width=1.3pt] ({-0.9+\cs},5.3) -- ({-0.9+\cs},{-3*\rowpitch-4.2});
        \draw[rule, line width=1.3pt] ({17.5+\cs},5.3) -- ({17.5+\cs},{-3*\rowpitch-4.2});
        \foreach \r in {1,2,3}{
          \draw[rule] ({-6.45+\cs},{-\r*\rowpitch+4.2}) -- ({23.85+\cs},{-\r*\rowpitch+4.2});}}

      \begin{scope}[shift={(\latX,0)}, scale=1.3]
        \ecTopBot
        \draw[lat] (-0.35,0)--(3.35,0);
        \draw[lat] (1,-1)--(1,1);
        \draw[lat] (2,-1)--(2,1);
        \ecArrR{0.5}{0} \ecArrR{2.5}{0}
        \ecArrU{1}{0.5} \ecArrU{1}{-0.5} \ecArrU{2}{0.5} \ecArrU{2}{-0.5}
        \ecRedEdge
        \node[pn,font=\large] at (0.5, 0.5) {4};
        \node[pn,font=\large] at (1.5, 0.5) {5};
        \node[pn,font=\large] at (2.5, 0.5) {6};
        \node[pn,font=\large] at (0.5,-0.5) {1};
        \node[pn,font=\large] at (1.5,-0.5) {2};
        \node[pn,font=\large] at (2.5,-0.5) {3};
        \ecDot{2} \ecDotO{1}
      \end{scope}
      \begin{scope}[shift={(\stabX,0)}, scale=1.25]
        \ecPlaq{\pA}{1.65}{4}  \ecPlaq{\pA}{-1.65}{1}
        \ecVop{\vH}{\opRg}{\opLg}{\opRg}{\opLg}{}{}{}
        \ecPlaqRB{\pB}{1.65}{5}  \ecPlaqRT{\pB}{-1.65}{2}
        \ecVop{\vF}{}{}{}{\opRg}{\opLg}{\opRg}{\opLg}
        \ecPlaq{\pC}{1.65}{6}  \ecPlaq{\pC}{-1.65}{3}
      \end{scope}

      \begin{scope}[shift={(\latX,-\rowpitch)}, scale=1.3]
        \ecTopBot
        \draw[lat] (-0.35,0)--(3.35,0);
        \draw[lat] (1,-1)--(1,1);
        \draw[lat] (2,-1)--(2,0);
        \ecUpTwoToL
        \ecArrR{0.5}{0} \ecArrR{2.5}{0}
        \ecArrU{1}{0.5} \ecArrU{1}{-0.5} \ecArrU{2}{-0.5}
        \ecRedEdge
        \ecDot{2} \ecDotO{1}
      \end{scope}
      \begin{scope}[shift={(\stabX,-\rowpitch)}, scale=1.25]
        \ecPlaq{\pA}{1.65}{}  \ecPlaq{\pA}{-1.65}{}
        \ecVop{\vH}{\opRg}{\opLg}{\opRg}{\opLg}{\opLg}{}{}
        \ecPlaqNB{\pB}{1.65}{}  \ecPlaqRT{\pB}{-1.65}{}
        \ecVop{\vF}{}{}{}{\opRg}{}{\opRg}{\opLg}
        \ecPlaqFour{\pC}{1.65}{}{\zgb{kl}}{\zgb{lm}}{\zg{jk}}{\zg{ij}}  \ecPlaqExt{\pC}{1.65}{-1}{-1}{\zg{mi}}
        \ecPlaq{\pC}{-1.65}{}
      \end{scope}

      \begin{scope}[shift={(\latX,-2*\rowpitch)}, scale=1.3]
        \ecTopBot
        \draw[lat] (-0.35,0)--(1,0);
        \draw[lat] (1,-1)--(1,1);
        \draw[lat] (2,-1)--(2,0);
        \ecUpTwoToL
        \ecRightTwoToL
        \draw[lat] (3,0)--(3.35,0);
        \ecArrR{0.5}{0} \ecArrU{1}{0.5} \ecArrU{1}{-0.5} \ecArrU{2}{-0.5}
        \ecRedEdge
        \ecDot{2} \ecDotO{1}
      \end{scope}
      \begin{scope}[shift={(\stabX,-2*\rowpitch)}, scale=1.25]
        \ecPlaq{\pA}{1.65}{}  \ecPlaq{\pA}{-1.65}{}
        \ecVop{\vH}{\opRg}{\opLg}{\opRg}{\opLg}{\opLg}{}{\opLg}
        \ecPlaqNB{\pB}{1.65}{}  \ecPlaqRT{\pB}{-1.65}{}
        \ecVop{\vF}{}{}{}{\opRg}{}{\opRg}{}
        \ecPlaq{\pC}{1.65}{}
        \ecPlaqFour{\pC}{-1.65}{}{\zgb{km}}{\zgb{li}}{\zg{jk}}{\zg{ij}}  \ecPlaqExt{\pC}{-1.65}{-1}{1}{\zg{ml}}
      \end{scope}

      \begin{scope}[shift={(\latX,-3*\rowpitch)}, scale=1.3]
        \ecTopBot
        \draw[lat] (-0.35,0)--(1,0);
        \draw[lat] (1,0)--(1,1);
        \draw[lat] (2,-1)--(2,0);
        \ecUpTwoToL
        \ecRightTwoToL
        \ecDownOneToR
        \draw[lat] (3,0)--(3.35,0);
        \ecArrR{0.5}{0} \ecArrU{1}{0.5} \ecArrU{2}{-0.5}
        \ecRedEdge
        \ecDot{2} \ecDotO{1}
      \end{scope}
      \begin{scope}[shift={(\stabX,-3*\rowpitch)}, scale=1.25]
        \ecPlaq{\pA}{1.65}{}
        \ecPlaqFour{\pA}{-1.65}{}{\zgb{ml}}{\zgb{li}}{\zg{jk}}{\zg{ij}}  \ecPlaqExt{\pA}{-1.65}{1}{1}{\zgb{km}}
        \ecVop{\vH}{\opRg}{\opLg}{}{\opLg}{\opLg}{}{\opLg}
        \ecPlaqNB{\pB}{1.65}{}  \ecPlaqNT{\pB}{-1.65}{}
        \ecVop{\vF}{}{}{\opRg}{\opRg}{}{\opRg}{}
        \ecPlaq{\pC}{1.65}{}
        \ecPlaqFour{\pC}{-1.65}{}{\zgb{km}}{\zgb{li}}{\zg{jk}}{\zg{ij}}  \ecPlaqExt{\pC}{-1.65}{-1}{1}{\zg{ml}}
      \end{scope}

      \begin{scope}[shift={(\colshift,{4*\rowpitch})}]
      \begin{scope}[shift={(\latX,-4*\rowpitch)}, scale=1.3]
        \ecTopBot
        \draw[lat] (-0.35,0)--(0,0);
        \ecLeftOneToR
        \draw[lat] (1,0)--(1,1);
        \draw[lat] (2,-1)--(2,0);
        \ecUpTwoToL
        \ecRightTwoToL
        \ecDownOneToR
        \draw[lat] (3,0)--(3.35,0);
        \ecArrU{1}{0.5} \ecArrU{2}{-0.5}
        \ecRedEdge
        \ecDot{2} \ecDotO{1}
      \end{scope}
      \begin{scope}[shift={(\stabX,-4*\rowpitch)}, scale=1.25]
        \ecPlaqFour{\pA}{1.65}{}{\zgb{kl}}{\zgb{li}}{\zg{mk}}{\zg{ij}}  \ecPlaqExt{\pA}{1.65}{1}{-1}{\zgb{jm}}
        \ecPlaq{\pA}{-1.65}{}
        \ecVop{\vH}{}{\opLg}{}{\opLg}{\opLg}{}{\opLg}
        \ecPlaqNB{\pB}{1.65}{}  \ecPlaqNT{\pB}{-1.65}{}
        \ecVop{\vF}{\opRg}{}{\opRg}{\opRg}{}{\opRg}{}
        \ecPlaq{\pC}{1.65}{}
        \ecPlaqFour{\pC}{-1.65}{}{\zgb{km}}{\zgb{li}}{\zg{jk}}{\zg{ij}}  \ecPlaqExt{\pC}{-1.65}{-1}{1}{\zg{ml}}
      \end{scope}

      \begin{scope}[shift={(\latX,-5*\rowpitch)}, scale=1.3]
        \ecTopBot
        \draw[lat] (-0.35,0)--(1,0);
        \draw[lat] (1,-1)--(1,0);
        \draw[lat] (2,0)--(3.35,0);
        \draw[lat] (2,0)--(2,1);
        \ecUpOneToR
        \ecDownTwoToL
        \ecArrR{0.5}{0} \ecArrR{2.5}{0} \ecArrU{1}{-0.5} \ecArrU{2}{0.5}
        \ecRedEdge
        \ecDot{1} \ecDotO{2}
      \end{scope}
      \begin{scope}[shift={(\stabX,-5*\rowpitch)}, scale=1.25]
        \ecPlaqFour{\pA}{1.65}{}{\zgb{kl}}{\zgb{li}}{\zg{mk}}{\zg{ij}}  \ecPlaqExt{\pA}{1.65}{1}{-1}{\zg{jm}}
        \ecPlaq{\pA}{-1.65}{}
        \ecVop{\vH}{}{\opLg}{}{\opRg}{\opLg}{}{\opLg}
        \ecPlaqNB{\pB}{1.65}{}  \ecPlaqNT{\pB}{-1.65}{}
        \ecVop{\vF}{\opRg}{}{\opRg}{\opLg}{}{\opRg}{}
        \ecPlaq{\pC}{1.65}{}
        \ecPlaqFour{\pC}{-1.65}{}{\zgb{km}}{\zgb{li}}{\zg{jk}}{\zg{ij}}  \ecPlaqExt{\pC}{-1.65}{-1}{1}{\zgb{ml}}
      \end{scope}

      \begin{scope}[shift={(\latX,-6*\rowpitch)}, scale=1.3]
        \ecTopBot
        \draw[lat] (-0.35,0)--(1,0);
        \draw[lat] (1,-1)--(1,1);
        \draw[lat] (2,0)--(3.35,0);
        \draw[lat] (2,0)--(2,1);
        \ecDownTwoToL
        \ecArrR{0.5}{0} \ecArrR{2.5}{0}
        \ecArrU{1}{0.5} \ecArrU{1}{-0.5} \ecArrU{2}{0.5}
        \ecRedEdge
        \ecDot{1} \ecDotO{2}
      \end{scope}
      \begin{scope}[shift={(\stabX,-6*\rowpitch)}, scale=1.25]
        \ecPlaq{\pA}{1.65}{}  \ecPlaq{\pA}{-1.65}{}
        \ecVop{\vH}{}{}{}{\opRg}{\opLg}{}{\opLg}
        \ecPlaqRB{\pB}{1.65}{}  \ecPlaqNT{\pB}{-1.65}{}
        \ecVop{\vF}{\opRg}{\opLg}{\opRg}{\opLg}{}{\opRg}{}
        \ecPlaq{\pC}{1.65}{}
        \ecPlaqFour{\pC}{-1.65}{}{\zgb{km}}{\zgb{li}}{\zg{jk}}{\zg{ij}}  \ecPlaqExt{\pC}{-1.65}{-1}{1}{\zgb{ml}}
      \end{scope}

      \begin{scope}[shift={(\latX,-7*\rowpitch)}, scale=1.3]
        \ecTopBot
        \draw[lat] (-0.35,0)--(3.35,0);
        \draw[lat] (1,-1)--(1,1);
        \draw[lat] (2,-1)--(2,1);
        \ecArrR{0.5}{0} \ecArrR{2.5}{0}
        \ecArrU{1}{0.5} \ecArrU{1}{-0.5} \ecArrU{2}{0.5} \ecArrU{2}{-0.5}
        \ecRedEdge
        \ecDot{1} \ecDotO{2}
      \end{scope}
      \begin{scope}[shift={(\stabX,-7*\rowpitch)}, scale=1.25]
        \ecPlaq{\pA}{1.65}{}  \ecPlaq{\pA}{-1.65}{}
        \ecVop{\vH}{}{}{}{\opRg}{\opLg}{\opRg}{\opLg}
        \ecPlaqRB{\pB}{1.65}{}  \ecPlaqRT{\pB}{-1.65}{}
        \ecVop{\vF}{\opRg}{\opLg}{\opRg}{\opLg}{}{}{}
        \ecPlaq{\pC}{1.65}{}  \ecPlaq{\pC}{-1.65}{}
      \end{scope}

      \end{scope}

      \newcommand{\ecMini}{%
        \draw[lat] (0,0)--(3,0);
        \draw[lat] (1,-0.85)--(1,0.85);
        \draw[lat] (2,-0.85)--(2,0.85);
        \draw[red edge] (1,0)--(2,0);}
      \newcommand{\ecMvDot}[2]{\fill[opblue] (#1,#2) circle (2.2pt);}

      \node[font=\Huge] at (-3.675,{-1*\rowpitch+2.55}) {1.};
      \begin{scope}[shift={(-6.0,{-1*\rowpitch})},scale=1.55]
        \ecMini \ecMvDot{1.5}{0}
        \draw[mv] (1.5,0) .. controls (1.65,0.26) and (1.8,0.38) .. (1.97,0.41);
        \node[ml] at (2,0.42) {$L$};
      \end{scope}
      \node[font=\Huge] at (-3.675,{-2*\rowpitch+2.55}) {2.};
      \begin{scope}[shift={(-6.0,{-2*\rowpitch})},scale=1.55]
        \ecMini \ecMvDot{1.5}{0}
        \draw[mv] (1.5,0) .. controls (1.75,0.45) and (2.25,0.38) .. (2.47,0.07);
        \node[ml] at (2.5,0) {$L$};
      \end{scope}
      \node[font=\Huge] at (-3.675,{-3*\rowpitch+2.55}) {3.};
      \begin{scope}[shift={(-6.0,{-3*\rowpitch})},scale=1.55]
        \ecMini \ecMvDot{1.5}{0}
        \draw[mv] (1.5,0) .. controls (1.38,-0.3) and (1.18,-0.48) .. (1,-0.55);
        \node[ml] at (1,-0.425) {$R^{\dagger}$};
      \end{scope}
      \begin{scope}[shift={(\colshift,{4*\rowpitch})}]
      \node[font=\Huge] at (-3.675,{-4*\rowpitch+2.55}) {4.};
      \begin{scope}[shift={(-6.0,{-4*\rowpitch})},scale=1.55]
        \ecMini \ecMvDot{1.5}{0}
        \draw[mv] (0.62,0.05) .. controls (0.9,0.4) and (1.25,0.3) .. (1.5,0.02);
        \node[ml] at (0.5,0) {$R^{\dagger}$};
      \end{scope}
      \node[font=\Huge] at (-3.675,{-5*\rowpitch+2.55}) {5.};
      \begin{scope}[shift={(-6.0,{-5*\rowpitch})},scale=1.55]
        \ecMini
        \node[ml] at (1.5,0) {$\mathcal{S}$};
      \end{scope}
      \node[font=\Huge] at (-3.675,{-6*\rowpitch+2.55}) {6.};
      \begin{scope}[shift={(-6.0,{-6*\rowpitch})},scale=1.55]
        \ecMini \ecMvDot{1.5}{0}
        \draw[mv] (1.5,0) .. controls (1.35,0.26) and (1.2,0.38) .. (1.03,0.41);
        \node[ml] at (1,0.42) {$L$};
      \end{scope}
      \node[font=\Huge] at (-3.675,{-7*\rowpitch+2.55}) {7.};
      \begin{scope}[shift={(-6.0,{-7*\rowpitch})},scale=1.55]
        \ecMini \ecMvDot{1.5}{0}
        \draw[mv] (1.5,0) .. controls (1.62,-0.3) and (1.82,-0.48) .. (2,-0.55);
        \node[ml] at (2,-0.425) {$R^{\dagger}$};
      \end{scope}
      \end{scope}

    \end{tikzpicture}}
	\caption{Stabilizer tracking for the charge exchange circuit of Fig.~\ref{fig:charge_exchange_circuit}, whose steps 1--4 are moves 1--2, 3--4, 5 and 6--7 here.
	Each row lists the move (Step), the stabilizers with their support on the direct edge in red (Stabilizers), and the rewired graph (Graph); the grey dots track $v_{1}$ (hollow) and $v_{2}$ (filled), and the six plaquettes are numbered in the first row.
	Every controlled multiplication is controlled by the direct edge (blue dot) and slides the labelled edge across it.
	At move 5 the inverse gate $\mathcal{S}$ reverses the direct edge and the graph is redrawn with the two endpoints exchanged.}
	\label{fig:ce-stabilizers}
\end{figure}

\subsubsection{Ungauging ordering}
\label{app:ce-ungauging}

The six controlled multiplications commute and $\mathcal{S}_{\ell}$ inverts any gate moved across it by Eq.~\eqref{eq:controlled-inverse}, so the circuit can be reordered freely.
Moving the two $CR^{\dagger}$ gates of step 2 to after $\mathcal{S}_{\ell}$ and $C_{\ell}R^{\dagger}_{\text{down}_{2}}$ of step 4 to before it gives
\begin{equation}\label{eq:ce-unitary}
U_{\mathrm{ex}}=W\,\mathcal{S}_{\ell}\,W_{2},\qquad
W_{2}\equiv C_{\ell}L_{\text{up}_{2}}\,C_{\ell}L_{\text{right}_{2}}\,C_{\ell}R_{\text{down}_{2}},\qquad
W\equiv C_{\ell}L_{\text{up}_{1}}\,C_{\ell}R_{\text{left}_{1}}\,C_{\ell}R_{\text{down}_{1}},
\end{equation}
where $W$ is the $G$ gauging circuit of the charge fusion protocol of Sec.~\ref{subsec:charge-fusion}: by the right identities of Eqs.~\eqref{eq:ce-slide-CL} and \eqref{eq:ce-slide-CR}, it builds the vertex operator at position $1$ out of a left multiplication on the direct edge alone,
\begin{equation}\label{eq:ce-gauging}
W\left(\;
\vcenter{\hbox{%
\begin{tikzpicture}[line width=0.9pt,baseline]
  \tvlat
  \node[op] at (0.75,0) {$L^{g}$};
\end{tikzpicture}
}}
\;\right)W^{\dagger}=\;
\vcenter{\hbox{%
\begin{tikzpicture}[line width=0.9pt,baseline]
  \tvlat
  \node[op] at (0,0.45) {$L^{g}$};
  \node[op] at (-0.5,0) {$R^{g}$};
  \node[op] at (0,-0.45) {$R^{g}$};
  \node[op] at (0.75,0) {$L^{g}$};
\end{tikzpicture}
}}
=A^{g}_{v_{1}},
\end{equation}
which is Eq.~\eqref{eq:cf-W-defining} read backwards: $W^{\dagger}$ ungauges $v_{1}$ onto the direct edge and $W$ regauges it.
$W_{2}$ acts on the edges of $v_{2}$ alone and, by Eqs.~\eqref{eq:ce-slide-CL} and \eqref{eq:ce-slide-CR},
\begin{equation}\label{eq:ce-ungauged}
W_{2}\,A^{g}_{v_{2}}\,W_{2}^{\dagger}=\;
\vcenter{\hbox{%
\begin{tikzpicture}[line width=0.9pt,baseline]
  \tvlat
  \node[op] at (0.75,0) {$R^{g}$};
\end{tikzpicture}
}}
,\qquad
W_{2}\,A^{g}_{v_{1}}\,W_{2}^{\dagger}=\;
\vcenter{\hbox{%
\begin{tikzpicture}[line width=0.9pt,baseline]
  \tvlat
  \node[op] at (0,0.45) {$L^{g}$};
  \node[op] at (-0.5,0) {$R^{g}$};
  \node[op] at (0,-0.45) {$R^{g}$};
  \node[op] at (0.75,0) {$L^{g}$};
  \node[op] at (1.5,0.45) {$L^{g}$};
  \node[op] at (2.0,0) {$L^{g}$};
  \node[op] at (1.5,-0.45) {$R^{g}$};
\end{tikzpicture}
}}
\end{equation}
The direct edge now dangles from the graph and carries the charge of $v_{2}$ on a single qudit, mirroring the ungauging of $v_{1}$ in the charge fusion protocol, Eqs.~\eqref{eq:cf-Av1-ungauged} and \eqref{eq:cf-Av2-ungauged}; no plaquette has support on it.
The inverse gate exchanges $L^{g}_{\ell}$ and $R^{g}_{\ell}$, and since no plaquette constrains the dangling edge the graph can be redrawn with $v_{2}$ at position $1$ and $v_{1}$ at position $2$; the regauging $W$ then builds the vertex operator at position $1$ out of the dangling edge by Eq.~\eqref{eq:ce-gauging}.
The charge exchange is thus an ungauging, an inverse gate on the ungauged edge, and a regauging: the three-layer structure of the ribbon operators of Sec.~\ref{subsec:ribbon-rewiring} with $\mathcal{S}_{\ell}$ as the single-edge operator, except that $W\neq W_{2}^{\dagger}$, which is what exchanges the two endpoints instead of restoring them, and the ungauge--reset--regauge structure of the charge fusion channel of Eq.~\eqref{eq:cf-channel} with the inverse gate in place of the reset.
Ungauging a path of vertices one after another onto the successive direct edges leaves those edges dangling, each carrying the charge of its vertex; a $\mathrm{SWAP}$ of two dangling edges then exchanges the charges and regauging in the reverse order restores the lattice, which realizes the exchange of charges as a sequential circuit~\cite{chen_sequential_2024}, cf.\ Fig.~\ref{fig:resource-classes}.

\subsubsection{Local neutrality disentangles the direct edge}
\label{app:ce-neutrality}

By Eq.~\eqref{eq:ce-ungauged}, the charge projectors of Eqs.~\eqref{eq:P-trivial}--\eqref{eq:P-internal} at $v_{2}$ are mapped to the same expressions with $A^{g}_{v_{2}}$ replaced by $R^{g}_{\ell}$.
If $v_{2}$ carries no charge, as every vertex does in the ground state, the vacuum projector $P_{v_{2}}$ is mapped to $\frac{1}{|G|}\sum_{g}R^{g}_{\ell}=\ketbra{+}{+}_{\ell}$ with $\ket{+}=\frac{1}{\sqrt{|G|}}\sum_{g}\ket{g}$: the direct edge is disentangled into a product state, $W_{2}$ is the three-gate vertex disentangler of entanglement renormalization~\cite{aguado_vidal_2008,aguado_entanglement_2011}, and charge neutrality at $v_{2}$ is the condition for it to disentangle.
The inverse gate acts trivially on $\ket{+}$, and $W$ then builds a neutral vertex at position $1$ out of it: exchanging a charge with the vacuum moves the charge across the direct edge.

\subsection{Flux exchange circuit}
\label{app:fb}

We track the stabilizers through the flux exchange circuit of Sec.~\ref{subsec:flux-exchange} in Fig.~\ref{fig:fb-stabilizers}.
The plaquette operators are drawn as the fixed-irrep contractions of Eq.~\eqref{eq:Be-MPO} with the prefactor suppressed, so the results hold for every class projector $B^{C}_{p}$ of Eq.~\eqref{eq:BC-MPO}; of the vertex operators, only those at the two endpoints of the dual edge are drawn.

\subsubsection{Stabilizer tracking}
\label{app:fb-tracking}

Figure~\ref{fig:fb-stabilizers} applies the sliding rules of Appendix~\ref{app:gr-move} to the seven moves of Fig.~\ref{fig:flux_exchange_circuit}, with the dual edge $d$ as the target throughout and one of its neighbouring edges as the control: $CR$ or $CR^{\dagger}$ slides the head of $d$ and $CL$ or $CL^{\dagger}$ its tail, each slide moving the factor on $d$ between the two vertex operators and the tensor on $d$ between the two plaquette operators that it affects.
After the six slides the lattice is restored with the two plaquettes exchanged and $d$ reversed, so every stabilizer is its final form with $L^{g}_{d}\leftrightarrow R^{g}_{d}$ and $\mathcal{Z}^{\Gamma}_{d}\leftrightarrow\bar{\mathcal{Z}}^{\Gamma}_{d}$, which the inverse gate $\mathcal{S}_{d}$ at step 7 undoes by Eqs.~\eqref{eq:inverse-gate-def} and \eqref{eq:Zbar-def}.
The last row is Eq.~\eqref{eq:fb-swap}: the two plaquette operators are exchanged, up to relabelling of the summed indices, and all other stabilizers are unchanged.

\begin{figure}[ht]
\centering
\adjustbox{max width=\textwidth, max totalheight=0.9\textheight}{%
\begin{tikzpicture}[
    line cap=round, line join=round,
    lat/.style ={latgray, line width=1.1pt},
    ga/.style ={postaction={decorate, decoration={markings,
                 mark=at position #1 with {\arrow{Stealth[length=2.2mm,width=1.9mm]}}}}},
    ga/.default=0.52,
    gr/.style ={postaction={decorate, decoration={markings,
                 mark=at position #1 with {\arrow{Stealth[length=2.7mm,width=2.2mm]}}}}},
    gr/.default=0.52,
    red edge/.style={edgered, line width=1.5pt},
    skel/.style={latgray, line width=0.9pt},
    rsk/.style ={edgered, line width=0.9pt},
    rule/.style={black, line width=0.9pt},
    mv/.style  ={opblue, line width=1.2pt},
    xl/.style  ={opblue, font=\Large, fill=white, inner sep=1.4pt},
    ml/.style  ={opblue, font=\Large, fill=white, inner sep=1pt},
  ]

  \def\rowpitch{7.0}
  \def\colshift{22.9}
  \def\latX{11.65}
  \def\stabX{-1.2}
  \def\blshift{0.3}
  \def\brshift{-0.3}

  \newcommand{\fmArrR}[2]{\draw[lat,-{Stealth[length=2.2mm,width=1.9mm]}] (#1-0.15,#2)--(#1+0.02,#2);}
  \newcommand{\fmArrU}[2]{\draw[lat,-{Stealth[length=2.2mm,width=1.9mm]}] (#1,#2-0.15)--(#1,#2+0.02);}
  \newcommand{\fmFrame}{%
    \draw[lat] (-1, 0.5)--(1, 0.5);
    \draw[lat] (-1,-0.5)--(1,-0.5);
    \draw[lat] (-1,-0.5)--(-1,0.5);
    \draw[lat] (1,-0.5)--(1,0.5);
    \foreach \xa in {-1,0,1}{\draw[lat] (\xa,0.5)--(\xa,1.2); \draw[lat] (\xa,-1.2)--(\xa,-0.5);}
    \draw[lat] (-1.7, 0.5)--(-1, 0.5); \draw[lat] (1, 0.5)--(1.7, 0.5);
    \draw[lat] (-1.7,-0.5)--(-1,-0.5); \draw[lat] (1,-0.5)--(1.7,-0.5);
    \foreach \xa in {-1.35,-0.5,0.5,1.35}{\fmArrR{\xa}{0.5} \fmArrR{\xa}{-0.5}}
    \fmArrU{-1}{0} \fmArrU{1}{0}
    \foreach \xa in {-1,0,1}{\fmArrU{\xa}{0.85} \fmArrU{\xa}{-0.85}}}
  \newcommand{\fmSq}[2]{\draw[skel] (#1-0.85,#2-0.85) rectangle (#1+0.85,#2+0.85);}
  \newcommand{\fmSqR}[2]{%
    \fmSq{#1}{#2}
    \draw[rsk] (#1+0.85,#2-0.85)--(#1+0.85,#2+0.85);}
  \newcommand{\fmSqL}[2]{%
    \fmSq{#1}{#2}
    \draw[rsk] (#1-0.85,#2-0.85)--(#1-0.85,#2+0.85);}
  \newcommand{\fmDom}[2]{%
    \draw[skel] (#1-1.7,#2-0.85) rectangle (#1+1.7,#2+0.85);
    \draw[rsk] (#1,#2-0.85)--(#1,#2+0.85);}
  \newcommand{\fmLegTR}[1]{%
    \draw[skel] (1.85,0.85)--(3.55,0.85);
    \node[xl] at (2.7,0.85)  {#1};}
  \newcommand{\fmLegR}[1]{%
    \draw[skel] (3.55,0.85)--(3.55,-0.85);
    \node[xl] at (3.55,0)    {#1};}
  \newcommand{\fmLegBL}[1]{%
    \draw[skel] (8.15,-0.85)--(6.45,-0.85);
    \node[xl] at (7.3,-0.85) {#1};}
  \newcommand{\fmLegL}[1]{%
    \draw[skel] (6.45,-0.85)--(6.45,0.85);
    \node[xl] at (6.45,0)    {#1};}
  \newcommand{\fmCrossSkel}[2]{%
    \draw[skel] (#1-1.3,#2)--(#1+1.3,#2);
    \draw[skel] (#1,#2-1.15)--(#1,#2+1.15);}
  \newcommand{\fmCrossD}[3]{%
    \fmCrossSkel{#1}{#2}
    \draw[rsk] (#1,#2)--(#1,#2-1.15);
    \node[xl] at (#1,#2+0.6)    {\opLg};
    \node[xl] at (#1-0.68,#2)   {\opRg};
    \node[xl] at (#1+0.68,#2)   {\opLg};
    \node[xl] at (#1,#2-0.6)    {#3};}
  \newcommand{\fmCrossU}[3]{%
    \fmCrossSkel{#1}{#2}
    \draw[rsk] (#1,#2)--(#1,#2+1.15);
    \node[xl] at (#1,#2+0.6)    {#3};
    \node[xl] at (#1-0.68,#2)   {\opRg};
    \node[xl] at (#1+0.68,#2)   {\opLg};
    \node[xl] at (#1,#2-0.6)    {\opRg};}
  \newcommand{\fmCrossNoDn}[2]{%
    \fmCrossSkel{#1}{#2}
    \draw[rsk] (#1,#2)--(#1,#2-1.15);
    \node[xl] at (#1,#2+0.6)    {\opLg};
    \node[xl] at (#1-0.68,#2)   {\opRg};
    \node[xl] at (#1+0.68,#2)   {\opLg};}
  \newcommand{\fmCrossNoUp}[2]{%
    \fmCrossSkel{#1}{#2}
    \draw[rsk] (#1,#2)--(#1,#2+1.15);
    \node[xl] at (#1-0.68,#2)   {\opRg};
    \node[xl] at (#1+0.68,#2)   {\opLg};
    \node[xl] at (#1,#2-0.6)    {\opRg};}

  \foreach \dx in {0,\colshift}{
    \node[font=\Huge] at ({-3.85+\dx}, 4.05) {Gate};
    \node[font=\Huge] at ({11.65+\dx}, 4.05) {Graph};
    \node[font=\Huge] at ({3.6+\dx}, 4.05)  {Stabilizers};
    \draw[rule, line width=1.3pt] ({-6.1+\dx},3.4) -- ({14.5+\dx},3.4);
    \draw[rule, line width=1.3pt] ({-1.6+\dx},4.45) -- ({-1.6+\dx},{-3*\rowpitch-3.4});
    \draw[rule, line width=1.3pt] ({8.8+\dx},4.45) -- ({8.8+\dx},{-3*\rowpitch-3.4});
    \foreach \r in {1,2,3}{
      \draw[rule] ({-6.1+\dx},{-\r*\rowpitch+3.4}) -- ({14.5+\dx},{-\r*\rowpitch+3.4});}}

  \begin{scope}[shift={(\latX,0)}, scale=1.45]
    \fill[plaqgray] (0,-0.5) rectangle (1,0.5);
    \fmFrame
    \draw[red edge] (0,-0.5)--(0,0.5);
    \draw[red edge,-{Stealth[length=2.7mm,width=2.2mm]}] (0,-0.5)--(0,0.2);
  \end{scope}
  \begin{scope}[shift={(\stabX,0)}]
    \begin{scope}[shift={(\blshift,0)}]
      \fmSqR{1.0}{0}
      \node[xl] at (1.0,0.85)  {\zgb{kl}};
      \node[xl] at (0.15,0)    {\zgb{li}};
      \node[xl] at (1.85,0)    {\zg{jk}};
      \node[xl] at (1.0,-0.85) {\zg{ij}};
    \end{scope}
    \fmCrossD{5.0}{1.8}{\opRg}
    \fmCrossU{5.0}{-1.8}{\opLg}
    \begin{scope}[shift={(\brshift,0)}]
      \fmSqL{9.0}{0}
      \node[xl] at (9.0,0.85)  {\zgb{kl}};
      \node[xl] at (8.15,0)    {\zgb{li}};
      \node[xl] at (9.85,0)    {\zg{jk}};
      \node[xl] at (9.0,-0.85) {\zg{ij}};
    \end{scope}
  \end{scope}

  \begin{scope}[shift={(\latX,-\rowpitch)}, scale=1.45]
    \fill[plaqgray] (0,-0.5) .. controls (0.15,-0.1) and (0.6,0.3) .. (1,0.5)
                    -- (1,-0.5) -- cycle;
    \fmFrame
    \draw[red edge,gr] (0,-0.5) .. controls (0.15,-0.1) and (0.6,0.3) .. (1,0.5);
  \end{scope}
  \begin{scope}[shift={(\stabX,-\rowpitch)}]
    \begin{scope}[shift={(\blshift,0)}]
      \fmSqR{1.0}{0}
      \node[xl] at (1.0,0.85)  {\zgb{kl}};
      \node[xl] at (0.15,0)    {\zgb{li}};
      \node[xl] at (1.85,0)    {\zg{jm}};
      \node[xl] at (1.0,-0.85) {\zg{ij}};
      \fmLegTR{\zgb{mk}}
    \end{scope}
    \fmCrossNoDn{5.0}{1.8}
    \fmCrossU{5.0}{-1.8}{\opLg}
    \begin{scope}[shift={(\brshift,0)}]
      \fmSqL{9.0}{0}
      \node[xl] at (8.15,0)    {\zgb{ki}};
      \node[xl] at (9.85,0)    {\zg{jk}};
      \node[xl] at (9.0,-0.85) {\zg{ij}};
    \end{scope}
  \end{scope}

  \begin{scope}[shift={(\latX,-2*\rowpitch)}, scale=1.45]
    \fill[plaqgray] (-1,-0.5) .. controls (-0.35,0) and (0.35,0) .. (1,0.5)
                    -- (1,-0.5) -- cycle;
    \fmFrame
    \draw[red edge,gr] (-1,-0.5) .. controls (-0.35,0) and (0.35,0) .. (1,0.5);
  \end{scope}
  \begin{scope}[shift={(\stabX,-2*\rowpitch)}]
    \begin{scope}[shift={(\blshift,0)}]
      \fmSqR{1.0}{0}
      \node[xl] at (1.0,0.85)  {\zgb{kl}};
      \node[xl] at (0.15,0)    {\zgb{li}};
      \node[xl] at (1.85,0)    {\zg{im}};
      \fmLegTR{\zgb{mk}}
    \end{scope}
    \fmCrossNoDn{5.0}{1.8}
    \fmCrossNoUp{5.0}{-1.8}
    \begin{scope}[shift={(\brshift,0)}]
      \fmSqL{9.0}{0}
      \node[xl] at (8.15,0)    {\zgb{kn}};
      \node[xl] at (9.85,0)    {\zg{jk}};
      \node[xl] at (9.0,-0.85) {\zg{ij}};
      \fmLegBL{\zg{ni}}
    \end{scope}
  \end{scope}

  \begin{scope}[shift={(\latX,-3*\rowpitch)}, scale=1.45]
    \fill[plaqgray] (-1,-0.5) .. controls (-0.7,0.55) and (0.7,0.55) .. (1,-0.5) -- cycle;
    \fmFrame
    \draw[red edge,gr=0.55] (-1,-0.5) .. controls (-0.7,0.55) and (0.7,0.55) .. (1,-0.5);
  \end{scope}
  \begin{scope}[shift={(\stabX,-3*\rowpitch)}]
    \begin{scope}[shift={(\blshift,0)}]
      \fmSqR{1.0}{0}
      \node[xl] at (1.0,0.85)  {\zgb{kl}};
      \node[xl] at (0.15,0)    {\zgb{li}};
      \node[xl] at (1.85,0)    {\zg{io}};
      \fmLegTR{\zgb{mk}}
      \fmLegR{\zg{om}}
    \end{scope}
    \fmCrossNoDn{5.0}{1.8}
    \fmCrossNoUp{5.0}{-1.8}
    \begin{scope}[shift={(\brshift,0)}]
      \fmSqL{9.0}{0}
      \node[xl] at (8.15,0)    {\zgb{jn}};
      \node[xl] at (9.0,-0.85) {\zg{ij}};
      \fmLegBL{\zg{ni}}
    \end{scope}
  \end{scope}

  \begin{scope}[shift={(\colshift,{4*\rowpitch})}]
  \begin{scope}[shift={(\latX,-4*\rowpitch)}, scale=1.45]
    \fill[plaqgray] (-1,0.5) .. controls (-0.35,0) and (0.35,0) .. (1,-0.5)
                    -- (-1,-0.5) -- cycle;
    \fmFrame
    \draw[red edge,gr] (-1,0.5) .. controls (-0.35,0) and (0.35,0) .. (1,-0.5);
  \end{scope}
  \begin{scope}[shift={(\stabX,-4*\rowpitch)}]
    \begin{scope}[shift={(\blshift,0)}]
      \fmSqR{1.0}{0}
      \node[xl] at (1.0,0.85)  {\zgb{kl}};
      \node[xl] at (1.85,0)    {\zg{lo}};
      \fmLegTR{\zgb{mk}}
      \fmLegR{\zg{om}}
    \end{scope}
    \fmCrossNoDn{5.0}{1.8}
    \fmCrossNoUp{5.0}{-1.8}
    \begin{scope}[shift={(\brshift,0)}]
      \fmSqL{9.0}{0}
      \node[xl] at (8.15,0)    {\zgb{jp}};
      \node[xl] at (9.0,-0.85) {\zg{ij}};
      \fmLegBL{\zg{ni}}
      \fmLegL{\zgb{pn}}
    \end{scope}
  \end{scope}

  \begin{scope}[shift={(\latX,-5*\rowpitch)}, scale=1.45]
    \fill[plaqgray] (-1,0.5) .. controls (-0.6,0.3) and (-0.15,-0.1) .. (0,-0.5)
                    -- (-1,-0.5) -- cycle;
    \fmFrame
    \draw[red edge,gr] (-1,0.5) .. controls (-0.6,0.3) and (-0.15,-0.1) .. (0,-0.5);
  \end{scope}
  \begin{scope}[shift={(\stabX,-5*\rowpitch)}]
    \begin{scope}[shift={(\blshift,0)}]
      \fmDom{1.85}{0}
      \node[xl] at (1.0,0.85)  {\zgb{kl}};
      \node[xl] at (1.85,0)    {\zg{lq}};
      \node[xl] at (2.7,0.85)  {\zgb{mk}};
      \node[xl] at (3.55,0)    {\zg{om}};
      \node[xl] at (2.7,-0.85) {\zg{qo}};
    \end{scope}
    \fmCrossNoDn{5.0}{1.8}
    \fmCrossU{5.0}{-1.8}{\opRg}
    \begin{scope}[shift={(\brshift,0)}]
      \fmSqL{9.0}{0}
      \node[xl] at (8.15,0)    {\zgb{ip}};
      \fmLegBL{\zg{ni}}
      \fmLegL{\zgb{pn}}
    \end{scope}
  \end{scope}

  \begin{scope}[shift={(\latX,-6*\rowpitch)}, scale=1.45]
    \fill[plaqgray] (-1,-0.5) rectangle (0,0.5);
    \fmFrame
    \draw[red edge] (0,-0.5)--(0,0.5);
    \draw[red edge,-{Stealth[length=2.7mm,width=2.2mm]}] (0,0.5)--(0,-0.2);
  \end{scope}
  \begin{scope}[shift={(\stabX,-6*\rowpitch)}]
    \begin{scope}[shift={(\blshift,0)}]
      \fmDom{1.85}{0}
      \node[xl] at (1.85,0)    {\zg{kq}};
      \node[xl] at (2.7,0.85)  {\zgb{mk}};
      \node[xl] at (3.55,0)    {\zg{om}};
      \node[xl] at (2.7,-0.85) {\zg{qo}};
    \end{scope}
    \fmCrossD{5.0}{1.8}{\opLg}
    \fmCrossU{5.0}{-1.8}{\opRg}
    \begin{scope}[shift={(\brshift,0)}]
      \fmDom{8.15}{0}
      \node[xl] at (7.3,0.85)  {\zgb{rp}};
      \node[xl] at (6.45,0)    {\zgb{pn}};
      \node[xl] at (8.15,0)    {\zgb{ir}};
      \node[xl] at (7.3,-0.85) {\zg{ni}};
    \end{scope}
  \end{scope}

  \begin{scope}[shift={(\latX,-7*\rowpitch)}, scale=1.45]
    \fill[plaqgray] (-1,-0.5) rectangle (0,0.5);
    \fmFrame
    \draw[red edge] (0,-0.5)--(0,0.5);
    \draw[red edge,-{Stealth[length=2.7mm,width=2.2mm]}] (0,-0.5)--(0,0.2);
  \end{scope}
  \begin{scope}[shift={(\stabX,-7*\rowpitch)}]
    \begin{scope}[shift={(\blshift,0)}]
      \fmDom{1.85}{0}
      \node[xl] at (1.85,0)    {\zgb{kq}};
      \node[xl] at (2.7,0.85)  {\zgb{mk}};
      \node[xl] at (3.55,0)    {\zg{om}};
      \node[xl] at (2.7,-0.85) {\zg{qo}};
    \end{scope}
    \fmCrossD{5.0}{1.8}{\opRg}
    \fmCrossU{5.0}{-1.8}{\opLg}
    \begin{scope}[shift={(\brshift,0)}]
      \fmDom{8.15}{0}
      \node[xl] at (7.3,0.85)  {\zgb{rp}};
      \node[xl] at (6.45,0)    {\zgb{pn}};
      \node[xl] at (8.15,0)    {\zg{ir}};
      \node[xl] at (7.3,-0.85) {\zg{ni}};
    \end{scope}
  \end{scope}

  \end{scope}

  \newcommand{\fmMini}{%
    \draw[lat] (-1,-0.5) rectangle (1,0.5);
    \draw[red edge] (0,-0.5)--(0,0.5);}
  \newcommand{\fmMvDot}[2]{\fill[opblue] (#1,#2) circle (2.2pt);}

  \node[font=\Huge] at (-3.9,{-1*\rowpitch+1.9}) {1.};
  \begin{scope}[shift={(-3.9,{-1*\rowpitch})},scale=1.45]
    \fmMini
    \draw[mv] (0,0)--(0.5,0.5);
    \fmMvDot{0.5}{0.5}
    \node[ml] at (0,0) {$R^{\dagger}$};
  \end{scope}
  \node[font=\Huge] at (-3.9,{-2*\rowpitch+1.9}) {2.};
  \begin{scope}[shift={(-3.9,{-2*\rowpitch})},scale=1.45]
    \fmMini
    \draw[mv] (0,0)--(-0.5,-0.5);
    \fmMvDot{-0.5}{-0.5}
    \node[ml] at (0,0) {$L$};
  \end{scope}
  \node[font=\Huge] at (-3.9,{-3*\rowpitch+1.9}) {3.};
  \begin{scope}[shift={(-3.9,{-3*\rowpitch})},scale=1.45]
    \fmMini
    \draw[mv] (0,0)--(1,0);
    \fmMvDot{1}{0}
    \node[ml] at (0,0) {$R$};
  \end{scope}
  \begin{scope}[shift={(\colshift,{4*\rowpitch})}]
  \node[font=\Huge] at (-3.9,{-4*\rowpitch+1.9}) {4.};
  \begin{scope}[shift={(-3.9,{-4*\rowpitch})},scale=1.45]
    \fmMini
    \draw[mv] (0,0)--(-1,0);
    \fmMvDot{-1}{0}
    \node[ml] at (0,0) {$L^{\dagger}$};
  \end{scope}
  \node[font=\Huge] at (-3.9,{-5*\rowpitch+1.9}) {5.};
  \begin{scope}[shift={(-3.9,{-5*\rowpitch})},scale=1.45]
    \fmMini
    \draw[mv] (0,0)--(0.5,-0.5);
    \fmMvDot{0.5}{-0.5}
    \node[ml] at (0,0) {$R$};
  \end{scope}
  \node[font=\Huge] at (-3.9,{-6*\rowpitch+1.9}) {6.};
  \begin{scope}[shift={(-3.9,{-6*\rowpitch})},scale=1.45]
    \fmMini
    \draw[mv] (0,0)--(-0.5,0.5);
    \fmMvDot{-0.5}{0.5}
    \node[ml] at (0,0) {$L^{\dagger}$};
  \end{scope}
  \node[font=\Huge] at (-3.9,{-7*\rowpitch+1.9}) {7.};
  \begin{scope}[shift={(-3.9,{-7*\rowpitch})},scale=1.45]
    \fmMini
    \node[ml] at (0,0) {$\mathcal{S}$};
  \end{scope}
  \end{scope}

\end{tikzpicture}}
\caption{Stabilizer tracking for the flux exchange circuit of Fig.~\ref{fig:flux_exchange_circuit}, with the two endpoints of the dual edge (red) slid alternately: the odd steps slide its head clockwise around $p_{2}$ by $CR$ or $CR^{\dagger}$ (blue) and the even steps slide its tail clockwise around $p_{1}$ by $CL$ or $CL^{\dagger}$, each controlled by a neighbouring edge (dot); the inverse gate $\mathcal{S}$ at step 7 restores the orientation of the dual edge.
Each row lists the move (Step), the stabilizers $B_{p_{1}}$, $A_{w}$ (top), $A_{u}$ (bottom) and $B_{p_{2}}$ with the dual edge in red (Stabilizers), and the rewired graph (Graph); the grey shading tracks the deformed plaquette, and the indices of untouched edges persist between steps.}
\label{fig:fb-stabilizers}
\end{figure}

\subsubsection{Ungauging ordering}
\label{app:fb-ungauging}

The three head slides, right multiplications on $d$, commute with the three tail slides, left multiplications on $d$, so the circuit can be reordered freely.
Moving the tail slides of steps 2, 4 and 6 to after $\mathcal{S}_{d}$, which turns them into right multiplications by Eq.~\eqref{eq:controlled-target-inverse}, gives
\begin{equation}\label{eq:fb-unitary}
V_{\mathrm{ex}}=M\,\mathcal{S}_{d}\,\tilde{M}^{\dagger},\qquad
\tilde{M}^{\dagger}\equiv C_{g_{4}}R_{d}\,C_{g_{5}}R_{d}\,C_{g_{6}}R^{\dagger}_{d},\qquad
M\equiv C_{g_{2}}R^{\dagger}_{d}\,C_{g_{3}}R^{\dagger}_{d}\,C_{g_{1}}R_{d},
\end{equation}
where $M$ is the $G$ gauging circuit of the flux fusion protocol of Sec.~\ref{subsec:flux-fusion}, Eq.~\eqref{eq:ff-M-dagger}: by Eq.~\eqref{eq:ff-M-defining} it builds the flux projector of $p_{1}$ out of a single-edge projector on the dual edge alone,
\begin{equation}\label{eq:fb-gauging}
M\,T^{x}_{d}\,M^{\dagger}=B^{x}_{p_{1},u},
\end{equation}
so that $M^{\dagger}$ ungauges $p_{1}$ onto the dual edge and $M$ regauges it.
$\tilde{M}^{\dagger}$ acts on the edges of $p_{2}$ alone, as $g_{d}\mapsto g_{d}\,g_{6}\bar{g}_{5}\bar{g}_{4}=\bar{w}_{2}$ in the notation of Eq.~\eqref{eq:fb-initial-labels}, hence
\begin{equation}\label{eq:fb-ungauged}
\tilde{M}^{\dagger}\,B^{x}_{p_{2},u}\,\tilde{M}=T^{\bar{x}}_{d},\qquad
\tilde{M}^{\dagger}\,B^{y}_{p_{1},u}\,\tilde{M}=\delta_{g_{d}w_{2}w_{1},\,y}.
\end{equation}
The dual edge is now a detached loop based at $u$ that carries the flux of $p_{2}$ on a single qudit, in the sense of Fig.~\ref{fig:ancilla-edges}(a), while $p_{1}$ has grown into $p_{1}\cup p_{2}$; $A_{w}$ has no support on the loop and $A_{u}$ acts on it by the conjugation $L^{g}R^{g}$ of Eq.~\eqref{eq:conjugation-def}, which fixes $\ket{e}$.
This is the counterpart at $p_{2}$ of the ungauging step $M^{\dagger}$ of the flux fusion protocol.
The inverse gate maps $T^{\bar{x}}_{d}$ to $T^{x}_{d}$ and reverses the loop, and $M$ then regauges it onto $p_{1}$ by Eq.~\eqref{eq:fb-gauging}.
The flux exchange is thus an ungauging, an inverse gate on the ungauged edge, and a regauging, as the charge exchange in Eq.~\eqref{eq:ce-unitary}: the three-layer structure of the ribbon operators of Sec.~\ref{subsec:ribbon-rewiring} with $\mathcal{S}_{d}$ as the single-edge operator, except that $M\neq\tilde{M}$, which is what exchanges the two plaquettes instead of restoring them, and the ungauge--reset--regauge structure of the flux fusion channel of Eq.~\eqref{eq:flux-fusion-channel} with the inverse gate in place of the reset.
Ungauging a path of plaquettes one after another onto the successive dual edges leaves those edges as detached loops, each carrying the flux of its plaquette; a $\mathrm{SWAP}$ of two such loops, once slid to a common base point, exchanges the fluxes and regauging in the reverse order restores the lattice, which realizes the braiding of fluxes as a sequential circuit~\cite{chen_sequential_2024}, cf.\ Fig.~\ref{fig:resource-classes}.

\subsubsection{Local neutrality disentangles the dual edge}
\label{app:fb-neutrality}

By Eq.~\eqref{eq:fb-ungauged}, the flux projectors of Eqs.~\eqref{eq:Bg-MPO} and \eqref{eq:BC-MPO} at $p_{2}$ are mapped to the same expressions with $B^{x}_{p_{2},u}$ replaced by $T^{\bar{x}}_{d}$.
If $p_{2}$ carries no flux, as every plaquette does in the ground state, $B^{e}_{p_{2}}$ is mapped to $T^{e}_{d}$: the dual edge is disentangled into a product state, $\tilde{M}^{\dagger}$ is the three-gate plaquette disentangler of entanglement renormalization~\cite{aguado_vidal_2008,aguado_entanglement_2011}, and flux neutrality at $p_{2}$ is the condition for it to disentangle.
The inverse gate acts trivially on $\ket{e}$, and $M$ then builds a flux-free plaquette at $p_{1}$ out of it: exchanging a flux with the vacuum moves the flux across the dual edge.

\section{Charge fusion protocol: derivations}
\label{app:cf}

This appendix derives the operator mappings reported in Sec.~\ref{subsec:charge-fusion} for the charge fusion protocol.
Throughout, $\ell$ denotes the direct edge from $v_{1}$ to $v_{2}$, and $\tilde{A}^{g}$ and $\tilde{A}'^{g}$ denote the action of $A^{g}_{v_{1}}$ and $A^{g}_{v_{2}}$ on the three outer legs of $v_{1}$ and $v_{2}$ respectively, so that $A^{g}_{v_{1}}=L^{g}_{\ell}\otimes\tilde{A}^{g}$ and $A^{g}_{v_{2}}=R^{g}_{\ell}\otimes\tilde{A}'^{g}$ by the convention of Eq.~\eqref{eq:Av-convention}.
The group multiplication gates $L^{g},R^{g}$, the controlled gates $CL,CR$ and the inverse gate $\mathcal{S}$ are those of Appendix~\ref{app:mult-gates}, $P^{\Gamma}_{v}$ is the charge projector of Eq.~\eqref{eq:P-charge-type} (Eq.~\eqref{eq:charge-projector} of the main text), and $P_{v}=\frac{1}{|G|}\sum_{g}A^{g}_{v}$ is the vacuum projector of Eq.~\eqref{eq:P-trivial}.

\subsection{Lemma: closed flux ribbon can be deformed using neutrality}
\label{app:cf-lemma}

Given a state $\ket{\psi}$ with no charge at vertex $v_{1}$, we show that a charge projector that encircles both vertices acts on $\ket{\psi}$ exactly as the charge projector for only $v_{2}$.
No assumption on the charge at $v_{2}$ is needed; in particular, for such a state having a $\Gamma$ charge at $v_{2}$ is the same as the pair having total charge $\Gamma$.
We start with the charge projector (to the $\Gamma$ irrep) for two vertices $v_{1},v_{2}$, which is a closed ribbon
\begin{equation}\label{eq:cf-lemma-closed-ribbon}
P^{\Gamma}_{v_{1}v_{2}} = \frac{d_{\Gamma}}{|G|}\sum_{g\in G}\chi^{\Gamma}(\bar{g})\sum_{x\in G}\;
\vcenter{\hbox{%
\begin{tikzpicture}[line width=0.9pt,baseline]
  \tvlat
  \node[op] at (0,0.45) {$L^{g}$};
  \node[op] at (-0.5,0) {$R^{g}$};
  \node[op] at (0,-0.45) {$R^{g}$};
  \node[op] at (0.75,0) {$T^{x}$};
  \node[op] at (1.5,0.45) {$L^{\bar{x}gx}$};
  \node[op] at (2.0,0) {$L^{\bar{x}gx}$};
  \node[op] at (1.5,-0.45) {$R^{\bar{x}gx}$};
\end{tikzpicture}
}}
\end{equation}
This operator is a closed \textit{flux} ribbon: it transports a flux around the two vertices, and since fluxes braid nontrivially with charges, the closed ribbon detects the total charge enclosed by it, which is exactly why summing it against the character $\chi^{\Gamma}$ gives the charge projector of the region.
Since there is no charge at $v_{1}$, we also have the neutrality condition at the $v_{1}$ vertex
\begin{equation}\label{eq:cf-neutrality}
A^{g}_{v_{1}} \ket{\psi} =   \ket{\psi} , \quad \forall g \in G .
\end{equation}

\begin{lemma}[Deforming a closed flux ribbon using charge neutrality]
Using the charge neutrality condition of Eq.~\eqref{eq:cf-neutrality}, the charge projector over two vertices acts on $\ket{\psi}$ as that of only $v_{2}$, $P^{\Gamma}_{v_{1}v_{2}}\ket{\psi}=P^{\Gamma}_{v_{2}}\ket{\psi}$, or as an operator identity $P^{\Gamma}_{v_{1}v_{2}}P_{v_{1}}=P^{\Gamma}_{v_{2}}P_{v_{1}}$ with the vacuum projector $P_{v_{1}}=\frac{1}{|G|}\sum_{g}A^{g}_{v_{1}}$ of Eq.~\eqref{eq:P-trivial}, where
\begin{equation}\label{eq:cf-lemma-single-vertex}
P^{\Gamma}_{v_{2}} = \frac{d_{\Gamma}}{|G|}\sum_{g\in G}\chi^{\Gamma}(\bar{g})\;
\vcenter{\hbox{%
\begin{tikzpicture}[line width=0.9pt,baseline, op/.append style={fill=thmshade}]
  \tvlat
  \node[op] at (0.75,0) {$R^{g}$};
  \node[op] at (1.5,0.45) {$L^{g}$};
  \node[op] at (2.0,0) {$L^{g}$};
  \node[op] at (1.5,-0.45) {$R^{g}$};
\end{tikzpicture}
}}
\end{equation}
\end{lemma}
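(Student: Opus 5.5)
The plan is a direct computation on the range of the vacuum projector. Proving $P^{\Gamma}_{v_{1}v_{2}}\ket{\psi}=P^{\Gamma}_{v_{2}}\ket{\psi}$ for every $\ket{\psi}$ satisfying Eq.~\eqref{eq:cf-neutrality} is equivalent to the operator identity $P^{\Gamma}_{v_{1}v_{2}}P_{v_{1}}=P^{\Gamma}_{v_{2}}P_{v_{1}}$, because $P_{v_{1}}$ projects exactly onto that subspace. I will write each term of Eq.~\eqref{eq:cf-lemma-closed-ribbon}, labelled by $(g,x)$, as $\tilde{A}^{g}\,T^{x}_{\ell}\,\tilde{A}'^{\bar{x}gx}$, using the notation $A^{g}_{v_{1}}=L^{g}_{\ell}\otimes\tilde{A}^{g}$ and $A^{g}_{v_{2}}=R^{g}_{\ell}\otimes\tilde{A}'^{g}$ introduced at the start of this appendix. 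The three factors act on disjoint edge sets, so they commute.

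\emph{Step 1: trade the outer legs of $v_{1}$ for the direct edge.} Since $\tilde{A}^{g}$ and $L^{g}_{\ell}$ commute, neutrality gives $\tilde{A}^{g}\ket{\psi}=L^{\bar{g}}_{\ell}\ket{\psi}$. Each term applied to $\ket{\psi}$ therefore becomes $\tilde{A}'^{\bar{x}gx}\,T^{x}_{\ell}L^{\bar{g}}_{\ell}\ket{\psi}$, which is now supported only on $\ell$ and the outer legs of $v_{2}$.

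\emph{Step 2: reorganize the operator on $\ell$.} This step is pure bookkeeping. Moving the projector through the multiplication gives $T^{x}_{\ell}L^{\bar{g}}_{\ell}=L^{\bar{g}}_{\ell}T^{gx}_{\ell}$. I then reindex $y=gx$, which turns the conjugate $\bar{x}gx$ into $\bar{y}gy$, and set $h=\bar{y}gy$. Class invariance of the character gives $\chi^{\Gamma}(\bar{g})=\chi^{\Gamma}(\bar{h})$, so the double sum becomes
\[
\frac{d_{\Gamma}}{|G|}\sum_{h,y}\chi^{\Gamma}(\bar{h})\,\tilde{A}'^{h}\,L^{y\bar{h}\bar{y}}_{\ell}T^{y}_{\ell}\ket{\psi}.
\]
Acting on $\ket{y}$, the combination $L^{y\bar{h}\bar{y}}T^{y}$ yields $\ket{y\bar{h}}=R^{h}\ket{y}$, so $L^{y\bar{h}\bar{y}}_{\ell}T^{y}_{\ell}=R^{h}_{\ell}T^{y}_{\ell}$. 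Summing over $y$ and using $\sum_{y}T^{y}_{\ell}=\mathds{1}$ leaves $\frac{d_{\Gamma}}{|G|}\sum_{h}\chi^{\Gamma}(\bar{h})\,R^{h}_{\ell}\tilde{A}'^{h}\ket{\psi}$. This is exactly $P^{\Gamma}_{v_{2}}\ket{\psi}$ as given in Eq.~\eqref{eq:cf-lemma-single-vertex}.

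The main obstacle is the index gymnastics in Step 2. Left versus right multiplication and the direction of the conjugation $\bar{x}gx$ must be tracked consistently: a wrong convention would produce $L^{h}_{\ell}$ or $R^{\bar{h}}_{\ell}$ in place of $R^{h}_{\ell}$. I would check this first on group basis states, using Eq.~\eqref{eq:LR-def}. No assumption on the charge at $v_{2}$ enters anywhere in the argument, which matches the remark made before the lemma.
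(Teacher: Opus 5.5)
Your proof is correct and takes essentially the same route as the paper. You use neutrality to trade the outer-leg action $\tilde{A}^{g}$ at $v_{1}$ for $L^{\bar{g}}_{\ell}$ on the direct edge, reindex the double sum using class invariance of $\chi^{\Gamma}$, and collapse $\sum_{y}L^{y\bar{h}\bar{y}}_{\ell}T^{y}_{\ell}$ to $R^{h}_{\ell}$ via $\sum_{y}T^{y}_{\ell}=\mathds{1}$. The paper instead relabels $g\to xg\bar{x}$ first and evaluates $\sum_{x}\ketbra{x}{xg}=R^{g}$ directly, but this is only a cosmetic difference, and your left/right and conjugation conventions check out on basis states.
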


\begin{proof}
Since $A^{\bar{g}}_{v_{1}}\ket{\psi}=\ket{\psi}$ for every $g$, we may insert $A^{\bar{g}}_{v_{1}}$ separately inside each summand of $P^{\Gamma}_{v_{1}v_{2}}\ket{\psi}$, choosing in the $g$-th summand the same $g$ as the summation variable.
Writing $A^{\bar{g}}_{v_{1}}=L^{\bar{g}}_{\ell}\otimes\tilde{A}^{\bar{g}}$ by Eq.~\eqref{eq:Av-convention}, this cancels the action on the three outer legs of $v_{1}$ and leaves $L^{\bar{g}}$ on the direct edge, so the operator acting on $\ket{\psi}$ has support only on the direct edge and the edges incident on $v_{2}$:
\begin{equation}
\frac{d_{\Gamma}}{|G|}\sum_{g\in G}\chi^{\Gamma}(\bar{g})\sum_{x\in G}\;
\vcenter{\hbox{%
\begin{tikzpicture}[line width=0.9pt,baseline]
  \tvlat
  \node[op] at (0.75,0) {$T^{x}L^{\bar{g}}$};
  \node[op] at (1.5,0.45) {$L^{\bar{x}gx}$};
  \node[op] at (2.0,0) {$L^{\bar{x}gx}$};
  \node[op] at (1.5,-0.45) {$R^{\bar{x}gx}$};
\end{tikzpicture}
}}
\end{equation}
Next, we relabel $g\to xg\bar{x}$ to simplify the action on the three edges, and we use the fact that the trace $\chi^{\Gamma}(\cdot)$ is a class function, hence invariant under conjugation, $\chi^{\Gamma}(\overline{xg\bar{x}})=\chi^{\Gamma}(\bar{g})$:
\begin{equation}
\frac{d_{\Gamma}}{|G|}\sum_{g\in G}\chi^{\Gamma}(\bar{g})\sum_{x\in G}\;
\vcenter{\hbox{%
\begin{tikzpicture}[line width=0.9pt,baseline]
  \tvlat
  \node[op] at (0.75,0) {$T^{x}L^{x\bar{g}\bar{x}}$};
  \node[op] at (1.5,0.45) {$L^{g}$};
  \node[op] at (2.0,0) {$L^{g}$};
  \node[op] at (1.5,-0.45) {$R^{g}$};
\end{tikzpicture}
}}
\end{equation}
Let us focus on simplifying the operator $\sum_{x}T^{x}L^{x\bar{g}\bar{x}}$ on the direct edge,
\begin{align}
\sum_{x \in G}T^{x} L^{x\bar{g}\bar{x}} &= \sum_{x \in G}  \ketbra{x}{x} L^{x\bar{g}\bar{x}} =  \sum_{x \in G}  \ketbra{x}{xg} \xmapsto{x\to x\bar{g}} \sum_{x \in G}  \ketbra{x\bar{g}}{x} = R^{g},
\end{align}
where the action of $L^{h}$ on the right is given by $(L^{h}\ket{\psi})^{\dagger} =\bra{\psi} (L^{h})^{\dagger}=\bra{\psi} L^{\bar{h}}$, Eq.~\eqref{eq:LR-inverse}, and the last equality is the definition of $R^{g}$ in Eq.~\eqref{eq:LR-def}.
Incorporating the above simplification into the stabilizer, we obtain
\begin{equation}
\frac{d_{\Gamma}}{|G|}\sum_{g\in G}\chi^{\Gamma}(\bar{g})\;
\vcenter{\hbox{%
\begin{tikzpicture}[line width=0.9pt,baseline]
  \tvlat
  \node[op] at (0.75,0) {$R^{g}$};
  \node[op] at (1.5,0.45) {$L^{g}$};
  \node[op] at (2.0,0) {$L^{g}$};
  \node[op] at (1.5,-0.45) {$R^{g}$};
\end{tikzpicture}
}}
\end{equation}
i.e. $P^{\Gamma}_{v_{1}v_{2}}\ket{\psi}=P^{\Gamma}_{v_{2}}\ket{\psi}$ with $P^{\Gamma}_{v_{2}}$ exactly the charge projector on a single vertex, Eqs.~\eqref{eq:P-charge-type} and \eqref{eq:cf-lemma-single-vertex}, which is the claim.
The identity holds on the neutral subspace of $v_{1}$ only; the two projectors are not equal on the full Hilbert space.
The derivation is linear in the class function $\chi^{\Gamma}$, so it applies verbatim to any class function in place of $d_{\Gamma}\chi^{\Gamma}$.
\end{proof}

\subsection{Stabilizer mapping derivations}

\subsubsection{Mapping of single vertex projectors}
\label{app:cf-vertex}

We track a genuine stabilizer of the state (eigenvalue $+1$ at every step, with no truncation and no rescaling) built from the two single vertex charge projectors $P^{\Gamma_{1}}_{v_{1}}$ and $P^{\Gamma_{2}}_{v_{2}}$ of Eq.~\eqref{eq:P-charge-type} stabilizing the initial state,
\begin{equation}\label{eq:cf-P-v1}
P^{\Gamma_{1}}_{v_{1}} = \frac{d_{\Gamma_{1}}}{|G|}\sum_{g\in G}\chi^{\Gamma_{1}}(\bar{g})\;
\vcenter{\hbox{%
\begin{tikzpicture}[line width=0.9pt,baseline]
  \tvlat
  \node[op] at (0,0.45) {$L^{g}$};
  \node[op] at (-0.5,0) {$R^{g}$};
  \node[op] at (0,-0.45) {$R^{g}$};
  \node[op] at (0.75,0) {$L^{g}$};
\end{tikzpicture}
}}
\end{equation}
\begin{equation}\label{eq:cf-P-v2}
P^{\Gamma_{2}}_{v_{2}} = \frac{d_{\Gamma_{2}}}{|G|}\sum_{g\in G}\chi^{\Gamma_{2}}(\bar{g})\;
\vcenter{\hbox{%
\begin{tikzpicture}[line width=0.9pt,baseline]
  \tvlat
  \node[op] at (0.75,0) {$R^{g}$};
  \node[op] at (1.5,0.45) {$L^{g}$};
  \node[op] at (2.0,0) {$L^{g}$};
  \node[op] at (1.5,-0.45) {$R^{g}$};
\end{tikzpicture}
}}
\end{equation}

The stabilizer we carry is the support projector of the two-vertex diagonal gauge transformation $A^{g}_{v_{1}}A^{\bar{x}gx}_{v_{2}}$,
\begin{equation}\label{eq:cf-Pi-x}
\Pi_{x}\equiv\frac{1}{|G|}\sum_{g\in G}\phi(\bar{g})\,A^{g}_{v_{1}}A^{\bar{x}gx}_{v_{2}}
=\frac{1}{|G|}\sum_{g\in G}\phi(\bar{g})\;
\vcenter{\hbox{%
\begin{tikzpicture}[line width=0.9pt,baseline]
  \tvlat
  \node[op] at (0,0.45) {$L^{g}$};
  \node[op] at (-0.5,0) {$R^{g}$};
  \node[op] at (0,-0.45) {$R^{g}$};
  \node[op] at (0.75,0) {$L^{g}R^{\bar{x}gx}$};
  \node[op] at (1.5,0.45) {$L^{\bar{x}gx}$};
  \node[op] at (2.0,0) {$L^{\bar{x}gx}$};
  \node[op] at (1.5,-0.45) {$R^{\bar{x}gx}$};
\end{tikzpicture}
}}
\qquad
\phi\equiv\sum_{\Lambda\,:\,N^{\Lambda}_{\Gamma_{1}\Gamma_{2}}\neq0}d_{\Lambda}\chi^{\Lambda},
\end{equation}
which stabilizes the initial state for every $x\in G$.
It is the outcome-labelled form of the tensor product ribbon of Eq.~\eqref{eq:cf-tensor-ribbon}: since $L^{g}R^{\bar{x}gx}$ fixes $\ket{x}$ by Eq.~\eqref{eq:LR-def}, we have $T^{x}_{\ell}\,A^{g}_{v_{1}}A^{\bar{x}gx}_{v_{2}}=\tilde{A}^{g}\otimes T^{x}_{\ell}\otimes\tilde{A}'^{\bar{x}gx}$, hence
\begin{equation}\label{eq:cf-ribbon-vs-Pi-x}
P^{\Gamma_{1}\otimes\Gamma_{2}}_{v_{1}v_{2}}=\sum_{x\in G}T^{x}_{\ell}\,\Pi_{x}
=\frac{1}{|G|}\sum_{g\in G}\phi(\bar{g})\sum_{x\in G}\;
\vcenter{\hbox{%
\begin{tikzpicture}[line width=0.9pt,baseline]
  \tvlat
  \node[op] at (0,0.45) {$L^{g}$};
  \node[op] at (-0.5,0) {$R^{g}$};
  \node[op] at (0,-0.45) {$R^{g}$};
  \node[op] at (0.75,0) {$T^{x}$};
  \node[op] at (1.5,0.45) {$L^{\bar{x}gx}$};
  \node[op] at (2.0,0) {$L^{\bar{x}gx}$};
  \node[op] at (1.5,-0.45) {$R^{\bar{x}gx}$};
\end{tikzpicture}
}}
\end{equation}
whose $x$-th summand is $T^{x}_{\ell}\,\Pi_{x}$.
The two agree on each sector of the direct edge, $T^{x}_{\ell}\,P^{\Gamma_{1}\otimes\Gamma_{2}}_{v_{1}v_{2}}=T^{x}_{\ell}\,\Pi_{x}$, and both commute with the measurement projector $T^{x}_{\ell}$: the ribbon is diagonal in the group element basis of the direct edge, so $[P^{\Gamma_{1}\otimes\Gamma_{2}}_{v_{1}v_{2}},T^{x}_{\ell}]=0$ for every $x$, while $\Pi_{x}$ (with the $x$ of the measurement outcome) commutes with $T^{x}_{\ell}$ because its edge action $L^{g}R^{\bar{x}gx}$ fixes $\ket{x}$.
We carry $\Pi_{x}$ because it is built directly from the two-vertex gauge transformations, which makes the reduction to the fusion channels of $\Gamma_{1}\otimes\Gamma_{2}$ transparent; the ribbon itself can equally be tracked through the protocol, since $W^{\dagger}P^{\Gamma_{1}\otimes\Gamma_{2}}_{v_{1}v_{2}}W=\mathds{1}_{\ell}\otimes\Pi'$ has no support on the direct edge and passes through the reset unchanged (Eq.~\eqref{eq:cf-Pi-prime} and Appendix~\ref{app:cf-ribbon}).
Indeed, for fixed $x$ the operators $g\mapsto A^{g}_{v_{1}}A^{\bar{x}gx}_{v_{2}}$ form a representation of $G$ that commutes with $P^{\Gamma_{1}}_{v_{1}}$ and $P^{\Gamma_{2}}_{v_{2}}$ and acts on their joint image as $\Gamma_{1}(g)\otimes\Gamma_{2}(\bar{x}gx)\cong\Gamma_{1}(g)\otimes\Gamma_{2}(g)$, so $\Pi_{x}$ projects onto the channels $\Lambda\subset\Gamma_{1}\otimes\Gamma_{2}$.

\paragraph{Step 1: Ungauging circuit $W^{\dagger}$.}

Using $R^{g}_{\ell}T^{h}_{\ell}=T^{h\bar{g}}_{\ell}R^{g}_{\ell}$, which follows from Eq.~\eqref{eq:LR-def},
\begin{equation}\label{eq:cf-Av2-ungauged}
W^{\dagger}A^{g}_{v_{2}}W=\sum_{h\in G}\;
\vcenter{\hbox{%
\begin{tikzpicture}[line width=0.9pt,baseline]
  \tvlat
  \node[op] at (0,0.45) {$L^{g\bar{h}}L^{h}$};
  \draw[latgray] (-1.5,0) -- (-1.0,0);
  \node[op] at (-0.75,0) {$R^{g\bar{h}}R^{h}$};
  \node[op] at (0,-0.45) {$R^{g\bar{h}}R^{h}$};
  \node[op] at (0.75,0) {$T^{h\bar{g}}R^{g}$};
  \node[op] at (1.5,0.45) {$L^{g}$};
  \node[op] at (2.0,0) {$L^{g}$};
  \node[op] at (1.5,-0.45) {$R^{g}$};
\end{tikzpicture}
}}
=\;
\vcenter{\hbox{%
\begin{tikzpicture}[line width=0.9pt,baseline]
  \tvlat
  \node[op] at (0,0.45) {$L^{g}$};
  \node[op] at (-0.5,0) {$R^{g}$};
  \node[op] at (0,-0.45) {$R^{g}$};
  \node[op] at (0.75,0) {$R^{g}$};
  \node[op] at (1.5,0.45) {$L^{g}$};
  \node[op] at (2.0,0) {$L^{g}$};
  \node[op] at (1.5,-0.45) {$R^{g}$};
\end{tikzpicture}
}}
\end{equation}
i.e. $W^{\dagger}A^{g}_{v_{2}}W=R^{g}_{\ell}\otimes\tilde{A}^{g}\tilde{A}'^{g}$ since $\tilde{A}^{g\bar{h}}\tilde{A}^{h}=\tilde{A}^{g}$ and $\sum_{h}T^{h\bar{g}}_{\ell}=\mathds{1}_{\ell}$,
so the right vertex projector $P^{\Gamma_{2}}_{v_{2}}$ becomes
\begin{equation}
W^{\dagger}P^{\Gamma_{2}}_{v_{2}}W=\frac{d_{\Gamma_{2}}}{|G|}\sum_{g\in G}\chi^{\Gamma_{2}}(\bar{g})\;
\vcenter{\hbox{%
\begin{tikzpicture}[line width=0.9pt,baseline]
  \tvlat
  \node[op] at (0,0.45) {$L^{g}$};
  \node[op] at (-0.5,0) {$R^{g}$};
  \node[op] at (0,-0.45) {$R^{g}$};
  \node[op] at (0.75,0) {$R^{g}$};
  \node[op] at (1.5,0.45) {$L^{g}$};
  \node[op] at (2.0,0) {$L^{g}$};
  \node[op] at (1.5,-0.45) {$R^{g}$};
\end{tikzpicture}
}}
\end{equation}

Gate by gate, this follows from the commutation relations of $L^{g}$ and $R^{g}$ on the control qudit, Eqs.~\eqref{eq:comm-L-control} and \eqref{eq:comm-R-control}, together with the decomposition $C_{c}L^{\dagger}_{t}=S_{c}(C_{c}L_{t})S_{c}^{\dagger}$ of the inverse gates, Eq.~\eqref{eq:controlled-inverse}, and $\mathcal{S}R^{g}\mathcal{S}^{\dagger}=L^{g}$ from Eqs.~\eqref{eq:LR-def} and \eqref{eq:inverse-gate-def}, where $S_{c}$ is the inverse gate on the control qudit.
Its action $\tilde{A}^{g}\tilde{A}'^{g}$ on the six outer legs is the simultaneous gauge transformation of the two-vertex cluster, i.e. the global symmetry of the ungauged cluster, accompanied by $R^{g}$ on the direct edge; the original simultaneous transformation $A^{g}_{v_{1}}A^{g}_{v_{2}}$ ungauges to $L^{g}_{\ell}R^{g}_{\ell}\otimes\tilde{A}^{g}\tilde{A}'^{g}$ by Eq.~\eqref{eq:cf-Av1-ungauged}.

Whereas, using $L^{h}_{\ell}T^{k}_{\ell}=T^{hk}_{\ell}L^{h}_{\ell}$ from Eq.~\eqref{eq:LR-def},
\begin{equation}\label{eq:cf-Av1-ungauged}
W^{\dagger}A^{h}_{v_{1}}W=\sum_{k\in G}\;
\vcenter{\hbox{%
\begin{tikzpicture}[line width=0.9pt,baseline]
  \tvlat
  \node[op] at (0,0.45) {$L^{\overline{hk}}L^{h}L^{k}$};
  \draw[latgray] (-1.5,0) -- (-1.0,0);
  \node[op] at (-0.75,0) {$R^{\overline{hk}}R^{h}R^{k}$};
  \node[op] at (0,-0.45) {$R^{\overline{hk}}R^{h}R^{k}$};
  \node[op] at (0.75,0) {$L^{h}T^{k}$};
\end{tikzpicture}
}}
=\;
\vcenter{\hbox{%
\begin{tikzpicture}[line width=0.9pt,baseline]
  \tvlat
  \node[op] at (0.75,0) {$L^{h}$};
\end{tikzpicture}
}}
\end{equation}
i.e. $W^{\dagger}A^{h}_{v_{1}}W=L^{h}_{\ell}$ since $\tilde{A}^{\overline{hk}}\tilde{A}^{h}\tilde{A}^{k}=\mathds{1}$ and $\sum_{k}T^{k}_{\ell}=\mathds{1}_{\ell}$, equivalently $WL^{h}_{\ell}W^{\dagger}=A^{h}_{v_{1}}$, which is the commutation relation of $L^{h}$ on the control qudit, Eq.~\eqref{eq:comm-L-control}, applied to the three gates of $W$; so the left vertex projector $P^{\Gamma_{1}}_{v_{1}}$ becomes
\begin{equation}
W^{\dagger}P^{\Gamma_{1}}_{v_{1}}W=P^{\Gamma_{1}}_{L,\ell}=\frac{d_{\Gamma_{1}}}{|G|}\sum_{h\in G}\chi^{\Gamma_{1}}(\bar{h})\;
\vcenter{\hbox{%
\begin{tikzpicture}[line width=0.9pt,baseline]
  \tvlat
  \node[op] at (0.75,0) {$L^{h}$};
\end{tikzpicture}
}}
\end{equation}
i.e. the $\Gamma_{1}$ charge is now carried by the direct edge alone.

\paragraph{Step 2a: Measure the direct edge in the group element basis.}

We implement the reset of step 2 as a measurement followed by a flip and a preparation, Eq.~\eqref{eq:cf-reset-kraus}, and track the stabilizer through both parts.
Say we measure the edge to be in the $\ket{x}$ state.
A term $L^{h}R^{g}$ of the product of the two modified projectors sends $\ket{x}\mapsto\ket{hx\bar{g}}$ by Eq.~\eqref{eq:LR-def}, so only terms with $h=xg\bar{x}$ commute with $T^{x}_{\ell}$; keeping just those terms would be a truncation that does not preserve the eigenvalue.
An operator that commutes with $T^{x}_{\ell}$ as a whole is the ungauged $\Pi_{x}$, which after relabeling $g\to xg\bar{x}$ reads
\begin{equation}
W^{\dagger}\Pi_{x}W=\frac{1}{|G|}\sum_{g\in G}\phi(\bar{g})\;
\vcenter{\hbox{%
\begin{tikzpicture}[line width=0.9pt,baseline]
  \tvlat
  \node[op] at (0,0.45) {$L^{g}$};
  \node[op] at (-0.5,0) {$R^{g}$};
  \node[op] at (0,-0.45) {$R^{g}$};
  \node[op] at (0.75,0) {$L^{xg\bar{x}}R^{g}$};
  \node[op] at (1.5,0.45) {$L^{g}$};
  \node[op] at (2.0,0) {$L^{g}$};
  \node[op] at (1.5,-0.45) {$R^{g}$};
\end{tikzpicture}
}}
\end{equation}
Every term satisfies $h=xg\bar{x}$ automatically and its edge factor fixes the outcome, $L^{xg\bar{x}}R^{g}\ket{x}=\ket{x}$, so the operator commutes with $T^{x}_{\ell}$ term by term; the weighted sum, which is the operator with eigenvalue $+1$ on the state (individual summands are not), therefore survives the measurement as a stabilizer.
Note that $W^{\dagger}$ has moved the $x$-conjugation from the $v_{1}$ legs onto the direct edge, which is why no outcome-dependent correction is needed later.

\paragraph{Step 2b: Flip the outcome and prepare $\ket{+}$.}

The direct edge is reset by applying $R^{x}$ and then $U_{+}$.
Conjugating the edge action by $R^{x}$ gives $L^{xg\bar{x}}R^{xg\bar{x}}=C^{xg\bar{x}}$, the conjugation gate of Eq.~\eqref{eq:conjugation-def}, which fixes $\ket{e}$.
After $R^{x}$ the edge is in the product state $\ket{e}$, fixed by every edge factor, so the edge action can be dropped term by term at this point; $U_{+}$ then acts on a product state and prepares $\ket{+}$ independently (there is no need to conjugate the edge factor through $U_{+}$, and indeed $F^{\dagger}C^{g}F\neq C^{g}$ in general), leaving
\begin{equation}\label{eq:cf-Pi-prime}
\Pi'=\frac{1}{|G|}\sum_{g\in G}\phi(\bar{g})\;
\vcenter{\hbox{%
\begin{tikzpicture}[line width=0.9pt,baseline]
  \tvlat
  \node[op] at (0,0.45) {$L^{g}$};
  \node[op] at (-0.5,0) {$R^{g}$};
  \node[op] at (0,-0.45) {$R^{g}$};
  \node[op] at (1.5,0.45) {$L^{g}$};
  \node[op] at (2.0,0) {$L^{g}$};
  \node[op] at (1.5,-0.45) {$R^{g}$};
\end{tikzpicture}
}}
\end{equation}
which is $x$-independent, together with the reset stabilizer $\ketbra{+}{+}_{\ell}=\frac{1}{|G|}\sum_{g'}L^{g'}_{\ell}$.

The same conclusion follows without naming an outcome: by Peter--Weyl the image of $P^{\Gamma_{1}}_{L,\ell}$ is $V_{\Gamma_{1}}\otimes V^{*}_{\Gamma_{1}}$ with $L^{g}_{\ell}=\Gamma_{1}(g)\otimes\mathds{1}$ and $R^{g}_{\ell}=\mathds{1}\otimes\bar{\Gamma}_{1}(g)$, so the $\Gamma_{2}$ stabilizer of Step 1 says the six outer legs carry charge $\Lambda$ with $N^{\Gamma_{2}}_{\bar{\Gamma}_{1}\Lambda}=N^{\Lambda}_{\Gamma_{1}\Gamma_{2}}\neq0$ under $\tilde{A}^{g}\tilde{A}'^{g}$; tracing out the edge leaves $\Pi'$.

\paragraph{Step 3: Regauging circuit $W$.}

Using $W\tilde{A}^{g}W^{\dagger}=\sum_{h}T^{h}_{\ell}\otimes\tilde{A}^{hg\bar{h}}$, which is the commutation relation of $L^{g}$ and $R^{g}$ on the target qudit, Eqs.~\eqref{eq:comm-L-target} and \eqref{eq:comm-R-target}, applied to the three gates of $W$, the stabilizer becomes the closed flux ribbon with class function $\phi$,
\begin{equation}
W\Pi'W^{\dagger}=\frac{1}{|G|}\sum_{g\in G}\phi(\bar{g})\sum_{h}\;
\vcenter{\hbox{%
\begin{tikzpicture}[line width=0.9pt,baseline]
  \tvlat
  \node[op] at (0,0.45) {$L^{hg\bar{h}}$};
  \node[op] at (-0.5,0) {$R^{hg\bar{h}}$};
  \node[op] at (0,-0.45) {$R^{hg\bar{h}}$};
  \node[op] at (0.75,0) {$T^{h}$};
  \node[op] at (1.5,0.45) {$L^{g}$};
  \node[op] at (2.0,0) {$L^{g}$};
  \node[op] at (1.5,-0.45) {$R^{g}$};
\end{tikzpicture}
}}
\end{equation}
Relabeling $g\to\bar{h}gh$ ($\phi$ is a class function) and renaming $h\to x$, this is exactly the tensor product ribbon $P^{\Gamma_{1}\otimes\Gamma_{2}}_{v_{1}v_{2}}$ of Eq.~\eqref{eq:cf-tensor-ribbon}, which establishes the mapping stated in Sec.~\ref{subsec:charge-fusion}.
The reset stabilizer regauges to the vacuum projector $P_{v_{1}}=\frac{1}{|G|}\sum_{g'}A^{g'}_{v_{1}}$ of Eq.~\eqref{eq:P-trivial}, and by the lemma of Appendix~\ref{app:cf-lemma} (linear in the class function) the ribbon deforms to
\begin{equation}\label{eq:cf-Pi-v2}
\Pi_{v_{2}}=\sum_{\Lambda\,:\,N^{\Lambda}_{\Gamma_{1}\Gamma_{2}}\neq0}P^{\Lambda}_{v_{2}}
=\frac{1}{|G|}\sum_{g\in G}\phi(\bar{g})\;
\vcenter{\hbox{%
\begin{tikzpicture}[line width=0.9pt,baseline]
  \tvlat
  \node[op] at (0.75,0) {$R^{g}$};
  \node[op] at (1.5,0.45) {$L^{g}$};
  \node[op] at (2.0,0) {$L^{g}$};
  \node[op] at (1.5,-0.45) {$R^{g}$};
\end{tikzpicture}
}}
\end{equation}
the multiplicity-blind sum of orthogonal charge projectors over the fusion channels, with all weights equal to $1$.
The closed-ribbon tracking of Appendix~\ref{app:cf-ribbon} is the same computation with $d_{\Gamma}\chi^{\Gamma}$ in place of $\phi$.

\subsubsection{Mapping of closed loop ribbon operator}
\label{app:cf-ribbon}

\paragraph{Step 1: Ungauging circuit $W^{\dagger}$.}

The closed ribbon operator around the two vertices projecting to the charge sector $\Gamma$ is $P^{\Gamma}_{v_{1}v_{2}}$ of Eq.~\eqref{eq:cf-closed-ribbon}, i.e. $P^{\Gamma}_{v_{1}v_{2}}=\frac{d_{\Gamma}}{|G|}\sum_{g}\chi^{\Gamma}(\bar{g})\sum_{x}\tilde{A}^{g}\otimes T^{x}_{\ell}\otimes \tilde{A}'^{\bar{x}gx}$.
Conjugating by $W^{\dagger}$ of Eq.~\eqref{eq:cf-W-dagger}, the $T$'s force both controls to equal $x$, so
\begin{equation}
W^{\dagger}P^{\Gamma}_{v_{1}v_{2}}W=\frac{d_{\Gamma}}{|G|}\sum_{g\in G}\chi^{\Gamma}(\bar{g})\sum_{x\in G}\;
\vcenter{\hbox{%
\begin{tikzpicture}[line width=0.9pt,baseline]
  \tvlat
  \node[op] at (0,0.45) {$L^{\bar{x}}L^{g}L^{x}$};
  \draw[latgray] (-1.7,0) -- (-1.0,0);
  \node[op] at (-0.9,0) {$R^{\bar{x}}R^{g}R^{x}$};
  \node[op] at (0,-0.45) {$R^{\bar{x}}R^{g}R^{x}$};
  \node[op] at (0.75,0) {$T^{x}$};
  \node[op] at (1.5,0.45) {$L^{\bar{x}gx}$};
  \node[op] at (2.0,0) {$L^{\bar{x}gx}$};
  \node[op] at (1.5,-0.45) {$R^{\bar{x}gx}$};
\end{tikzpicture}
}}
=\frac{d_{\Gamma}}{|G|}\sum_{g\in G}\chi^{\Gamma}(\bar{g})\sum_{x\in G}\;
\vcenter{\hbox{%
\begin{tikzpicture}[line width=0.9pt,baseline]
  \tvlat
  \node[op] at (0,0.45) {$L^{\bar{x}gx}$};
  \node[op] at (-0.5,0) {$R^{\bar{x}gx}$};
  \node[op] at (0,-0.45) {$R^{\bar{x}gx}$};
  \node[op] at (0.75,0) {$T^{x}$};
  \node[op] at (1.5,0.45) {$L^{\bar{x}gx}$};
  \node[op] at (2.0,0) {$L^{\bar{x}gx}$};
  \node[op] at (1.5,-0.45) {$R^{\bar{x}gx}$};
\end{tikzpicture}
}}
\end{equation}
Relabeling $g\to xg\bar{x}$ (class function) and using $\sum_{x}T^{x}_{\ell}=\mathds{1}_{\ell}$, the direct edge drops out of the closed ribbon entirely:
\begin{equation}
\frac{d_{\Gamma}}{|G|}\sum_{g\in G}\chi^{\Gamma}(\bar{g})\;
\vcenter{\hbox{%
\begin{tikzpicture}[line width=0.9pt,baseline]
  \tvlat
  \node[op] at (0,0.45) {$L^{g}$};
  \node[op] at (-0.5,0) {$R^{g}$};
  \node[op] at (0,-0.45) {$R^{g}$};
  \node[op] at (1.5,0.45) {$L^{g}$};
  \node[op] at (2.0,0) {$L^{g}$};
  \node[op] at (1.5,-0.45) {$R^{g}$};
\end{tikzpicture}
}}
\end{equation}
In the same way the vertex operator at $v_{1}$ is moved onto the direct edge alone, Eq.~\eqref{eq:cf-Av1-ungauged}, which is the ungauging map: after $W^{\dagger}$, the charge of $v_{1}$ is carried by the direct edge as its left-regular representation content.
The direct edge drops out of the closed ribbon and of the plaquette operators (Appendix~\ref{app:cf-plaquette}), but it remains in the right vertex constraint through the factor $R^{g}_{\ell}$ in Eq.~\eqref{eq:cf-Av2-ungauged}.

\paragraph{Step 2: Reset the direct edge to $\ket{+}$.}

Since the closed ribbon no longer has support on the direct edge, it is untouched.
The reset adds the stabilizer $\ketbra{+}{+}_{\ell}=\frac{1}{|G|}\sum_{g'}L^{g'}_{\ell}$, which by the ungauging map is the statement that $v_{1}$ will carry trivial charge after regauging.
The stabilizer set now contains the product of the two, $\ketbra{+}{+}_{\ell}\otimes W^{\dagger}P^{\Gamma}_{v_{1}v_{2}}W$,
\begin{equation}
\frac{d_{\Gamma}}{|G|^{2}}\sum_{g,g'\in G}\chi^{\Gamma}(\bar{g})\;
\vcenter{\hbox{%
\begin{tikzpicture}[line width=0.9pt,baseline]
  \tvlat
  \node[op] at (0,0.45) {$L^{g}$};
  \node[op] at (-0.5,0) {$R^{g}$};
  \node[op] at (0,-0.45) {$R^{g}$};
  \node[op] at (0.75,0) {$L^{g'}$};
  \node[op] at (1.5,0.45) {$L^{g}$};
  \node[op] at (2.0,0) {$L^{g}$};
  \node[op] at (1.5,-0.45) {$R^{g}$};
\end{tikzpicture}
}}
\end{equation}

\paragraph{Step 3: Regauging circuit $W$.}

Using $W\,\tilde{A}^{g}\,W^{\dagger}=\sum_{h}T^{h}_{\ell}\otimes \tilde{A}^{hg\bar{h}}$ (Eqs.~\eqref{eq:comm-L-target} and \eqref{eq:comm-R-target}) and $W\,L^{g'}_{\ell}\,W^{\dagger}=\sum_{h}L^{g'}T^{h}_{\ell}\otimes \tilde{A}^{g'}$ (Eq.~\eqref{eq:comm-L-control}), the stabilizer transforms into (sum over $h$ implicit)
\begin{equation}
\frac{d_{\Gamma}}{|G|^{2}}\sum_{g,g'\in G}\chi^{\Gamma}(\bar{g})\;
\vcenter{\hbox{%
\begin{tikzpicture}[line width=0.9pt,baseline]
  \tvlat
  \node[op] at (0,0.45) {$L^{g'h g \bar{h}}$};
  \node[op] at (-0.5,0) {$R^{g'h g \bar{h}}$};
  \node[op] at (0,-0.45) {$R^{g'h g \bar{h}}$};
  \node[op] at (0.75,0) {$L^{g'}T^{h}$};
  \node[op] at (1.5,0.45) {$L^{g}$};
  \node[op] at (2.0,0) {$L^{g}$};
  \node[op] at (1.5,-0.45) {$R^{g}$};
\end{tikzpicture}
}}
\end{equation}
The $g'$-dependent part factors out as the vacuum projector $P_{v_{1}}=\frac{1}{|G|}\sum_{g'}A^{g'}_{v_{1}}$ of Eq.~\eqref{eq:P-trivial} (equivalently, $W$ maps the reset stabilizer $\frac{1}{|G|}\sum_{g'}L^{g'}_{\ell}$ back to $\frac{1}{|G|}\sum_{g'}A^{g'}_{v_{1}}$),
\begin{equation}
\left(\frac{1}{|G|}\sum_{g'} A^{g'}_{v_1}\right) \frac{d_{\Gamma}}{|G|}\sum_{g\in G}\chi^{\Gamma}(\bar{g})\;
\vcenter{\hbox{%
\begin{tikzpicture}[line width=0.9pt,baseline]
  \tvlat
  \node[op] at (0,0.45) {$L^{h g \bar{h}}$};
  \node[op] at (-0.5,0) {$R^{h g \bar{h}}$};
  \node[op] at (0,-0.45) {$R^{h g \bar{h}}$};
  \node[op] at (0.75,0) {$T^{h}$};
  \node[op] at (1.5,0.45) {$L^{g}$};
  \node[op] at (2.0,0) {$L^{g}$};
  \node[op] at (1.5,-0.45) {$R^{g}$};
\end{tikzpicture}
}}
\end{equation}
and by the lemma of Appendix~\ref{app:cf-lemma} the closed ribbon reduces to the single vertex charge projector $P^{\Gamma}_{v_{2}}$ of Eqs.~\eqref{eq:P-charge-type} and \eqref{eq:cf-single-vertex-v2}, so the post-protocol state is stabilized by the vacuum projector at $v_{1}$ together with $P^{\Gamma}_{v_{2}}$.
For a reducible total charge $\Gamma_{1}\otimes\Gamma_{2}$, the genuine stabilizer is the support projector of Eq.~\eqref{eq:cf-Pi-v2}.

\paragraph{Heisenberg picture.}

The Kraus operators of Eq.~\eqref{eq:cf-channel} give $\mathcal{E}^{\dagger}(O)=\sum_{x}K^{\dagger}_{x}OK_{x}=W\bigl(\mathds{1}_{\ell}\otimes\bra{+}W^{\dagger}OW\ket{+}_{\ell}\bigr)W^{\dagger}$ for any operator $O$.
For $O=P^{\Gamma}_{v_{2}}$, Eq.~\eqref{eq:cf-Av2-ungauged} and $\bra{+}R^{g}_{\ell}\ket{+}=1$ give $\bra{+}W^{\dagger}P^{\Gamma}_{v_{2}}W\ket{+}_{\ell}=\frac{d_{\Gamma}}{|G|}\sum_{g}\chi^{\Gamma}(\bar{g})\,\tilde{A}^{g}\tilde{A}'^{g}$.
Conjugating back with $W\tilde{A}^{g}W^{\dagger}=\sum_{h}T^{h}_{\ell}\otimes\tilde{A}^{hg\bar{h}}$ (Eqs.~\eqref{eq:comm-L-target} and \eqref{eq:comm-R-target}), while $\tilde{A}'^{g}$ commutes with $W$, and relabeling $g\to\bar{h}gh$ with $h\to x$ returns the closed ribbon of Eq.~\eqref{eq:cf-closed-ribbon}: $\mathcal{E}^{\dagger}(P^{\Gamma}_{v_{2}})=P^{\Gamma}_{v_{1}v_{2}}$, which is Eq.~\eqref{eq:cf-heisenberg}.
No neutrality assumption is used, so this identity holds on the full Hilbert space, in contrast to the lemma of Appendix~\ref{app:cf-lemma}.

\subsubsection{Mapping of plaquette operators}
\label{app:cf-plaquette}

Number the plaquettes around the direct edge (red) as in Fig.~\ref{fig:cf-plaquette-numbering}; horizontal edges are oriented to the right and vertical edges upward.
\begin{figure}[H]
\centering
\begin{tikzpicture}[scale=1.0, line cap=round, line join=round]
  \draw[lat] (-0.35, 1)--(3.35, 1);
  \draw[lat] (-0.35,-1)--(3.35,-1);
  \draw[lat] (-0.35, 0)--(3.35, 0);
  \foreach \x in {0,1,2,3}{\draw[lat] (\x,-1)--(\x,1);}
  \foreach \xa in {0.5,1.5,2.5}{
    \draw[lat,-{Stealth[length=2mm,width=1.7mm]}] (\xa-0.15,1)--(\xa+0.02,1);
    \draw[lat,-{Stealth[length=2mm,width=1.7mm]}] (\xa-0.15,-1)--(\xa+0.02,-1);}
  \foreach \xa in {0.5,2.5}{\draw[lat,-{Stealth[length=2mm,width=1.7mm]}] (\xa-0.15,0)--(\xa+0.02,0);}
  \foreach \xb in {0,1,2,3}{\foreach \yb in {0.5,-0.5}{
    \draw[lat,-{Stealth[length=2mm,width=1.7mm]}] (\xb,\yb-0.15)--(\xb,\yb+0.02);}}
  \draw[edgered, line width=1.5pt] (1,0)--(2,0);
  \draw[edgered, line width=1.5pt,-{Stealth[length=2.5mm,width=2mm]}] (1,0)--(1.7,0);
  \fill[gray!95!black] (1,0) circle (2pt);
  \fill[gray!95!black] (2,0) circle (2pt);
  \node[font=\small, fill=white, inner sep=1pt] at (1,-0.27) {$v_{1}$};
  \node[font=\small, fill=white, inner sep=1pt] at (2,-0.27) {$v_{2}$};
  \node[pn] at (0.5, 0.5) {3};
  \node[pn] at (1.5, 0.5) {4};
  \node[pn] at (0.5,-0.5) {1};
  \node[pn] at (1.5,-0.5) {2};
\end{tikzpicture}
\caption{Numbering of the four plaquettes affected by the charge fusion protocol; the direct edge $\ell$ is drawn in red.}
\label{fig:cf-plaquette-numbering}
\end{figure}
\newcommand{\ecPlaqLabels}[3]{%
  \node[xl] at (#1,#2+0.65)   {\zgb{kl}};
  \node[xl] at (#1-0.65,#2)   {\zgb{li}};
  \node[xl] at (#1+0.65,#2)   {\zg{jk}};
  \node[xl] at (#1,#2-0.65)   {\zg{ij}};
  \node[pn] at (#1,#2) {#3};}
\newcommand{\ecPlaq}[3]{%
  \draw[skel] (#1-0.65,#2-0.65) rectangle (#1+0.65,#2+0.65);
  \ecPlaqLabels{#1}{#2}{#3}}
\newcommand{\ecPlaqRB}[3]{%
  \draw[skel] (#1-0.65,#2-0.65) rectangle (#1+0.65,#2+0.65);
  \draw[rsk] (#1-0.65,#2-0.65)--(#1+0.65,#2-0.65);
  \ecPlaqLabels{#1}{#2}{#3}}
\newcommand{\ecPlaqRT}[3]{%
  \draw[skel] (#1-0.65,#2-0.65) rectangle (#1+0.65,#2+0.65);
  \draw[rsk] (#1-0.65,#2+0.65)--(#1+0.65,#2+0.65);
  \ecPlaqLabels{#1}{#2}{#3}}
\newcommand{\ecPlaqNB}[3]{%
  \draw[skel] (#1-0.65,#2-0.65) rectangle (#1+0.65,#2+0.65);
  \node[xl] at (#1,#2+0.65)   {\zgb{kl}};
  \node[xl] at (#1-0.65,#2)   {\zgb{li}};
  \node[xl] at (#1+0.65,#2)   {\zg{ik}};
  \node[pn] at (#1,#2) {#3};}
\newcommand{\ecPlaqNT}[3]{%
  \draw[skel] (#1-0.65,#2-0.65) rectangle (#1+0.65,#2+0.65);
  \node[xl] at (#1-0.65,#2)   {\zgb{li}};
  \node[xl] at (#1+0.65,#2)   {\zg{jl}};
  \node[xl] at (#1,#2-0.65)   {\zg{ij}};
  \node[pn] at (#1,#2) {#3};}
\newcommand{\ecPlaqFour}[7]{%
  \draw[skel] (#1-0.65,#2-0.65) rectangle (#1+0.65,#2+0.65);
  \node[xl] at (#1,#2+0.65)   {#4};
  \node[xl] at (#1-0.65,#2)   {#5};
  \node[xl] at (#1+0.65,#2)   {#6};
  \node[xl] at (#1,#2-0.65)   {#7};
  \node[pn] at (#1,#2) {#3};}
\newcommand{\ecPlaqExt}[5]{%
  \draw[rsk] (#1+#3*0.65,#2+#4*0.65) -- (#1+#3*1.95,#2+#4*0.65);
  \node[xl] at (#1+#3*1.3,#2+#4*0.65) {#5};}

A plaquette operator is the cyclic contraction of the group-valued $\mathcal{Z}^{\Gamma}_{ij}$ of Eq.~\eqref{eq:Z-def} around the plaquette ($\bar{\mathcal{Z}}^{\Gamma}_{ij}$ of Eq.~\eqref{eq:Zbar-def} on edges traversed against their orientation), $\chi^{\Gamma}(h_{p})=\sum_{ijkl}\mathcal{Z}^{\Gamma}_{ij}\mathcal{Z}^{\Gamma}_{jk}\bar{\mathcal{Z}}^{\Gamma}_{kl}\bar{\mathcal{Z}}^{\Gamma}_{li}$ (bottom, right, top, left), i.e. the Wilson character operator of the plaquette holonomy in the MPO form of Eq.~\eqref{eq:Be-MPO}.
$\chi^{\Gamma}(h_{p})$ is not itself a projector: it has eigenvalue $d_{\Gamma}$ on flat plaquettes and $\chi^{\Gamma}(h_{p})^{\dagger}=\chi^{\bar{\Gamma}}(h_{p})$; the flatness projector is $B^{e}_{p}=\frac{1}{|G|}\sum_{\Gamma}d_{\Gamma}\chi^{\Gamma}(h_{p})$ of Eq.~\eqref{eq:Be-MPO}, and since it and every flux class projector $B^{C}_{p}$ of Eq.~\eqref{eq:BC-MPO} are linear combinations of the $\chi^{\Gamma}(h_{p})$, tracking the character operators tracks them all.
The circuit acts on the up, left and down edges of $v_{1}$, so only plaquettes 1--4 are affected.
Initially,
\begin{center}
\begin{tikzpicture}[line cap=round, line join=round]
  \ecPlaq{0}{1.05}{3}    \ecPlaqRB{3.0}{1.05}{4}
  \ecPlaq{0}{-1.05}{1}   \ecPlaqRT{3.0}{-1.05}{2}
\end{tikzpicture}
\end{center}

\paragraph{Step 1: Ungauging circuit $W^{\dagger}$.}

The gates are $C_{\ell}L^{\dagger}_{\text{up}}$, $C_{\ell}R^{\dagger}_{\text{left}}$, $C_{\ell}R^{\dagger}_{\text{down}}$, controlled by the direct edge $\ell$.
They are the inverses of the controlled multiplications of Appendix~\ref{app:ctrl-diag}, so conjugating a stabilizer by them applies the identities of that appendix from right to left: a tensor on the control is invariant (Appendix~\ref{app:Z-on-control}), a single tensor on the target splits onto the control and the target by the absorption moves Eqs.~\eqref{eq:CL-absorb-Z}--\eqref{eq:CR-absorb-Zbar} (Table~\ref{tab:absorption}) read in reverse, and a contracted pair straddling the control and the target retracts onto the target by the splitting moves Eqs.~\eqref{eq:CL-Z}--\eqref{eq:CR-Zbar} (Table~\ref{tab:splitting}) read in reverse.

\noindent\textbf{(a) $C_{\ell}L^{\dagger}_{\text{up}}$.} Plaquette 4 carries $\mathcal{Z}^{\Gamma}_{ij}$ on $\ell$ and $\bar{\mathcal{Z}}^{\Gamma}_{li}$ on the up edge; by Eq.~\eqref{eq:CL-Zbar} read in reverse the pair retracts onto the up edge, $\sum_{i}\bar{\mathcal{Z}}^{\Gamma}_{li}\big|_{\text{up}}\mathcal{Z}^{\Gamma}_{ij}\big|_{\ell}\to\bar{\mathcal{Z}}^{\Gamma}_{lj}\big|_{\text{up}}$, so plaquette 4 loses its support on $\ell$.
Plaquette 3 carries $\mathcal{Z}^{\Gamma}_{jk}$ on the up edge and nothing on $\ell$; by Eq.~\eqref{eq:CL-absorb-Z} read in reverse it splits into $\sum_{m}\mathcal{Z}^{\Gamma}_{jm}\big|_{\ell}\mathcal{Z}^{\Gamma}_{mk}\big|_{\text{up}}$, extending plaquette 3 onto $\ell$:
\begin{center}
\begin{tikzpicture}[line cap=round, line join=round]
  \ecPlaqFour{0}{1.05}{3}{\zgb{kl}}{\zgb{li}}{\zg{mk}}{\zg{ij}}
  \ecPlaqExt{0}{1.05}{1}{-1}{\zg{jm}}
  \ecPlaqNB{3.0}{1.05}{4}
  \ecPlaq{0}{-1.05}{1}   \ecPlaqRT{3.0}{-1.05}{2}
\end{tikzpicture}
\end{center}

\noindent\textbf{(b) $C_{\ell}R^{\dagger}_{\text{left}}$.} Plaquette 3 carries $\mathcal{Z}^{\Gamma}_{ij}$ on the left edge and the extension $\mathcal{Z}^{\Gamma}_{jm}$ on $\ell$; by Eq.~\eqref{eq:CR-Z} read in reverse the pair retracts onto the left edge, $\sum_{j}\mathcal{Z}^{\Gamma}_{ij}\big|_{\text{left}}\mathcal{Z}^{\Gamma}_{jm}\big|_{\ell}\to\mathcal{Z}^{\Gamma}_{im}\big|_{\text{left}}$, which removes the extension: plaquette 3 is restored.
Plaquette 1 carries $\bar{\mathcal{Z}}^{\Gamma}_{kl}$ on the left edge; by Eq.~\eqref{eq:CR-absorb-Zbar} read in reverse it splits into $\sum_{m}\mathcal{Z}^{\Gamma}_{km}\big|_{\ell}\bar{\mathcal{Z}}^{\Gamma}_{ml}\big|_{\text{left}}$, extending plaquette 1 onto $\ell$:
\begin{center}
\begin{tikzpicture}[line cap=round, line join=round]
  \ecPlaq{0}{1.05}{3}    \ecPlaqNB{3.0}{1.05}{4}
  \ecPlaqFour{0}{-1.05}{1}{\zgb{ml}}{\zgb{li}}{\zg{jk}}{\zg{ij}}
  \ecPlaqExt{0}{-1.05}{1}{1}{\zg{km}}
  \ecPlaqRT{3.0}{-1.05}{2}
\end{tikzpicture}
\end{center}

\noindent\textbf{(c) $C_{\ell}R^{\dagger}_{\text{down}}$.} Plaquette 1 carries $\mathcal{Z}^{\Gamma}_{jk}$ on the down edge and the extension $\mathcal{Z}^{\Gamma}_{km}$ on $\ell$; by Eq.~\eqref{eq:CR-Z} read in reverse the pair retracts onto the down edge, $\sum_{k}\mathcal{Z}^{\Gamma}_{jk}\big|_{\text{down}}\mathcal{Z}^{\Gamma}_{km}\big|_{\ell}\to\mathcal{Z}^{\Gamma}_{jm}\big|_{\text{down}}$, which restores plaquette 1.
Plaquette 2 carries $\bar{\mathcal{Z}}^{\Gamma}_{kl}$ on $\ell$ and $\bar{\mathcal{Z}}^{\Gamma}_{li}$ on the down edge; by Eq.~\eqref{eq:CR-Zbar} read in reverse the pair retracts onto the down edge, $\sum_{l}\bar{\mathcal{Z}}^{\Gamma}_{kl}\big|_{\ell}\bar{\mathcal{Z}}^{\Gamma}_{li}\big|_{\text{down}}\to\bar{\mathcal{Z}}^{\Gamma}_{ki}\big|_{\text{down}}$, so plaquette 2 loses its support on $\ell$:
\begin{center}
\begin{tikzpicture}[line cap=round, line join=round]
  \ecPlaq{0}{1.05}{3}    \ecPlaqNB{3.0}{1.05}{4}
  \ecPlaq{0}{-1.05}{1}   \ecPlaqNT{3.0}{-1.05}{2}
\end{tikzpicture}
\end{center}

The character operators of the flanking plaquettes 4 and 2 have become the characters of the three-edge holonomy with no support on the direct edge (so $B^{e}_{p}$ becomes the three-edge flatness condition), while plaquettes 3 and 1 were transiently extended onto the direct edge and are back to their original form.

\paragraph{Step 2: Reset the direct edge.}

None of the four operators has support on $\ell$, so they pass through the replacement channel unchanged.

\paragraph{Step 3: Regauging circuit $W$.}

The gates $C_{\ell}L_{\text{up}}$, $C_{\ell}R_{\text{left}}$, $C_{\ell}R_{\text{down}}$ retrace (a)--(c) with the identities of Appendix~\ref{app:ctrl-diag} now read forwards: $C_{\ell}L_{\text{up}}$ regrows the support of plaquette 4 on $\ell$ by Eq.~\eqref{eq:CL-Zbar} and extends plaquette 3 by Eq.~\eqref{eq:CL-Z}, $C_{\ell}R_{\text{left}}$ restores 3 by Eq.~\eqref{eq:CR-absorb-Z} and extends 1 by Eq.~\eqref{eq:CR-Zbar}, and $C_{\ell}R_{\text{down}}$ restores 1 by Eq.~\eqref{eq:CR-absorb-Z} and regrows plaquette 2 by Eq.~\eqref{eq:CR-Zbar}; the factor emitted onto $\ell$ is now $\bar{\mathcal{Z}}^{\Gamma}$ where it was $\mathcal{Z}^{\Gamma}$ in Step 1, and vice versa.
All plaquette operators return to their original form.

\section{Flux fusion protocol: derivations}
\label{app:ff}

This appendix derives the operator mappings reported in Sec.~\ref{subsec:flux-fusion} for the flux fusion protocol.
Appendix~\ref{app:M-gauging} establishes two facts (F1) and (F2) about the gauging circuit $M=U_{c}U_{b}U_{a}$ of Eq.~\eqref{eq:M-gauging} and writes the channel $\Phi$ of Eq.~\eqref{eq:flux-fusion-channel} in the Heisenberg form of Eq.~\eqref{eq:ff-channel-dual}, with the reset rule of Eq.~\eqref{eq:ff-reset-rule}.
The flux projectors $B^{e}_{p}$, $B^{x}_{p,v}$ and $B^{C}_{p}$ are the MPO contractions of Eqs.~\eqref{eq:Be-MPO}, \eqref{eq:Bg-MPO} and \eqref{eq:BC-MPO} of the group-valued tensors $\mathcal{Z}^{\Gamma}_{ij}$ and $\bar{\mathcal{Z}}^{\Gamma}_{ij}$ of Eqs.~\eqref{eq:Z-def} and \eqref{eq:Zbar-def}, drawn on the skeleton of the two plaquettes with the shared dual edge in red.
Conjugating a stabilizer by a gate $U$ maps it to $UOU^{\dagger}$, so the splitting and absorption moves of Appendix~\ref{app:ctrl-diag} (Tables~\ref{tab:splitting} and \ref{tab:absorption}) are applied forwards for the un-inverted gates $CR$ and read in reverse for the inverted gates $CR^{\dagger}$, as in Appendix~\ref{app:cf-plaquette}; a tensor on a control is invariant in either case (Appendix~\ref{app:Z-on-control}), which is (F1) in tensor language.

\paragraph{Based holonomies.}
As in Sec.~\ref{subsec:flux-fusion}, flux labels are based at the bottom vertex $u$ of the dual edge $d$, and $g_{d}$ denotes the group element on $d$.
Traversed counterclockwise from $u$, the perimeter of $p_{1}\cup p_{2}$ is the loop around $p_{2}$ followed by the loop around $p_{1}$, the two traversals of $d$ cancelling, so the perimeter holonomy does not involve $g_{d}$ and the labels based at $u$ multiply, $h_{p_{1}p_{2},u}=h_{p_{2},u}\,h_{p_{1},u}$.
With $B^{x}_{p_{1}p_{2},u}=\delta_{h_{p_{1}p_{2},u},x}$ the labelled perimeter projector, whose class sum is the closed charge ribbon $B^{C}_{p_{1}p_{2}}$ of Eq.~\eqref{eq:ff-closed-loop}, this reads
\begin{align}
B^{x}_{p_{1},u}\,B^{y}_{p_{2},u}&=B^{x}_{p_{1},u}\,B^{y}_{p_{2},u}\,B^{yx}_{p_{1}p_{2},u},\label{eq:ff-label-composition}\\
\sum_{x,y:\;yx=h}B^{x}_{p_{1},u}\,B^{y}_{p_{2},u}&=B^{h}_{p_{1}p_{2},u}.\nonumber
\end{align}

\subsection{The gauging circuit and the flux fusion channel}
\label{app:M-gauging}

The circuit $M$ of the flux fusion protocol, the inverse of the step-1 circuit $M^{\dagger}$ of Eq.~\eqref{eq:ff-M-dagger}, consists of three controlled multiplications targeting the dual edge $d$, applied in the order (a) $CR$ controlled by the bottom edge of $p_{1}$, (b) $CR^{\dagger}$ controlled by its left edge, (c) $CR^{\dagger}$ controlled by its top edge, so that $M=U_{c}U_{b}U_{a}$ and
\begin{equation}
M=\sum_{a,b,c\in G}T^{a}_{\text{bottom}}T^{b}_{\text{left}}T^{c}_{\text{top}}\otimes R^{\bar{c}\bar{b}a}_{d},\qquad
M\ket{g_{d}}=\ket{g_{d}\,\bar{a}bc},\qquad M^{\dagger}\ket{g_{d}}=\ket{g_{d}\,\bar{c}\bar{b}a}=\ket{h_{p_{1},u}},
\label{eq:M-gauging}
\end{equation}
where $T^{a}$, $T^{b}$, $T^{c}$ are the single-edge projectors on the bottom, left and top edge of $p_{1}$ as in Eq.~\eqref{eq:ff-M-dagger}, so that for group elements $a$, $b$, $c$ on these edges $\bar{c}\bar{b}a$ is the ordered product of the three edges of $p_{1}$ other than $d$ and $h_{p_{1},u}=g_{d}\bar{c}\bar{b}a$ is the flux of $p_{1}$ based at $u$; $M$ acts as the identity on the three outer edges of $p_{2}$.
Two facts about $M$ carry the stabilizer bookkeeping of the protocol.

(F1) $M$ is diagonal on every edge except $d$, so any operator diagonal in the group basis with no support on $d$ commutes with $M$ and $M^{\dagger}$; in tensor language this is the invariance of a $\mathcal{Z}^{\Gamma}$ on a control (Appendix~\ref{app:Z-on-control}).
The restriction to diagonal operators is essential, since for $G=\mathbb{Z}_{2}$ the Pauli $X$ on the bottom edge of $p_{1}$ is mapped to $MX_{\text{bottom}}M^{\dagger}=X_{\text{bottom}}X_{d}$.

(F2) For a permutation $U\ket{g}=\ket{\pi(g)}$ of the group basis one has $Uf(g)U^{\dagger}=f(\pi^{-1}(g))$, hence for any diagonal $F$
\begin{equation}
M^{\dagger}FM=F\big|_{g_{d}\to g_{d}\bar{a}bc},\qquad MFM^{\dagger}=F\big|_{g_{d}\to g_{d}\bar{c}\bar{b}a=h_{p_{1},u}},
\label{eq:M-substitution}
\end{equation}
and in particular $MT^{x}_{d}M^{\dagger}=B^{x}_{p_{1},u}$ and $M^{\dagger}B^{x}_{p_{1},u}M=T^{x}_{d}$: $M^{\dagger}$ writes the flux of $p_{1}$, based at $u$, onto the dual edge alone, and $M$ reads it back.
These identities fix the action of $M$ on group-basis-diagonal operators and identify it as the local step of the $\mathrm{Rep}(G)$ gauging map at $p_{1}$.
In tensor language Eq.~\eqref{eq:M-substitution} is the splitting of a target tensor onto the controls by the rules of Table~\ref{tab:splitting}; applied to $T^{e}_{d}=\frac{1}{|G|}\sum_{\Gamma}d_{\Gamma}\sum_{i}\mathcal{Z}^{\Gamma}_{ii}(d)$, the three gates of $M$ grow the single-edge trace into the closed contraction around $p_{1}$ (Fig.~\ref{fig:M-splitting-chain}).

\newcommand{\ecPone}[2]{%
  \draw[skel] (#1-0.65,#2-0.65) rectangle (#1+0.65,#2+0.65);
  \draw[rsk] (#1+0.65,#2-0.65)--(#1+0.65,#2+0.65);
  \node[pn] at (#1,#2) {$p_{1}$};}
\begin{figure}[H]
\centering
\begin{tikzpicture}[scale=1.25, transform shape, line cap=round, line join=round,
    skel/.style={latgray, line width=0.9pt},
    rsk/.style={edgered, line width=0.9pt},
    xl/.style={opblue, font=\small, fill=white, inner sep=1.2pt},
    pn/.style={latgray, font=\small},
    eq/.style={black, font=\small},
    arr/.style={-{Stealth[length=2mm,width=1.6mm]}, black, line width=0.7pt}]
  \node[eq] at (0,1.05) {$\sum_{i}$};
  \ecPone{0}{0}
  \node[xl] at (0.65,0) {\zg{ii}};
  \draw[arr] (1.1,0)--(1.9,0); \node[eq] at (1.5,0.3) {(a)};
  \node[eq] at (3.0,1.05) {$\sum_{ij}$};
  \ecPone{3.0}{0}
  \node[xl] at (3.0,-0.65) {\zg{ij}};
  \node[xl] at (3.65,0) {\zg{ji}};
  \draw[arr] (4.1,0)--(4.9,0); \node[eq] at (4.5,0.3) {(b)};
  \node[eq] at (6.0,1.05) {$\sum_{ijk}$};
  \ecPone{6.0}{0}
  \node[xl] at (6.0,-0.65) {\zg{ij}};
  \node[xl] at (6.65,0) {\zg{jk}};
  \node[xl] at (5.35,0) {\zgb{ki}};
  \draw[arr] (7.1,0)--(7.9,0); \node[eq] at (7.5,0.3) {(c)};
  \node[eq] at (9.0,1.05) {$\sum_{ijkl}$};
  \ecPone{9.0}{0}
  \node[xl] at (9.0,-0.65) {\zg{ij}};
  \node[xl] at (9.65,0) {\zg{jk}};
  \node[xl] at (9.0,0.65) {\zgb{kl}};
  \node[xl] at (8.35,0) {\zgb{li}};
\end{tikzpicture}
\caption{Splitting chain of the gauging circuit.
The single-edge trace $\sum_{i}\mathcal{Z}^{\Gamma}_{ii}(d)$ on the dual edge is split onto the controls by gates (a), (b), (c) of $M$; the result is the closed counterclockwise contraction of Eq.~\eqref{eq:Be-MPO} around $p_{1}$.}
\label{fig:M-splitting-chain}
\end{figure}
Gate (a) splits $\sum_{i}\mathcal{Z}^{\Gamma}_{ii}(d)$ into $\sum_{ij}\mathcal{Z}^{\Gamma}_{ij}(\text{bottom})\mathcal{Z}^{\Gamma}_{ji}(d)$, gate (b) emits $\bar{\mathcal{Z}}^{\Gamma}(\text{left})$ on the anti-aligned left edge, and gate (c) emits $\bar{\mathcal{Z}}^{\Gamma}(\text{top})$ on the anti-aligned top edge; the result is $B^{e}_{p_{1}}$ of Eq.~\eqref{eq:Be-MPO} with the same prefactor.

\paragraph{The channel.}
The flux fusion protocol is the channel $\Phi$ of Eq.~\eqref{eq:flux-fusion-channel} in Sec.~\ref{subsec:flux-fusion}, with Kraus operators $K_{x}=M\ketbra{e}{x}_{d}M^{\dagger}=L^{\bar{x}}_{d}B^{x}_{p_{1},u}$.
The Kraus form follows from $\ketbra{e}{x}_{d}=L^{\bar{x}}_{d}T^{x}_{d}$, Eq.~\eqref{eq:M-substitution}, and the fact that $M$ acts on $d$ by right multiplications, which commute with $L^{\bar{x}}_{d}$.
No classical register and no outcome-dependent gate beyond the local reset of the dual edge are needed.
Two consequences are used throughout.
First, for any operator $F$ diagonal in the group basis the Heisenberg action is the substitution $\Phi^{*}(F)=F|_{g_{d}\to\bar{a}bc}$ with $\bar{c}\bar{b}a$ the ordered product of the three edges of $p_{1}$ other than $d$ as in Eq.~\eqref{eq:M-gauging}: $M^{\dagger}FM$ substitutes $g_{d}\to g_{d}\bar{a}bc$ by Eq.~\eqref{eq:M-substitution}, the reset adjoint evaluates at $g_{d}=e$ and extends by $\mathds{1}_{d}$, and the final $M$ acts trivially because $g_{d}$ no longer appears; the value $\bar{a}bc$ is the unique dual-edge value with $h_{p_{1},u}=e$.
Second, on group-basis configurations $\Phi(\ketbra{\{g\}}{\{g'\}})=\delta_{h_{p_{1},u}(g),h_{p_{1},u}(g')}\ketbra{f(g)}{f(g')}$, where $f$ replaces $g_{d}$ by $\bar{a}bc$: coherence between different based source labels never survives, and everything else is carried along isometrically.
In the Heisenberg picture
\begin{equation}\label{eq:ff-channel-dual}
\Phi^{*}(O)=M\big[\mathds{1}_{d}\otimes\bra{e}_{d}\,M^{\dagger}OM\,\ket{e}_{d}\big]M^{\dagger},
\end{equation}
where $\bra{e}_{d}\cdot\ket{e}_{d}$ is the partial matrix element on the dual edge.
On the tensors of a diagram it acts by Eqs.~\eqref{eq:Z-def} and \eqref{eq:Zbar-def} as
\begin{equation}\label{eq:ff-reset-rule}
\bra{e}\mathcal{Z}^{\Gamma}_{ij}\ket{e}=\bra{e}\bar{\mathcal{Z}}^{\Gamma}_{ij}\ket{e}=\delta_{ij},\qquad
\bra{e}T^{x}\ket{e}=\delta_{x,e}:
\end{equation}
evaluating the dual-edge tensor at $e$ joins its two bond indices, i.e.\ removes the tensor from the diagram and connects its two neighbours directly, while a single-edge projector on $d$ is replaced by its value at $e$.
Together with (F1), (F2) and the splitting and absorption moves of Appendix~\ref{app:ctrl-diag}, Eq.~\eqref{eq:ff-reset-rule} reduces every Heisenberg-picture computation $\Phi^{*}(O)$ to moves on MPO diagrams.

\subsection{Lemma: closed charge ribbon can be deformed using neutrality}
\label{app:ff-lemma}

Given a state $\ket{\psi}$ with no flux at plaquette $p_{1}$, we show that the closed charge ribbon that encircles both plaquettes acts on $\ket{\psi}$ exactly as the class projector of $p_{2}$ alone.
No assumption on the flux of $p_{2}$ is needed; in particular, for such a state having flux class $C$ at $p_{2}$ is the same as the two plaquettes having total flux class $C$.
We start with the class projector (to the conjugacy class $C$) for the region $p_{1}\cup p_{2}$, which is the closed charge ribbon of Eq.~\eqref{eq:ff-closed-loop},
\begin{equation}\label{eq:ff-lemma-closed-ribbon}
B^{C}_{p_{1}p_{2}}=\frac{|C|}{|G|}\sum_{\Gamma}\bar{\chi}^{\Gamma}(C)\sum_{ijklmn}\;
\vcenter{\hbox{%
\begin{tikzpicture}[line cap=round, line join=round, baseline]
  \dmlat \dmlab
  \node[xl] at (-0.75,-0.75) {\zg{ij}};
  \node[xl] at (0.75,-0.75)  {\zg{jk}};
  \node[xl] at (1.5,0)       {\zg{kl}};
  \node[xl] at (0.75,0.75)   {\zgb{lm}};
  \node[xl] at (-0.75,0.75)  {\zgb{mn}};
  \node[xl] at (-1.5,0)      {\zgb{ni}};
\end{tikzpicture}
}}
\end{equation}
This operator is a closed \textit{charge} ribbon: it transports a charge around the two plaquettes, and since charges braid nontrivially with fluxes, the closed ribbon detects the total flux class enclosed by it, which is exactly why weighting it by the class character $\bar{\chi}^{\Gamma}(C)$ gives the flux projector of the region.
Since there is no flux at $p_{1}$, we also have the flatness condition at $p_{1}$, with $B^{e}_{p_{1}}$ the trivial flux projector of Eq.~\eqref{eq:Be-MPO},
\begin{equation}\label{eq:ff-flatness}
B^{e}_{p_{1}}\ket{\psi}=\ket{\psi},\qquad
B^{e}_{p_{1}}=\frac{1}{|G|}\sum_{\Gamma}d_{\Gamma}\sum_{ijkl}\;
\vcenter{\hbox{%
\begin{tikzpicture}[line cap=round, line join=round, baseline]
  \dmlat \dmlab
  \node[xl] at (-0.75,-0.75) {\zg{ij}};
  \node[xl] at (0,0)         {\zg{jk}};
  \node[xl] at (-0.75,0.75)  {\zgb{kl}};
  \node[xl] at (-1.5,0)      {\zgb{li}};
\end{tikzpicture}
}}
\end{equation}

\begin{lemma}[Deforming a closed charge ribbon using flux neutrality]
Using the flatness condition of Eq.~\eqref{eq:ff-flatness}, the class projector over the two plaquettes acts on $\ket{\psi}$ as that of only $p_{2}$, $B^{C}_{p_{1}p_{2}}\ket{\psi}=B^{C}_{p_{2}}\ket{\psi}$, or as an operator identity
\begin{equation}\label{eq:ff-lemma}
B^{e}_{p_{1}}\,B^{C}_{p_{1}p_{2}}=B^{e}_{p_{1}}\,B^{C}_{p_{2}},
\end{equation}
for every conjugacy class $C$, where $B^{C}_{p_{2}}$ is the class projector of Eq.~\eqref{eq:BC-MPO} at $p_{2}$,
\begin{equation}\label{eq:ff-lemma-single-plaquette}
B^{C}_{p_{2}}=\frac{|C|}{|G|}\sum_{\Gamma}\bar{\chi}^{\Gamma}(C)\sum_{jklm}\;
\vcenter{\hbox{%
\begin{tikzpicture}[line cap=round, line join=round, baseline, xl/.append style={fill=thmshade}]
  \dmlat \dmlab
  \node[xl] at (0.75,-0.75)  {\zg{jk}};
  \node[xl] at (1.5,0)       {\zg{kl}};
  \node[xl] at (0.75,0.75)   {\zgb{lm}};
  \node[xl] at (0,0)         {\zgb{mj}};
\end{tikzpicture}
}}
\end{equation}
\end{lemma}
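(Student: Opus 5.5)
We work in the group element basis, where every operator in the identity \eqref{eq:ff-lemma} is diagonal; this is the flux-side counterpart of the charge lemma of Appendix~\ref{app:cf-lemma}. Instead of inserting gauge transformations $A^{g}_{v_1}$, we use the delta-function forms of the projectors. By Eqs.~\eqref{eq:Be-MPO} and \eqref{eq:BC-MPO}, and by the remark after Eq.~\eqref{eq:BC-MPO} that every class projector is the character sum $\frac{|C|}{|G|}\sum_\Gamma\bar\chi^\Gamma(C)\chi^\Gamma(h)$ of the relevant holonomy, we can write
\[
B^{e}_{p_1}=\delta_{h_{p_1,u},e},\qquad B^{C}_{p_2}=\sum_{c\in C}\delta_{h_{p_2,u},c},\qquad B^{C}_{p_1p_2}=\sum_{c\in C}\delta_{h_{p_1p_2,u},c}.
\]
Because all three are class-function evaluations, the base point of each holonomy is arbitrary, and we take it to be $u$ throughout.

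The key input is the composition rule of Eq.~\eqref{eq:ff-label-composition}: counterclockwise from $u$, the perimeter holonomy factorizes as $h_{p_1p_2,u}=h_{p_2,u}\,h_{p_1,u}$, because the two traversals of the dual edge $d$ cancel. We check this directly on a basis configuration. Read the loop around $p_2$ from $u$, ending with $\bar g_d$, as in Eq.~\eqref{eq:fb-initial-labels}. Then read the loop around $p_1$ from $u$, beginning with $g_d$. The product contains $\bar g_d g_d=e$ in the middle, and the surviving word is exactly the six-edge perimeter read counterclockwise from the bottom edge of $p_1$, up to a cyclic rotation. Since characters are class functions, that rotation does not affect $B^C_{p_1p_2}$.

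With this in hand, evaluate both sides of \eqref{eq:ff-lemma} on any group basis configuration. The left side is $\delta_{h_{p_1,u},e}\,[h_{p_2,u}h_{p_1,u}\in C]$. The right side is $\delta_{h_{p_1,u},e}\,[h_{p_2,u}\in C]$. On the support of the delta, $h_{p_2,u}h_{p_1,u}=h_{p_2,u}$, so the two sides agree for every configuration. Both sides are diagonal, so the operator identity follows, and applying it to $\ket\psi$ with $B^e_{p_1}\ket\psi=\ket\psi$ gives $B^C_{p_1p_2}\ket\psi=B^C_{p_2}\ket\psi$.

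As a diagrammatic cross-check in the paper's MPO language, we could instead substitute the MPO of $B^e_{p_1}$ into the six-edge ribbon and contract the shared edges using the orthogonality relations \eqref{eq:Z-orthogonality}--\eqref{eq:Zbar-orthogonality}. This is the tensor analogue of the relabeling $x\to x\bar g$ in the charge lemma. The only real obstacle is bookkeeping: matching the orientation and barring of $d$ in $B^C_{p_2}$, namely $\bar{\mathcal Z}^\Gamma_{mj}$ on $d$, with the cyclic starting point of the perimeter contraction. The basis-evaluation argument above sidesteps this entirely, so we present it as the proof.
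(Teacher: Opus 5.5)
Your proposal is correct and is essentially the paper's own proof. The paper also opens with this label argument: Eq.~\eqref{eq:ff-label-composition} at $x=e$ gives $B^{e}_{p_{1}}B^{h}_{p_{1}p_{2},u}=B^{e}_{p_{1}}B^{h}_{p_{2},u}$, which is then summed over $h\in C$. It goes on to carry out the MPO version you only sketch: flatness reroutes the top and left tensors of the ribbon along $d$ and the bottom edge, and the doubled bottom-edge tensors then cancel by Eq.~\eqref{eq:Zbar-orthogonality}.
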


\begin{proof}
In terms of labels the claim is one line: Eq.~\eqref{eq:ff-label-composition} with $x=e$ gives $B^{e}_{p_{1}}B^{h}_{p_{1}p_{2},u}=B^{e}_{p_{1}}B^{h}_{p_{2},u}$ for every $h$, and summing over $h\in C$ gives Eq.~\eqref{eq:ff-lemma}.
In tensor language the same move is a deformation of the MPO string across the flux-free plaquette.
On the support of $B^{e}_{p_{1}}$ the holonomy around $p_{1}$ is trivial, $h_{p_{1},u}=g_{d}\bar{c}\bar{b}a=e$ with $a$, $b$, $c$ the group elements on the bottom, left and top edge of $p_{1}$ as in Eq.~\eqref{eq:M-gauging}, hence $\bar{c}\bar{b}=\bar{g}_{d}\bar{a}$: the irrep matrices along the two paths from the top end of $d$ to the bottom-left corner of $p_{1}$ coincide, and the tensors of the loop on the top and left edges may be replaced by tensors on the dual and bottom edges,
\begin{equation}\label{eq:ff-slide-rule}
B^{e}_{p_{1}}\sum_{n}\bar{\mathcal{Z}}^{\Gamma}_{mn}(\text{top})\bar{\mathcal{Z}}^{\Gamma}_{ni}(\text{left})=B^{e}_{p_{1}}\sum_{n}\bar{\mathcal{Z}}^{\Gamma}_{mn}(d)\bar{\mathcal{Z}}^{\Gamma}_{ni}(\text{bottom}).
\end{equation}
The string then runs along the bottom edge of $p_{1}$ twice, once in each direction:
\begin{equation}
B^{e}_{p_{1}}\,\frac{|C|}{|G|}\sum_{\Gamma}\bar{\chi}^{\Gamma}(C)\sum_{ijklmn}\;
\vcenter{\hbox{%
\begin{tikzpicture}[line cap=round, line join=round, baseline]
  \dmlat \dmlab
  \node[xl] at (-0.75,-0.75) {\zgb{ni}\,\zg{ij}};
  \node[xl] at (0.75,-0.75)  {\zg{jk}};
  \node[xl] at (1.5,0)       {\zg{kl}};
  \node[xl] at (0.75,0.75)   {\zgb{lm}};
  \node[xl] at (0,0)         {\zgb{mn}};
\end{tikzpicture}
}}
\end{equation}
The two tensors on the bottom edge contract by the orthogonality relation of Eq.~\eqref{eq:Zbar-orthogonality}, $\sum_{i}\bar{\mathcal{Z}}^{\Gamma}_{ni}\mathcal{Z}^{\Gamma}_{ij}=\delta_{nj}$, and the closed contraction around $p_{2}$ remains,
\begin{equation}
B^{e}_{p_{1}}\,\frac{|C|}{|G|}\sum_{\Gamma}\bar{\chi}^{\Gamma}(C)\sum_{jklm}\;
\vcenter{\hbox{%
\begin{tikzpicture}[line cap=round, line join=round, baseline]
  \dmlat \dmlab
  \node[xl] at (0.75,-0.75)  {\zg{jk}};
  \node[xl] at (1.5,0)       {\zg{kl}};
  \node[xl] at (0.75,0.75)   {\zgb{lm}};
  \node[xl] at (0,0)         {\zgb{mj}};
\end{tikzpicture}
}}
\end{equation}
i.e. $B^{C}_{p_{1}p_{2}}\ket{\psi}=B^{C}_{p_{2}}\ket{\psi}$ with $B^{C}_{p_{2}}$ exactly the class projector on a single plaquette, Eqs.~\eqref{eq:BC-MPO} and \eqref{eq:ff-lemma-single-plaquette}, which is the claim.
The identity holds on the flux-free subspace of $p_{1}$ only; the two projectors are not equal on the full Hilbert space.
The derivation is linear in the class function, so it applies term by term to any sum of class projectors in place of $B^{C}$.
\end{proof}

\subsection{Stabilizer mapping derivations}

\subsubsection{Mapping of closed loop ribbon operator}
\label{app:ff-ribbon}

Hypothesis: the input is stabilized by $B^{C}_{p_{1}p_{2}}$.
It is needed for the stabilizer statements; the Heisenberg identities at the end hold for every input.

\paragraph{Step 1: Ungauging circuit $M^{\dagger}$.}

The three gates of $M^{\dagger}$ all target the dual edge $d$, on which the closed ribbon of Eq.~\eqref{eq:ff-closed-loop} carries no tensor; its tensors on the bottom, left and top edge of $p_{1}$ sit on the controls and are invariant (Appendix~\ref{app:Z-on-control}), and $M^{\dagger}$ does not touch the three outer edges of $p_{2}$ at all.
By (F1) the loop therefore passes through unchanged,
\begin{equation}\label{eq:ff-loop-ungauged}
M^{\dagger}B^{C}_{p_{1}p_{2}}M=\frac{|C|}{|G|}\sum_{\Gamma}\bar{\chi}^{\Gamma}(C)\sum_{ijklmn}\;
\vcenter{\hbox{%
\begin{tikzpicture}[line cap=round, line join=round, baseline]
  \dmlat \dmlab
  \node[xl] at (-0.75,-0.75) {\zg{ij}};
  \node[xl] at (0.75,-0.75)  {\zg{jk}};
  \node[xl] at (1.5,0)       {\zg{kl}};
  \node[xl] at (0.75,0.75)   {\zgb{lm}};
  \node[xl] at (-0.75,0.75)  {\zgb{mn}};
  \node[xl] at (-1.5,0)      {\zgb{ni}};
\end{tikzpicture}
}}
\end{equation}
The dual edge stays out of the closed ribbon; it carries the ungauged flux of $p_{1}$ (Appendix~\ref{app:ff-plaquette}) and enters the ungauged target plaquette projector through the tail of Fig.~\ref{fig:ff-tail}.

\paragraph{Step 2: Reset the dual edge to $\ket{e}$.}

Since the closed ribbon has no support on $d$, the Kraus operators $\ketbra{e}{x}_{d}$ of the reset commute with it, and it remains a stabilizer.
The reset adds the stabilizer
\begin{equation}\label{eq:ff-reset-stabilizer}
T^{e}_{d}=\frac{1}{|G|}\sum_{\Gamma'}d_{\Gamma'}\sum_{i'}\mathcal{Z}^{\Gamma'}_{i'i'}(d),
\end{equation}
the single-edge form of Eq.~\eqref{eq:Be-MPO}, which by (F2) is the statement that $p_{1}$ will carry trivial flux after regauging.
The stabilizer set now contains the product of the two, $T^{e}_{d}\otimes M^{\dagger}B^{C}_{p_{1}p_{2}}M$,
\begin{equation}\label{eq:ff-loop-times-reset}
\frac{|C|}{|G|^{2}}\sum_{\Gamma,\Gamma'}\bar{\chi}^{\Gamma}(C)\,d_{\Gamma'}\sum_{ijklmn}\sum_{i'}\;
\vcenter{\hbox{%
\begin{tikzpicture}[line cap=round, line join=round, baseline]
  \dmlat \dmlab
  \node[xl] at (-0.75,-0.75) {\zg{ij}};
  \node[xl] at (0.75,-0.75)  {\zg{jk}};
  \node[xl] at (1.5,0)       {\zg{kl}};
  \node[xl] at (0.75,0.75)   {\zgb{lm}};
  \node[xl] at (-0.75,0.75)  {\zgb{mn}};
  \node[xl] at (-1.5,0)      {\zgb{ni}};
  \node[xl] at (0,0)         {\zgp{i'i'}};
\end{tikzpicture}
}}
\end{equation}

\paragraph{Step 3: Regauging circuit $M$.}

The loop tensors sit on controls and are again invariant under $M$, while the dual-edge tensor is the target of all three gates and splits onto the controls: gate (a), $CR$ controlled by the bottom edge, sends $\mathcal{Z}^{\Gamma'}_{ij}(d)\to\sum_{m}\mathcal{Z}^{\Gamma'}_{im}(d)\mathcal{Z}^{\Gamma'}_{mj}(\text{bottom})$ by the splitting move of Eq.~\eqref{eq:CR-Z}, and gates (b) and (c), $CR^{\dagger}$ controlled by the left and top edge, send $\mathcal{Z}^{\Gamma'}_{ij}(d)\to\sum_{m}\mathcal{Z}^{\Gamma'}_{im}(d)\bar{\mathcal{Z}}^{\Gamma'}_{mj}(\text{left})$ and likewise with the top edge by the absorption move of Eq.~\eqref{eq:CR-absorb-Z} read in reverse.
Gate by gate, the single-edge trace grows into the closed contraction around $p_{1}$, which is the splitting chain of Fig.~\ref{fig:M-splitting-chain} with $\Gamma'$ in place of $\Gamma$ (the loop tensors ride along unchanged), so that $MT^{e}_{d}M^{\dagger}=B^{e}_{p_{1}}$, which is (F2).
The stabilizer of Eq.~\eqref{eq:ff-loop-times-reset} transforms into
\begin{equation}\label{eq:ff-loop-times-Be}
\frac{|C|}{|G|^{2}}\sum_{\Gamma,\Gamma'}\bar{\chi}^{\Gamma}(C)\,d_{\Gamma'}\sum_{ijklmn}\sum_{i'j'k'l'}\;
\vcenter{\hbox{%
\begin{tikzpicture}[line cap=round, line join=round, baseline]
  \dmlat
  \node[xl] at (-0.75,-0.75) {\zg{ij}\,\zgp{i'j'}};
  \node[xl] at (0.75,-0.75)  {\zg{jk}};
  \node[xl] at (1.5,0)       {\zg{kl}};
  \node[xl] at (0.75,0.75)   {\zgb{lm}};
  \node[xl] at (-0.75,0.75)  {\zgb{mn}\,\zgpb{k'l'}};
  \node[xl] at (-1.5,0)      {\zgb{ni}\,\zgpb{l'i'}};
  \node[xl] at (0,0)         {\zgp{j'k'}};
\end{tikzpicture}
}}
\end{equation}
The $\Gamma'$-dependent part factors out as the trivial flux projector $B^{e}_{p_{1}}$ of Eq.~\eqref{eq:ff-flatness} (equivalently, $M$ maps the reset stabilizer $T^{e}_{d}$ back to $B^{e}_{p_{1}}$), so the stabilizer is $B^{e}_{p_{1}}B^{C}_{p_{1}p_{2}}$, and by the lemma of Appendix~\ref{app:ff-lemma} the closed ribbon reduces to the single plaquette class projector $B^{C}_{p_{2}}$ of Eq.~\eqref{eq:ff-single-plaquette-p2},
\begin{equation}
B^{e}_{p_{1}}\,\frac{|C|}{|G|}\sum_{\Gamma}\bar{\chi}^{\Gamma}(C)\sum_{jklm}\;
\vcenter{\hbox{%
\begin{tikzpicture}[line cap=round, line join=round, baseline]
  \dmlat \dmlab
  \node[xl] at (0.75,-0.75)  {\zg{jk}};
  \node[xl] at (1.5,0)       {\zg{kl}};
  \node[xl] at (0.75,0.75)   {\zgb{lm}};
  \node[xl] at (0,0)         {\zgb{mj}};
\end{tikzpicture}
}}
\end{equation}
Hence the output $\rho'$ satisfies $B^{C}_{p_{2}}\rho'=B^{C}_{p_{2}}B^{e}_{p_{1}}\rho'=B^{C}_{p_{1}p_{2}}B^{e}_{p_{1}}\rho'=\rho'$: the class measured by the loop around $p_{1}\cup p_{2}$ is measured by $p_{2}$ alone, and $p_{1}$ carries trivial magnetic flux.
For a pair of single plaquette fluxes the genuine stabilizer is the support projector $\Pi_{p_{2}}$ of Eq.~\eqref{eq:ff-Pi-p2}.

\paragraph{Heisenberg identities (every input).}
Eq.~\eqref{eq:ff-channel-dual} and the reset rule \eqref{eq:ff-reset-rule} give the three identities of Eq.~\eqref{eq:ff-heisenberg-loop} directly.
For the loop, $M^{\dagger}B^{C}_{p_{1}p_{2}}M=B^{C}_{p_{1}p_{2}}$ has no tensor on $d$, the partial matrix element leaves it unchanged, and the final conjugation by $M$ is again trivial.
For the source plaquette, $M^{\dagger}B^{e}_{p_{1}}M=T^{e}_{d}$ by (F2) and $\bra{e}T^{e}_{d}\ket{e}=1$, so $\Phi^{*}(B^{e}_{p_{1}})=\mathds{1}$: the output is stabilized by $B^{e}_{p_{1}}$ whatever the input.
For the target plaquette, $M^{\dagger}B^{C}_{p_{2}}M$ is the tail operator of Fig.~\ref{fig:ff-tail}, the closed contraction around the two plaquettes with an additional barred tensor on $d$; the reset rule removes that tensor and joins its neighbours, leaving exactly $B^{C}_{p_{1}p_{2}}\otimes\mathds{1}_{d}$, which is invariant under the final $M$.
The labelled version $\Phi^{*}(B^{x}_{p_{2},u})=B^{x}_{p_{1}p_{2},u}$ follows in the same way, the insertion $\Gamma(\bar{x})$ at $u$ riding along (Eq.~\eqref{eq:ff-tail-label-diagram}).
Since $\Tr[P\rho]=1$ is equivalent to $P\rho=\rho$ for a projector, the first identity is the stabilizer statement under the hypothesis and the expectation-value identity without it.

\subsubsection{Mapping of single plaquette projectors}
\label{app:ff-plaquette}

Hypothesis: the input is stabilized by the single plaquette class projectors $B^{C_{1}}_{p_{1}}$ and $B^{C_{2}}_{p_{2}}$ of Eq.~\eqref{eq:BC-MPO},
\begin{equation}\label{eq:ff-B-p1}
B^{C_{1}}_{p_{1}}=\frac{|C_{1}|}{|G|}\sum_{\Gamma}\bar{\chi}^{\Gamma}(C_{1})\sum_{ijkl}\;
\vcenter{\hbox{%
\begin{tikzpicture}[line cap=round, line join=round, baseline]
  \dmlat \dmlab
  \node[xl] at (-0.75,-0.75) {\zg{ij}};
  \node[xl] at (0,0)         {\zg{jk}};
  \node[xl] at (-0.75,0.75)  {\zgb{kl}};
  \node[xl] at (-1.5,0)      {\zgb{li}};
\end{tikzpicture}
}}
\end{equation}
\begin{equation}\label{eq:ff-B-p2}
B^{C_{2}}_{p_{2}}=\frac{|C_{2}|}{|G|}\sum_{\Gamma}\bar{\chi}^{\Gamma}(C_{2})\sum_{ijkl}\;
\vcenter{\hbox{%
\begin{tikzpicture}[line cap=round, line join=round, baseline]
  \dmlat \dmlab
  \node[xl] at (0.75,-0.75)  {\zg{ij}};
  \node[xl] at (1.5,0)       {\zg{jk}};
  \node[xl] at (0.75,0.75)   {\zgb{kl}};
  \node[xl] at (0,0)         {\zgb{li}};
\end{tikzpicture}
}}
\end{equation}
We track a genuine stabilizer of the state (eigenvalue $+1$ at every step, with no truncation and no rescaling), the product $B^{C_{1}}_{p_{1}}B^{C_{2}}_{p_{2}}$ of the two commuting projectors.
It is the labelled form of the fusion-channel projector $\Pi_{p_{1}p_{2}}$ of Eq.~\eqref{eq:ff-Pi-p1p2}: by Eq.~\eqref{eq:ff-label-composition},
\begin{equation}\label{eq:ff-product-vs-Pi}
B^{C_{1}}_{p_{1}}B^{C_{2}}_{p_{2}}=\sum_{x\in C_{1}}\sum_{y\in C_{2}}B^{x}_{p_{1},u}B^{y}_{p_{2},u}B^{yx}_{p_{1}p_{2},u}
=B^{C_{1}}_{p_{1}}B^{C_{2}}_{p_{2}}\,\Pi_{p_{1}p_{2}},
\end{equation}
since every product $yx$ lies in a class with $a^{C}_{C_{1}C_{2}}\neq0$; so $\Pi_{p_{1}p_{2}}$ stabilizes the input as well, and being supported on the perimeter only it is invariant under every step of the protocol by (F1).
We carry the product because it is built directly from the two plaquette projectors, which makes the reduction to the fusion channels transparent, and because after ungauging it displays the measured label explicitly.

\paragraph{Step 1: Ungauging circuit $M^{\dagger}$, source plaquette.}

With the label based at $u$, the insertion $\Gamma(\bar{x})$ of $B^{x}_{p_{1},u}$ of Eq.~\eqref{eq:Bg-MPO} sits at the corner $u$ between the tensors on the bottom edge and $d$, and $M^{\dagger}=U_{a}^{\dagger}U_{b}^{\dagger}U_{c}^{\dagger}$ applies (c)$^{\dagger}$ ($CR$ controlled by the top edge), then (b)$^{\dagger}$ ($CR$ controlled by the left edge), then (a)$^{\dagger}$ ($CR^{\dagger}$ controlled by the bottom edge), each meeting exactly the pair its absorption rule retracts; the insertion is a $c$-number and rides along:
\begin{figure}[H]
\centering
\newcommand{\ecPoneU}[2]{%
  \draw[skel] (#1-0.65,#2-0.65) rectangle (#1+0.65,#2+0.65);
  \draw[rsk] (#1+0.65,#2-0.65)--(#1+0.65,#2+0.65);
  \fill[latgray] (#1+0.65,#2-0.65) circle (1.6pt);
  \node[latgray, font=\small] at (#1,#2) {$p_{1}$};}
\newcommand{\ins}[2]{\node[il, anchor=north west] at (#1+0.62,#2-0.66) {$\Gamma(\bar{x})_{jj'}$};}
\begin{tikzpicture}[scale=1.2, transform shape, line cap=round, line join=round]
  \node[eq] at (0,1.1) {$\sum_{ijj'kl}$};
  \ecPoneU{0}{0}
  \node[xl] at (-0.15,-0.65) {\zg{ij}};
  \node[xl] at (0.65,0.1) {\zg{j'k}};
  \node[xl] at (0,0.65) {\zgb{kl}};
  \node[xl] at (-0.65,0) {\zgb{li}};
  \ins{0}{0}
  \draw[arr] (1.55,0)--(2.25,0); \node[eq] at (1.9,0.3) {(c)$^{\dagger}$};
  \node[eq] at (3.8,1.1) {$\sum_{ijj'l}$};
  \ecPoneU{3.8}{0}
  \node[xl] at (3.65,-0.65) {\zg{ij}};
  \node[xl] at (4.45,0.1) {\zg{j'l}};
  \node[xl] at (3.15,0) {\zgb{li}};
  \ins{3.8}{0}
  \draw[arr] (5.35,0)--(6.05,0); \node[eq] at (5.7,0.3) {(b)$^{\dagger}$};
  \node[eq] at (7.6,1.1) {$\sum_{ijj'}$};
  \ecPoneU{7.6}{0}
  \node[xl] at (7.45,-0.65) {\zg{ij}};
  \node[xl] at (8.25,0.1) {\zg{j'i}};
  \ins{7.6}{0}
  \draw[arr] (9.15,0)--(9.85,0); \node[eq] at (9.5,0.3) {(a)$^{\dagger}$};
  \node[eq] at (11.4,1.1) {$\sum_{jj'}$};
  \ecPoneU{11.4}{0}
  \node[xl] at (12.05,0.1) {\zg{j'j}};
  \ins{11.4}{0}
\end{tikzpicture}
\caption{Ungauging of the labelled source projector $B^{x}_{p_{1},u}$ of Eq.~\eqref{eq:Bg-MPO}.
The gates (c)$^{\dagger}$, (b)$^{\dagger}$, (a)$^{\dagger}$ absorb the tensors on the top, left and bottom edge of $p_{1}$ onto the dual edge, leaving a single tensor on $d$ with the insertion $\Gamma(\bar{x})$ at the corner $u$; it is the absorption chain inverse to the splitting chain of Fig.~\ref{fig:M-splitting-chain}.}
\label{fig:ff-contraction}
\end{figure}
Gate (c)$^{\dagger}$ is the un-inverted $CR$, so the absorption move of Eq.~\eqref{eq:CR-absorb-Z} applies forwards to the pair on $d$ and the top edge, $\sum_{k}\mathcal{Z}^{\Gamma}_{j'k}(d)\bar{\mathcal{Z}}^{\Gamma}_{kl}(\text{top})\to\mathcal{Z}^{\Gamma}_{j'l}(d)$; gate (b)$^{\dagger}$ applies the same move to the pair on $d$ and the left edge; gate (a)$^{\dagger}$ is $CR^{\dagger}$, so the splitting move of Eq.~\eqref{eq:CR-Z} read in reverse retracts the pair on $d$ and the bottom edge, $\sum_{i}\mathcal{Z}^{\Gamma}_{j'i}(d)\mathcal{Z}^{\Gamma}_{ij}(\text{bottom})\to\mathcal{Z}^{\Gamma}_{j'j}(d)$.
The result is a single tensor on $d$ carrying the insertion, $\frac{1}{|G|}\sum_{\Gamma}d_{\Gamma}\sum_{jj'}[\Gamma(\bar{x})]_{jj'}\mathcal{Z}^{\Gamma}_{j'j}(d)$, whose diagonal entry on $\ket{g}_{d}$ is the character expansion $\frac{1}{|G|}\sum_{\Gamma}d_{\Gamma}\chi^{\Gamma}(\bar{x}g)=\delta_{g,x}$: it is the single-edge projector $T^{x}_{d}$, the labelled form of (F2).
Summed over the class, using $\sum_{x\in C_{1}}\Gamma(\bar{x})=\frac{|C_{1}|}{d_{\Gamma}}\bar{\chi}^{\Gamma}(C_{1})\,\mathds{1}$ as in Eq.~\eqref{eq:BC-MPO},
\begin{equation}\label{eq:ff-source-ungauged}
M^{\dagger}B^{C_{1}}_{p_{1}}M=T^{C_{1}}_{d}:=\sum_{x\in C_{1}}T^{x}_{d}=\frac{|C_{1}|}{|G|}\sum_{\Gamma}\bar{\chi}^{\Gamma}(C_{1})\sum_{i}\;
\vcenter{\hbox{%
\begin{tikzpicture}[line cap=round, line join=round, baseline]
  \dmlat \dmlab
  \node[xl] at (0,0) {\zg{ii}};
\end{tikzpicture}
}}
\end{equation}
i.e. the $C_{1}$ flux is now carried by the dual edge alone.
(For a label based at the bottom-left corner of $p_{1}$ the insertion precedes the tensor on the bottom edge, which then cannot be absorbed, and the ungauged operator retains support on it: the variable written onto the dual edge is the flux based at $u$.)

\paragraph{Step 1: Ungauging circuit $M^{\dagger}$, target plaquette.}

$B^{y}_{p_{2},u}$ carries $\bar{\mathcal{Z}}^{\Gamma}(d)$ on the dual edge, the target of all three gates, whose controls, the bottom, left and top edge of $p_{1}$, carry nothing; the barred-target moves emit one tensor per gate onto its control:
\begin{figure}[H]
\centering
\newcommand{\ecDomTwo}[2]{%
  \draw[skel] (#1-1.3,#2-0.65) rectangle (#1+1.3,#2+0.65);
  \draw[rsk] (#1,#2-0.65)--(#1,#2+0.65);
  \node[xl] at (#1+0.65,#2-0.65) {\zg{ij}};
  \node[xl] at (#1+1.3,#2)       {\zg{jk}};
  \node[xl] at (#1+0.65,#2+0.65) {\zgb{kl}};
  \node[latgray, font=\small] at (#1-0.65,#2) {$p_{1}$};
  \node[latgray, font=\small] at (#1+0.65,#2) {$p_{2}$};}
\begin{tikzpicture}[scale=1.08, transform shape, line cap=round, line join=round]
  \node[eq] at (0,1.05) {$\sum_{ijkl}$};
  \ecDomTwo{0}{0}
  \node[xl] at (0,0.2) {\zgb{li}};
  \draw[arr] (1.75,0)--(2.55,0); \node[eq] at (2.15,0.3) {(c)$^{\dagger}$};
  \node[eq] at (4.3,1.05) {$\sum_{ijklm}$};
  \ecDomTwo{4.3}{0}
  \node[xl] at (3.65,0.65) {\zgb{lm}};
  \node[xl] at (4.3,0.2) {\zgb{mi}};
  \draw[arr] (6.05,0)--(6.85,0); \node[eq] at (6.45,0.3) {(b)$^{\dagger}$};
  \node[eq] at (8.6,1.05) {$\sum_{ijklmn}$};
  \ecDomTwo{8.6}{0}
  \node[xl] at (7.95,0.65) {\zgb{lm}};
  \node[xl] at (7.3,0)    {\zgb{mn}};
  \node[xl] at (8.6,0.2) {\zgb{ni}};
  \draw[arr] (10.35,0)--(11.15,0); \node[eq] at (10.75,0.3) {(a)$^{\dagger}$};
  \node[eq] at (12.9,1.05) {$\sum_{ijklmno}$};
  \ecDomTwo{12.9}{0}
  \node[xl] at (12.25,0.65) {\zgb{lm}};
  \node[xl] at (11.6,0)     {\zgb{mn}};
  \node[xl] at (12.25,-0.65){\zg{no}};
  \node[xl] at (12.9,0.2)  {\zgb{oi}};
\end{tikzpicture}
\caption{Ungauging of the target projector $B^{C_{2}}_{p_{2}}$.
The barred tensor on $d$ is the target of all three gates, each of which emits one tensor onto its control, so the projector grows a tail around $p_{1}$ (common prefactor suppressed).}
\label{fig:ff-tail}
\end{figure}
Gates (c)$^{\dagger}$ and (b)$^{\dagger}$ are un-inverted $CR$ gates and split the barred target tensor by Eq.~\eqref{eq:CR-Zbar}, $\bar{\mathcal{Z}}^{\Gamma}_{li}(d)\to\sum_{m}\bar{\mathcal{Z}}^{\Gamma}_{lm}(\text{top})\bar{\mathcal{Z}}^{\Gamma}_{mi}(d)$ and then $\bar{\mathcal{Z}}^{\Gamma}_{mi}(d)\to\sum_{n}\bar{\mathcal{Z}}^{\Gamma}_{mn}(\text{left})\bar{\mathcal{Z}}^{\Gamma}_{ni}(d)$; gate (a)$^{\dagger}$ is $CR^{\dagger}$ and splits it by the absorption move of Eq.~\eqref{eq:CR-absorb-Zbar} read in reverse, $\bar{\mathcal{Z}}^{\Gamma}_{ni}(d)\to\sum_{o}\mathcal{Z}^{\Gamma}_{no}(\text{bottom})\bar{\mathcal{Z}}^{\Gamma}_{oi}(d)$.
The ungauged projector is the closed contraction around the two plaquettes with an additional barred tensor on $d$ and, for the labelled projector, the insertion $\Gamma(\bar{y})$ at $u$.
Decomposing the dual-edge tensor in the group basis by Eq.~\eqref{eq:Zbar-def}, $\bar{\mathcal{Z}}^{\Gamma}_{oi'}(d)=\sum_{x}T^{x}_{d}\,[\Gamma(\bar{x})]_{oi'}$, the $c$-number $\Gamma(\bar{x})$ sits at $u$ next to $\Gamma(\bar{y})$ and combines into $\Gamma(\bar{x})\Gamma(\bar{y})=\Gamma(\overline{yx})$, the insertion of the loop projector around both plaquettes with label $yx$:
\begin{equation}\label{eq:ff-tail-label-diagram}
M^{\dagger}B^{y}_{p_{2},u}M=\frac{1}{|G|}\sum_{\Gamma}d_{\Gamma}\sum_{\{i\}}\;
\vcenter{\hbox{%
\begin{tikzpicture}[line cap=round, line join=round, baseline]
  \dmlat \dmlab
  \node[xl] at (0.95,-0.75)  {\zg{ij}};
  \node[xl] at (1.5,0)       {\zg{jk}};
  \node[xl] at (0.75,0.75)   {\zgb{kl}};
  \node[xl] at (-0.75,0.75)  {\zgb{lm}};
  \node[xl] at (-1.5,0)      {\zgb{mn}};
  \node[xl] at (-0.95,-0.75) {\zg{no}};
  \node[xl] at (0,0)         {\zgb{oi'}};
  \node[il] at (0,-0.75)     {$\Gamma(\bar{y})_{i'i}$};
\end{tikzpicture}
}}
=\frac{1}{|G|}\sum_{\Gamma}d_{\Gamma}\sum_{x\in G}\sum_{\{i\}}\;
\vcenter{\hbox{%
\begin{tikzpicture}[line cap=round, line join=round, baseline]
  \dmlat \dmlab
  \node[xl] at (0.95,-0.75)  {\zg{ij}};
  \node[xl] at (1.5,0)       {\zg{jk}};
  \node[xl] at (0.75,0.75)   {\zgb{kl}};
  \node[xl] at (-0.75,0.75)  {\zgb{lm}};
  \node[xl] at (-1.5,0)      {\zgb{mn}};
  \node[xl] at (-0.95,-0.75) {\zg{no}};
  \node[xl, text=edgered] at (0,0) {$T^{x}$};
  \node[il] at (0,-0.75)     {$\Gamma(\overline{yx})_{oi}$};
\end{tikzpicture}
}}
\end{equation}
that is,
\begin{equation}\label{eq:ff-tail-label}
M^{\dagger}B^{y}_{p_{2},u}M=\sum_{x\in G}T^{x}_{d}\otimes B^{yx}_{p_{1}p_{2},u},\qquad
M^{\dagger}B^{C_{2}}_{p_{2}}M=\sum_{x\in G}T^{x}_{d}\otimes B^{C_{2}x}_{p_{1}p_{2},u},\qquad
B^{C_{2}x}_{p_{1}p_{2},u}:=\sum_{y\in C_{2}}B^{yx}_{p_{1}p_{2},u}.
\end{equation}
After ungauging, the dual edge holds the source label and the target projector measures the total flux of the two plaquettes relative to it.
Multiplying with the ungauged source projector $T^{C_{1}}_{d}$ of Eq.~\eqref{eq:ff-source-ungauged} restricts the dual-edge label to $C_{1}$, so the ungauged product stabilizer reads
\begin{equation}\label{eq:ff-product-ungauged}
M^{\dagger}B^{C_{1}}_{p_{1}}B^{C_{2}}_{p_{2}}M=\sum_{x\in C_{1}}T^{x}_{d}\otimes B^{C_{2}x}_{p_{1}p_{2},u}
=\frac{1}{|G|}\sum_{\Gamma}d_{\Gamma}\sum_{x\in C_{1}}\sum_{y\in C_{2}}\sum_{\{i\}}\;
\vcenter{\hbox{%
\begin{tikzpicture}[line cap=round, line join=round, baseline]
  \dmlat \dmlab
  \node[xl] at (0.95,-0.75)  {\zg{ij}};
  \node[xl] at (1.5,0)       {\zg{jk}};
  \node[xl] at (0.75,0.75)   {\zgb{kl}};
  \node[xl] at (-0.75,0.75)  {\zgb{lm}};
  \node[xl] at (-1.5,0)      {\zgb{mn}};
  \node[xl] at (-0.95,-0.75) {\zg{no}};
  \node[xl, text=edgered] at (0,0) {$T^{x}$};
  \node[il] at (0,-0.75)     {$\Gamma(\overline{yx})_{oi}$};
\end{tikzpicture}
}}
\end{equation}
Every term is diagonal in the group element basis of $d$ and fixes the dual-edge label, so the operator commutes with each measurement projector $T^{x}_{d}$ term by term; the weighted sum, which is the operator with eigenvalue $+1$ on the state, therefore survives the measurement as a stabilizer.
Note that $M^{\dagger}$ has moved the source label from the tensors around $p_{1}$ onto the dual edge, where it enters the loop projector only through the total label $yx$ of the two plaquettes: on the joint support the dual-edge label lies in $C_{1}$ and the loop label in $C_{2}x$, hence in the classes $C$ with $a^{C}_{C_{1}C_{2}}\neq0$, so
\begin{equation}
\sum_{x\in C_{1}}T^{x}_{d}\otimes B^{C_{2}x}_{p_{1}p_{2},u}
=\Big(\sum_{x\in C_{1}}T^{x}_{d}\otimes B^{C_{2}x}_{p_{1}p_{2},u}\Big)\,\Pi_{p_{1}p_{2}},
\end{equation}
the ungauged form of Eq.~\eqref{eq:ff-product-vs-Pi}.

\paragraph{Step 2a: Measure the dual edge in the group element basis.}

We implement the reset of step 2 as a measurement followed by a flip and track the stabilizer through both parts.
Say we measure the edge to be in the $\ket{x}$ state; outcomes outside $C_{1}$ have probability zero, since $T^{C_{1}}_{d}$ stabilizes the state.
Projecting Eq.~\eqref{eq:ff-product-ungauged} with $T^{x}_{d}$ keeps the terms with the measured label,
\begin{equation}\label{eq:ff-branch-x}
T^{x}_{d}\otimes B^{C_{2}x}_{p_{1}p_{2},u}=\frac{1}{|G|}\sum_{\Gamma}d_{\Gamma}\sum_{y\in C_{2}}\sum_{\{i\}}\;
\vcenter{\hbox{%
\begin{tikzpicture}[line cap=round, line join=round, baseline]
  \dmlat \dmlab
  \node[xl] at (0.95,-0.75)  {\zg{ij}};
  \node[xl] at (1.5,0)       {\zg{jk}};
  \node[xl] at (0.75,0.75)   {\zgb{kl}};
  \node[xl] at (-0.75,0.75)  {\zgb{lm}};
  \node[xl] at (-1.5,0)      {\zgb{mn}};
  \node[xl] at (-0.95,-0.75) {\zg{no}};
  \node[xl, text=edgered] at (0,0) {$T^{x}$};
  \node[il] at (0,-0.75)     {$\Gamma(\overline{yx})_{oi}$};
\end{tikzpicture}
}}
\end{equation}
the loop around both plaquettes measured relative to the outcome: the outcome-$x$ branch is stabilized by the total flux projector whose label is the product of the target label $y\in C_{2}$ with the source label $x$.

\paragraph{Step 2b: Flip the outcome to $\ket{e}$.}

The reset applies $L^{\bar{x}}_{d}$ of Eq.~\eqref{eq:LR-def}, which takes the edge to $\ket{e}$ and the factor $T^{x}_{d}$ of Eq.~\eqref{eq:ff-branch-x} to the reset stabilizer $T^{e}_{d}$ of Eq.~\eqref{eq:ff-reset-stabilizer}, while the loop projector has no support on $d$ and is untouched.
The outcome-$x$ branch is therefore stabilized by $T^{e}_{d}$ together with
\begin{equation}\label{eq:ff-branch-x-reset}
B^{C_{2}x}_{p_{1}p_{2},u}=\frac{1}{|G|}\sum_{\Gamma}d_{\Gamma}\sum_{y\in C_{2}}\sum_{\{i\}}\;
\vcenter{\hbox{%
\begin{tikzpicture}[line cap=round, line join=round, baseline]
  \dmlat \dmlab
  \node[xl] at (0.95,-0.75)  {\zg{ij}};
  \node[xl] at (1.5,0)       {\zg{jk}};
  \node[xl] at (0.75,0.75)   {\zgb{kl}};
  \node[xl] at (-0.75,0.75)  {\zgb{lm}};
  \node[xl] at (-1.5,0)      {\zgb{mn}};
  \node[xl] at (-0.95,-0.75) {\zg{no}};
  \node[il] at (0,-0.75)     {$\Gamma(\overline{yx})_{oi}$};
\end{tikzpicture}
}}
\end{equation}
The label $x$ is discarded, so the branches are mixed and only what is common to all $x\in C_{1}$ is guaranteed: summing the branch stabilizers with their multiplicities gives $\sum_{x\in C_{1}}B^{C_{2}x}_{p_{1}p_{2},u}=\sum_{C}a^{C}_{C_{1}C_{2}}B^{C}_{p_{1}p_{2}}$, whose support projector is $\Pi_{p_{1}p_{2}}$, and $\Pi_{p_{1}p_{2}}$ commutes with every Kraus operator $\ketbra{e}{x}_{d}$.
Stabilizers: $\{\Pi_{p_{1}p_{2}},\,T^{e}_{d}\}$, the closed-loop situation of Appendix~\ref{app:ff-ribbon} with $B^{C}_{p_{1}p_{2}}$ replaced by the sum over the allowed classes.

\paragraph{Step 3: Regauging circuit $M$.}

$\Pi_{p_{1}p_{2}}$ has no support on $d$ and is invariant by (F1), and the reset stabilizer regauges to the trivial flux projector, $MT^{e}_{d}M^{\dagger}=B^{e}_{p_{1}}$, by the splitting chain of Fig.~\ref{fig:M-splitting-chain}.
Stabilizers: $\{\Pi_{p_{1}p_{2}},\,B^{e}_{p_{1}}\}$.
The lemma of Appendix~\ref{app:ff-lemma} is linear in the class function, so it applies term by term, $B^{e}_{p_{1}}\Pi_{p_{1}p_{2}}=B^{e}_{p_{1}}\Pi_{p_{2}}$, and the closed ribbon deforms to
\begin{equation}\label{eq:ff-Pi-p2}
\Pi_{p_{2}}=\sum_{C:\,a^{C}_{C_{1}C_{2}}\neq0}B^{C}_{p_{2}}
=\sum_{C:\,a^{C}_{C_{1}C_{2}}\neq0}\frac{|C|}{|G|}\sum_{\Gamma}\bar{\chi}^{\Gamma}(C)\sum_{jklm}\;
\vcenter{\hbox{%
\begin{tikzpicture}[line cap=round, line join=round, baseline]
  \dmlat \dmlab
  \node[xl] at (0.75,-0.75)  {\zg{jk}};
  \node[xl] at (1.5,0)       {\zg{kl}};
  \node[xl] at (0.75,0.75)   {\zgb{lm}};
  \node[xl] at (0,0)         {\zgb{mj}};
\end{tikzpicture}
}}
\end{equation}
the multiplicity-blind sum of orthogonal flux-class projectors over the fusion channels, with all weights equal to $1$, which establishes the mapping stated in Sec.~\ref{subsec:flux-fusion}.
Output stabilizers: $\{\Pi_{p_{2}},\,B^{e}_{p_{1}}\}$.
If the outcome $x$ is kept, the same steps carry the branch stabilizer of Eq.~\eqref{eq:ff-branch-x-reset} to the labelled single plaquette projector
\begin{equation}\label{eq:ff-branch-x-output}
B^{C_{2}x}_{p_{2},u}=\frac{1}{|G|}\sum_{\Gamma}d_{\Gamma}\sum_{y\in C_{2}}\sum_{\{i\}}\;
\vcenter{\hbox{%
\begin{tikzpicture}[line cap=round, line join=round, baseline]
  \dmlat \dmlab
  \node[xl] at (0.95,-0.75)  {\zg{ij}};
  \node[xl] at (1.5,0)       {\zg{jk}};
  \node[xl] at (0.75,0.75)   {\zgb{kl}};
  \node[xl] at (0,0)         {\zgb{lo}};
  \node[il] at (0,-0.75)     {$\Gamma(\overline{yx})_{oi}$};
\end{tikzpicture}
}}
\end{equation}
by the labelled form $B^{e}_{p_{1}}B^{yx}_{p_{1}p_{2},u}=B^{e}_{p_{1}}B^{yx}_{p_{2},u}$ of the lemma: the target label $y$ and the source label $x$ fuse to $yx$ at $p_{2}$.

\paragraph{Heisenberg identities.}
Applying the reset rule \eqref{eq:ff-reset-rule} to Eq.~\eqref{eq:ff-tail-label} keeps only $x=e$, so $\Phi^{*}(B^{C}_{p_{2}})=B^{C}_{p_{1}p_{2}}$ and $\Phi^{*}(B^{y}_{p_{2},u})=B^{y}_{p_{1}p_{2},u}$ for every input.
The weight of class $C$ in the output is therefore the input's total-flux weight $\Tr[B^{C}_{p_{1}p_{2}}\rho]$; it vanishes identically when $a^{C}_{C_{1}C_{2}}=0$, and an input stabilized by $B^{C}_{p_{1}p_{2}}$ with $a^{C}_{C_{1}C_{2}}\neq0$ gives weight one, so no smaller sum of flux-class projectors than $\Pi_{p_{2}}$ stabilizes all outputs.
In labelled form, an input stabilized by $B^{x_{1}}_{p_{1},u}$ and $B^{x_{2}}_{p_{2},u}$ is stabilized by $B^{x_{2}x_{1}}_{p_{1}p_{2},u}$ by Eq.~\eqref{eq:ff-label-composition}, hence its output by $B^{x_{2}x_{1}}_{p_{2},u}$: the two labels fuse to $x_{2}x_{1}$, and the order matters for non-Abelian $G$.

\subsection{Mapping of the closed dyonic ribbon}
\label{app:ff-dyon}

We follow the fusion projector through $M^\dagger$, reset, and $M$.
The closed ribbon survives every step; reset makes the source plaquette flux free, allowing its conserved total type to be identified at $(p_2,u)$.
We first track a definite fusion channel, then establish the probability-transfer identity for general inputs.

\subsubsection*{The fusion projector and the input state}

We now describe the structure of the dyonic fusion projector shown in the main text.
The shared edge $d$ between the two plaquettes points from $u$ to $w$.
The plaquette $p_1$ is bounded by the edges $b_1$ (bottom), $d$ (right), $t_1$ (top) and $l_1$ (left), the plaquette $p_2$ by $d$ (left), $b_2$ (bottom), $r_2$ (right) and $t_2$ (top), the six perimeter vertices counterclockwise from the bottom-left corner are $v_1,u,v_3,v_4,w,v_6$, and $g_x$ denotes the group element on edge $x$ with $\bar g=g^{-1}$.
We follows the square lattice edge orientation convention, and define
\begin{equation}
 k=\bar g_{t_1}\bar g_{l_1}g_{b_1},\qquad
 a=g_{b_2}g_{r_2}\bar g_{t_2},\qquad
 x=h_{p_1,u}=g_dk,\qquad
 H=h_{p_1p_2,u}=ak=h_{p_2,u}h_{p_1,u}.
 \label{eq:df-holonomies}
\end{equation}
Thus $x$ is the source flux and $H$ is the total flux, both based at $u$.
Importantly, $H$ depends only on the perimeter, not on $d$.

The fusion projector must resolve both $H$ and its centralizer charge.
Let $\sigma$ run counterclockwise around $p_1\cup p_2$, starting at $u$ along $b_2$.
Write $\mathcal P_v$ for the ordered perimeter path product from $u$ to a boundary vertex $v$, so $\mathcal P_u=e$ and $\mathcal P_w=a$.
On each perimeter configuration, define the transported outer-leg action
\begin{equation}
 z_v=\overline{\mathcal P}_v z\mathcal P_v,\qquad
 W_z=\prod_{v\in\partial(p_1\cup p_2)}
          \bigl(A_v^{z_v}\bigr)_{\mathrm{outer}},\qquad
 F_\sigma^{(h,z)}=B^h_{p_1p_2,u}W_z.
 \label{eq:ff-dyonic-ribbon}
\end{equation}
``Outer'' includes every edge leaving the two-plaquette region, including the two outer legs at each corner.
All ten outer legs are shown in the diagrams below.
The perimeter variables are diagonal controls; $W_z$ changes only these outer legs, leaving both the perimeter and $d$ untouched.

Fix $h_0\in C$ and an irrep $\pi$ of $Z(h_0)=\{z:zh_0=h_0z\}$.
For $h=q_hh_0\bar q_h$, set $\chi_\pi^{(h)}(z)=\chi_\pi(\bar q_hzq_h)$ for $z\in Z(h)$.
The choice of $q_h$ does not affect the character.
With $\bar\chi$ denoting complex conjugation, the fusion projector is
\begin{equation}
 \boxed{
 K_\sigma^{(C,\pi)}
   =\sum_{h\in C}B^h_{p_1p_2,u}\,
      \frac{d_\pi}{|Z(h)|}\sum_{z\in Z(h)}
          \bar\chi_\pi^{(h)}(z)W_z.}
 \label{eq:ff-K}
\end{equation}
Remark that $W_zW_y=W_{zy}$, $W_z^\dagger=W_{\bar z}$, and $[W_z,B^h_{p_1p_2,u}]=0$.
The character sum is therefore the centralizer's representation projector; the different $(C,\pi)$ give orthogonal Hermitian projectors resolving the identity~\cite{bombin_family_2008}.

The diagrams use the same blue edge operators and red shared edge as the other stabilizer calculations.
On an edge with label $q$, $\mathcal Z^\Gamma_{ab}\ket q=\Gamma(q)_{ab}\ket q$ and $\bar{\mathcal Z}^\Gamma_{ab}\ket q=\Gamma(\bar q)_{ab}\ket q$.
The insertion $[\Gamma(\bar h)]_{jj'}$, written outside a diagram, opens the magnetic trace at $u$: the loop starts along $b_2$ with index $j'$ and returns along $b_1$ with index $j$.
The blue outer-leg shifts use $z_i=z_{v_i}$ and $z_w=\bar a z a$.
All drawings use the fixed hardware skeleton; $I_d$ explicitly marks an edge outside the operator's support.
Each tensor diagram is a term in the displayed sums, not a separate stabilizer.
Summing over $\Gamma$ and the bond indices gives the fixed-$(h,z)$ operator $F_\sigma^{(h,z)}$, and the centralizer sum in Eq.~\eqref{eq:ff-K} gives the fusion projector.

For the rest of the derivation, abbreviate the fixed projector as $K=K_\sigma^{(C,\pi)}$.
Assume that the five perimeter vertices other than $u$ are neutral, and write
\begin{equation}
 \mathcal N=\{v_1,v_3,v_4,w,v_6\},\qquad
 P_v=\frac1{|G|}\sum_g A_v^g,\qquad
 \Pi_{\mathrm{in}}=\prod_{v\in\mathcal N}P_v.
 \label{eq:df-subspaces}
\end{equation}
The input has $K\rho_0=\rho_0$ and $\Pi_{\mathrm{in}}\rho_0=\rho_0$.
These projectors commute.
No neutrality is required at $u$, although initially charge-free vertices are included in this input class.

\subsubsection*{Step 1: Rewire and isolate the shared edge}

On a group-basis configuration, the circuit acts as
\begin{equation}
 M\ket{g_d}=\ket{g_d\bar k},\qquad
 M^\dagger\ket{g_d}=\ket{g_dk}=\ket{x}_d,\qquad
 \rho_1=M^\dagger\rho_0M.
 \label{eq:df-rewire}
\end{equation}
All other edge labels are unchanged.
Thus the source flux is written onto $d$, while the total flux and the outer-leg actions remain on the rest of the graph.
Indeed, $M$ and the fusion projector share only diagonal perimeter controls, and act on disjoint targets.
Each term of Eq.~\eqref{eq:ff-K} therefore obeys
\begin{equation}
 M^\dagger\bigl(B^h_{p_1p_2,u}W_z\bigr)M
   =I_d\otimes\bigl(B^h_{p_1p_2,u}W_z\bigr),\qquad
 M^\dagger K M=I_d\otimes\widehat K,
 \label{eq:df-step1-projector}
\end{equation}
where the hatted projector acts on all edges except $d$ and has exactly the same character sum as $K$.
The explicit identity factor records the absence of support on the extracted qudit.

In the tensor notation this unchanged retained operator is
\begin{equation}
 I_d\otimes\widehat F_\sigma^{(h,z)}
 =\frac1{|G|}\sum_\Gamma d_\Gamma
       \sum_{ijj'klmn}[\Gamma(\bar h)]_{jj'}\;
       \vcenter{\hbox{\fdDyonBoundary{2}}}.
 \label{eq:dg-step1-tensors}
\end{equation}
The six magnetic tensors are diagonal on the perimeter; all electric shifts lie on outer legs.
Neither has support on the reset target.
This is also the input diagram: rewiring does not move or alter any of these factors.

We also track the neutrality constraints, since they will be needed to identify the final site.
Let $\widetilde A_u^g$ and $\widetilde A_w^g$ omit the factor on $d$ from $A_u^g$ and $A_w^g$.
Under $A_u^g$, $(g_d,k)$ changes to $(gg_d,k\bar g)$, so $x\mapsto gx\bar g$; under $A_w^g$, it changes to $(g_d\bar g,gk)$, so $x$ is unchanged.
Hence
\begin{equation}
 M^\dagger A_u^g M=C_d^g\otimes\widetilde A_u^g,\qquad
 M^\dagger A_w^g M=I_d\otimes\widetilde A_w^g,\qquad
 [M,A_v^g]=0\quad(v=v_1,v_3,v_4,v_6),
 \label{eq:ff-vertex-ungauged}
\end{equation}
with $C_d^g=L_d^gR_d^g$.
The two nontrivial vertex transformations are displayed on the same skeleton:
\begin{equation}
 \begin{aligned}
 A_u^g:\quad
 \vcenter{\hbox{\fdDyonVertex[.85]{0}}}
 &\xrightarrow{\ M^\dagger(\cdot)M\ }
 \vcenter{\hbox{\fdDyonVertex[.85]{1}}},\\[3pt]
 A_w^g:\quad
 \vcenter{\hbox{\fdDyonVertex[.85]{2}}}
 &\xrightarrow{\ M^\dagger(\cdot)M\ }
 \vcenter{\hbox{\fdDyonVertex[.85]{3}}}.
 \end{aligned}
 \label{eq:dg-vertex-diagrams}
\end{equation}
Corner legs, on which these particular operators act trivially, are omitted here.
The corner transformations cancel inside $k$ or have disjoint support.
Thus $M^\dagger\Pi_{\mathrm{in}}M=I_d\otimes\widehat\Pi$, where $\widehat\Pi$ is the product of the four corner vacuum projectors and $|G|^{-1}\sum_g\widetilde A_w^g$.
Both constraints now act only on retained edges: $(I_d\otimes\widehat K)\rho_1=(I_d\otimes\widehat\Pi)\rho_1=\rho_1$.
No disentanglement of $d$ is assumed.

\subsubsection*{Step 2: Reset without changing the fusion label}

Reset discards the old state of $d$ and prepares $\ket{e}$:
\begin{equation}
 \rho_2=T_d^e\otimes\tau,\qquad
 T_d^e=\ketbra{e}{e}_d,\qquad
 \tau=\operatorname{Tr}_d\rho_1.
 \label{eq:df-reset}
\end{equation}
Because the fusion projector has no support on $d$, it passes through the partial trace:
\begin{equation}
 \widehat K\tau
  =\operatorname{Tr}_d\!\left[(I_d\otimes\widehat K)\rho_1\right]
  =\tau,\qquad
 \widehat\Pi\tau=\tau.
 \label{eq:df-reset-projector}
\end{equation}
The new single-edge stabilizer is $T_d^e=|G|^{-1}\sum_{\Gamma'}d_{\Gamma'}\sum_\alpha \mathcal Z^{\Gamma'}_{\alpha\alpha}(d)$.
Its product with each retained ribbon term is
\begin{equation}
 T_d^e\otimes\widehat F_\sigma^{(h,z)}
 =\frac1{|G|^2}\sum_{\Gamma,\Gamma'}d_\Gamma d_{\Gamma'}
    \sum_{ijj'klmn,\alpha}[\Gamma(\bar h)]_{jj'}\;
    \vcenter{\hbox{\fdDyonBoundary{3}}}.
 \label{eq:dg-step2-tensors}
\end{equation}
Only the new $\Gamma'$ trace has appeared on $d$; every tensor of the fusion projector is unchanged.
The $\Gamma'$ indices are independent of the indices on the original $\Gamma$ loop.

The post-reset state is therefore supported on $T_d^e\otimes\widehat K\widehat\Pi$.
The old fusion constraint is unchanged; the new constraint is $T_d^e=1$.
In physical terms, reset removes the source flux, not the total anyon label recorded on the retained graph.
For an input with several fusion channels, the same trace preserves each channel's probability; it need not preserve coherence between channels.

\subsubsection*{Step 3: Restore the graph and identify the local dyon}

Finally $\rho_3=M\rho_2M^\dagger=\Phi(\rho_0)$.
Reversing the rewiring restores the old constraints and turns the reset constraint into trivial source flux:
\begin{equation}
 M(I_d\otimes\widehat K)M^\dagger=K,\qquad
 M(I_d\otimes\widehat\Pi)M^\dagger=\Pi_{\mathrm{in}},\qquad
 M(T_d^e\otimes I)M^\dagger=B^e_{p_1}.
 \label{eq:df-step3-projector}
\end{equation}
In the drawing, $M$ expands the $\Gamma'$ trace around $p_1$ while leaving the original boundary tensors and outer-leg shifts fixed:
\begin{equation}
 B^e_{p_1}F_\sigma^{(h,z)}
 =\frac1{|G|^2}\sum_{\Gamma,\Gamma'}d_\Gamma d_{\Gamma'}
    \sum_{ijj'klmn}\sum_{\alpha\beta\gamma\delta}
       [\Gamma(\bar h)]_{jj'}\;
    \vcenter{\hbox{\fdDyonBoundary{4}}}.
 \label{eq:dg-step3-tensors}
\end{equation}

\ifdefined\dfAfterFThirtyEightLoaded\else
\def\dfAfterFThirtyEightLoaded{1}

\makeatletter
\@ifundefined{color@opblue}
  {\definecolor{opblue}{RGB}{0,122,255}}{}
\@ifundefined{color@latgray}
  {\definecolor{latgray}{RGB}{156,156,156}}{}
\@ifundefined{color@edgered}
  {\definecolor{edgered}{RGB}{250,106,100}}{}
\@ifundefined{color@ribgreen}
  {\definecolor{ribgreen}{RGB}{52,168,83}}{}
\makeatother

\tikzset{
  dfa edge/.style={
    draw=latgray,
    line width=.75pt
  },
  dfa shared/.style={
    draw=edgered,
    line width=1.15pt
  },
  dfa op/.style={
    text=opblue,
    fill=white,
    inner sep=1.1pt,
    font=\small
  },
  dfa label/.style={
    text=latgray!80!black,
    font=\scriptsize,
    inner sep=.8pt
  },
  dfa identity/.style={
    text=latgray,
    font=\small,
    fill=white,
    inner sep=1pt
  },
  dfa flux/.style={
    draw=ribgreen,
    line width=.8pt,
    rounded corners=3pt,
    densely dashed
  },
  dfa fluxlabel/.style={
    text=ribgreen!60!black,
    fill=white,
    font=\scriptsize,
    inner sep=1pt
  }
}

\newcommand{\dfAfterCore}{%
  \draw[dfa edge]
    (-1.55,-.85) rectangle (1.55,.85);
  \draw[dfa shared]
    (0,-.85)--(0,.85);
  \foreach \xx in {-1.55,0,1.55}{
    \foreach \yy in {-.85,.85}{
      \fill[latgray] (\xx,\yy) circle(1.2pt);
    }
  }
  \node[dfa label,anchor=south east]
    at (-.03,-.82) {$u$};
  \node[dfa label,anchor=north east]
    at (-.03,.82) {$w$};
  \node[dfa label]
    at (-.75,-.22) {$p_1$};
}

\newcommand{\dfAfterLegs}{%
  \foreach \xx in {-1.55,0,1.55}{
    \draw[dfa edge] (\xx,-.85)--(\xx,-1.78);
    \draw[dfa edge] (\xx,.85)--(\xx,1.78);
  }
  \foreach \yy in {-.85,.85}{
    \draw[dfa edge] (-2.52,\yy)--(-1.55,\yy);
    \draw[dfa edge] (1.55,\yy)--(2.52,\yy);
  }
  \node[dfa label,anchor=south west]
    at (-1.52,-.82) {$v_1$};
  \node[dfa label,anchor=south east]
    at (1.52,-.82) {$v_3$};
  \node[dfa label,anchor=north east]
    at (1.52,.82) {$v_4$};
  \node[dfa label,anchor=north west]
    at (-1.52,.82) {$v_6$};
}

\newcommand{\dfAfterOuter}{%
  \node[dfa op] at (-2.07,-.85) {$R^{z_1}$};
  \node[dfa op] at (-1.55,-1.37) {$R^{z_1}$};
  \node[dfa op] at (0,-1.37) {$R^{z}$};
  \node[dfa op] at (1.55,-1.37) {$R^{z_3}$};
  \node[dfa op] at (2.07,-.85) {$L^{z_3}$};
  \node[dfa op] at (2.07,.85) {$L^{z_4}$};
  \node[dfa op] at (1.55,1.37) {$L^{z_4}$};
  \node[dfa op] at (0,1.37) {$L^{z_w}$};
  \node[dfa op] at (-1.55,1.37) {$L^{z_6}$};
  \node[dfa op] at (-2.07,.85) {$R^{z_6}$};
}

\newcommand{\dfAfterMagnetic}[1]{%
  \begin{tikzpicture}[
    x=1cm,y=1cm,
    line cap=round,
    line join=round,
    baseline=(current bounding box.center)
  ]
    \fill[ribgreen!5]
      (-1.55,-.85) rectangle (0,.85);
    \dfAfterCore
    \node[dfa label] at (.78,-.23) {$p_2$};
    \node[dfa fluxlabel]
      at (-.77,.22) {$B^e_{p_1}=1$};

    \node[dfa op]
      at (.77,-.85) {$\mathcal Z^\Gamma_{j'k}$};
    \node[dfa op]
      at (1.55,0) {$\mathcal Z^\Gamma_{kl}$};
    \node[dfa op]
      at (.77,.85) {$\bar{\mathcal Z}^\Gamma_{lm}$};

    \ifnum#1=0\relax
      \node[dfa op]
        at (-.77,-.85) {$\mathcal Z^\Gamma_{ij}$};
      \node[dfa op]
        at (-1.55,0) {$\bar{\mathcal Z}^\Gamma_{ni}$};
      \node[dfa op]
        at (-.77,.85) {$\bar{\mathcal Z}^\Gamma_{mn}$};
      \node[dfa identity] at (0,0) {$I_d$};
    \else
      \ifnum#1=1\relax
        \node[dfa op] at (-.77,-.85)
          {$\bar{\mathcal Z}^\Gamma_{ni}
            \mathcal Z^\Gamma_{ij}$};
        \node[dfa op]
          at (0,0) {$\bar{\mathcal Z}^\Gamma_{mn}$};
      \else
        \node[dfa op]
          at (0,0) {$\bar{\mathcal Z}^\Gamma_{mj}$};
      \fi
    \fi

    \path (-2.05,-1.12) rectangle (2.05,1.12);
  \end{tikzpicture}%
}

\newcommand{\dfAfterElectric}[1]{%
  \begin{tikzpicture}[
    x=.9cm,y=.9cm,
    line cap=round,
    line join=round,
    baseline=(current bounding box.center)
  ]
    \dfAfterCore
    \dfAfterLegs
    \draw[dfa flux]
      (.28,-.51) rectangle (1.25,.51);
    \node[dfa fluxlabel]
      at (.77,.19) {$B^h_{p_2,u}$};

    \ifnum#1=0\relax
      \dfAfterOuter
      \node[dfa identity] at (0,0) {$I_d$};
    \else
      \node[dfa op] at (0,-1.37) {$R^z$};
      \ifnum#1=1\relax
        \node[dfa op] at (-.77,-.85) {$L^{\bar z_1}$};
        \node[dfa op] at (.77,-.85) {$R^{\bar z_3}$};
        \node[dfa op] at (0,0) {$R^{\bar z_w}$};

        \foreach \xx/\yy in {
          -2.07/-.85,
          -1.55/-1.37,
          1.55/-1.37,
          2.07/-.85,
          2.07/.85,
          1.55/1.37,
          0/1.37,
          -1.55/1.37,
          -2.07/.85
        }{
          \node[dfa identity] at (\xx,\yy) {$I$};
        }
        \foreach \xx/\yy in {
          -1.55/0,
          -.77/.85,
          .77/.85,
          1.55/0
        }{
          \node[dfa identity] at (\xx,\yy) {$I$};
        }
      \else
        \node[dfa op] at (-.77,-.85) {$R^z$};
        \node[dfa op] at (.77,-.85) {$L^z$};
        \node[dfa op] at (0,0) {$L^z$};
      \fi
    \fi

    \path (-2.65,-1.84) rectangle (2.65,1.84);
  \end{tikzpicture}%
}

\newcommand{\dfAfterLocalTensor}{%
  \begin{tikzpicture}[
    x=1cm,y=.88cm,
    line cap=round,
    line join=round,
    baseline=(current bounding box.center)
  ]
    \dfAfterCore
    \dfAfterLegs
    \node[dfa label] at (.77,-.2) {$p_2$};
    \node[dfa op] at (-.77,-.85) {$R^z$};
    \node[dfa op] at (0,-1.37) {$R^z$};
    \node[dfa op]
      at (.77,-.85) {$\mathcal Z^\Gamma_{j'k}L^z$};
    \node[dfa op]
      at (1.55,0) {$\mathcal Z^\Gamma_{kl}$};
    \node[dfa op]
      at (.77,.85) {$\bar{\mathcal Z}^\Gamma_{lm}$};
    \node[dfa op]
      at (0,0) {$\bar{\mathcal Z}^\Gamma_{mj}L^z$};
  \end{tikzpicture}%
}

\fi

The $\Gamma'$ loop in Eq.~\eqref{eq:dg-step3-tensors} is $B^e_{p_1}$; its indices are independent of the original $\Gamma$ loop.
The final state satisfies
\begin{equation}
  K\rho_3=\rho_3,\qquad
  \Pi_{\mathrm{out}}\rho_3=\rho_3,\qquad
  \Pi_{\mathrm{out}}=B^e_{p_1}\Pi_{\mathrm{in}}.
  \label{eq:df-output-support}
\end{equation}
Two moves turn the boundary ribbon into a single-site projector: $B^e_{p_1}$ moves the magnetic loop to $p_2$, and the five neutral vertices move the electric action to $u$.
These rewrite operators on the output subspace, not add circuit steps.
Work first with $F_\sigma^{(h,z)}=B^h_{p_1p_2,u}W_z$, $z\in Z(h)$; the character sum will be restored at the end.

\paragraph*{1. Move the magnetic loop across the flux-free plaquette.}

The source constraint gives $g_d\bar g_{t_1}\bar g_{l_1}g_{b_1}=e$, hence $\bar g_{t_1}\bar g_{l_1}=\bar g_d\bar g_{b_1}$.
Thus the segment of the magnetic loop along $t_1,l_1$ can be rerouted along $d,b_1$, with both new edges traversed backwards:
\begin{equation}
  B^e_{p_1}\sum_n
    \bar{\mathcal Z}^\Gamma_{mn}(t_1)
    \bar{\mathcal Z}^\Gamma_{ni}(l_1)
  =
  B^e_{p_1}\sum_n
    \bar{\mathcal Z}^\Gamma_{mn}(d)
    \bar{\mathcal Z}^\Gamma_{ni}(b_1).
  \label{eq:dfa-magnetic-slide}
\end{equation}
The loop now traverses $b_1$ in both directions.
Its two tensors cancel, $\sum_i\bar{\mathcal Z}^\Gamma_{ni}(b_1) \mathcal Z^\Gamma_{ij}(b_1)=\delta_{nj}$, leaving the four tensors around $p_2$:

\begin{center}
  \begin{minipage}[c]{.30\linewidth}
    \centering
    \small (a) Original magnetic loop
    \par\smallskip
    \dfAfterMagnetic{0}
  \end{minipage}\hfill
  $\longrightarrow$\hfill
  \begin{minipage}[c]{.30\linewidth}
    \centering
    \small (b) Slide across $p_1$
    \par\smallskip
    \dfAfterMagnetic{1}
  \end{minipage}\hfill
  $\longrightarrow$\hfill
  \begin{minipage}[c]{.30\linewidth}
    \centering
    \small (c) Cancel on $b_1$
    \par\smallskip
    \dfAfterMagnetic{2}
  \end{minipage}

  \par\smallskip
  \begin{minipage}{.97\linewidth}
    \small
    The common factor $B^e_{p_1}|G|^{-1}\sum_\Gamma d_\Gamma [\Gamma(\bar h)]_{jj'}$ and the bond-index sums are implicit.
    $W_z$ and its original perimeter transporters are unchanged and not drawn in this magnetic move.
  \end{minipage}
\end{center}

The unchanged insertion at $u$ gives the labelled receiver projector:
\begin{equation}
  B^h_{p_1p_2,u}\Pi_{\mathrm{out}}
    =B^h_{p_2,u}\Pi_{\mathrm{out}},\qquad
  F_\sigma^{(h,z)}\Pi_{\mathrm{out}}
    =B^h_{p_2,u}W_z\Pi_{\mathrm{out}}.
  \label{eq:df-local-magnetic}
\end{equation}
The magnetic loop is now local; the blue shifts still act on all ten outer legs.

\paragraph*{2. Cancel the outer-leg shifts using the neutral vertices.}

The transport rule $z_v=\mathcal P_v^{-1}z\mathcal P_v$ implies $z_{v'}=q^{-1}z_vq$ on a perimeter edge $v\to v'$ with label $q$.
The following diagram shows how to turn the boundary action into $A_u^z$.
For the cancellation, fix a group-basis configuration $\ket{\mathbf g}$ with $H=h$ and $h_{p_1,u}=e$, evaluate all $z_v$ on that configuration, and use the inverse full vertex transformations
\begin{equation}
  U_z(\mathbf g)
    =\prod_{v\in\mathcal N}A_v^{\bar z_v(\mathbf g)},
  \qquad
  \mathcal N=\{v_1,v_3,v_4,w,v_6\}.
  \label{eq:dfa-compensating-stars}
\end{equation}

\begin{center}
  \begin{minipage}[c]{.30\linewidth}
    \centering
    \small (a) Boundary shifts $W_z$
    \par\smallskip
    \dfAfterElectric{0}
  \end{minipage}\hfill
  $\longrightarrow$\hfill
  \begin{minipage}[c]{.30\linewidth}
    \centering
    \small (b) Apply inverse neutral stars
    \par\smallskip
    \dfAfterElectric{1}
  \end{minipage}\hfill
  $\longrightarrow$\hfill
  \begin{minipage}[c]{.30\linewidth}
    \centering
    \small (c) Read as $A_u^z$
    \par\smallskip
    \dfAfterElectric{2}
  \end{minipage}

  \par\smallskip
  \begin{minipage}{.97\linewidth}
    \small
    In (b), $z_v$ is evaluated on the initial configuration.
    Inverse neutral stars cancel the outer shifts and the four perimeter-edge actions away from $u$, leaving the three internal shifts shown.
    The green projector multiplies the shifts on the left; unlabelled edges in (c) carry the identity.
    These cancellations are detailed below.
  \end{minipage}
\end{center}

There are three elementary cancellations.
On an outer leg at $v\ne u$, the original shift cancels against the inverse star: $L^{\bar z_v}L^{z_v}=I$ or $R^{\bar z_v}R^{z_v}=I$.
On any of the four perimeter edges not incident to $u$, the two inverse endpoint transformations cancel because $\bar z_v q z_{v'}=q$.
Thus only the three internal edges incident to $u$ can still change.
Their transport labels are
\begin{equation}
  z_3=\bar g_{b_2}z g_{b_2},\qquad
  z_1=g_{b_1}z\bar g_{b_1},\qquad
  z_w=\bar g_d z g_d.
  \label{eq:dfa-three-transporters}
\end{equation}
For the second equality use $\mathcal P_{v_1}g_{b_1}=H=h$ and $z\in Z(h)$.
For the third, the right-hand path is $\mathcal P_w=a=Hg_d=hg_d$, where $a=Hg_d$ uses the source-flatness constraint.
This is precisely where trivial flux at $p_1$ is needed to replace the perimeter transport to $w$ by transport along $d$.

The three remaining shifts in panel (b) consequently act as
\begin{equation}
  \begin{aligned}
    b_1:\quad
      &\bar z_1 g_{b_1}=g_{b_1}\bar z
      &&\Longrightarrow\quad R^z_{b_1},\\
    b_2:\quad
      &g_{b_2}z_3=z g_{b_2}
      &&\Longrightarrow\quad L^z_{b_2},\\
    d:\quad
      &g_dz_w=z g_d
      &&\Longrightarrow\quad L^z_d.
  \end{aligned}
  \label{eq:dfa-edge-cancellation}
\end{equation}
Together with the unchanged $R^z_{s_u}$, these are exactly $A_u^z=R^z_{b_1}L^z_dL^z_{b_2}R^z_{s_u}$:
\[
  U_z(\mathbf g)W_z\ket{\mathbf g}
    =A_u^z\ket{\mathbf g}.
\]

To turn this configuration-wise cancellation into an operator identity, apply $\Pi_{\mathrm{in}}$ on the left.
For each fixed configuration the $z_v(\mathbf g)$ are ordinary group elements, so $\Pi_{\mathrm{in}}U_z(\mathbf g)=\Pi_{\mathrm{in}}$.
Moreover, $A_u^z$ and $W_z$ commute with the neutral-vertex projectors: in the latter case the paths and their outer-leg endpoints transform together.
These projectors preserve both $H=h$ and $h_{p_1,u}=e$.
Therefore
\begin{equation}
  B^h_{p_2,u}W_z\Pi_{\mathrm{out}}
    =B^h_{p_2,u}A_u^z\Pi_{\mathrm{out}},
  \qquad z\in Z(h).
  \label{eq:df-boundary-reduction}
\end{equation}
The inverse stars are used only for this fixed-configuration comparison and then absorbed by the neutral projector; no extra operation or gauge fixing is required in the protocol.

\paragraph*{3. Restore the character sum: the local dyon projector.}

Combining the two moves gives, term by term,
\begin{equation}
  F_\sigma^{(h,z)}\Pi_{\mathrm{out}}
    =B^h_{p_2,u}A_u^z\Pi_{\mathrm{out}}.
  \label{eq:dfa-fixed-term-local}
\end{equation}
The resulting local tensor diagram is
\begin{equation}
  B^h_{p_2,u}A_u^z
    =\frac1{|G|}\sum_\Gamma d_\Gamma
       \sum_{jj'klm}[\Gamma(\bar h)]_{jj'}\;
       \vcenter{\hbox{\dfAfterLocalTensor}}.
  \label{eq:dg-local-tensors}
\end{equation}
The magnetic loop closes through $d$ and the shifts form $A_u^z$.
In products such as $\bar{\mathcal Z}^\Gamma_{mj}L^z$, the shift acts first.
This is precisely the term entering the single-site projector
\begin{equation}
  D^{(C,\pi)}_{(p_2,u)}
    =\sum_{h\in C}\frac{d_\pi}{|Z(h)|}
       \sum_{z\in Z(h)}
         \bar\chi_\pi^{(h)}(z)B^h_{p_2,u}A_u^z.
  \label{eq:ff-site-projector}
\end{equation}
The weights are unchanged throughout the two moves.
Summing Eq.~\eqref{eq:dfa-fixed-term-local} with those weights proves
\begin{equation}
  \boxed{
    K_\sigma^{(C,\pi)}\Pi_{\mathrm{out}}
      =D^{(C,\pi)}_{(p_2,u)}\Pi_{\mathrm{out}}.
  }
  \label{eq:ff-localization}
\end{equation}
The local dyon projector therefore stabilizes $\rho_3$.
The source-vacuum and five neutral-vertex constraints remain present; they have been used to rewrite the ribbon, not discarded.

Finally, each Kraus operator commutes with $K_\sigma^{(C,\pi)}$ and maps $\Pi_{\mathrm{in}}$ into $\Pi_{\mathrm{out}}$.
Thus $D^{(C,\pi)}_{(p_2,u)}K_x\Pi_{\mathrm{in}} =K_xK_\sigma^{(C,\pi)}\Pi_{\mathrm{in}}$; multiplying by $K_x^\dagger$ and summing proves
\begin{equation}
  \Phi^*\!\left(D^{(C,\pi)}_{(p_2,u)}\right)
    \Pi_{\mathrm{in}}
    =K_\sigma^{(C,\pi)}\Pi_{\mathrm{in}},\qquad
  \operatorname{Tr}\!\left[
    D^{(C,\pi)}_{(p_2,u)}\Phi(\rho)\right]
    =\operatorname{Tr}\!\left[
      K_\sigma^{(C,\pi)}\rho\right],
  \quad
  \rho=\Pi_{\mathrm{in}}\rho\Pi_{\mathrm{in}}.
  \label{eq:df-restricted-pullback}
\end{equation}
For $C\ne\{e\}$ and nontrivial $\pi$, the outcome is a dyon at $(p_2,u)$; for $C=\{e\}$, the same expression reduces to the pure-charge projector $B^e_{p_2}P_u^\pi$.

\section{Charge-fusion decoherence details}
\label{app:cc-details}

We specify the four-anyon preparation of Sec.~\ref{subsec:cc-example} and show how rewiring exposes the information removed by reset.
The local factorization identifies both the pair-fusion channel and the orthogonal edge record discarded by the protocol.

\subsubsection{State construction and local charges}

Work on a connected, simply connected square-lattice patch, with horizontal edges oriented right and vertical edges up.
The four marked vertices form the $2\times2$ block of Fig.~\ref{fig:cc-setup}, with $v_3,v_4$ directly below $v_1,v_2$.
For $S_3=\langle r,s\mid r^3=s^2=e,\ srs=r^{-1}\rangle$, choose
\begin{equation}
 D(r)=\begin{pmatrix}-\tfrac12&-\tfrac{\sqrt3}{2}\\[2pt]
                      \tfrac{\sqrt3}{2}&-\tfrac12\end{pmatrix},\qquad
 D(s)=\begin{pmatrix}1&0\\0&-1\end{pmatrix},\qquad
 \ket{\mathbf{2};i,j}=\frac1{\sqrt3}\sum_tD(t)_{ij}\ket{t}.
 \label{eq:cc-D}
\end{equation}
Here $L^g\ket{t}=\ket{gt}$ and $R^g\ket{t}=\ket{tg^{-1}}$; $L^g$ acts within the first representation index and $R^g$ within the second.
Fix the first index to $1$ at each marked vertex, and contract the second indices of each pair as
\begin{equation}
 \ket{B_\lambda}_{pq}
 =\frac1{\sqrt2}\sum_{j,k=1}^2(C_\lambda)_{jk}
       \ket{\mathbf{2};1,j}_{v_p}\ket{\mathbf{2};1,k}_{v_q},\qquad
 C_{\mathbf{1}}=I_2,\quad
 C_{\mathbf{1}'}=\begin{pmatrix}0&1\\-1&0\end{pmatrix}.
 \label{eq:cc-bell-main}
\end{equation}
Thus the two choices are the normalized contractions $(\ket{11}+\ket{22})/\sqrt2$ and $(\ket{12}-\ket{21})/\sqrt2$ of the second indices.
At every unmarked vertex $w\in\mathcal U$, prepare $\ket{+}=6^{-1/2}\sum_t\ket{t}$.

To specify the resulting physical edge state, use the normalized vertex-to-edge map
\begin{equation}
 \mathcal G\ket{\{t_v\}}
   =\frac1{\sqrt6}\bigotimes_{u\to v}\ket{t_ut_v^{-1}}_{uv},\qquad
 \ket{\Psi_\lambda}=\mathcal G\ket{\Phi_\lambda},\qquad
 \ket{\Phi_\lambda}=\ket{B_\lambda}_{12}\ket{B_\lambda}_{34}
                    \bigotimes_{w\in\mathcal U}\ket{+}_w.
 \label{eq:cc-input-preparation}
\end{equation}
This gauging map gives us a convenient specification of the input state with four $\mathbf{2}$ charges.

The following identities allow us to see why this state satisfies the desired properties:
\begin{equation}
 D(g)C_\lambda D(g)^T=\chi_\lambda(g)C_\lambda,\qquad
 \mathcal G^\dagger\mathcal G=\frac16\sum_g\bigotimes_v R_v^g,
 \qquad \mathcal G L_v^g=A_v^g\mathcal G
 \label{eq:cc-G-isometry}
\end{equation}

Each pair transforms as $\lambda$ under simultaneous right shifts, so $\bigotimes_vR_v^g\ket{\Phi_\lambda} =\chi_\lambda(g)^2\ket{\Phi_\lambda}=\ket{\Phi_\lambda}$ for $\lambda\in\{\mathbf{1},\mathbf{1}'\}$, where $\mathbf{1}'$ denotes the sign representation: $\chi_{\mathbf{1}}=1$ and $\chi_{\mathbf{1}'}=\operatorname{sgn}$.
Consequently $\mathcal G$ preserves the norms and orthogonality of these inputs; it is not an isometry on arbitrary vertex states.
For $P_{L,v}^\mu=(d_\mu/6)\sum_g\chi_\mu(g^{-1})L_v^g$, the intertwining identity gives
\begin{equation}
 P_{v_i}^{\mathbf{2}}\ket{\Psi_\lambda}
   =\mathcal G P_{L,v_i}^{\mathbf{2}}\ket{\Phi_\lambda}
   =\ket{\Psi_\lambda},\qquad i=1,2,3,4.
 \label{eq:cc-local-proof}
\end{equation}
Unmarked vertices are neutral because $L^g\ket{+}=\ket{+}$, and all plaquettes are flux free because loop products of $t_ut_v^{-1}$ telescope to $e$.
Both pair contractions are changed in the $\mathbf{1}'$ preparation, so no uncompensated charge is left.

\subsubsection{Rewiring and pair-fusion channels}

Let $\ell:v_1\to v_2$ be the shared edge and $W=\sum_x T_\ell^x\widetilde A_1^x$, where $T_\ell^x=\ketbra{x}{x}_\ell$ and $\widetilde A_i^g$ acts on the three outer legs of $v_i$.
The first rewiring step is $W^\dagger$.
It replaces the potential $a$ at $v_1$ by the potential $b$ at $v_2$ on every outer leg, while leaving the shared edge in $x=ab^{-1}$.
With $\mathcal G_{\mathrm m}$ the same normalized vertex-to-edge map on the graph with $v_1$ merged into $v_2$ and $\ell$ removed, this is
\begin{equation}
 \mathcal Q_{12}\ket{a,b}=\ket{ab^{-1}}_\ell\ket{b}_{v_2},\qquad
 W^\dagger\mathcal G=(I_\ell\otimes\mathcal G_{\mathrm m})\mathcal Q_{12}.
 \label{eq:cc-ungauging-map}
\end{equation}
The auxiliary change of variables $\mathcal Q_{12}$ only relabels two registers.
The retained graph includes every other edge and the spectator pair.
The single local contraction
\begin{equation}
 \langle x,b|\mathcal Q_{12}|B_\lambda\rangle
   =\frac{[D(xb)C_\lambda D(b)^T]_{11}}{3\sqrt2}
   =\frac{[D(x)C_\lambda]_{11}}{\sqrt3}
      \frac{\chi_\lambda(b)}{\sqrt6}
 \label{eq:cc-local-contraction}
\end{equation}
therefore factorizes the entire state:
\begin{align}
 W^\dagger\ket{\Psi_\lambda}
  &=\ket{\xi_\lambda}_\ell\ket{\eta_\lambda},&
 \ket{\xi_{\mathbf{1}}}&=\ket{\mathbf{2};1,1},&
 \ket{\xi_{\mathbf{1}'}}&=-\ket{\mathbf{2};1,2},
 \label{eq:cc-factorization}\\
 \ket{\eta_\lambda}
  &=\mathcal G_{\mathrm m}\!\left[
       \ket{\lambda}_{v_2}\ket{B_\lambda}_{34}
       \bigotimes_{w\in\mathcal U}\ket{+}_w\right],&
 \ket{\lambda}&=\frac1{\sqrt6}\sum_b\chi_\lambda(b)\ket{b}.&&
 \label{eq:cc-retained-state}
\end{align}
All factors are normalized: the retained vertex input is globally right invariant, and the extracted states are orthonormal Fourier states.
The minus sign follows from $[D(x)C_{\mathbf{1}'}]_{11}=-D(x)_{12}$.

The transported pair representation and its projectors are
\begin{equation}
 \mathcal A_{12}(g)=\sum_xT_\ell^x
      \widetilde A_1^{xgx^{-1}}\widetilde A_2^g,\qquad
 P_{v_1v_2}^\mu=\frac{d_\mu}{6}\sum_g
      \chi_\mu(g^{-1})\mathcal A_{12}(g).
 \label{eq:cc-pair-representation}
\end{equation}
This is the pair projector of Eq.~\eqref{eq:cf-closed-ribbon}, with the group variable based at $v_2$.
To see this, replace the group variable in that equation by $xgx^{-1}$ and use $\chi_\mu(xg^{-1}x^{-1})=\chi_\mu(g^{-1})$.
Rewiring removes the transport factor: $W^\dagger\mathcal A_{12}(g)W=I_\ell\otimes A_{\mathrm m}^g$, where $A_{\mathrm m}^g=\widetilde A_1^g\widetilde A_2^g$.
Since $A_{\mathrm m}^g\mathcal G_{\mathrm m} =\mathcal G_{\mathrm m}L_{v_2}^g$ and $L^g\ket{\lambda}=\chi_\lambda(g)\ket{\lambda}$, $A_{\mathrm m}^g\ket{\eta_\lambda} =\chi_\lambda(g)\ket{\eta_\lambda}$.
Character orthogonality gives
\begin{equation}
 P_{v_1v_2}^\mu\ket{\Psi_\lambda}
  =\left[\frac{d_\mu}{6}\sum_g
      \chi_\mu(g^{-1})\chi_\lambda(g)\right]\ket{\Psi_\lambda}
  =\delta_{\mu\lambda}\ket{\Psi_\lambda}.
 \label{eq:cc-pair-projector-proof}
\end{equation}
Exchanging the two pairs gives the identical result for $P_{v_3v_4}^\mu$.
Thus both pairs have charge $\lambda$, and their combined charge is vacuum because $\lambda\otimes\lambda=\mathbf{1}$.

This adjacent placement does not require disjoint edge support.
On the connecting edge $v_3\to v_1$, $W$ acts by right multiplication at $v_1$, whereas the spectator action at $v_3$ is a left multiplication; these commute.
The control edge $v_3\to v_4$ of the spectator ribbon is untouched, so
\begin{equation}
 [W,A_{v_3}^g]=[W,A_{v_4}^g]=[W,P_{v_3v_4}^{\mu}]=0.
 \label{eq:cc-spectator-commutation}
\end{equation}
These spectator operators also have no support on the reset edge $\ell$.
Thus their charge labels and pair-fusion probabilities are preserved even for the adjacent placement of the two pairs.
These conserved observables do not determine the coherence between different pair-charge sectors of the joint state, which we track below.

\subsubsection{Reset and loss of coherence}

Reset replaces the extracted qudit by $\ket{+}$, after which $W$ restores the original graph.
Hence
\begin{equation}
 \mathcal E(X)=W\!\left[\ketbra{+}{+}_\ell\otimes
     \operatorname{Tr}_\ell(W^\dagger XW)\right]W^\dagger,\qquad
 \ket{\Omega_\lambda}=W(\ket{+}_\ell\ket{\eta_\lambda}).
 \label{eq:cc-channel-appendix}
\end{equation}
Since $\mathcal Q_{12}(\ket{+}_{v_1}\ket{\lambda}_{v_2}) =\ket{+}_\ell\ket{\lambda}_{v_2}$, the output also has the preparation
\begin{equation}
 \ket{\Omega_\lambda}=\mathcal G\!\left[
   \ket{+}_{v_1}\ket{\lambda}_{v_2}\ket{B_\lambda}_{34}
   \bigotimes_{w\in\mathcal U}\ket{+}_w\right],\qquad
 A_{v_1}^g\ket{\Omega_\lambda}=\ket{\Omega_\lambda},\quad
 A_{v_2}^g\ket{\Omega_\lambda}=\chi_\lambda(g)\ket{\Omega_\lambda}.
 \label{eq:cc-output-preparation}
\end{equation}
Thus $v_1$ is vacuum, $v_2$ carries the fused charge, and both spectator charges remain $\mathbf{2}$ with no magnetic flux.
The two outputs are normalized and have distinct receiver charges.

Substituting Eq.~\eqref{eq:cc-factorization} into the replacement channel gives its full action on the prepared subspace:
\begin{equation}
 \mathcal E\bigl(\ketbra{\Psi_\lambda}{\Psi_\mu}\bigr)
 =\underbrace{\braket{\xi_\mu}{\xi_\lambda}}_{\delta_{\lambda\mu}}
      \ketbra{\Omega_\lambda}{\Omega_\mu}.
 \label{eq:cc-general-cross}
\end{equation}
The diagonal terms retain their unit weight, whereas the cross term vanishes because the discarded states are orthogonal.
This proves Eq.~\eqref{eq:cc-output-main}: the fusion probabilities survive but the relative phase does not.
No explicit measurement or classical readout of $\lambda$ is required; reset transfers the information distinguishing the components to the discarded degrees of freedom.
On the prepared subspace, the channel can equivalently be written with Kraus operators $F_\lambda=\ketbra{\Omega_\lambda}{\Psi_\lambda}$.
It therefore has the same action as a measurement in the fusion basis with the outcome forgotten, followed by the corresponding map to the output states.
In particular, the two orthogonal inputs $(\ket{\Psi_{\mathbf1}}\pm\ket{\Psi_{\mathbf1'}})/\sqrt2$ give the same output mixture, so preserving the fusion probabilities does not preserve an arbitrary encoded state.

The complete dephasing depends on the internal states chosen in Eq.~\eqref{eq:cc-bell-main}.
For example, replace the fixed first-index state at $v_2$ by a normalized vector $u=(u_1,u_2)^T$, leaving the spectator pair unchanged:
\begin{equation*}
 \ket{B_\lambda(u)}_{12}
 =\frac1{\sqrt2}\sum_{i,j,k=1}^2u_i(C_\lambda)_{jk}
   \ket{\mathbf2;1,j}_{v_1}\ket{\mathbf2;i,k}_{v_2}.
\end{equation*}
The same contraction as in Eq.~\eqref{eq:cc-local-contraction} gives the retained state $\ket{\eta_\lambda}$ and a normalized edge state with amplitude $\langle x|\xi_\lambda(u)\rangle=[D(x)C_\lambda u]_1/\sqrt3$.
Schur orthogonality then gives the coherence multiplier
\begin{equation*}
 \braket{\xi_{\mathbf1'}(u)}{\xi_{\mathbf1}(u)}
 =(C_{\mathbf1'}u)^\dagger u
 =-2i\,\operatorname{Im}(u_1^*u_2).
\end{equation*}
The original preparation has $u=(1,0)^T$ and zero overlap.
For $u=(1,i)^T/\sqrt2$, instead, $C_{\mathbf1'}u=iu$, so the two discarded states differ only by a phase and the output is the pure state $\alpha\ket{\Omega_{\mathbf1}}+i\beta\ket{\Omega_{\mathbf1'}}$.
The four local anyon types and the two pair charges are unchanged, but the coherence now survives with a fixed relative phase.

For flux fusion, replace $W$ by the dual-edge rewiring circuit $M$ and $\ket{+}$ by $\ket{e}$.
Suppose that $M^\dagger\ket{\Theta_\lambda}=\ket{\zeta_\lambda}_d\ket{\nu_\lambda}$ with normalized factors, and write $\ket{\Upsilon_\lambda}=M(\ket{e}_d\ket{\nu_\lambda})$ for the corresponding output.
The same partial trace gives
\begin{equation*}
 \Phi\bigl(\ketbra{\Theta_\lambda}{\Theta_\mu}\bigr)
 =\braket{\zeta_\mu}{\zeta_\lambda}
   \ketbra{\Upsilon_\lambda}{\Upsilon_\mu}.
\end{equation*}
Orthogonal discarded states erase that coherence, while a nonzero overlap retains a corresponding cross term, without altering the conserved total-flux probabilities.
If an ungauged component is entangled between the discarded edge and the retained lattice, its cross terms must be evaluated using the full partial trace.
Resolving the full anyon type additionally requires the closed dyonic ribbon, whose identification with the local output type uses the neutral-vertex assumptions of Eq.~\eqref{eq:ff-dyon-claim}.

\section{Ancilla-based moving protocols from graph rewiring}\label{app:ancilla-move}

Graph rewiring also clarifies previous approach to coherently moving non-Abelions with an ancilla, as used to move the $C_{2}$ fluxes of the $S_{3}$ quantum double in Ref.~\cite{lo2026universal}. The idea is to copy the internal state of the anyon into an ancilla, apply the ribbon operator conditioned on the ancilla, and disentangle the ancilla at the end. In order to be able to disentangle the ancilla, this is restricted to states where there is no flux at the site where it is moved to. Otherwise, the ancilla is entangled with the internal state of the remnant anyon.
Figs.~\ref{fig:ancilla-moving-setup} shows the input state, where there is a flux at the leftmost plaquette with flux label $g=a_1 \alpha_1 \bar{t} \bar{\alpha}_0$ based at the vertex $v$, and an unentangled ancilla in state $\ket{e}$. Fig.~\ref{fig:ancilla-moving} shows the circuit to move the flux $g$, using the flux ribbon of Fig.~\ref{fig:gr-flux-ribbon} as a subroutine.

\begin{figure}[h]
\centering
\begin{tikzpicture}[x=0.1\columnwidth, y=0.1\columnwidth, line cap=round, line join=round,
    every node/.style={font=\small},
    oriented/.style 2 args={postaction={decorate, decoration={markings,
        mark=at position #1 with {\arrow[#2]{Stealth[length=1.9mm,width=1.5mm]}}}}},
    lat/.style ={latgray, line width=1.0pt, oriented={0.6}{latgray}},
    dir/.style ={violet,  line width=1.0pt, oriented={0.6}{violet}},
    dual/.style={teal,    line width=1.0pt, oriented={0.6}{teal}},
    anc/.style ={edgered, line width=1.4pt, oriented={0.3}{edgered}},
    flx/.style ={opblue,  line width=0.9pt, oriented={0.45}{opblue}},
    vx/.style  ={circle, fill=black, inner sep=0pt, minimum size=3.6pt},
    pl/.style  ={anchor=south west, inner sep=1pt}]
  \newcommand{\asLat}{%
    \foreach \i in {0,...,4}{\draw[lat] (\i,-0.45)--(\i,0); \draw[lat] (\i,1)--(\i,1.45);}
    \draw[lat] (0,0)--(0,1); \draw[lat] (4,0)--(4,1);
    \draw[lat] (-0.45,1)--(0,1); \foreach \i in {0,...,3}{\pgfmathtruncatemacro{\j}{\i+1}\draw[lat] (\i,1)--(\j,1);} \draw[lat] (4,1)--(4.45,1);
    \draw[lat] (-0.45,0)--(0,0); \draw[lat] (3,0)--(4,0); \draw[lat] (4,0)--(4.45,0);
    \foreach \i in {0,1,2}{\pgfmathtruncatemacro{\j}{\i+1}\draw[dir] (\i,0)--(\j,0);}
    \foreach \i in {1,2,3}{\draw[dual] (\i,0)--(\i,1);}}
  \newcommand{\asFlux}[2]{\draw[flx] plot[smooth, tension=0.7] coordinates {(#1,0) (#1+0.5,0.3) (#1+0.86,0.55) (#1+0.72,0.86) (#1+0.4,0.84) (#1+0.14,0.5) (#1,0)};
    \node[opblue] at (#1+0.5,0.57) {#2};}
  \newcommand{\asAnc}[1]{\draw[anc] (-1.75,0.5) .. controls (-1.6,0.95) and (-0.95,0.9) .. (-0.95,0.5) .. controls (-0.95,0.1) and (-1.6,0.05) .. (-1.75,0.5);
    \node[fill=white, inner sep=1pt] at (-1.35,0.5) {#1};
    \node[vx] at (-1.75,0.5) {};}
  \begin{scope}
    \asLat
    \foreach \i in {1,...,3}{\node[violet, inner sep=1pt] at ({\i-0.5},-0.25) {$a_{\i}$};
      \node[teal, inner sep=1pt] at ({\i+0.28},0.5) {$\alpha_{\i}$};}
    \node[latgray, inner sep=1pt] at (0.5,1.2) {$t$};
    \node[latgray, inner sep=1pt] at (-0.2,0.5) {$l$};
    \asFlux{0}{$g$}
    \node[vx] at (0,0) {}; \node[anchor=north east, inner sep=1pt] at (-0.02,-0.04) {$v$};
    \asAnc{$\ket{e}$}
  \end{scope}
\end{tikzpicture}
\caption{Initial configuration of the ancilla-based moving protocol of Fig.~\ref{fig:ancilla-moving}.
The leftmost plaquette of the lattice of Fig.~\ref{fig:gr-flux-ribbon} carries a flux $g$, marked by the counterclockwise loop (blue) based at its lower-left corner $v$: with the edges oriented as drawn (horizontal edges point right, vertical edges point up), $g=a_1 \alpha_1 \bar{t} \bar{\alpha}_0$ in terms of the group elements on its boundary edges $a_{1}$, $\alpha_{1}$, $t$ and $l$.
The ancilla qudit in the state $\ket{e}$ is an unentangled loop (red) not yet attached to the lattice.}
\label{fig:ancilla-moving-setup}
\end{figure}

\paragraph{Entangling (steps 1a--1d).}
The ancilla qudit in $\ket{e}$ is attached to $v$ as an unentangled loop (Fig.~\ref{fig:ancilla-edges}(a)).
Four controlled multiplications, controlled by the boundary edges of the plaquette and targeting the ancilla, slide the head of the ancilla edge counterclockwise around the plaquette.
The ancilla loop then encloses the flux, and its group element is the flux label: $\ket{e}\mapsto\ket{g}$.

\paragraph{Controlled ribbon (step 2).}
Steps 1--3 of Fig.~\ref{fig:gr-flux-ribbon} attach the dual edges of the ribbon to $v$.
The $L^{g}$ layer is replaced by a $CL^{\dagger}$ gate on each dual edge, controlled by the ancilla.
Each gate slides the tail of a dual edge around the ancilla loop, so the enclosed flux hops across the dual edges into the destination plaquette, which now carries $g$ based at $v$, while the original plaquette is flux-free.

\paragraph{Disentangling (steps 3a--3e).}
In the rewired graph the destination plaquette is a pentagon with corner $v$.
Five controlled multiplications, controlled by its boundary edges and targeting the ancilla, slide the head of the ancilla edge clockwise around it, which multiplies the ancilla by $\bar{g}$: $\ket{g}\mapsto\ket{e}$.
Steps 5--7 of Fig.~\ref{fig:gr-flux-ribbon} restore the lattice.

If the destination plaquette is already occupied, the disentangling step reads the fused flux rather than $g$, and the ancilla remains entangled with the internal state of the fusion product (for the $C_{2}$ fluxes of Ref.~\cite{lo2026universal}, the remnant charge).

\begin{figure}[h]
\centering
\adjustbox{max width=\columnwidth, max totalheight=0.75\textheight}{%
\begin{tikzpicture}[
      line cap=round, line join=round,
      oriented/.style 2 args={postaction={decorate, decoration={markings,
          mark=at position #1 with {\arrow[#2]{Stealth[length=1.9mm,width=1.5mm]}}}}},
      lat/.style    ={latgray, line width=1.1pt},
      lato/.style   ={latgray, line width=1.1pt, oriented={0.55}{latgray}},
      dir/.style    ={violet,  line width=1.1pt, oriented={\amArrPos}{violet}},
      dirarc/.style ={violet,  line width=1.1pt, oriented={0.6}{violet}},
      dual/.style   ={teal,    line width=1.1pt, oriented={\amArrPos}{teal}},
      dualarc/.style={teal,    line width=1.1pt, oriented={0.6}{teal}},
      anc/.style    ={edgered, line width=1.5pt, oriented={#1}{edgered}},
      anc/.default  =0.5,
      rib/.style    ={opblue,  line width=1.7pt},
      gl/.style     ={opblue,  font=\normalsize, fill=white, inner sep=1pt},
      sl/.style     ={black,   font=\normalsize, fill=white, inner sep=1pt},
      fl/.style     ={opblue,  font=\large},
      rule/.style   ={black, line width=0.9pt},
      hd/.style     ={font=\LARGE},
      sn/.style     ={font=\Large},
      sub/.style    ={font=\Large, align=center},
    ]
    \def\rowpitch{3.0}
    \def\stepX{-5.0}
    \def\latX{1.0}
    \def\xleft{-6.9}  \def\xmid{0.0}  \def\xright{5.9}
    \def\mcw{13.6}
    \def\ytop{1.5}
    \def\amArrPos{0.25}
    \newcommand{\amSpect}{\foreach \i in {0,...,4}{\draw[lat] (\i,-0.5)--(\i,0); \draw[lat] (\i,1)--(\i,1.5);}
      \draw[lato] (0,0)--(0,1); \draw[lato] (4,0)--(4,1);
      \draw[lat] (-0.45,1)--(0,1); \draw[lato] (0,1)--(1,1); \draw[lat] (1,1)--(3,1); \draw[lato] (3,1)--(4,1); \draw[lat] (4,1)--(4.45,1);
      \draw[lat] (-0.45,0)--(0,0); \draw[lato] (3,0)--(4,0); \draw[lat] (4,0)--(4.45,0);}
    \newcommand{\amUps}[1]{\ifnum#1<3 \pgfmathtruncatemacro{\amFirst}{#1+1}\foreach \i in {\amFirst,...,3}{\draw[dual] (\i,0)--(\i,1);}\fi}
    \newcommand{\amFan}[1]{\ifnum#1>0 \foreach \i in {1,...,#1}{\pgfmathsetmacro{\amS}{1.4-0.2*\i}%
      \ifnum\i=1 \draw[dualarc] (0,0) .. controls ({0.5*\amS},{0.6*\amS}) and (0.7,0.827) .. (1,1); \fi
      \ifnum\i=2 \draw[dualarc] (0,0) .. controls ({0.5*\amS},{0.6*\amS}) and (1.4,0.86)  .. (2,1); \fi
      \ifnum\i=3 \draw[dualarc] (0,0) .. controls ({0.5*\amS},{0.6*\amS}) and (2.7,1.0)   .. (3,1); \fi}\fi}
    \newcommand{\amStr}[1]{\draw[dir] (0,0)--(1,0);
      \ifnum#1<2 \pgfmathtruncatemacro{\amFirst}{#1+1}\foreach \i in {\amFirst,...,2}{\pgfmathtruncatemacro{\amNext}{\i+1}\draw[dir] (\i,0)--(\amNext,0);}\fi
      \ifnum#1>0 \foreach \i in {1,...,#1}{\pgfmathsetmacro{\amEX}{\i+1}\pgfmathsetmacro{\amS}{0.4+0.15*\i}%
        \draw[dirarc] (0,0) .. controls ({0.5*\amS},{0.6*\amS}) and ({\amEX-0.85},{0.6*\amS}) .. (\amEX,0);}\fi}
    \newcommand{\amBase}{\amSpect \amUps{0} \amStr{0}}
    \newcommand{\amColl}{\amSpect \amUps{3} \amFan{3} \amStr{2}}
    \newcommand{\amVert}{\fill[black] (0,0) circle (2.2pt); \node[font=\large, anchor=north east, inner sep=1pt] at (-0.02,-0.04) {$v$};}
    \newcommand{\amAncDrop}{\draw[anc=0.3] (0,0) .. controls (0.05,0.5) and (0.62,0.62) .. (0.68,0.35) .. controls (0.72,0.12) and (0.3,0.04) .. (0,0);}
    \newcommand{\amAncDropB}{\draw[anc=0.35] (0,0) .. controls (0.45,0.34) and (1.05,0.56) .. (1.35,0.47) .. controls (1.48,0.4) and (0.7,0.2) .. (0,0);}
    \newcommand{\amAncAone}{\draw[anc=0.5] (0,0) .. controls (0.25,0.34) and (0.75,0.34) .. (1,0);}
    \newcommand{\amAncChord}{\draw[anc=0.55] (0,0)--(1,1);}
    \newcommand{\amAncTop}{\draw[anc=0.5] (0,0) .. controls (0.95,0.05) and (0.95,0.95) .. (0,1);}
    \newcommand{\amAncWrapA}{\draw[anc=0.42] plot[smooth, tension=0.6] coordinates {(0,0) (0.5,0.07) (0.9,0.3) (0.93,0.7) (0.65,0.93) (0.25,0.9) (0.08,0.5) (0,0)};}
    \newcommand{\amAncWrapT}{\draw[anc=0.6] plot[smooth, tension=0.6] coordinates {(0,0) (0.3,0.46) (0.6,0.78) (0.72,0.9) (0.45,0.93) (0.12,0.9) (0.06,0.5) (0,0)};}
    \newcommand{\amAncWrapB}{\draw[anc=0.47] plot[smooth, tension=0.5] coordinates {(0,0) (0.8,0.36) (1.8,0.44) (2.6,0.3) (3.15,0.18) (3.82,0.18) (3.82,0.82) (3.15,0.84) (2.6,0.76) (2.0,0.63) (1.4,0.52) (0.8,0.4) (0,0)};}
    \newcommand{\amAncUa}{\draw[anc=0.35] (0,0) .. controls (0.45,0.45) and (2.4,0.45) .. (3.1,0.3) .. controls (3.6,0.2) and (3.85,0.25) .. (3.85,0.55) .. controls (3.85,0.85) and (3.5,0.95) .. (3,1);}
    \newcommand{\amAncUb}{\draw[anc=0.35] (0,0) .. controls (0.45,0.45) and (2.4,0.45) .. (3.1,0.3) .. controls (3.6,0.2) and (3.85,0.25) .. (3.85,0.55) .. controls (3.85,0.8) and (3.95,0.9) .. (4,1);}
    \newcommand{\amAncUc}{\draw[anc=0.4] (0,0) .. controls (0.45,0.45) and (2.4,0.45) .. (3.1,0.3) .. controls (3.5,0.45) and (3.9,0.4) .. (4,0);}
    \newcommand{\amAncUd}{\draw[anc=0.4] (0,0) .. controls (0.45,0.45) and (2.4,0.45) .. (3.1,0.3) .. controls (3.5,0.45) and (3.7,0.4) .. (3.55,0.2) .. controls (3.45,0.05) and (3.2,0.02) .. (3,0);}
    \newcommand{\amFlux}[3]{\node[fl] at (#1,#2) {#3};}
    \newcommand{\amDot}[2]{\fill[opblue] (#1,#2) circle (2.6pt);}
    \newcommand{\amFrame}[2]{\begin{scope}[shift={(#2,0)}]
      \node[hd] at ({(\xleft+\xmid)/2}, {\ytop+0.5}) {Gate};
      \node[hd] at ({(\xmid+\xright)/2}, {\ytop+0.5}) {Graph};
      \draw[rule, line width=1.3pt] (\xleft,\ytop) -- (\xright,\ytop);
      \draw[rule, line width=1.3pt] (\xmid,{\ytop+0.95}) -- (\xmid,{-(#1-1)*\rowpitch-\ytop});
      \foreach \r in {1,...,#1}{\pgfmathtruncatemacro{\q}{#1-1}\ifnum\r>\q\else\draw[rule] (\xleft,{-\r*\rowpitch+\ytop}) -- (\xright,{-\r*\rowpitch+\ytop});\fi}
      \end{scope}}
    \amFrame{6}{0} \amFrame{7}{\mcw}
    \newcommand{\amStepNo}[3]{\node[sn] at ({#1-6.3},{-#2*\rowpitch}) {#3};}
    \newcommand{\amSubNo}[3]{\node[sub] at ({#1+(\xleft+\xmid)/2},{-#2*\rowpitch}) {#3};}
    \begin{scope}[shift={(\latX,-0.5)}] \def\amArrPos{0.6} \amBase
      \foreach \i in {1,...,3}{\node[violet, font=\large, inner sep=1pt] at ({\i-0.5},-0.3) {$a_{\i}$};
        \node[teal, font=\large, inner sep=1pt] at ({\i+0.3},0.5) {$\alpha_{\i}$};}
      \amFlux{0.78}{0.78}{$g$} \amAncDrop \amVert
      \node[sl] at (0.33,0.52) {$\ket{e}$};
    \end{scope}
    \amStepNo{0}{1}{1a.}
    \begin{scope}[shift={(\stepX,{-1*\rowpitch-0.5})}] \amBase \amAncDrop \amVert
      \amDot{0.85}{0} \draw[rib] (0.85,0) .. controls (0.95,0.3) and (0.7,0.55) .. (0.42,0.47);
      \node[gl] at (0.34,0.46) {$R^{\dagger}$};
    \end{scope}
    \begin{scope}[shift={(\latX,{-1*\rowpitch-0.5})}] \def\amArrPos{0.6} \amBase \amAncAone \amFlux{0.5}{0.6}{$g$} \amVert \end{scope}
    \amStepNo{0}{2}{1b.}
    \begin{scope}[shift={(\stepX,{-2*\rowpitch-0.5})}] \amBase \amAncAone \amVert
      \amDot{1}{0.5} \draw[rib] (1,0.5) .. controls (0.85,0.5) and (0.65,0.4) .. (0.52,0.28);
      \node[gl] at (0.5,0.26) {$R^{\dagger}$};
    \end{scope}
    \begin{scope}[shift={(\latX,{-2*\rowpitch-0.5})}] \def\amArrPos{0.6} \amBase \amAncChord \amFlux{0.3}{0.64}{$g$} \amVert \end{scope}
    \amStepNo{0}{3}{1c.}
    \begin{scope}[shift={(\stepX,{-3*\rowpitch-0.5})}] \amBase \amAncChord \amVert
      \amDot{0.5}{1} \draw[rib] (0.5,1)--(0.5,0.55);
      \node[gl] at (0.5,0.5) {$R$};
    \end{scope}
    \begin{scope}[shift={(\latX,{-3*\rowpitch-0.5})}] \def\amArrPos{0.6} \amBase \amAncTop \amFlux{0.32}{0.5}{$g$} \amVert \end{scope}
    \amStepNo{0}{4}{1d.}
    \begin{scope}[shift={(\stepX,{-4*\rowpitch-0.5})}] \amBase \amAncTop \amVert
      \amDot{0}{0.5} \draw[rib] (0,0.5)--(0.68,0.5);
      \node[gl] at (0.71,0.5) {$R$};
    \end{scope}
    \begin{scope}[shift={(\latX,{-4*\rowpitch-0.5})}] \def\amArrPos{0.6} \amBase \amAncWrapA \amFlux{0.45}{0.47}{$g$} \amVert
      \node[sl] at (0.3,0.89) {$\ket{g}$};
    \end{scope}
    \amSubNo{0}{5}{Fig.~\ref{fig:gr-flux-ribbon},\\ steps 1--3}
    \begin{scope}[shift={(\latX,{-5*\rowpitch-0.5})}] \def\amArrPos{0.6} \amColl \amAncWrapT \amVert
      \amFlux{0.29}{0.7}{$g$}
      \node[sl] at (0.5,1.25) {$\ket{g}$};
    \end{scope}
    \begin{scope}[shift={(\mcw,0)}]
    \amStepNo{0}{0}{2.}
    \begin{scope}[shift={(\stepX,-0.5)}] \amColl \amAncWrapT \amVert
      \amDot{0.45}{0.62}
      \draw[rib] (0.45,0.62) .. controls (0.65,0.62) and (0.8,0.75) .. (0.88,0.92);
      \node[gl] at (0.9,0.95) {$L^{\dagger}$};
      \draw[rib] (0.45,0.62) .. controls (0.8,0.52) and (1.2,0.6) .. (1.45,0.86);
      \node[gl] at (1.5,0.88) {$L^{\dagger}$};
      \draw[rib] (0.45,0.62) .. controls (1.0,0.38) and (1.9,0.5) .. (2.25,0.84);
      \node[gl] at (2.3,0.87) {$L^{\dagger}$};
    \end{scope}
    \begin{scope}[shift={(\latX,-0.5)}] \def\amArrPos{0.6} \amColl \amAncWrapB \amVert
      \amFlux{3.5}{0.5}{$g$}
      \node[sl] at (3.2,0.84) {$\ket{g}$};
    \end{scope}
    \amStepNo{0}{1}{3a.}
    \begin{scope}[shift={(\stepX,{-1*\rowpitch-0.5})}] \amColl \amAncWrapB \amVert
      \amDot{2.1}{0.81} \draw[rib] (2.1,0.81)--(2.42,0.72); \node[gl] at (2.6,0.7) {$R^{\dagger}$};
    \end{scope}
    \begin{scope}[shift={(\latX,{-1*\rowpitch-0.5})}] \def\amArrPos{0.6} \amColl \amAncUa \amVert \amFlux{3.4}{0.62}{$g$} \end{scope}
    \amStepNo{0}{2}{3b.}
    \begin{scope}[shift={(\stepX,{-2*\rowpitch-0.5})}] \amColl \amAncUa \amVert
      \amDot{3.5}{1} \draw[rib] (3.5,1)--(3.5,0.34); \node[gl] at (3.5,0.27) {$R^{\dagger}$};
    \end{scope}
    \begin{scope}[shift={(\latX,{-2*\rowpitch-0.5})}] \def\amArrPos{0.6} \amColl \amAncUb \amVert \amFlux{3.4}{0.65}{$g$} \end{scope}
    \amStepNo{0}{3}{3c.}
    \begin{scope}[shift={(\stepX,{-3*\rowpitch-0.5})}] \amColl \amAncUb \amVert
      \amDot{4}{0.5} \draw[rib] (4,0.5) .. controls (3.85,0.5) and (3.7,0.4) .. (3.55,0.3); \node[gl] at (3.5,0.27) {$R$};
    \end{scope}
    \begin{scope}[shift={(\latX,{-3*\rowpitch-0.5})}] \def\amArrPos{0.6} \amColl \amAncUc \amVert \amFlux{3.5}{0.68}{$g$} \end{scope}
    \amStepNo{0}{4}{3d.}
    \begin{scope}[shift={(\stepX,{-4*\rowpitch-0.5})}] \amColl \amAncUc \amVert
      \amDot{3.5}{0} \draw[rib] (3.5,0)--(3.48,0.32); \node[gl] at (3.45,0.42) {$R$};
    \end{scope}
    \begin{scope}[shift={(\latX,{-4*\rowpitch-0.5})}] \def\amArrPos{0.6} \amColl \amAncUd \amVert \amFlux{3.5}{0.72}{$g$} \end{scope}
    \amStepNo{0}{5}{3e.}
    \begin{scope}[shift={(\stepX,{-5*\rowpitch-0.5})}] \amColl \amAncUd \amVert
      \amDot{2.5}{0.19} \draw[rib] (2.5,0.19)--(2.5,0.36); \node[gl] at (2.5,0.45) {$R$};
    \end{scope}
    \begin{scope}[shift={(\latX,{-5*\rowpitch-0.5})}] \def\amArrPos{0.6} \amColl \amAncDropB \amVert
      \amFlux{3.5}{0.5}{$g$}
      \node[sl] at (1.72,0.5) {$\ket{e}$};
    \end{scope}
    \amSubNo{0}{6}{Fig.~\ref{fig:gr-flux-ribbon},\\ steps 5--7}
    \begin{scope}[shift={(\latX,{-6*\rowpitch-0.5})}] \def\amArrPos{0.6} \amBase \amAncDrop \amVert
      \amFlux{3.5}{0.5}{$g$}
      \node[sl] at (0.33,0.52) {$\ket{e}$};
    \end{scope}
    \end{scope}
  \end{tikzpicture}}
\caption{Ancilla-based coherent moving of a flux as graph rewiring, on the lattice of Fig.~\ref{fig:gr-flux-ribbon}, read down the left half and then the right half.
Each row shows one elementary move (Gate) acting on the graph of the previous row and the resulting graph (Graph); the dot marks the control and the label the target, as in Fig.~\ref{fig:gr-four-cases}.
The ancilla edge (red) is based at $v$ and its state is indicated whenever it is a closed loop; the flux label $g$ (blue) is based at $v$.
Steps 1a--1d wrap the ancilla loop around the flux, step 2 hops the enclosed flux across the dual edges attached to $v$ by steps 1--3 of Fig.~\ref{fig:gr-flux-ribbon}, and steps 3a--3e unwrap the ancilla loop around the destination face; steps 5--7 of Fig.~\ref{fig:gr-flux-ribbon} restore the lattice.}
\label{fig:ancilla-moving}
\end{figure}

\end{document}